\DeclareSymbolFontAlphabet{\mathbbvar}{bbold}
\DeclareSymbolFontAlphabet{\mathbb}{AMSb}
\DeclareFontFamily{U}{BOONDOX-calo}{\skewchar\font=45 }
    \DeclareFontShape{U}{BOONDOX-calo}{m}{n}{
      <-> s*[1.05] BOONDOX-r-calo}{}
    \DeclareFontShape{U}{BOONDOX-calo}{b}{n}{
      <-> s*[1.05] BOONDOX-b-calo}{}
    \DeclareMathAlphabet{\mathcalboondox}{U}{BOONDOX-calo}{m}{n}
    \SetMathAlphabet{\mathcalboondox}{bold}{U}{BOONDOX-calo}{b}{n}
    \DeclareMathAlphabet{\mathbcalboondox}{U}{BOONDOX-calo}{b}{n}
\newcommand{\Coo}{\mathcal{C}^{\infty}}
\newcommand{\Mfd}{\mathsf{Mfd}}
\newcommand{\B}{\mathbf{B}}
\newcommand{\di}{\mathrm{d}}
\newcommand{\crit}[1]{\mathrm{Crit}({#1})}
\newcommand{\btimes}{\,\widehat{\otimes}\,}
\newcommand{\CE}{\mathrm{CE}}
\newcommand{\Spec}{\mathrm{Spec}}
\newcommand{\op}{\mathrm{op}}
\newcommand{\Hom}{\mathrm{Hom}}
\newcommand{\sSet}{\mathsf{sSet}}
\newcommand{\St}{\mathbf{St}}
\newcommand{\Bun}{\mathbf{Bun}}
\newcommand{\Shoo}{\mathsf{St}}
\newcommand{\bbR}{\mathbb{R}}
\newcommand{\bbL}{\mathbb{L}}
\newcommand{\bfR}{\mathds{R}}
\newcommand{\bfL}{\mathds{L}}
\newcommand{\bbT}{\mathbb{T}}
\newcommand{\bbD}{\mathbb{D}}
\newcommand{\bfGamma}{\mathbf{\Gamma}}
\newcommand{\bfOmega}{\mathbf{\Omega}}
\newcommand{\dR}{\mathrm{dR}}
\newcommand{\BV}{\mathrm{BV}}
\newcommand{\itPhi}{\mathit{\Phi}}
\newcommand{\itDelta}{\mathit{\Delta}}
\newcommand{\jet}[1]{\mathrm{Jet}{#1}}
\newcommand{\quasicat}{\pmb{(\infty,}\mathbf{1}\pmb{)}\mathbf{Cat}}
\newcommand{\QCoh}[1]{\mathrm{QCoh}({#1})}
\theoremstyle{definition}
\newtheorem{theorem}{Theorem}[section]
\theoremstyle{definition}
\newtheorem{definition}[theorem]{Definition}
\theoremstyle{definition}
\newtheorem{lemma}[theorem]{Lemma}
\theoremstyle{definition}
\newtheorem{corollary}[theorem]{Corollary}
\theoremstyle{definition}
\newtheorem{remark}[theorem]{Remark}
\theoremstyle{definition}
\newtheorem{example}[theorem]{Example}
\theoremstyle{definition}
\newtheorem{proposition}[theorem]{Proposition}
\theoremstyle{definition}
\theoremstyle{definition}
\newtheorem{construction}[theorem]{Construction}
\numberwithin{equation}{subsection}
\DeclareRobustCommand\longtwoheadrightarrow
\DeclareRobustCommand\longhookrightarrow
\newcommand{\xtwoheadrightarrow}[2][]{%
  \mathrel{\ooalign{$\xrightarrow[#1\mkern4mu]{#2\mkern4mu}$\cr%
  \hidewidth$\rightarrow\mkern4mu$}}
}
\providecommand{\leftsquigarrow}{%
  \mathrel{\mathpalette\reflect@squig\relax}%
}
\newcommand{\reflect@squig}[2]{%
  \reflectbox{$\m@th#1\rightsquigarrow$}%
}
\DeclareFontFamily{U}{min}{}
\DeclareFontShape{U}{min}{m}{n}{<-> udmj30}{}
\newcommand{\xRrightarrow}[2][]{\ext@arrow 0359\Rrightarrowfill@{#1}{#2}}
\newcommand{\Rrightarrowfill@}{\arrowfill@\equiv\equiv\Rrightarrow}
\newcommand{\xLleftarrow}[2][]{\ext@arrow 3095\Lleftarrowfill@{#1}{#2}}
\newcommand{\Lleftarrowfill@}{\arrowfill@\Lleftarrow\equiv\equiv}
\newcommand{\xLleftRrightarrow}[2][]{\ext@arrow 3399\LleftRrightarrowfill@{#1}{#2}}
\newcommand{\LleftRrightarrowfill@}{\arrowfill@\Lleftarrow\equiv\Rrightarrow}
\newcommand*{\bibtitle}{References}
\renewcommand*{\bibfont}{\raggedright\small}
\title{
\begin{flushright}
\normalsize{ }
\end{flushright}
\vspace{1cm}
\bigskip
\bf
{Towards non-perturbative BV-theory\\via derived differential geometry}\vspace{0.3cm}}
\author{{\sc Luigi Alfonsi} and {\sc Charles Young}}
\affil{\em\normalsize Department of Physics, Astronomy and Mathematics,\\\em University of Hertfordshire, Hatfield AL10 9AB, UK\\\vspace{4mm}\tt \href{mailto:l.alfonsi@herts.ac.uk}{l.alfonsi@herts.ac.uk}, \href{mailto:c.young8@herts.ac.uk}{c.young8@herts.ac.uk}}
\date{\small July 27, 2023}
\begin{document}

\maketitle
\vspace{0.5cm}
\abstract{\noindent We propose a global geometric framework which allows one to encode a natural non-perturbative generalisation of usual Batalin–Vilkovisky (BV-)theory. 
Namely, we construct a concrete model of derived differential geometry, whose geometric objects are formal derived smooth stacks, i.e. stacks on formal derived smooth manifolds, together with a notion of differential geometry on them. This provides a working language to study generalised geometric spaces that are smooth, infinite-dimensional, higher and derived at the same time.
Such a formalism is obtained by combining Schreiber's differential cohesion with the machinery of T\"oen-Vezzosi's homotopical algebraic geometry applied to the theory of derived manifolds of Spivak and Carchedi-Steffens.
We investigate two classes of examples of non-perturbative classical BV-theories in the context of derived differential geometry: scalar field theory and Yang-Mills theory. 
\noindent

\vspace{0.3cm}
\noindent \textbf{Keywords}: Batalin–Vilkovisky formalism, higher structures, Yang-Mills theory, derived geometry, higher stacks, homotopical algebra

\vspace{0.3cm}
\noindent \textbf{MSC 2020}: \tt{81Txx}, \tt{14A30}, \tt{18N40}
}

\newpage
\tableofcontents

\section*{Introduction}
\addcontentsline{toc}{section}{Introduction}
\setlength{\parindent}{0pt}
\setlength{\parskip}{0.5em}

\paragraph{BV-theory.}
Batalin-Vilkovisky (BV-)theory \cite{BATALIN198127} is an extremely powerful and successful mathematical framework for perturbatively formalising and quantising classical field theories, including theories with gauge symmetries. 
BV-theory has been applied to a wide range of physical systems and has deep connections to various areas of mathematics, including homological algebra, Poisson geometry, and symplectic geometry. See \cite{Cattaneo:2023hxv} for an overview.

Essentially, classical BV-theory replaces the problem of determining the critical locus of the action functional -- i.e. the space of solutions of the field equations -- with the problem of constructing the derived critical locus of the action functional \cite{FactII}.

In the literature, various different approaches to BV-theory emerge in the settings of several broader programmes, including:
\begin{enumerate}[label=(\textit{\roman*})]
    \item \textit{$NQP$-manifolds} approach (see \cite{Pau14, Saem18bv, Saem19bv, Doubek_2019, Jurco:2019yfd, Jurco:2020yyu} and many others), where the algebra of classical observables is given by a Poisson dg-Lie algebra of functions on an $NQP$-manifold, i.e. a differential-graded manifold (dg-manifold) equipped with a $(-1)$-shifted symplectic form. 
    \item \textit{Factorisation Algebras} approach (see \cite{Costello2011RenormalizationAE, FactI, FactII}), where the algebra of classical observables of a theory is given by the $\mathbb{P}_0$-algebra of functions on a $(-1)$-shifted symplectic pointed formal moduli problem (i.e. a derived stack on Artinian dg-algebras), which is also sheaved on the spacetime manifold. Quantisation is then provided by a graded Heisenberg extension of such a algebra.
    \item \textit{Perturbative Algebraic Quantum Field Theory} -- pAQFT for short -- (see  \cite{Rejzner:2011jsa, Fredenhagen_2012, Fredenhagen:2011an, Rejzner:2016hdj, Benini:2018oeh, Benini:2019uge, Benini_2019,  hawkins2020star, Rejzner:2020bsc, Rejzner:2020xid}), where the algebra of observables is usually given by a net of locally convex topological Poisson $\ast$-algebras on spacetime equipped with Peierls bracket. The BV-complex here is interestingly related to such a bracket structure and BV-quantisation emerges by a time-ordering of its classical counterpart.
\end{enumerate}
Despite their different constructions, these approaches share a significant amount of common ground. 
In fact, in the $NQP$-geometric perspective, the central objects one studies are $\mathbb{Z}$-graded $NQP$-manifolds, which are nothing but symplectic $L_\infty$-algebroids (for instance, see \cite{Sev01}). Such geometric objects are evidently closely related to -- even if different prima facie from -- the sheaves of formal moduli problems on spacetime appearing in the factorisation algebras approach. 
In particular, they both give rise to an $L_\infty$-algebra structure on the space of classical observables, as seen respectively in \cite{Saem19bv} and \cite{FactII}.
Moreover, the factorisation algebra and pAQFT formalisms are understood to be intimately related, through a correspondence which was delineated and explored by \cite{Gwilliam:2017ses, Benini:2019ujs}.

Now, of course, a fully \emph{non-perturbative} formulation of quantum field theory (QFT) is a major goal of modern theoretical physics. 
Some of the most interesting and challenging features of gauge theories are intrinsically non-perturbative and, therefore, lie beyond the horizon of perturbation theory. These include the mass gap problem, the phenomenon of confinement, the Landau pole, instantons, solitons (e.g.$\,$'t Hooft–Polyakov monopoles), domain walls and flux tubes. Moreover, from a purely conceptual standpoint, the project of QFT cannot be considered fully accomplished until the framework is able to describe the totality of fundamental phenomena of quantum fields.

At this point we can phrase the objective of this paper as follows: we want to generalise the intrinsically perturbative geometric formalism underlying usual BV-theory to its global non-perturbative version.
To achieve such a goal we need to generalise the pointed formal moduli problems of BV-theory to geometric objects which are fully-fledged derived stacks.
In fact, pointed formal moduli problems can be understood as pointed spaces probed by formal derived disks: this way, they can geometrically encode the infinitesimal deformations of a field configuration. Thus, if we want to be able to formalise finite deformations of a field theory, we must generalise our probing spaces to ``finite'' geometric objects. This leads to derived stacks.

\paragraph{Stacks and derived stacks.}
In a sheaf-theoretic geometry, the geometric structure of the spaces of the theory is defined by probing them with a certain class of test spaces.
For example, in higher smooth geometry our test spaces are ordinary smooth manifolds and the smooth structure of our spaces - namely, smooth stacks - is determined by the simplicial set of ways every smooth manifold can probe them.
Similarly, formal smooth stacks are defined by using infinitesimally thickened manifolds as test spaces.

\begin{figure}[h!]
    \centering
\tikzset {_l9ezxcurk/.code = {\pgfsetadditionalshadetransform{ \pgftransformshift{\pgfpoint{83.16 bp } { -103.62 bp }  }  \pgftransformscale{1.32 }  }}}
\pgfdeclareradialshading{_fn3klwi7r}{\pgfpoint{-72bp}{88bp}}{rgb(0bp)=(1,1,1);
rgb(0bp)=(1,1,1);
rgb(25bp)=(0.55,0.55,0.55);
rgb(400bp)=(0.55,0.55,0.55)}
\tikzset {_fwibr2aj4/.code = {\pgfsetadditionalshadetransform{ \pgftransformshift{\pgfpoint{83.16 bp } { -103.62 bp }  }  \pgftransformscale{1.32 }  }}}
\pgfdeclareradialshading{_8usi6y6e0}{\pgfpoint{-72bp}{88bp}}{rgb(0bp)=(1,1,1);
rgb(0bp)=(1,1,1);
rgb(25bp)=(0.55,0.55,0.55);
rgb(400bp)=(0.55,0.55,0.55)}
\tikzset {_efzaulyaf/.code = {\pgfsetadditionalshadetransform{ \pgftransformshift{\pgfpoint{0 bp } { 0 bp }  }  \pgftransformscale{1.08 }  }}}
\pgfdeclareradialshading{_7d42issc8}{\pgfpoint{0bp}{0bp}}{rgb(0bp)=(0.29,0.56,0.89);
rgb(0bp)=(0.29,0.56,0.89);
rgb(25bp)=(0.29,0.56,0.89);
rgb(400bp)=(0.29,0.56,0.89)}
\tikzset{_p6n2trgor/.code = {\pgfsetadditionalshadetransform{\pgftransformshift{\pgfpoint{0 bp } { 0 bp }  }  \pgftransformscale{1.08 } }}}
\pgfdeclareradialshading{_qwzll125h} { \pgfpoint{0bp} {0bp}} {color(0bp)=(transparent!0);
color(0bp)=(transparent!0);
color(25bp)=(transparent!97);
color(400bp)=(transparent!97)} 
\pgfdeclarefading{_v7f0zyp1g}{\tikz \fill[shading=_qwzll125h,_p6n2trgor] (0,0) rectangle (50bp,50bp); } 
\tikzset {_4cq9aae63/.code = {\pgfsetadditionalshadetransform{ \pgftransformshift{\pgfpoint{0 bp } { 0 bp }  }  \pgftransformscale{1 }  }}}
\pgfdeclareradialshading{_znf8guanj}{\pgfpoint{0bp}{0bp}}{rgb(0bp)=(0.29,0.56,0.89);
rgb(0bp)=(0.29,0.56,0.89);
rgb(6.903833661760602bp)=(0.29,0.56,0.89);
rgb(25bp)=(1,1,1);
rgb(400bp)=(1,1,1)}
\tikzset{_qnyt6rxt6/.code = {\pgfsetadditionalshadetransform{\pgftransformshift{\pgfpoint{0 bp } { 0 bp }  }  \pgftransformscale{1 } }}}
\pgfdeclareradialshading{_p99urvpzc} { \pgfpoint{0bp} {0bp}} {color(0bp)=(transparent!38);
color(0bp)=(transparent!38);
color(6.903833661760602bp)=(transparent!72);
color(25bp)=(transparent!100);
color(400bp)=(transparent!100)} 
\pgfdeclarefading{_xb85gxzmw}{\tikz \fill[shading=_p99urvpzc,_qnyt6rxt6] (0,0) rectangle (50bp,50bp); } 
\tikzset{every picture/.style={line width=0.75pt}} 
\begin{tikzpicture}[x=0.75pt,y=0.75pt,yscale=-0.75,xscale=0.75]
\path  [shading=_fn3klwi7r,_l9ezxcurk] (326.25,417.43) .. controls (326.25,354.63) and (377.16,303.72) .. (439.96,303.72) .. controls (502.76,303.72) and (553.67,354.63) .. (553.67,417.43) .. controls (553.67,480.23) and (502.76,531.14) .. (439.96,531.14) .. controls (377.16,531.14) and (326.25,480.23) .. (326.25,417.43) -- cycle ; 
 \draw  [color={rgb, 255:red, 0; green, 0; blue, 0 }  ,draw opacity=1 ] (326.25,417.43) .. controls (326.25,354.63) and (377.16,303.72) .. (439.96,303.72) .. controls (502.76,303.72) and (553.67,354.63) .. (553.67,417.43) .. controls (553.67,480.23) and (502.76,531.14) .. (439.96,531.14) .. controls (377.16,531.14) and (326.25,480.23) .. (326.25,417.43) -- cycle ; 
\draw  [draw opacity=0][fill={rgb, 255:red, 74; green, 144; blue, 226 }  ,fill opacity=0.34 ][line width=0.75]  (379.28,372.64) -- (384.11,358.81) -- (391.12,344.05) -- (403.71,330.44) -- (419.19,327.82) -- (438.33,326.33) -- (455.93,328.03) -- (472.16,331.05) -- (485.7,345.87) -- (494.11,362.14) -- (497.61,374.14) -- (500.44,389.14) -- (500.6,399.76) -- (466.45,393.79) -- (436.94,392.14) -- (406.78,393.31) -- (376.33,397.49) -- cycle ;
\path  [shading=_8usi6y6e0,_fwibr2aj4] (327.08,146.71) .. controls (327.08,83.91) and (377.99,33) .. (440.79,33) .. controls (503.59,33) and (554.5,83.91) .. (554.5,146.71) .. controls (554.5,209.51) and (503.59,260.42) .. (440.79,260.42) .. controls (377.99,260.42) and (327.08,209.51) .. (327.08,146.71) -- cycle ; 
 \draw  [color={rgb, 255:red, 0; green, 0; blue, 0 }  ,draw opacity=1 ] (327.08,146.71) .. controls (327.08,83.91) and (377.99,33) .. (440.79,33) .. controls (503.59,33) and (554.5,83.91) .. (554.5,146.71) .. controls (554.5,209.51) and (503.59,260.42) .. (440.79,260.42) .. controls (377.99,260.42) and (327.08,209.51) .. (327.08,146.71) -- cycle ; 
\path  [shading=_7d42issc8,_efzaulyaf,path fading= _v7f0zyp1g ,fading transform={xshift=2}] (417.33,81.33) .. controls (417.33,69.76) and (428.68,60.39) .. (442.67,60.39) .. controls (456.66,60.39) and (468,69.76) .. (468,81.33) .. controls (468,92.9) and (456.66,102.28) .. (442.67,102.28) .. controls (428.68,102.28) and (417.33,92.9) .. (417.33,81.33) -- cycle ; 
 \draw  [color={rgb, 255:red, 74; green, 144; blue, 226 }  ,draw opacity=0.76 ][dash pattern={on 4.5pt off 4.5pt}][line width=0.75]  (417.33,81.33) .. controls (417.33,69.76) and (428.68,60.39) .. (442.67,60.39) .. controls (456.66,60.39) and (468,69.76) .. (468,81.33) .. controls (468,92.9) and (456.66,102.28) .. (442.67,102.28) .. controls (428.68,102.28) and (417.33,92.9) .. (417.33,81.33) -- cycle ; 
\draw  [color={rgb, 255:red, 0; green, 67; blue, 148 }  ,draw opacity=1 ][fill={rgb, 255:red, 0; green, 80; blue, 173 }  ,fill opacity=1 ] (440.25,81.33) .. controls (440.25,80.67) and (440.79,80.13) .. (441.46,80.13) .. controls (442.13,80.13) and (442.67,80.67) .. (442.67,81.33) .. controls (442.67,82) and (442.13,82.54) .. (441.46,82.54) .. controls (440.79,82.54) and (440.25,82) .. (440.25,81.33) -- cycle ;
\path  [shading=_znf8guanj,_4cq9aae63,path fading= _xb85gxzmw ,fading transform={xshift=2}] (61.22,143.71) .. controls (61.22,129.31) and (72.89,117.64) .. (87.29,117.64) .. controls (101.69,117.64) and (113.36,129.31) .. (113.36,143.71) .. controls (113.36,158.11) and (101.69,169.78) .. (87.29,169.78) .. controls (72.89,169.78) and (61.22,158.11) .. (61.22,143.71) -- cycle ; 
 \draw  [color={rgb, 255:red, 74; green, 144; blue, 226 }  ,draw opacity=0.19 ] (61.22,143.71) .. controls (61.22,129.31) and (72.89,117.64) .. (87.29,117.64) .. controls (101.69,117.64) and (113.36,129.31) .. (113.36,143.71) .. controls (113.36,158.11) and (101.69,169.78) .. (87.29,169.78) .. controls (72.89,169.78) and (61.22,158.11) .. (61.22,143.71) -- cycle ; 

\draw  [fill={rgb, 255:red, 0; green, 0; blue, 0 }  ,fill opacity=1 ] (84.58,143.71) .. controls (84.58,142.21) and (85.8,141) .. (87.29,141) .. controls (88.79,141) and (90,142.21) .. (90,143.71) .. controls (90,145.2) and (88.79,146.42) .. (87.29,146.42) .. controls (85.8,146.42) and (84.58,145.2) .. (84.58,143.71) -- cycle ;
\draw    (137,149) -- (296,148.89) ;
\draw [shift={(298,148.89)}, rotate = 179.96] [color={rgb, 255:red, 0; green, 0; blue, 0 }  ][line width=0.75]    (10.93,-3.29) .. controls (6.95,-1.4) and (3.31,-0.3) .. (0,0) .. controls (3.31,0.3) and (6.95,1.4) .. (10.93,3.29)   ;
\draw    (136,419.89) -- (295,419.78) ;
\draw [shift={(297,419.78)}, rotate = 179.96] [color={rgb, 255:red, 0; green, 0; blue, 0 }  ][line width=0.75]    (10.93,-3.29) .. controls (6.95,-1.4) and (3.31,-0.3) .. (0,0) .. controls (3.31,0.3) and (6.95,1.4) .. (10.93,3.29)   ;
\draw  [draw opacity=0] (376.33,397.49) .. controls (376.35,368.9) and (387.34,343.92) .. (403.71,330.44) -- (431.58,397.54) -- cycle ; \draw  [color={rgb, 255:red, 0; green, 80; blue, 173 }  ,draw opacity=1 ] (376.33,397.49) .. controls (376.35,368.9) and (387.34,343.92) .. (403.71,330.44) ;  
\draw  [draw opacity=0] (470.63,330.42) .. controls (488.43,343.33) and (500.6,369.43) .. (500.6,399.54) .. controls (500.6,399.61) and (500.6,399.68) .. (500.6,399.76) -- (445.35,399.54) -- cycle ; \draw  [color={rgb, 255:red, 0; green, 80; blue, 173 }  ,draw opacity=1 ] (470.63,330.42) .. controls (488.43,343.33) and (500.6,369.43) .. (500.6,399.54) .. controls (500.6,399.61) and (500.6,399.68) .. (500.6,399.76) ;  
\draw  [draw opacity=0] (376.33,397.49) .. controls (394.35,394.07) and (414.85,392.25) .. (436.56,392.5) .. controls (459.89,392.76) and (481.74,395.39) .. (500.6,399.76) -- (435.99,442.4) -- cycle ; \draw  [color={rgb, 255:red, 0; green, 80; blue, 173 }  ,draw opacity=1 ] (376.33,397.49) .. controls (394.35,394.07) and (414.85,392.25) .. (436.56,392.5) .. controls (459.89,392.76) and (481.74,395.39) .. (500.6,399.76) ;  
\draw  [draw opacity=0] (403.71,330.44) .. controls (414.48,327.91) and (426.16,326.59) .. (438.34,326.73) .. controls (450.26,326.87) and (461.65,328.39) .. (472.16,331.05) -- (437.63,389.05) -- cycle ; \draw  [color={rgb, 255:red, 0; green, 80; blue, 173 }  ,draw opacity=1 ] (403.71,330.44) .. controls (414.48,327.91) and (426.16,326.59) .. (438.34,326.73) .. controls (450.26,326.87) and (461.65,328.39) .. (472.16,331.05) ;  
\draw  [draw opacity=0] (376.99,380.46) .. controls (394.97,377.06) and (415.41,375.25) .. (437.06,375.5) .. controls (459.46,375.75) and (480.5,378.18) .. (498.83,382.24) -- (436.49,425.4) -- cycle ; \draw  [color={rgb, 255:red, 0; green, 80; blue, 173 }  ,draw opacity=1 ] (376.99,380.46) .. controls (394.97,377.06) and (415.41,375.25) .. (437.06,375.5) .. controls (459.46,375.75) and (480.5,378.18) .. (498.83,382.24) ;  
\draw  [draw opacity=0] (382.6,362.47) .. controls (399.18,359.71) and (417.63,358.27) .. (437.06,358.5) .. controls (457.63,358.73) and (477.06,360.8) .. (494.29,364.28) -- (436.49,408.4) -- cycle ; \draw  [color={rgb, 255:red, 0; green, 80; blue, 173 }  ,draw opacity=1 ] (382.6,362.47) .. controls (399.18,359.71) and (417.63,358.27) .. (437.06,358.5) .. controls (457.63,358.73) and (477.06,360.8) .. (494.29,364.28) ;  
\draw  [draw opacity=0] (391.12,344.05) .. controls (405.14,342.23) and (420.28,341.31) .. (436.06,341.5) .. controls (453.66,341.7) and (470.43,343.24) .. (485.7,345.87) -- (435.49,391.4) -- cycle ; \draw  [color={rgb, 255:red, 0; green, 80; blue, 173 }  ,draw opacity=1 ] (391.12,344.05) .. controls (405.14,342.23) and (420.28,341.31) .. (436.06,341.5) .. controls (453.66,341.7) and (470.43,343.24) .. (485.7,345.87) ;  
\draw [color={rgb, 255:red, 0; green, 80; blue, 173 }  ,draw opacity=1 ]   (438.33,326.33) -- (436.63,393.05) ;
\draw  [draw opacity=0] (406.54,394.71) .. controls (406.4,392.02) and (406.33,389.3) .. (406.33,386.54) .. controls (406.33,363.09) and (411.31,342.07) .. (419.19,327.82) -- (443.58,386.54) -- cycle ; \draw  [color={rgb, 255:red, 0; green, 80; blue, 173 }  ,draw opacity=1 ] (406.54,394.71) .. controls (406.4,392.02) and (406.33,389.3) .. (406.33,386.54) .. controls (406.33,363.09) and (411.31,342.07) .. (419.19,327.82) ;  
\draw  [draw opacity=0] (455.93,328.03) .. controls (462.48,342.26) and (466.6,362.97) .. (466.6,386.04) .. controls (466.6,388.65) and (466.55,391.24) .. (466.45,393.79) -- (434.72,386.04) -- cycle ; \draw  [color={rgb, 255:red, 0; green, 80; blue, 173 }  ,draw opacity=1 ] (455.93,328.03) .. controls (462.48,342.26) and (466.6,362.97) .. (466.6,386.04) .. controls (466.6,388.65) and (466.55,391.24) .. (466.45,393.79) ;  
\draw  [draw opacity=0][fill={rgb, 255:red, 74; green, 144; blue, 226 }  ,fill opacity=0.36 ] (52.5,382.83) -- (124.5,382.83) -- (124.5,454.83) -- (52.5,454.83) -- cycle ; \draw   (70.5,382.83) -- (70.5,454.83)(88.5,382.83) -- (88.5,454.83)(106.5,382.83) -- (106.5,454.83) ; \draw   (52.5,400.83) -- (124.5,400.83)(52.5,418.83) -- (124.5,418.83)(52.5,436.83) -- (124.5,436.83) ; \draw   (52.5,382.83) -- (124.5,382.83) -- (124.5,454.83) -- (52.5,454.83) -- cycle ;
\draw (185,125) node [anchor=north west][inner sep=0.75pt]   [align=left] {{\small probing}};
\draw (185,396) node [anchor=north west][inner sep=0.75pt]   [align=left] {{\small probing}};
\draw (-2,136) node [anchor=north west][inner sep=0.75pt]   [align=left] {a)};
\draw (-2,411) node [anchor=north west][inner sep=0.75pt]   [align=left] {b)};
\end{tikzpicture}
    \caption{Probing a formal smooth stack by (a) infinitesimally thickened points and (b) ordinary smooth manifolds.}
    \label{fig:intro_moduli}
\end{figure}
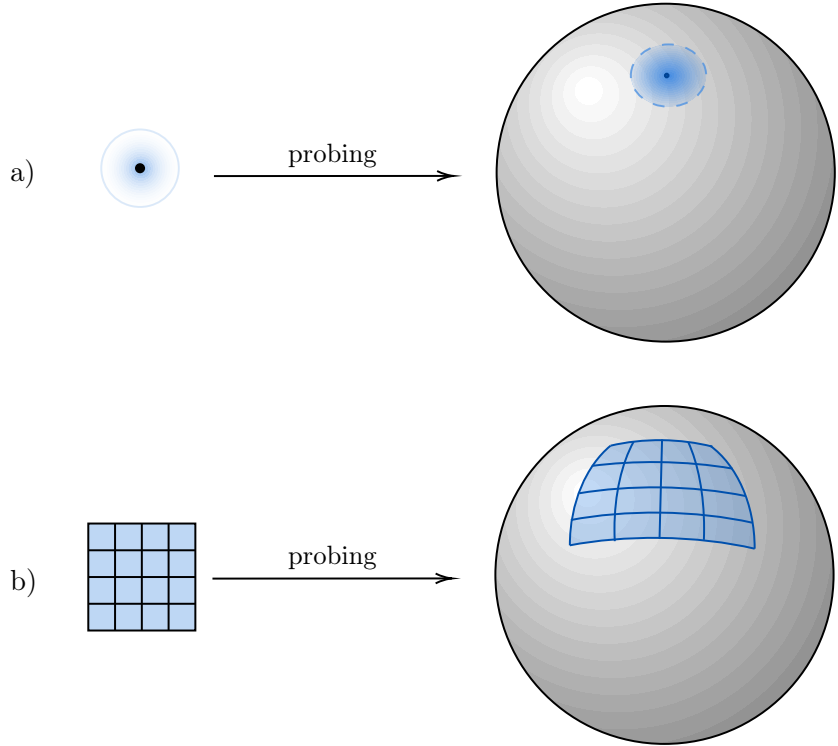

The success of smooth stacks is multifaceted.
First of all, just like smooth sheaves (also known as smooth sets) they generalise smooth manifolds by including infinite-dimensional smooth spaces. Secondly, they "categorify" smooth manifolds by relaxing the gluing conditions. The result is that spaces can be glued together by higher gauge transformations.
The archetypal example of a smooth stack is $\Bun_G(M)$, the stack of principal $G$-bundles on a fixed ordinary smooth manifold $M$. At any test manifold $U$, the space of sections $\Hom(U,\,\Bun_G(M))$ is a groupoid whose objects are $U$-parametrised families of $G$-bundles on $M$ and whose morphisms are $U$-parametrised families of gauge transformations.
The theory of smooth stacks has been systematised by the notion of differential cohesive $(\infty,1)$-topos developed by \cite{DCCTv2} (see also \cite{myers2022orbifolds}).

Most often, the intersection of two smooth sub-manifolds is not a smooth manifold. The only exception is when is when the two sub-manifolds are transverse.
As a reflection of this property of smooth manifolds, the limits in the category of smooth stacks (despite existing) do not behave well from an intersection theory point of view.
However, in mathematical physics it is of primary importance to construct a well-defined space of solutions of the equations of motion (also known as the phase space), which can be precisely understood as the intersection between the section induced by first variation of the action functional and a zero-section.

Derived manifolds were introduced by \cite{Spivak:2010} to solve the problem of arbitrary intersection of smooth manifolds. Therefore, it is reasonable to expect that, by replacing smooth manifolds with derived manifolds, we can construct a notion of derived stacks which behave nicely from an intersection theory standpoint.

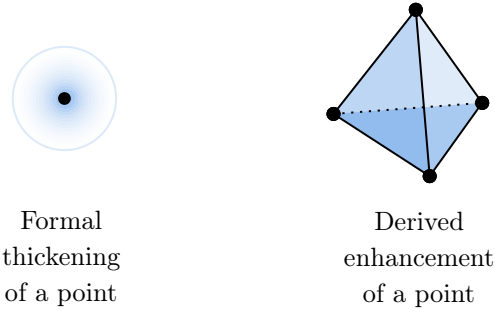
\begin{figure}[h]
    \centering
    \tikzset {_t7y6k0nca/.code = {\pgfsetadditionalshadetransform{ \pgftransformshift{\pgfpoint{0 bp } { 0 bp }  }  \pgftransformscale{1 }  }}}
    \pgfdeclareradialshading{_h9j4xru2l}{\pgfpoint{0bp}{0bp}}{rgb(0bp)=(0.29,0.56,0.89);
    rgb(0bp)=(0.29,0.56,0.89);
    rgb(6.903833661760602bp)=(0.29,0.56,0.89);
    rgb(25bp)=(1,1,1);
    rgb(400bp)=(1,1,1)}
    \tikzset{_b9sf9lngv/.code = {\pgfsetadditionalshadetransform{\pgftransformshift{\pgfpoint{0 bp } { 0 bp }  }  \pgftransformscale{1 } }}}
    \pgfdeclareradialshading{_u6ru8twep} { \pgfpoint{0bp} {0bp}} {color(0bp)=(transparent!38);
    color(0bp)=(transparent!38);
    color(6.903833661760602bp)=(transparent!72);
    color(25bp)=(transparent!100);
    color(400bp)=(transparent!100)} 
    \pgfdeclarefading{_368fxbzlx}{\tikz \fill[shading=_u6ru8twep,_b9sf9lngv] (0,0) rectangle (50bp,50bp); } 
    \tikzset{every picture/.style={line width=0.75pt}} 
    \begin{tikzpicture}[x=0.75pt,y=0.75pt,yscale=-1,xscale=1]
    \draw  [draw opacity=0][fill={rgb, 255:red, 74; green, 144; blue, 226 }  ,fill opacity=0.18 ] (281,82.92) -- (254.5,119.92) -- (247.5,35.92) -- cycle ;
    \draw  [draw opacity=0][fill={rgb, 255:red, 74; green, 144; blue, 226 }  ,fill opacity=0.37 ] (254.5,119.92) -- (206,88.5) -- (247.5,35.92) -- cycle ;
    \draw  [draw opacity=0][fill={rgb, 255:red, 74; green, 144; blue, 226 }  ,fill opacity=0.37 ] (281,82.92) -- (254.5,119.92) -- (206,88.5) -- cycle ;
    \path  [shading=_h9j4xru2l,_t7y6k0nca,path fading= _368fxbzlx ,fading transform={xshift=2}] (44.22,80.71) .. controls (44.22,66.31) and (55.89,54.64) .. (70.29,54.64) .. controls (84.69,54.64) and (96.36,66.31) .. (96.36,80.71) .. controls (96.36,95.11) and (84.69,106.78) .. (70.29,106.78) .. controls (55.89,106.78) and (44.22,95.11) .. (44.22,80.71) -- cycle ; 
    \draw  [color={rgb, 255:red, 74; green, 144; blue, 226 }  ,draw opacity=0.19 ] (44.22,80.71) .. controls (44.22,66.31) and (55.89,54.64) .. (70.29,54.64) .. controls (84.69,54.64) and (96.36,66.31) .. (96.36,80.71) .. controls (96.36,95.11) and (84.69,106.78) .. (70.29,106.78) .. controls (55.89,106.78) and (44.22,95.11) .. (44.22,80.71) -- cycle ; 
    \draw  [fill={rgb, 255:red, 0; green, 0; blue, 0 }  ,fill opacity=1 ] (67.58,80.71) .. controls (67.58,79.21) and (68.8,78) .. (70.29,78) .. controls (71.79,78) and (73,79.21) .. (73,80.71) .. controls (73,82.2) and (71.79,83.42) .. (70.29,83.42) .. controls (68.8,83.42) and (67.58,82.2) .. (67.58,80.71) -- cycle ;
    \draw    (206,88.5) -- (254.5,119.92) ;
    \draw [shift={(254.5,119.92)}, rotate = 32.93] [color={rgb, 255:red, 0; green, 0; blue, 0 }  ][fill={rgb, 255:red, 0; green, 0; blue, 0 }  ][line width=0.75]      (0, 0) circle [x radius= 3.02, y radius= 3.02]   ;
    \draw [shift={(206,88.5)}, rotate = 32.93] [color={rgb, 255:red, 0; green, 0; blue, 0 }  ][fill={rgb, 255:red, 0; green, 0; blue, 0 }  ][line width=0.75]      (0, 0) circle [x radius= 3.02, y radius= 3.02]   ;
    \draw    (254.5,119.92) -- (247.5,35.92) ;
    \draw [shift={(247.5,35.92)}, rotate = 265.24] [color={rgb, 255:red, 0; green, 0; blue, 0 }  ][fill={rgb, 255:red, 0; green, 0; blue, 0 }  ][line width=0.75]      (0, 0) circle [x radius= 3.02, y radius= 3.02]   ;
    \draw [shift={(254.5,119.92)}, rotate = 265.24] [color={rgb, 255:red, 0; green, 0; blue, 0 }  ][fill={rgb, 255:red, 0; green, 0; blue, 0 }  ][line width=0.75]      (0, 0) circle [x radius= 3.02, y radius= 3.02]   ;
    \draw    (206,88.5) -- (247.5,35.92) ;
    \draw [shift={(247.5,35.92)}, rotate = 308.28] [color={rgb, 255:red, 0; green, 0; blue, 0 }  ][fill={rgb, 255:red, 0; green, 0; blue, 0 }  ][line width=0.75]      (0, 0) circle [x radius= 3.02, y radius= 3.02]   ;
    \draw [shift={(206,88.5)}, rotate = 308.28] [color={rgb, 255:red, 0; green, 0; blue, 0 }  ][fill={rgb, 255:red, 0; green, 0; blue, 0 }  ][line width=0.75]      (0, 0) circle [x radius= 3.02, y radius= 3.02]   ;
    \draw  [dash pattern={on 0.84pt off 2.51pt}]  (206,88.5) -- (281,82.92) ;
    \draw [shift={(281,82.92)}, rotate = 355.74] [color={rgb, 255:red, 0; green, 0; blue, 0 }  ][fill={rgb, 255:red, 0; green, 0; blue, 0 }  ][line width=0.75]      (0, 0) circle [x radius= 3.02, y radius= 3.02]   ;
    \draw [shift={(206,88.5)}, rotate = 355.74] [color={rgb, 255:red, 0; green, 0; blue, 0 }  ][fill={rgb, 255:red, 0; green, 0; blue, 0 }  ][line width=0.75]      (0, 0) circle [x radius= 3.02, y radius= 3.02]   ;
    \draw    (254.5,119.92) -- (281,82.92) ;
    \draw [shift={(281,82.92)}, rotate = 305.61] [color={rgb, 255:red, 0; green, 0; blue, 0 }  ][fill={rgb, 255:red, 0; green, 0; blue, 0 }  ][line width=0.75]      (0, 0) circle [x radius= 3.02, y radius= 3.02]   ;
    \draw [shift={(254.5,119.92)}, rotate = 305.61] [color={rgb, 255:red, 0; green, 0; blue, 0 }  ][fill={rgb, 255:red, 0; green, 0; blue, 0 }  ][line width=0.75]      (0, 0) circle [x radius= 3.02, y radius= 3.02]   ;
    \draw    (247.5,35.92) -- (281,82.92) ;
    \draw [shift={(281,82.92)}, rotate = 54.52] [color={rgb, 255:red, 0; green, 0; blue, 0 }  ][fill={rgb, 255:red, 0; green, 0; blue, 0 }  ][line width=0.75]      (0, 0) circle [x radius= 3.02, y radius= 3.02]   ;
    \draw [shift={(247.5,35.92)}, rotate = 54.52] [color={rgb, 255:red, 0; green, 0; blue, 0 }  ][fill={rgb, 255:red, 0; green, 0; blue, 0 }  ][line width=0.75]      (0, 0) circle [x radius= 3.02, y radius= 3.02]   ;
    \draw (20.5,136) node [anchor=north west][inner sep=0.75pt]  [font=\normalsize] [align=left] {\begin{minipage}[lt]{70.07pt}\setlength\topsep{0pt}
    \begin{center}
    {\small Formal thickening}\\{\small of a point}
    \end{center}
    \end{minipage}};
    \draw (186,136.5) node [anchor=north west][inner sep=0.75pt]  [font=\normalsize] [align=left] {\begin{minipage}[lt]{92.54pt}\setlength\topsep{0pt}
    \begin{center}
    {\small Derived enhancement}\\{\small of a point}
    \end{center}
    \end{minipage}};
    \end{tikzpicture}
    \caption{Intuitive picture of the two main generalisations of smooth geometry: formal smooth geometry and derived smooth geometry. In the former, we allow points to be infinitesimally extended, i.e. formally thickened. In the latter, points can be enhanced to a geometric object whose algebra of functions is simplicial.}
    \label{fig:points}
\end{figure}

Usual BV-theory is perturbatively quantised by a certain deformation of the complex of functions on the formal moduli problem (see \cite{FactI,FactII}).
However, in the context of stacks, there exists a proposed quantisation procedure which is completely distinct from BV-theory: higher geometric quantisation.

\paragraph{Higher geometric quantisation.}
\textit{Higher geometric quantisation} \cite{Rog11,SaSza11x,Rog13,SaSza13, FRS18, Fiorenza:2013jz, FSS16, BSS16,BS16,Sza19} is a mathematical framework for constructing a quantum theory from a classical one which generalises ordinary geometric quantisation. See \cite{Bunk:2021quu} for an introduction to the field.
Recall that ordinary geometric quantization is a well-established method for constructing a global-geometric quantisation of the phase space of classical mechanical system, seen as a symplectic manifold $(M,\omega)$. 
This is achieved by the construction of the prequantum $U(1)$-bundle $P\twoheadrightarrow M$ on the symplectic manifold $(M,\omega)$, which is just a principal $U(1)$-bundle $P\twoheadrightarrow M$ whose curvature is $\mathrm{curv}(P)=\omega\in\Omega^2_{\mathrm{cl}}(M)$. The Hilbert space of the system is then constructed as the space of polarised sections of the associated bundle $P\times_{U(1)}\mathbb{C}$.
That being the case, higher geometric quantisation generalises ordinary geometric quantisation in two directions:
\begin{itemize}
    \item the ordinary prequantum $U(1)$-bundle can be generalised to a bundle $n$-gerbe;
    \item the ordinary phase space can be generalised to a symplectic higher stack, as firstly introduced by \cite{Sev01} and further developed by \cite{FRS18, Fiorenza:2013jz, FSS16}.
\end{itemize}
Higher geometric quantisation does, however, suffer from the difficulty that it is not clear, in general, how to polarise sections of the prequantum bundle and consequently how to obtain a fully fledged Hilbert space. In this sense, higher geometric \textit{pre}quantisation is quite successful, but the quantisation step itself is less understood. 

Nonetheless, higher geometric quantisation reminds us of the crucial lesson that quantisation is ultimately a global-geometric process. In contrast, BV-theory is perturbative, since the classical phase space is quantised in a series expansion around a fixed solution, but it has a good understanding of what the quantisation step should look like, at least locally. In this sense, one could argue that strengths and limitations of the two formalisms are complementary.

\subsection{Goals of this paper}

This paper is intended as a first step towards the following main two objectives.
The first one concerns the development of a global-geometric framework for BV-theory and the second concerns its non-perturbative quantisation.
This is closely related to the intriguing work by \cite{Benini:2021tyt} in the context of derived algebraic geometry. 

\paragraph{Goal I: global classical BV-theory.}
The usual approaches to BV-theory are intrinsically perturbative, even just at the classical level. 
As we argued, the reason is that the formalism of usual BV-theory studies a classical field theory in terms of its infinitesimal deformations around a fixed solution of its equations of motion. 
In other words, the formalism of usual BV-theory does not know anything about the global geometry of the configuration space of the field away from the fixed solution. 
However, quantisation is known to be a global process, which depends on the global geometry of the phase space of a field theory.

This fundamental issue is reified, in Yang-Mills field theory, as follows.
A Yang-Mills field configuration is the datum $(P,\nabla_{\!A})$ of a principal $G$-bundle $P\twoheadrightarrow M$ on the spacetime manifold with a connection $\nabla_{\!A}$.
However, pointed formal moduli problems can only encode infinitesimal deformations of some fixed $(P,\nabla_{\!A})$ and the Lie algebra of their infinitesimal gauge transformations. This makes usual BV-theory structurally blind to the global-geometric properties of gauge fields, as already observed by \cite{Benini:2021tyt}.
As an archetypal example, recall that the electromagnetic field has gauge group $U(1)$, so that its infinitesimal gauge transformations are indistinguishable from the ones of a theory with gauge group $\bbR$. However, the global geometry of the electromagnetic field is described by principal $U(1)$-bundles with connection, which come with fundamental global-geometric features -- such as magnetic charges, encoded by the Chern classes of the bundles, and Aharonov-Bohm effects -- that a gauge theory on $\bbR$ would not show.

The first goal is, then, to develop a framework which generalises the formal moduli problems of BV-theory beyond infinitesimal deformation theory. 
To do that, we want to apply To\"en-Vezzosi's derived geometry \cite{ToenVezzo05, ToenVezzo08} to Carchedi-Steffens' derived manifolds \cite{Carchedi2019OnTU} to construct \textit{formal derived smooth stacks}.
These geometric objects must generalise the traditional notion of manifold in the following ways:
\begin{itemize}
    \item[\textbf{formal}:] allows infinitesimally thickened geometric objects, e.g. formal disks;
    \item[\textbf{derived}:] allows a (categorified) generalisation of intersections, e.g. non-transversal intersections;
    \item[\textbf{smooth}:]  allows smooth geometric objects, e.g. smooth manifolds and diffeological spaces;
    \item[\textbf{stack}:] allows a (categorified) generalisation of gluing, e.g. gauge transformations.
\end{itemize}
Our proposed framework of formal derived smooth stacks will be rooted in the formalism of Schreiber's differential cohesion \cite{DCCTv2}, which has been applied to formalise many higher geometric structures underlying theoretical physics \cite{FSS12, FSS15x, BSS18, FSS19x, FSS19xxx, BSS19, HSS19, FSS19xx, FSS19coho, SS19, Fiorenza:2020iax, Sati:2020nob, Sati:2021uhj, myers2022orbifolds}.

\paragraph{Goal II: non-perturbative BV-quantisation.}
Once clarified the global geometry for classical BV-theory -- as in the previous point -- the next objective is to define a notion of \textit{non-perturbative BV-quantisation}.
Such a quantisation procedure is meant to turn a non-perturbative classical BV-theory, as constructed in the first goal, into a non-perturbative quantum BV-theory.
We will suggest that non-perturbative BV-quantisation should generalise at once usual perturbative BV-quantisation and higher geometric quantisation.
This may not be surprising since, as we argued, the limitations of the two quantisation procedures appear to be complementary.
In the outlook of this paper we will show how the next step towards non-perturbative BV-quantisation should look like.

\vspace{0.4cm}

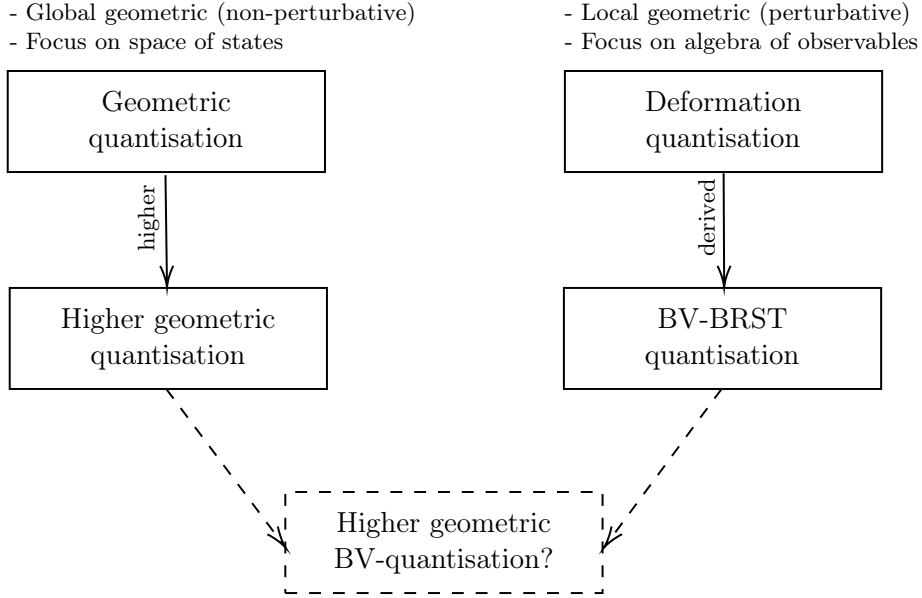
\begin{figure}[h!]
\centering
\tikzset{every picture/.style={line width=0.75pt}} 
\begin{tikzpicture}[x=0.75pt,y=0.75pt,yscale=-1,xscale=1]
\draw    (90.31,146.75) -- (89.56,92.06) ;
\draw [shift={(90.33,148.75)}, rotate = 269.21] [color={rgb, 255:red, 0; green, 0; blue, 0 }  ][line width=0.75]    (10.93,-3.29) .. controls (6.95,-1.4) and (3.31,-0.3) .. (0,0) .. controls (3.31,0.3) and (6.95,1.4) .. (10.93,3.29)   ;
\draw    (371.32,146.75) -- (371.06,91.06) ;
\draw [shift={(371.33,148.75)}, rotate = 269.72] [color={rgb, 255:red, 0; green, 0; blue, 0 }  ][line width=0.75]    (10.93,-3.29) .. controls (6.95,-1.4) and (3.31,-0.3) .. (0,0) .. controls (3.31,0.3) and (6.95,1.4) .. (10.93,3.29)   ;
\draw  [dash pattern={on 4.5pt off 4.5pt}]  (90,200) -- (148.97,278.48) ;
\draw [shift={(150.17,280.08)}, rotate = 233.08] [color={rgb, 255:red, 0; green, 0; blue, 0 }  ][line width=0.75]    (10.93,-3.29) .. controls (6.95,-1.4) and (3.31,-0.3) .. (0,0) .. controls (3.31,0.3) and (6.95,1.4) .. (10.93,3.29)   ;
\draw  [dash pattern={on 4.5pt off 4.5pt}]  (370,200) -- (311.86,278.48) ;
\draw [shift={(310.67,280.08)}, rotate = 306.53] [color={rgb, 255:red, 0; green, 0; blue, 0 }  ][line width=0.75]    (10.93,-3.29) .. controls (6.95,-1.4) and (3.31,-0.3) .. (0,0) .. controls (3.31,0.3) and (6.95,1.4) .. (10.93,3.29)   ;
\draw    (10,39.46) -- (170,39.46) -- (170,90.46) -- (10,90.46) -- cycle  ;
\draw (90,64.96) node   [align=left] {\begin{minipage}[lt]{106.76pt}\setlength\topsep{0pt}
\begin{center}
Geometric quantisation
\end{center}
\end{minipage}};
\draw    (11,148.96) -- (171,148.96) -- (171,199.96) -- (11,199.96) -- cycle  ;
\draw (91,174.46) node   [align=left] {\begin{minipage}[lt]{106.76pt}\setlength\topsep{0pt}
\begin{center}
Higher geometric quantisation
\end{center}
\end{minipage}};
\draw    (290.5,148.96) -- (450.5,148.96) -- (450.5,199.96) -- (290.5,199.96) -- cycle  ;
\draw (370.5,174.46) node   [align=left] {\begin{minipage}[lt]{106.76pt}\setlength\topsep{0pt}
\begin{center}
BV-BRST\\quantisation
\end{center}
\end{minipage}};
\draw (74.33,132) node [anchor=north west][inner sep=0.75pt]  [font=\footnotesize,rotate=-270] [align=left] {higher};
\draw (356.83,134.5) node [anchor=north west][inner sep=0.75pt]  [font=\footnotesize,rotate=-270] [align=left] {derived};
\draw    (291,39.46) -- (451,39.46) -- (451,90.46) -- (291,90.46) -- cycle  ;
\draw (371,64.96) node   [align=left] {\begin{minipage}[lt]{106.76pt}\setlength\topsep{0pt}
\begin{center}
Deformation quantisation
\end{center}
\end{minipage}};
\draw  [dash pattern={on 4.5pt off 4.5pt}]  (150,251.96) -- (310,251.96) -- (310,302.96) -- (150,302.96) -- cycle  ;
\draw (230,277.46) node   [align=left] {\begin{minipage}[lt]{106.76pt}\setlength\topsep{0pt}
\begin{center}
Higher geometric \\BV-quantisation?
\end{center}
\end{minipage}};
\draw (9.5,3) node [anchor=north west][inner sep=0.75pt]  [font=\footnotesize] [align=left] {\mbox{-} Global geometric (non-perturbative)\\\mbox{-} Focus on space of states};
\draw (289.5,3) node [anchor=north west][inner sep=0.75pt]  [font=\footnotesize] [align=left] {\mbox{-} Local geometric (perturbative)\\\mbox{-} Focus on algebra of observables};
\end{tikzpicture}
\caption{The two main quantisation procedures and their potential relation.}
\label{fig:scheme}
\end{figure}

\subsection{Overview of main results}

Here, we will provide a brief overview of all the main results of this paper section by section.

\paragraph{Model of formal derived smooth stacks.} 
In section \ref{sec:fdsmoothstack}, we introduce the fundamental geometric object which we are going to consider in this paper: the formal derived smooth stack. 
To define formal derived smooth stacks, first we must introduce formal derived smooth manifolds, which will be our probing spaces.
In this respect, \cite{Carchedi2019OnTU} tells us that there is a canonical equivalence of $(\infty,1)$-categories $\mathbf{dMfd} \simeq \mathbf{sC^\infty Alg}^\op_{\mathrm{fp}}$ between the $(\infty,1)$-category $\mathbf{dMfd}$ of derived manifolds and the opposite $(\infty,1)$-category $\mathbf{sC^\infty Alg}_{\mathrm{fp}}$ of homotopically finitely presented $\Coo$-algebras.
However, to achieve our goals, we will need to slightly generalise the notion of derived manifold.
In analogy with the discussion of \cite{Calaque:2017} in the context of algebraic geometry, we define the $(\infty,1)$-category of formal derived smooth manifolds by
\begin{equation}
    \mathbf{dFMfd} \; \coloneqq\; \mathbf{sC^\infty Alg}^\op_{\mathrm{fg}},
\end{equation}
where $\mathbf{sC^\infty Alg}_{\mathrm{fg}}$ is the $(\infty,1)$-category of finitely generated $\Coo$-algebras.
We then define the notion of formally \'etale morphisms of formal derived smooth manifolds and, thus, we equip the $(\infty,1)$-category $\mathbf{dFMfd}$ with the structure of an \'etale $(\infty,1)$-site.

Finally, we define the $(\infty,1)$-category $\mathbf{dFSmoothStack}$ of formal derived smooth stacks as the $(\infty,1)$-category of stacks on the site $\mathbf{dFMfd}$. More technically, we will see that there is a certain simplicial model category $[\mathsf{dFMfd}^\op,\sSet]^\circ_{\mathrm{proj,loc}}$ whose homotopy coherent nerve presents the $(\infty,1)$-category of formal derived smooth stacks, i.e.
\begin{equation}
    \mathbf{dFSmoothStack} \; \coloneqq\; \mathbf{N}_{hc}([\mathsf{dFMfd}^\op,\sSet]^\circ_{\mathrm{proj,loc}}).
\end{equation}
The relation between formal derived smooth stacks and usual smooth stacks will be clarified by the following proposition.

\textbf{Proposition 3.21} (Relation with usual smooth stacks)\textbf{.}
There exists an adjunction $(i\dashv t_0)$ of $(\infty,1)$-functors between the $(\infty,1)$-category of smooth stacks into the $(\infty,1)$-category of formal derived smooth stacks
\begin{equation}
    \begin{tikzcd}[row sep=scriptsize, column sep=15.5ex, row sep=13.0ex]
     \mathbf{dFSmoothStack} \arrow[r, "t_0", shift left=-1.5ex] &  \mathbf{SmoothStack}, \arrow[l, "i "', shift left=-1.5ex, hook'] 
    \end{tikzcd}
\end{equation}
where $i$ is fully faithful and $t_0$ preserves finite products.

The relation of formal derived smooth stacks with smooth stacks and other relevant classes of smooth spaces is summed up in figure \ref{fig:my_label2}.
\begin{figure}[h]
    \centering
    \begin{equation*}
    \begin{tikzcd}[remember picture, row sep={23ex,between origins}, column sep={16ex,between origins}]
    \text{smooth manifolds}^{\,\op} \arrow[d, hook] \arrow[rrr, "\text{smooth sets}", "\!\!\!\!\!\!\!\!\!\!\!\!\text{\tiny(e.g.$\,$diffeological$\,$spaces)}"']\arrow[ddrrr, "\;\;\;\qquad\text{smooth stacks}",sloped, start anchor={[xshift=+2ex]}, start anchor={[yshift=+0.8ex]}] &&& \text{sets} \arrow[dd, hook] \\
    \shortstack{\text{formal}\\\text{smooth manifolds}}^{\!\!\begin{matrix}{\scriptsize}\\[-0.4cm]{\scriptsize\text{op}}\end{matrix}} \arrow[d, hook, dashed] \arrow[rrru, "\;\;\;\;\;\;\;\substack{\text{formal}\\\text{smooth sets}}", sloped, crossing over, start anchor={[xshift=-4.0ex]}] \arrow[rrrd, "\substack{\text{formal}\\\text{smooth stacks}}", sloped, crossing over] \\
    \shortstack{\textbf{formal derived}\\\textbf{smooth manifolds}}^{\!\!\op} \arrow[rrr, "\substack{\textbf{formal derived}\\\textbf{smooth stacks}}", start anchor={[xshift=+0.4ex]}, end anchor={[xshift=-0.4ex]}, " "', sloped, dashed]& && \infty\text{-groupoids}
    \end{tikzcd}
    \end{equation*}
    \caption{A summary family tree of stacks in formal derived smooth geometry. }
    \label{fig:my_label2}
\end{figure}
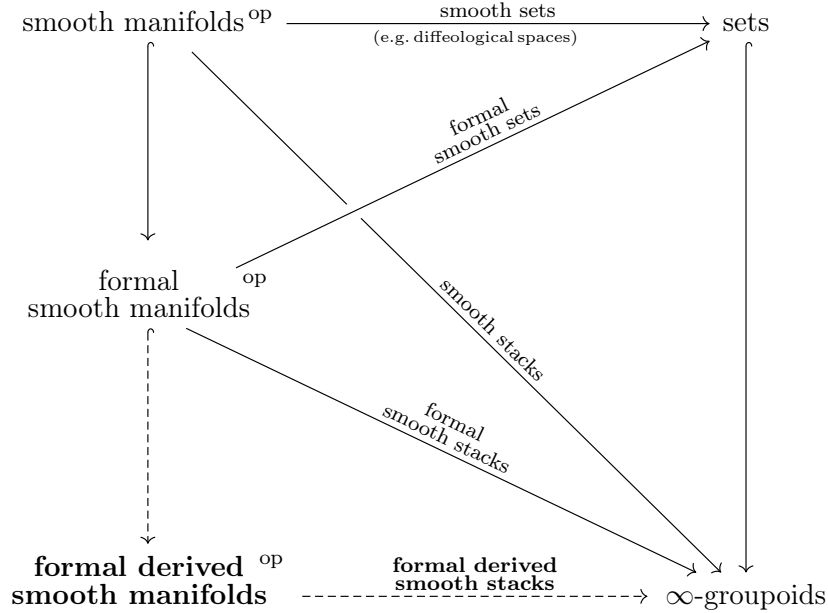

Since the functor $t_0$ preserves finite products, we have the following equivalence of smooth stacks:
\begin{equation}
    {t}_0\big(i(X)\times_{i(Z)}^h i(Y)\big)\,\xrightarrow{\;\;\,\simeq\,\;\;}\, X\times_Z Y,
\end{equation}
for any formal derived smooth stacks $X$ and $Y$.

\paragraph{Differential forms on formal derived smooth stacks.} 
In the last part of section \ref{sec:fdsmoothstack}, we define the $(\infty,1)$-category $\QCoh{X}$ of quasi-coherent sheaves of modules on a formal derived smooth stack $X\in\mathbf{dFSmoothStack}$. 
In particular, we provide the definition of cotangent complex $\bbL_X\in\QCoh{X}$ of a formal derived smooth stack $X$ in a sense which is compatible with its formal derived smooth structure.
Then, we construct the complex of $p$-forms on a formal derived smooth stack $X$ by 
\begin{equation}
    \mathrm{A}^p(X) \,\coloneqq\, \bfR\Gamma(X,\wedge^p_{\mathbb{O}_X}\bbL_X).
\end{equation}
Complex of closed $p$-forms on a formal derived smooth stack $X$ by 
\begin{equation}
    \mathrm{A}^p_{\mathrm{cl}}(X) \;\coloneqq\; \bigg(\prod_{n\geq p}\mathrm{A}^n(X)[-n]\bigg)[p].
\end{equation}
This implies that an $n$-cocycle in $\mathrm{A}_\mathrm{cl}^p(X)$ is given by a formal sum $(\omega_i)=(\omega_p + \omega_{p+1} + \dots)$, where each form $\omega_{i}\in \mathrm{A}^{i}(X)$ is an element of degree $n+p-i$, satisfying the equations
\begin{equation}
\begin{aligned}
    Q\omega_{p} \,&=\, 0,\\
    \di_\dR\omega_{i} + Q\omega_{i+1} \,&=\, 0,
\end{aligned}
\end{equation}
for every $i\geq p$.
Finally, we construct the formal derived smooth stack $\pmb{\mathcal{A}}^p(n)$ as moduli stack of $n$-shifted differential $p$-forms and $\pmb{\mathcal{A}}^p_{\mathrm{cl}}(n)$ as moduli stack of closed $n$-shifted differential $p$-forms.

\paragraph{Derived differential structure.} 
in sections \ref{sec:derdiffcohe} we show that the formalism of differential structures introduced by Schreiber in \cite{DCCTv2} extends very naturally to the derived smooth setting.

\textbf{Theorem 4.7} (Differential $(\infty,1)$-topos of formal derived smooth stacks)\textbf{.}
The $(\infty,1)$-topos $\mathbf{dFSmoothStack}$ of formal derived smooth stacks is naturally equipped with a differential structure.

Such a structure, which we will call derived differential structure, induces the following triplet of adjoint endofunctors:
\begin{equation}
   (\Re \;\dashv\;  \Im \;\dashv\; \&):\,\mathbf{dFSmoothStack}\,\longrightarrow\,\mathbf{dFSmoothStack},
\end{equation}
where we respectively have:
\begin{enumerate}[label=(\textit{\roman*})]
    \item \textit{infinitesimal reduction modality} $\Re$,
    \item \textit{infinitesimal shape modality} $\Im$,
    \item \textit{infinitesimal flat modality} $\&$.
\end{enumerate}
Differential topos geometry underpins the definition of the de Rham space $\Im(X)$ of any formal derived smooth stack $X$ by the infinitesimal shape modality. This could be interpreted as an infinitesimal version of the path $\infty$-groupoid of $X$ and its role will be pivotal.
In fact, we can define the formal disk $\bbD_{X,x}$ at the point $x:\ast\rightarrow X$ of a formal derived smooth stack $X\in\mathbf{dFSmoothStack}$ by the homotopy pullback of formal derived smooth stacks
\begin{equation}
    \begin{tikzcd}[row sep={12ex,between origins}, column sep={12.5ex,between origins}]
    {\bbD_{X,x}} \arrow[d] \arrow[r, hook] & X \arrow[d, "{\mathfrak{i}}_X"] \\
    \ast \arrow[r, hook, "{x}"] & \Im(X),
    \end{tikzcd}
\end{equation}
where $\mathfrak{i}_X:X\longrightarrow \Im(X)$ is a natural map.
The definition of formal disk entails the geometry of jets of formal derived smooth stacks.

\paragraph{Relation with formal moduli problems.} 
In the second half of section \ref{sec:derdiffcohe} we study the relation of formal derived smooth stacks with formal moduli problems.
We introduce the simplicial category $\mathsf{dgArt}^{\leq 0}_\bbR$ of dg-Artinian algebras, then
we construct the $(\infty,1)$-category of formal moduli problems by the $(\infty,1)$-category of pre-stacks
\begin{equation}
    \mathbf{FMP} \;\coloneqq\; \mathbf{N}_{hc}([\mathsf{dgArt}^{\leq 0}_\bbR, \sSet]_\mathrm{proj}^\circ),
\end{equation}
with its natural structure of $(\infty,1)$-topos of pre-stacks.

\vspace{1cm}

\begin{figure}[h]
    \centering
    \tikzset {_7k28g3xcg/.code = {\pgfsetadditionalshadetransform{ \pgftransformshift{\pgfpoint{83.16 bp } { -103.62 bp }  }  \pgftransformscale{1.32 }  }}}
\pgfdeclareradialshading{_4mt4nvhz9}{\pgfpoint{-72bp}{88bp}}{rgb(0bp)=(1,1,1);
rgb(0bp)=(1,1,1);
rgb(25bp)=(0.55,0.55,0.55);
rgb(400bp)=(0.55,0.55,0.55)}
\tikzset {_emqq0w7mx/.code = {\pgfsetadditionalshadetransform{ \pgftransformshift{\pgfpoint{0 bp } { 0 bp }  }  \pgftransformscale{1.08 }  }}}
\pgfdeclareradialshading{_arphx4gqd}{\pgfpoint{0bp}{0bp}}{rgb(0bp)=(0.29,0.56,0.89);
rgb(0bp)=(0.29,0.56,0.89);
rgb(25bp)=(0.29,0.56,0.89);
rgb(400bp)=(0.29,0.56,0.89)}
\tikzset{_1h9pey1mj/.code = {\pgfsetadditionalshadetransform{\pgftransformshift{\pgfpoint{0 bp } { 0 bp }  }  \pgftransformscale{1.08 } }}}
\pgfdeclareradialshading{_dca4qijva} { \pgfpoint{0bp} {0bp}} {color(0bp)=(transparent!0);
color(0bp)=(transparent!0);
color(25bp)=(transparent!97);
color(400bp)=(transparent!97)} 
\pgfdeclarefading{_x253gi8zn}{\tikz \fill[shading=_dca4qijva,_1h9pey1mj] (0,0) rectangle (50bp,50bp); } 
\tikzset{every picture/.style={line width=0.75pt}} 
\begin{tikzpicture}[x=0.75pt,y=0.75pt,yscale=-1,xscale=1]
\path  [shading=_4mt4nvhz9,_7k28g3xcg] (161.08,159.71) .. controls (161.08,96.91) and (211.99,46) .. (274.79,46) .. controls (337.59,46) and (388.5,96.91) .. (388.5,159.71) .. controls (388.5,222.51) and (337.59,273.42) .. (274.79,273.42) .. controls (211.99,273.42) and (161.08,222.51) .. (161.08,159.71) -- cycle ; 
 \draw  [color={rgb, 255:red, 0; green, 0; blue, 0 }  ,draw opacity=1 ] (161.08,159.71) .. controls (161.08,96.91) and (211.99,46) .. (274.79,46) .. controls (337.59,46) and (388.5,96.91) .. (388.5,159.71) .. controls (388.5,222.51) and (337.59,273.42) .. (274.79,273.42) .. controls (211.99,273.42) and (161.08,222.51) .. (161.08,159.71) -- cycle ; 
\path  [shading=_arphx4gqd,_emqq0w7mx,path fading= _x253gi8zn ,fading transform={xshift=2}] (239,60) .. controls (239,48.95) and (254.67,40) .. (274,40) .. controls (293.33,40) and (309,48.95) .. (309,60) .. controls (309,71.05) and (293.33,80) .. (274,80) .. controls (254.67,80) and (239,71.05) .. (239,60) -- cycle ; 
 \draw  [color={rgb, 255:red, 74; green, 144; blue, 226 }  ,draw opacity=0.76 ][dash pattern={on 4.5pt off 4.5pt}][line width=0.75]  (239,60) .. controls (239,48.95) and (254.67,40) .. (274,40) .. controls (293.33,40) and (309,48.95) .. (309,60) .. controls (309,71.05) and (293.33,80) .. (274,80) .. controls (254.67,80) and (239,71.05) .. (239,60) -- cycle ; 
\draw    (274.5,23.42) -- (274.77,55.79) ;
\draw [shift={(274.79,57.79)}, rotate = 269.51] [color={rgb, 255:red, 0; green, 0; blue, 0 }  ][line width=0.75]    (6.56,-1.97) .. controls (4.17,-0.84) and (1.99,-0.18) .. (0,0) .. controls (1.99,0.18) and (4.17,0.84) .. (6.56,1.97)   ;
\draw    (351,35.92) -- (294.86,57.69) ;
\draw [shift={(293,58.42)}, rotate = 338.8] [color={rgb, 255:red, 0; green, 0; blue, 0 }  ][line width=0.75]    (6.56,-1.97) .. controls (4.17,-0.84) and (1.99,-0.18) .. (0,0) .. controls (1.99,0.18) and (4.17,0.84) .. (6.56,1.97)   ;
\draw    (416,159.92) -- (358.5,159.43) ;
\draw [shift={(356.5,159.42)}, rotate = 0.48] [color={rgb, 255:red, 0; green, 0; blue, 0 }  ][line width=0.75]    (6.56,-1.97) .. controls (4.17,-0.84) and (1.99,-0.18) .. (0,0) .. controls (1.99,0.18) and (4.17,0.84) .. (6.56,1.97)   ;
\draw  [color={rgb, 255:red, 0; green, 67; blue, 148 }  ,draw opacity=1 ][fill={rgb, 255:red, 0; green, 80; blue, 173 }  ,fill opacity=1 ] (273.58,60) .. controls (273.58,59.33) and (274.12,58.79) .. (274.79,58.79) .. controls (275.46,58.79) and (276,59.33) .. (276,60) .. controls (276,60.67) and (275.46,61.21) .. (274.79,61.21) .. controls (274.12,61.21) and (273.58,60.67) .. (273.58,60) -- cycle ;
\draw (354.83,28.9) node [anchor=north west][inner sep=0.75pt]  [font=\small]  {$\mathbf{MC}(\mathfrak{L})$};
\draw (418.83,153.9) node [anchor=north west][inner sep=0.75pt]  [font=\small]  {$X$};
\draw (267.83,3.4) node [anchor=north west][inner sep=0.75pt]  [font=\small]  {$\itPhi $};
\draw (450.83,156) node [anchor=north west][inner sep=0.75pt]  [font=\scriptsize] [align=left] {Globally defined\\formal derived smooth stack\\of phases of a field theory};
\draw (414.33,29.5) node [anchor=north west][inner sep=0.75pt]  [font=\scriptsize] [align=left] {\begin{minipage}[lt]{121.38pt}\setlength\topsep{0pt}
\begin{center}
Infinitesimal neighborhood of $\displaystyle \itPhi $\\=
\end{center}
\textit{Pointed formal moduli problem}\\underlying the BV-complex
\end{minipage}};
\draw (124.83,7) node [anchor=north west][inner sep=0.75pt]  [font=\scriptsize] [align=left] {Fixed solution of the e.o.m};
\end{tikzpicture}
    \caption{The pointed formal moduli problem underlying the BV-complex can be seen as the infinitesimal neighborhood of a fixed solution in a formal derived smooth stack corresponding to a given classical field theory.}
    \label{fig:stackpicture}
\end{figure}
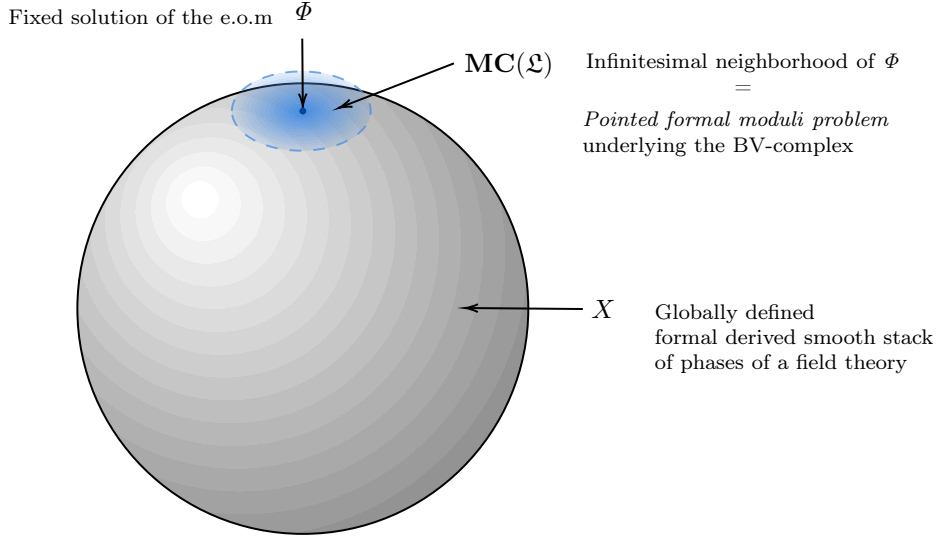

The following proposition characterises the $(\infty,1)$-category of formal moduli problems as a cohesive $(\infty,1)$-topos which is, in particular, infinitesimally cohesive in the sense of \cite[Definition 4.1.21]{DCCTv2}. 
This, roughly, means that the objects of $\mathbf{FMP}$ are infinitesimally thickened simplicial sets of points.

\textbf{Proposition 4.42} (Infinitesimal cohesive $(\infty,1)$-topos of formal moduli problems)\textbf{.} 
The $(\infty,1)$-topos $\mathbf{FMP}$ of formal moduli problems has a natural infinitesimally cohesive structure in the sense of \cite[Definition 4.1.21]{DCCTv2}.

Moreover, we will show that the $(\infty,1)$-topos of formal moduli problems is related to the one of formal derived smooth stacks by morphisms of $(\infty,1)$-topoi of the following form:
\begin{equation*}
    \displaystyle\text{Smooth stacks} \;\;\xhookrightarrow{\qquad\quad}\;\; \substack{ \displaystyle\text{Formal derived} \\ \displaystyle\text{smooth stacks}} \;\;\xtwoheadrightarrow{\qquad\quad}\;\; \substack{ \displaystyle\text{Formal} \\ \displaystyle\text{Moduli Problems,}}
\end{equation*}
presenting formal derived smooth stacks as a refinement of usual smooth stacks.
We will make this relation precise in terms of an adjunction,
inducing an endofunctor
\begin{equation}
   \flat^{\mathrm{rel}} \,:\,\mathbf{dFSmoothStack}\,\longrightarrow\,\mathbf{dFSmoothStack},
\end{equation}
which is strictly related to Lie differentiation in the formal smooth derived context.

\paragraph{Global BV-BRST formalism.} 
In section \ref{sec:global_aspects_of_bv_theory} we study some global aspects of BV-theory in the geometric context of derived differential cohesion.
Let $\mathrm{Bun}_G^\nabla(M)$ be the bare groupoid of principal $G$-bundles on $M$ with connection.
It is well-understood that this can be made into a smooth stack $\Bun_G^\nabla(M)$ such that, for any smooth manifold $U$, a section
\begin{equation}
    (P,\nabla_A)\,:U\,\rightarrow\,\Bun_G^\nabla(M),
\end{equation}
is a $U$-parametrised smooth family of principal $G$-bundles on the base manifold $M$ with connection. 
We can think of this smooth stack as the global configuration space of a gauge field with gauge group $G$ on a spacetime manifold $M$.

We will consider the Yang-Mills action functional $S$ as a smooth map of stacks.
The derived critical locus of the action functional is a derived formal smooth stack $\bfR\crit{S}(M)$ which is given by a homotopy pullback of the form
\begin{equation}
    \begin{tikzcd}[row sep={20ex,between origins}, column sep={22ex,between origins}]
    \bfR\crit{S}(M) \arrow[r]\arrow[d, ""'] & \Bun_G^\nabla(X) \arrow[d, "\delta S"] \\
    \Bun_G^\nabla(M) \arrow[r, "0"] & T^\vee_\mathrm{res}{\Bun_G^\nabla(M)} ,
\end{tikzcd}
\end{equation}
where $T^\vee_\mathrm{res}{\Bun_G^\nabla(M)}$ is the restricted cotangent bundle of the configuration space and $\delta S$ is the variational derivative of the action functional.
The derived critical locus $\bfR\crit{S}(M)$ is such that, for any formal derived smooth manifold $U$, a section
\begin{equation}
    (P,\nabla_A,A^+,c^+)\,:U\,\rightarrow\,\bfR\crit{S}(M),
\end{equation}
is given by a $U$-parametrised family $(P,\nabla_A)$ of principal $G$-bundles on $M$ with connection, together with global \textit{antifields} $A^+\in\Omega^{d-1}(M,\mathfrak{g}_P)$ and global \textit{antighosts} $c^+\in\Omega^{d}(M,\mathfrak{g}_P)$.

It is important to stress that a point $(P,\nabla_A)\in\bfR\crit{S}(M)$ in the derived critical locus is a globally defined principal $G$-bundle with connection which satisfies the Yang-Mills equations of motion, i.e. we have that it satisfies
\begin{equation*}
\begin{aligned}
    \nabla_{\!A}F_A \;&=\; 0 && \text{(Bianchi identity)}\\[0.5ex]
    \nabla_{\!A}\star\! F_A \;&=\; 0 && \text{(Equations of motion)}.
\end{aligned}
\end{equation*}

\section{Lightning review of smooth stacks}

In this section we will provide a brief review of the theory of smooth stacks -- which are sometimes known as differentiable stacks in the literature.

\subsection{Smooth sets}\label{sec:smooth_set}

Let $\Mfd$ be the ordinary category whose objects are smooth manifolds and whose morphisms are smooth maps between them. 
We stress that in all the rest of this paper sans serif will be used to denote ordinary categories.
Now, we can provide the category $\Mfd$ with the structure of a site by assigning to each smooth manifold $M\in\Mfd$ a collection of covering families, i.e. a collection of families of morphisms $\{ U_i \rightarrow M \}_{i\in I}$ satisfying some conditions.

\begin{definition}[Covering of a smooth manifold]
We define a \textit{covering family of a smooth manifold $M$} as a set of injective local diffeomorphisms
\begin{equation}
    \{ U_i \,\xhookrightarrow{\;\;\phi_i\;\;} M \}_{i\in I}
\end{equation}
such that they induce a surjective local diffeomorphism
\begin{equation}
    \coprod_{i\in I} U_i \,\xtwoheadrightarrow{\;\,(\phi_i)_{i\in I}\,\;} M.
\end{equation}
\end{definition}

The site structure on $\Mfd$ given by the choice of covering families above is known as \'etale site.

\begin{definition}[Smooth sets]\label{def:smooth_set}
\textit{Smooth sets} are defined  as sheaves on the site of smooth manifolds $\mathsf{Mfd}$. The category of smooth sets is, then, defined by
\begin{equation}
    \mathsf{SmoothSet} \;\coloneqq\; \mathsf{Sh}(\mathsf{Mfd}).
\end{equation}
\end{definition}

The usual gluing axiom of sheaves can be seen in the following light.
Let $\{ U_i \rightarrow M \}_{i\in I}$ be a covering family and notice that $M$ can be rewritten as the colimit of the diagram of manifolds
\begin{equation}
    M \;\simeq\; \mathrm{colim} \bigg( \begin{tikzcd}[row sep=scriptsize, column sep=3.5ex]
     \displaystyle\coprod_{i,j\in I}U_i\times_M U_j  \arrow[r, yshift=0.7ex] \arrow[r, yshift=-0.7ex] & \displaystyle\coprod_{i\in I}U_i 
    \end{tikzcd}    \bigg).
\end{equation}
Then, $X$ to be a sheaf, must have a set of sections on $M$ given by the limit of the diagram
\begin{equation}
    X(M) \;\simeq\; \mathrm{lim} \bigg( \begin{tikzcd}[row sep=scriptsize, column sep=3.5ex]
     \displaystyle\prod_{i,j\in I}X(U_i\times_M U_j)   & \displaystyle\prod_{i\in I}X(U_i) \arrow[l, yshift=0.7ex] \arrow[l, yshift=-0.7ex]
    \end{tikzcd} \bigg).
\end{equation}

\begin{example}[Yoneda embedding of smooth manifolds]
A smooth manifold is the simplest example of smooth set.
Let $M\in\mathsf{Mfd}$ be a smooth manifold, then it naturally Yoneda-embeds into a smooth set of the form
\begin{equation}
\begin{aligned}
    M\,:\,\mathsf{Mfd}^\op \,&\longrightarrow\, \mathsf{Set}  \\
    U \,&\longmapsto\, \Hom_\mathsf{Mfd}(U,M),
\end{aligned}
\end{equation}
where $\Hom_\mathsf{Mfd}(U,M)$. Thus, we have the full and faithful embedding of categories
\begin{equation}
    \mathsf{Mfd}  \longhookrightarrow \mathsf{SmoothSet}.
\end{equation}
\end{example}
(In what follows, we shall sometimes make use of this embedding without comment.) 

The notion of smooth set is a categorically well-behaved generalisation of smooth manifold which, crucially, allow us to encode finite-dimensional smooth spaces. A relevant example is the smooth space $[M,N]$ of functions from a smooth manifold $M$ to $N$. 

\begin{example}[Mapping space]
Let $M,N\in\mathsf{Mfd}$ be a pair of smooth manifolds. We can define the mapping space $[M,N]\in\mathsf{SmoothSet}$ by the smooth set
\begin{equation}
\begin{aligned}
    [M,N]\,:\,\mathsf{Mfd}^\op \,&\longrightarrow\, \mathsf{Set}  \\
    U \,&\longmapsto\, \Hom_\mathsf{FMfd}(U\times M,N),
\end{aligned}
\end{equation}
functorially, on elements $U\in\mathsf{Mfd}$ of the site.
\end{example}

\begin{example}[Moduli space of differential forms]
It is possible to define a smooth set $\Omega^1\in\mathsf{SmoothSet}$, which we can call moduli space of differential forms, by
\begin{equation}
\begin{aligned}
    \bfOmega^1\,:\,\mathsf{Mfd}^\op \,&\longrightarrow\, \mathsf{Set}  \\
    U \,&\longmapsto\, \Omega^1(U),
\end{aligned}
\end{equation}
and by sending morphisms $f:U\rightarrow U'$ to pullbacks $f^\ast:\Omega^1(U')\rightarrow\Omega^1(U)$.
\end{example}

This remarkably abstract moduli space of differential forms is very useful in practice, because it allows us to work with differential forms on general formal smooth sets, including mapping spaces.

\begin{definition}[Differential forms on a smooth set]
We define the \textit{set of differential $1$-forms} on a given smooth set $X\in\mathsf{SmoothSet}$ by the following hom-set of smooth sets:
\begin{equation}
    \Omega^1(X) \;\coloneqq\; \Hom(X,\bfOmega^1),
\end{equation}
where $\bfOmega^1\in\mathsf{SmoothSet}$ is the moduli space of differential forms.
\end{definition}

\begin{remark}[de Rham differential]
There exists a canonical morphism of smooth sets
\begin{equation}
    \di_\dR \,:\, \bbR \,\longrightarrow \, \bfOmega^1,
\end{equation}
which is given by the differential $\di:\Coo(U,\bbR)\rightarrow\Omega^1(U)$ of function on each smooth manifold $U$ in the site. This particularly exotic morphism of smooth sets $\di_\dR\in\Hom(\bbR,\bfOmega^1)$ is known as de Rham differential.
\end{remark}

\begin{remark}[Pullback of differential forms]
Given a morphism $f:X\longrightarrow Y$ of smooth sets $X,Y\in\mathsf{SmoothSet}$, we have a morphism of sets $f^\ast:\Omega^{p}(Y)\longrightarrow\Omega^p(X)$ such that the following square commutes
\begin{equation}
    \begin{tikzcd}[row sep=scriptsize, column sep=8.2ex, row sep=11.0ex]
     \Omega^p(Y) \arrow[r, "\di_\dR"]\arrow[d, "f^\ast"] & \Omega^{p+1}(Y)\arrow[d, "f^\ast"] \\
     \Omega^p(X) \arrow[r, "\di_\dR"] & \Omega^{p+1}(X)
    \end{tikzcd}
\end{equation}
\end{remark}

\begin{remark}[Variational calculus on smooth sets]
The power of smooth sets is their capacity to provide a well-defined formalism for variational calculus.
For example, we can consider the mapping space $[M,\bbR]$ for a given smooth manifold $M$. This can be thought as the infinite-dimensional smooth space of smooth functions on the manifold $M$, and there is no issue in working with differential forms on such a large space: differential $1$-forms are simply given by $\Omega^1([M,\bbR])\coloneqq\Hom([M,\bbR],\bfOmega^1)$, as above.
Similarly, a smooth functional on such a space will be given by a morphism of smooth sets
\begin{equation}
    S\,:\, [M,\bbR] \,\longrightarrow \, \bbR
\end{equation}
to the real line.
The so-called first variation of this functional is immediately given by the following composition:
\begin{equation}
    \di_\dR S\,:\, [M,\bbR] \,\xrightarrow{\;\;S\;\;} \, \bbR \,\xrightarrow{\;\;\di_\dR\;\;} \, \bfOmega^1,
\end{equation}
which means that we have obtained a perfectly legitimate $1$-form $\di_\dR S\in\Omega^1([M,\bbR])$ on the infinite-dimensional mapping space $[M,\bbR]$ of smooth functions on $M$.
\end{remark}

We can now define the functor which forgets the smooth structure of formal smooth sets, i.e. which sends any smooth set to its underlying bare set.

\begin{definition}[Global section functor]
We define the the \textit{global section functor} by 
\begin{equation}
    \mathit{\Gamma}(-)\coloneqq \Hom_\mathsf{SmoothSet}(\,\ast\,,-)\,:\,\mathsf{SmoothSet}\,\longrightarrow\, \mathsf{Set}.
\end{equation}
\end{definition}

The global section functor will allow us to define an important class of smooth sets: diffeological spaces.
Diffeological spaces were firstly introduced by \cite{Soriau79, Soriau84} and then reformulated by \cite{iglesiaszemmour}.
A diffeological space is a powerful generalisation of a smooth manifold which, in particular, provides a natural setting to study infinite-dimensional smooth spaces.
Useful examples of diffeological spaces will be the space of smooth sections of a fibre bundle and the infinite-jet bundle of a fibre bundle.
Diffeological spaces behave well under categorical properties and they embed into a sub-category, which is said concrete, of the topos of smooth sets \cite{khavkine2017synthetic, shulman_2021}.

\begin{definition}[Diffeological space]\label{def:diffeological_space}
A \textit{diffeological space} $X$ is defined as a concrete smooth set, i.e. such that for any smooth manifold $U\in\Mfd$ the natural map
\begin{equation}
    X(U)\,\hookrightarrow\, \Hom_\mathsf{Set}(\mathit{\Gamma}U,\, \mathit{\Gamma}X),
\end{equation}
is a monomorphism of sets.
\end{definition}

\begin{example}[Examples of diffeological spaces]
A smooth manifold $M\in\Mfd\hookrightarrow\mathsf{SmoothSet}$, Yoneda-embedded in smooth sets, is a diffeological space.
If we consider another smooth manifold $N\in\Mfd$, then the mapping space $[M,N]$ is also a diffeological space. This is because, given any section $f\in [M,N](U)\simeq \Hom_\Mfd(M\times U,N)$, we can embed it into a map $\mathit{\Gamma}U\rightarrow\mathit{\Gamma}[M,N] \simeq \Hom_\Mfd(M,N)$ which sends any point $u\in\mathit{\Gamma}U$ to $f(u)\in\Hom_\Mfd(M,N)$.
\end{example}

\subsection{Smooth stacks}

The category $\sSet$ of \textit{simplicial sets} can be seen as the functor category $[\Delta^\op,\mathsf{Set}]$, where $\Delta$ is the simplex category -- i.e. the category whose objects are non-empty finite ordinals and whose morphisms are order-preserving maps -- and $\mathsf{Set}$ is the category of sets.

The category $\sSet$ of simplicial sets is naturally a simplicial category, i.e. a category enriched over $\sSet$ itself. In the rest of the paper we will keep using sans serif to denote simplicial categories.
Moreover, we will denote by $\sSet_{\mathsf{Quillen}}$ the simplicial category of simplicial sets equipped with Quillen model structure \cite{Quillen:1967ha}, whose weak equivalences are weak homotopy equivalences of simplicial sets and whose fibrations are Kan fibrations. 

Let $\mathsf{W}$ be the set of weak homotopy equivalences of simplicial sets.
Then, by simplicial localisation, one can define the category of Kan complexes
\begin{equation}
    \mathsf{KanCplx} \;\coloneqq\; L_{\mathsf{W}}\sSet_{\mathsf{Quillen}}. 
\end{equation}
It can be shown that the full subcategory $\sSet_{\mathsf{Quillen}}^\circ$ of fibrant-cofibrant objects of $\sSet_{\mathsf{Quillen}}$ is equivalent to the simplicial-category of Kan complexes, i.e.
\begin{equation}
    \mathsf{KanCplx} \;\simeq\; \sSet_{\mathsf{Quillen}}^\circ.
\end{equation}
Moreover, we can make this simplicial category into a fully fledged $(\infty,1)$-category. Essentially, an $(\infty,1)$-category is a simplicial set which satisfies an extra condition, known as weak Kan condition (which requires all the inner horns of the simplicial set to have fillers).
It is a standard technique \cite[Section 1.1.5]{topos} that, by applying the homotopy-coherent nerve functor $\mathbf{N}_{hc}$ to our simplicial category, one obtains the $(\infty,1)$-category of $\infty$-groupoids, i.e.
\begin{equation}
    \mathbf{\infty Grpd} \;\coloneqq\; \mathbf{N}_{hc}(\sSet_{\mathsf{Quillen}}^\circ).
\end{equation}
In the rest of the paper, we will use bold roman font to denote $(\infty,1)$-categories. 
Now, given any category $\mathsf{C}$, consider the simplicial functor category $\mathsf{sPreSh}(\mathsf{C})\coloneqq[\mathsf{C}^\op,\sSet]$, known as the category of simplicial pre-sheaves on $\mathsf{C}$.
If $\mathsf{C}$ has the structure of a \textit{site} with \textit{enough points}, there exists a model structure $\mathsf{sPreSh}(\mathsf{C})_{\mathsf{proj,loc}}$ which is known as the \textit{projective local model structure} \cite{Blander2001LocalPM} and whose set of local weak equivalences $\mathsf{W}$ is the set of natural transformations which are stalk-wise weak homotopy equivalences of simplicial sets.
Then, we can define the simplicial category of \textit{stacks} on $\mathsf{C}$ by simplicial localisation
\begin{equation}
    \Shoo(\mathsf{C}) \,\coloneqq\, L_\mathsf{W}\mathsf{sPreSh}(\mathsf{C}).
\end{equation}
Moreover, the projective local model structure has the property that the full subcategory $\mathsf{sPreSh}(\mathsf{C})_{\mathsf{proj,loc}}^\circ$ of fibrant-cofibrant objects of the simplicial model category $\mathsf{sPreSh}(\mathsf{C})_{\mathsf{proj,loc}}$ is equivalent to the simplicial category of stacks, i.e. we have
\begin{equation}
    \Shoo(\mathsf{C}) \;\simeq\; \mathsf{sPreSh}(\mathsf{C})_{\mathsf{proj,loc}}^\circ.
\end{equation}
Thus, the $(\infty,1)$-category of stacks on the site $\mathsf{C}$ can be defined by the homotopy-coherent nerve of this simplicial category, i.e. by
\begin{equation}
    \St(\mathsf{C}) \,\coloneqq\, \mathbf{N}_{hc}(\mathsf{sPreSh}(\mathsf{C})_{\mathsf{proj,loc}}^\circ).
\end{equation}

Let us now specialize our discussion to smooth geometry.
The category $\Mfd$ of smooth manifolds, whose objects are smooth manifolds and whose morphisms are smooth maps between them, has a natural site structure where covering families $\{U_i\rightarrow M\}_{i\in I}$ are good open covers of smooth manifolds.
Then, \textit{smooth stacks} \cite{DCCTv2} -- also known as \textit{differentiable stacks} -- can be defined as stacks on the site of smooth manifolds $\Mfd$ and thus they live in the simplicial category 
\begin{equation}
    \mathsf{SmoothStack} \;\coloneqq\; \Shoo(\mathsf{Mfd}) \;\simeq\; \mathsf{sPreSh}(\mathsf{Mfd})_{\mathsf{proj,loc}}^\circ.
\end{equation}
Given a covering family $\{U_i\rightarrow U\}_{i\in I}$, it is possible to construct a simplicial object known as \v{C}ech nerve of the smooth manifold $U$ by
\begin{equation}
    \check{C}(U)_\bullet \;=\;  \bigg( \begin{tikzcd}[row sep=scriptsize, column sep=3.5ex]
    \; \cdots\; \arrow[r, yshift=1.8ex]\arrow[r, yshift=0.6ex]\arrow[r, yshift=-1.8ex]\arrow[r, yshift=-0.6ex]& \displaystyle\coprod_{i,j,k\in I}U_i\times_U U_j\times_U U_k 
    \arrow[r, yshift=1.4ex] \arrow[r] \arrow[r, yshift=-1.4ex] & \displaystyle\coprod_{i,j\in I}U_i\times_U U_j  \arrow[r, yshift=0.7ex] \arrow[r, yshift=-0.7ex] & \displaystyle\coprod_{i\in I}U_i 
    \end{tikzcd}    \bigg),
\end{equation}
whose colimit is the original smooth manifold $U \simeq \mathrm{co}\!\lim_{[n]\in\Delta}\check{C}(U)_n$.
By unravelling the definition of a smooth stack, more concretely, one has that a smooth stack is a simplicially enriched functor $X:\Mfd \longrightarrow \sSet$ satisfying the following properties:
\begin{enumerate}[label=(\textit{\roman*})]
    \item \textit{object-wise fibrancy}: for any $U\in\mathsf{Mfd}$, the simplicial set $X(U)$ is Kan-fibrant;
    \item \textit{pre-stack condition}: for any diffeomorphism $U\xrightarrow{\,\simeq\,} U'$ in $\mathsf{Mfd}$, the induced morphism $X(U')\longrightarrow X(U)$ is an equivalence of simplicial sets;
    \item \textit{descent condition}: for any \v{C}ech nerve $\check{C}(U)_\bullet\rightarrow U$, the natural morphism
    \begin{equation}
    X(U) \;\longrightarrow\; \lim_{[n]\in\Delta} \bigg(\prod_{i_1,\dots,i_n\in I} \! X(U_{i_1\!}\times_{U\!}\cdots_{\!}\times_{U\!}U_{i_n})\,\bigg)
    \end{equation}
    is an equivalence of simplicial sets.
\end{enumerate}

\begin{example}[Quotient stack]
Let $M$ be a smooth manifold and $G$ a Lie group.
A typical example of smooth stack is given by the quotient stack $[M/G]\in\mathbf{SmoothStack}$, which is constructed as follows.
The $\infty$-groupoid $[M/G](U)$ of sections on a smooth manifold $U$ is such that $0$-simplices are couples $(p:P\rightarrow U, f:P\rightarrow M)$, where $p$ is a $G$-bundle and $f$ is a $G$-equivariant map, and higher simplices are given by automorphisms and composition of those.
On a Cartesian space $U\simeq \bbR^n$, its simplicial set of sections takes the simpler form
\begin{equation*}
    [M/G](U) \;\simeq\; \mathrm{cosk}_{2\!}\left(\! \begin{tikzcd}[row sep={22.5ex,between origins}, column sep={4.5ex}]
    \Hom(U,G^{\times 2}\!\times\!M) \, \arrow[r, yshift=2.0ex , "{ }"] \arrow[r, description] \arrow[r, yshift=-2.0ex, "{ }"'] & \Hom(U,G\!\times\!M) \arrow[r, yshift=1.0ex , "{\partial_0}"] \arrow[r, yshift=-1.0ex, "{\partial_1}"'] & \Hom(U,M)
    \end{tikzcd}\!\right)\!,
\end{equation*}
where the face maps on $1$-simplices are $\partial_0(g,f)\mapsto f$ and $\partial_1(g,f)\mapsto g\cdot f$ for $f\in\Hom(U,M)$ and $g\in\Hom(U,G)$, which means that $1$-simplices are from $f$ to $g\cdot f$. Moreover, the $2$-simplices encode group multiplication. 
\end{example}

\section{Zoology of formal smooth stacks}\label{sec:Coo}

The concept of smooth stack can be generalised to the notion of {formal smooth stacks}, which can be intuitively thought of as infinitesimally thickened smooth stacks.
These are defined as stacks on the site of formal smooth manifolds, which can be thought of as smooth manifolds, but whose points are infinitesimally thickened. 
In this section we introduce $\Coo$-algebras, $\Coo$-varieties and formal smooth stacks.

\vspace{0.2cm}
\begin{table}[h!]
\begin{center}
\begin{tabular}{||c | c | c | c | c ||} 
 \hline
  & \multicolumn{2}{c|}{\textbf{Algebraic geometry}} & \multicolumn{2}{c||}{\textbf{Formal smooth geometry}} \\[0.8ex]
 \hline
   & Ordinary & Derived & Ordinary & Derived  \\[0.8ex] 
 \hline\hline
 Lawvere & \multicolumn{2}{c|}{Affine $\mathbbvar{k}$-spaces $\{\mathbb{A}^n_\mathbbvar{k}\}_{n\in\mathbb{N}}$} & \multicolumn{2}{c||}{Cartesian spaces $\{\bbR^n\}_{n\in\mathbb{N}}$} \\ 
  theory &  \multicolumn{2}{c|}{with polynomial maps} & \multicolumn{2}{c||}{with smooth maps}  \\[1ex]
  $\mathsf{T}$ & \multicolumn{2}{c|}{{\small$\mathsf{PolySp}_\mathbbvar{k}$}} & \multicolumn{2}{c||}{{\small$\mathsf{CartSp}$}} \\[1ex]
 \hline
 \multirow{2}{*}{Algebras} & Commutative & Simplicial comm. & \multirow{2}{*}{$\Coo$-algebras} & Simplicial \\ 
  &  $\mathbbvar{k}$-algebras & $\mathbbvar{k}$-algebras &  & $\Coo$-algebras  \\[1ex]
  $\mathsf{TAlg}$ & {\small$\mathsf{cAlg}_\mathbbvar{k}$} & {\small$\mathbf{scAlg}_\mathbbvar{k}$} & {\small$\mathsf{C^\infty Alg}$} & {\small$\mathbf{sC^\infty Alg}$} \\[1ex]
 \hline
 Affine & \multirow{2}{*}{Affine $\mathbbvar{k}$-schemes} & Affine derived & \multirow{2}{*}{Affine $\Coo$-schemes} & Affine derived  \\
  schemes & & $\mathbbvar{k}$-schemes & & $\Coo$-schemes \\[1ex]
  $\mathsf{TAff}$ & {\small$\mathsf{Aff}_\mathbbvar{k} \!\coloneqq\! \mathsf{cAlg}_\mathbbvar{k}^\op$}  & {\small$\mathbf{dAff}_\mathbbvar{k} \!\coloneqq\! \mathbf{scAlg}_\mathbbvar{k}^{\!\op}$} & {\small$\mathsf{C^{\infty\!} Aff} \!\coloneqq\! \mathsf{C^\infty Alg}^\op$} & {\small$\mathbf{dC^{\infty\!\!}Aff} \!\coloneqq\! \mathbf{sC^{\infty\!\!} Alg}^{\!\op}$} \\[1ex]
 \hline
\end{tabular}
\end{center}
    \caption{Comparison between algebraic geometry and formal smooth geometry.}
    \label{tab1}
\end{table}

\subsection{$\Coo$-algebras as a Lawvere theory}

In this subsection we will introduce the notion of $\Coo$-algebra, in the context of Lawvere theories.
First, we will provide a brief review of the notion of a Lawvere theory and of algebra over a given Lawvere theory. 
An algebra over some Lawvere theory is, fundamentally, a generalisation of a ring, given by a set equipped with a set of $n$-ary operations. 

\begin{definition}[Lawvere theory]
A \textit{Lawvere theory} (or \textit{algebraic theory}) is a category $\mathsf{T}$ with finite products, whose set of objects is $\{T^{ n}\}_{n\in\mathbb{N}}$ for a fixed object $T\in\mathsf{T}$.
\end{definition}

One can interpret the hom-set $\Hom_\mathsf{T}(T^n,T)$ as the set of abstract $n$-ary operations of the of the Lawvere theory $\mathsf{T}$.

\begin{definition}[$\mathsf{T}$-algebra]
An \textit{algebra over a Lawvere theory} is a product-preserving functor
\begin{equation}
    A\,:\;\mathsf{T} \,\longrightarrow\, \mathsf{Set}.
\end{equation}
\end{definition}

\begin{definition}[Category of $\mathsf{T}$-algebras]
We call $\mathsf{TAlg}$ the category whose objects are all the algebras over the Lawvere theory $\mathsf{T}$, i.e. product-preserving functors $A:\mathsf{T} \longrightarrow \mathsf{Set}$, and whose morphisms are natural transformations between these. 
\end{definition}

\begin{definition}[Forgetful functor of a $\mathsf{T}$-algebra]
We call $U_\mathsf{T}:\mathsf{TAlg}\rightarrow\mathsf{Set}$ the functor which sends a any $\mathsf{T}$-algebra $A$ to its underlying set, i.e.
\begin{equation}
    U_\mathsf{T}(A) \,\coloneqq\, A(T).
\end{equation}
\end{definition}

Notice that, since a $\mathsf{T}$-algebra $A$ is a product preserving functor, any abstract $n$-ary operation $\alpha_n\in\Hom_\mathsf{T}(T^n,T)$ will give rise to a morphism of sets
\begin{equation}
    A(\alpha_n) \,:\, A(T)^{\times n} \,\longrightarrow\, A(T),
\end{equation}
which can be interpreted as an $n$-ary bracket on our particolar $\mathsf{T}$-algebra.

For any Lawvere theory $\mathsf{T}$, it is possible to show that there exists a left adjoint $F_\mathsf{T}\dashv U_\mathsf{T}$ to the forgetful functor. In other words, we have an adjunction
\begin{equation}
    (F_\mathsf{T}\dashv U_\mathsf{T}):\begin{tikzcd}[row sep={10ex,between origins}, column sep={18ex,between origins}]
    \mathsf{Set}
    \arrow[r, "F_\mathsf{T}"{name=L}, bend left=35] &
   \mathsf{TAlg}.
    \arrow[l, "U_\mathsf{T}"{name=R}, bend left=35] \arrow[phantom, from=L, to=R, "\dashv" rotate=-90]
    \end{tikzcd}
\end{equation}
Such a functor works as follows. If given a finite set $S_n\cong\{1,\dots,n\}$ with $n$ elements, one has simply $F_\mathsf{T}(S)=\mathrm{Hom}_\mathsf{T}(T^n,-)$. On the other hand, if given a generic set $S$, one has the filtered colimit $F_\mathsf{T}(S)=\mathrm{colim}_{S_n\in\mathrm{Sub}(S)}\mathrm{Hom}_\mathsf{T}(T^n,-)$, where $\mathrm{Sub}(S)$ is the poset of finite subsets of $S$. The $\mathsf{T}$-algebras lying in the image of the functor $F_\mathsf{T}$ are also known as free $\mathsf{T}$-algebras.

The archetypal example of Lawvere theory is the one of usual rings.

\begin{example}[Rings]
Let $\mathsf{T}$ be the category whose objects are affine schemes $\{\mathbb{A}^n_{\mathbb{Z}}\}_{n\in\mathbb{N}}$, where $\mathbb{A}^n_{\mathbb{Z}}=\Spec\,\mathbb{Z}[x_1,\dots,x_n]$, and whose morphisms are polynomial maps between these. Then $\mathsf{TAlg}=\mathsf{Ring}$ is the category of rings.
\end{example}

By directly generalising the example right above, we have the following class of examples.

\begin{example}[$S$-algebras]\label{ex:srings}
Let now $S$ be any commutative ring and let $\mathsf{T}$ be the category whose objects are the affine schemes $\{\mathbb{A}^n_{S}\}_{n\in\mathbb{N}}$, where $\mathbb{A}^n_{S}=\Spec\,S[x_1,\dots,x_n]$, and whose morphisms are polynomial maps between these. Then $\mathsf{TAlg}=\mathsf{Alg}_S$ is the category of $S$-algebras.
\end{example}

Now we have all the ingredients to introduce the notion of $\Coo$-algebra in the context of Lawvere theories. The Lawvere theory underlying $\Coo$-algebras will be a natural generalisation of the Lawvere theory underlying the $S$-rings from example \ref{ex:srings}.

\begin{definition}[Lawvere theory of smooth Cartesian spaces]
We define $\mathsf{T}=\mathsf{CartSp}$ as the category whose objects are Cartesian spaces $\{\bbR^n\}_{n\in\mathbb{N}}$ and whose morphisms are smooth maps between these. 
\end{definition}

We can now provide the definition of $\Coo$-algebra as an algebra over the Lawvere theory of smooth Cartesian spaces.

\begin{definition}[$\Coo$-algebra]
Let $\mathsf{T}=\mathsf{CartSp}$. Then, we call $\mathsf{C^\infty Alg}\coloneqq\mathsf{TAlg}$ the \textit{category of $\Coo$-algebras} and an object $A\in\mathsf{C^\infty Alg}$ a \textit{$\Coo$-algebra}.
\end{definition}

Notice that, given a $\Coo$-algebra $A$, its underlying set $U_\mathsf{CartSp}(A)=A(\bbR)$ has a natural ring structure. In fact, addiction and multiplication $+,\,\cdot\,:\bbR\times\bbR\rightarrow \bbR$, opposite $-:\bbR\rightarrow \bbR$, zero element $0:\bbR^0\hookrightarrow \bbR$ and unit $1:\bbR^0\hookrightarrow \bbR$ are all smooth maps in the category $\mathsf{CartSp}$ of Cartesian spaces. Since $A$ is a functor which preserves products, then the functions $A(+),A(\,\cdot\,),A(0),A(1),A(-)$ satisfy the axioms of a ring structure on the set $A(\bbR)$.

\begin{remark}[Limits and filtered colimits]
The category $\mathsf{C^\infty Alg}$ has all limits and all filtered colimits. They can be computed object-wise in $\mathsf{CartSp}$ by taking the corresponding limits and filtered colimits in $\mathsf{Set}$
\end{remark}

\begin{definition}[$\Coo$-tensor product]
The \textit{$\Coo$-tensor product} in the category $\mathsf{C^\infty Alg}$ is defined to be the pushout
\begin{equation}
    A\,\widehat{\otimes}_B\,C \,\coloneqq\, A \sqcup_{B} C,
\end{equation}
for any $\Coo$-algebras $A,B,C\in\mathsf{C^\infty Alg}$.
\end{definition}

The following is the archetypal example of $\Coo$-algebras. Given smooth manifold $M\in\Mfd$, we can construct a $\Coo$-algebra of functions on $M$ by the functor
\begin{equation}
    \Coo(M)\,:\, \bbR^n\,\mapsto\, \Coo(M,\bbR^n).
\end{equation}
We can construct a contravariant functor by sending any smooth manifold $M$ to its $\Coo$-algebra of functions $\Coo(M)$ and any smooth map $f:M\rightarrow N$ to its pullback $f^\ast:\Coo(N)\rightarrow\Coo(M)$.

\begin{lemma}[Smooth manifolds as $\Coo$-algebras \cite{Moerdijk:1991}]
The contravariant functor
\begin{equation}
\begin{aligned}
    \mathsf{Mfd}^\op \,&\longhookrightarrow\, \mathsf{C^\infty Alg} \\
    M \,&\longmapsto\, \Coo(M),
\end{aligned}
\end{equation}
is full and faithful.
\end{lemma}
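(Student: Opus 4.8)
The plan is to treat faithfulness and fullness separately, the latter being the substantive part. Throughout I write $\Coo(M)$ for the $\Coo$-algebra determined by $\Coo(M)(\bbR^n)=\Coo(M,\bbR^n)$, and I use the fact that a morphism $\psi:\Coo(N)\to\Coo(M)$ in $\mathsf{C^\infty Alg}$ is precisely a natural transformation of product-preserving functors. In particular $\psi$ is determined by its component $\psi_{\bbR}:\Coo(N,\bbR)\to\Coo(M,\bbR)$ together with its compatibility with every smooth operation of the Lawvere theory $\mathsf{CartSp}$.

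\emph{Faithfulness.} Suppose $f,g:M\to N$ are smooth maps with $f^\ast=g^\ast$. Evaluating at the generating object $\bbR$ gives $\phi\circ f=\phi\circ g$ for every $\phi\in\Coo(N,\bbR)$. Since a smooth manifold is Hausdorff and its smooth functions separate points (using bump functions), this forces $f=g$. The real content is therefore fullness, which I would establish first for Cartesian targets and then bootstrap to the general case. For $\psi:\Coo(\bbR^n)\to\Coo(M)$, set $f_i:=\psi_{\bbR}(\pi_i)$ for the coordinate projections $\pi_i\in\Coo(\bbR^n,\bbR)$ and $f:=(f_1,\dots,f_n):M\to\bbR^n$. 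The key observation is that every $\phi\in\Coo(\bbR^n,\bbR)$ is \emph{itself} an $n$-ary operation of $\mathsf{CartSp}$, so naturality of $\psi$ forces $\psi_{\bbR}(\phi)=\phi(f_1,\dots,f_n)=f^\ast\phi$; compatibility with finite products extends this to all components, giving $\psi=f^\ast$.

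\emph{Fullness in general.} Fix a Whitney embedding $\iota:N\hookrightarrow\bbR^k$ onto a closed embedded submanifold. Then $\iota^\ast:\Coo(\bbR^k)\to\Coo(N)$ is levelwise surjective, by the smooth extension theorem for functions off a closed submanifold. Given $\psi:\Coo(N)\to\Coo(M)$, the composite $\psi\circ\iota^\ast:\Coo(\bbR^k)\to\Coo(M)$ equals $g^\ast$ for a unique smooth $g:M\to\bbR^k$ by the Cartesian case just treated. Choosing $\rho\in\Coo(\bbR^k,\bbR)$ with $\rho^{-1}(0)=\iota(N)$ (every closed subset of $\bbR^k$ is the zero locus of a smooth function), we obtain $\rho\circ g=g^\ast\rho=\psi(\iota^\ast\rho)=\psi(0)=0$, so $g(M)\subseteq\iota(N)$. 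As $\iota$ is an embedding, $g$ corestricts to a smooth map $f:M\to N$ with $\iota\circ f=g$, whence $f^\ast\circ\iota^\ast=g^\ast=\psi\circ\iota^\ast$; cancelling the surjection $\iota^\ast$ yields $\psi=f^\ast$, as required.

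\emph{Main obstacle.} The formal manipulations are routine once the Cartesian case is in hand; the difficulty is assembling the differential-topological inputs feeding the general step: the Whitney embedding as a \emph{closed} submanifold, the surjectivity of $\iota^\ast$ via smooth extension, the realisation of $\iota(N)$ as the zero locus of a single smooth function, and the smoothness of the corestriction through an embedded submanifold. These are exactly the places where genuine smooth geometry (rather than the formal theory of product-preserving functors) enters, so the proof cannot be carried out by categorical nonsense alone; I would cite \cite{Moerdijk:1991} for the packaging of these facts.
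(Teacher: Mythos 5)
Your proof is correct. Note that the paper itself supplies no argument for this lemma: it is stated as a quotation of a result from Moerdijk--Reyes, so there is no in-paper proof to compare against. Your argument is, in substance, the standard one from that reference: reduce to the Cartesian case, where fullness is a tautology of the Lawvere theory (every $\phi\in\Coo(\bbR^n,\bbR)$ \emph{is} an $n$-ary operation, and product preservation pins down $\psi_{\bbR^n}(\mathrm{id})=(f_1,\dots,f_n)$), and then bootstrap via a closed Whitney embedding $\iota:N\hookrightarrow\bbR^k$, levelwise surjectivity of $\iota^\ast$, and the realisation of the closed set $\iota(N)$ as a zero locus. All the differential-topological inputs you list are genuine and correctly deployed, including $\psi(0)=0$ (naturality with respect to the nullary operation) and smoothness of the corestriction through an embedded submanifold. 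Two minor remarks: the argument tacitly uses that the manifolds in $\mathsf{Mfd}$ are second countable (so that a closed Whitney embedding exists); this is the paper's standing convention but is genuinely needed, as the lemma can fail for non-paracompact manifolds. Also, an alternative route in the literature avoids the embedding by first showing that every $\Coo$-homomorphism $\Coo(N)\to\bbR$ is evaluation at a point of $N$, then defining $f$ pointwise from $\mathrm{ev}_x\circ\psi$ and checking smoothness; your embedding argument is arguably cleaner since it gets smoothness of $f$ for free from the Cartesian case.
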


\begin{definition}[Transverse maps]
Two smooth maps of smooth manifolds $f:\Sigma\rightarrow M$ and $g:\Sigma'\rightarrow M$ are called transverse if the map $f\sqcup g :\Sigma\sqcup \Sigma'\rightarrow M$ is a submersion, i.e. if its differential $(f\sqcup g)_\ast:T\Sigma\sqcup T\Sigma'\rightarrow TM$ is a surjective bundle map. 
\end{definition}

The following lemma makes the crucial point that two smooth maps are transverse, then their fibre product exists in the category of smooth manifolds.

\begin{lemma}[$\Coo$-algebra of functions on intersection of smooth manifolds \cite{Moerdijk:1991}]
Let $f:\Sigma\rightarrow M$ and $g:\Sigma'\rightarrow M$ be transverse maps of smooth manifolds and let the square
\begin{equation}
    \begin{tikzcd}[row sep={10.5ex,between origins}, column sep={12ex,between origins}]
    \Sigma\times_M \Sigma'\arrow[d] \arrow[r] & \Sigma \arrow[d, "f"] \\
    \Sigma' \arrow[r, "g"] & M
    \end{tikzcd}
\end{equation}
be a pullback in $\Mfd$. 
Then, the square
\begin{equation}
    \begin{tikzcd}[row sep={18ex,between origins}, column sep={20ex,between origins}]
   \Coo(M) \arrow[d] \arrow[r] & \Coo(\Sigma) \arrow[d, "f^\ast"] \\
    \Coo(\Sigma') \arrow[r, "g^\ast"] & \Coo(\Sigma\times_M \Sigma')
    \end{tikzcd}
\end{equation}
is a pushout in $\Coo\mathsf{Alg}$. In other words, we have an isomorphism of $\Coo$-algebras
\begin{equation}
    \Coo(\Sigma\times_M \Sigma') \;=\; \Coo(\Sigma)\,\widehat{\otimes}_{\Coo(M)}\,\Coo(\Sigma').
\end{equation}
\end{lemma}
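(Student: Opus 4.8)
The plan is to produce the canonical comparison morphism out of the pushout and then prove it is an isomorphism by reducing, through a diagonal trick, to the transverse preimage of a closed submanifold, where the essential content becomes a Hadamard-type generation statement. Since the square in $\Mfd$ is a pullback, the two projections $\Sigma\times_M\Sigma'\to\Sigma$ and $\Sigma\times_M\Sigma'\to\Sigma'$ induce, by functoriality of $\Coo(-)$, two morphisms of $\Coo$-algebras landing in $\Coo(\Sigma\times_M\Sigma')$, and these agree after precomposition with $f^\ast$ and $g^\ast$ (both equalling pullback along the composite $\Sigma\times_M\Sigma'\to M$). The universal property of the pushout then yields a canonical morphism
\[
  \Phi\,:\,\Coo(\Sigma)\,\widehat{\otimes}_{\Coo(M)}\,\Coo(\Sigma')\,\longrightarrow\,\Coo(\Sigma\times_M\Sigma'),
\]
and the entire task is to show that $\Phi$ is an isomorphism of $\Coo$-algebras.

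Next I would reduce the general transverse case to that of a closed embedding. Writing $\Delta\colon M\hookrightarrow M\times M$ for the diagonal, transversality of $f$ and $g$ is equivalent to transversality of $f\times g\colon\Sigma\times\Sigma'\to M\times M$ to $\Delta$, and one has the identification $\Sigma\times_M\Sigma'\cong(\Sigma\times\Sigma')\times_{M\times M}M$. Since $\Coo(-)$ turns finite products of manifolds into $\Coo$-tensor products, so that $\Coo(\Sigma\times\Sigma')\cong\Coo(\Sigma)\,\widehat{\otimes}\,\Coo(\Sigma')$ and $\Coo(M\times M)\cong\Coo(M)\,\widehat{\otimes}\,\Coo(M)$, this rewrites both sides and reduces everything to the single statement: for a closed submanifold $Z\hookrightarrow N$ and a map $h\colon P\to N$ transverse to $Z$, one has $\Coo(h^{-1}(Z))\cong\Coo(P)\,\widehat{\otimes}_{\Coo(N)}\,\Coo(Z)$.

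To settle this closed-embedding case I would use that restriction presents the algebra of a closed submanifold as a quotient, $\Coo(Z)\cong\Coo(N)/I_Z$ with $I_Z$ the ideal of functions vanishing on $Z$, and that pushing out a quotient map of $\Coo$-algebras again gives a quotient (the quotient of a $\Coo$-algebra by an ideal is again a $\Coo$-algebra), so that $\Coo(P)\,\widehat{\otimes}_{\Coo(N)}\,\Coo(Z)\cong\Coo(P)/\big(h^\ast I_Z\big)$. Localising with a partition of unity subordinate to charts in which $Z$ is the regular zero set of finitely many functions $g_1,\dots,g_r\in\Coo(N)$, so that $I_Z$ is generated by the $g_i$, transversality of $h$ guarantees that $h^\ast g_1,\dots,h^\ast g_r$ cut out $h^{-1}(Z)$ as a regular zero set; identifying $\Coo(P)/(h^\ast g_i)$ with $\Coo(h^{-1}(Z))$ then identifies $\Phi$ with the identity.

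The hard part, and the genuinely analytic heart of the matter supplied by \cite{Moerdijk:1991}, will be this last identification: that for a regular zero set $W=(g_1,\dots,g_r)^{-1}(0)$ the ideal of all smooth functions vanishing on $W$ coincides with the ideal generated by $g_1,\dots,g_r$, together with surjectivity of the restriction map $\Coo(P)\to\Coo(W)$. Both rest on a Hadamard-type expansion carried out in Fermi coordinates adapted to a tubular neighbourhood of $W$ (respectively on the Whitney extension theorem), and it is exactly here that the smooth, rather than merely topological, structure is indispensable. Reassembling the local identifications with the partition of unity, using compatibility of ideals and quotients with such gluing, then completes the proof.
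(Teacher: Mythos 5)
The paper does not actually prove this lemma: it is stated with a citation to \cite{Moerdijk:1991}, and your argument is essentially the standard proof given there — construct the canonical comparison map out of the pushout, reduce via the diagonal $\Delta:M\hookrightarrow M\times M$ to the transverse preimage of a closed submanifold, present $\Coo(Z)$ as $\Coo(N)/I_Z$ so that the pushout becomes a quotient, and identify that quotient with $\Coo(h^{-1}(Z))$ locally by Hadamard's lemma (regular defining functions generate the vanishing ideal) and globally by partitions of unity. The argument is correct. One point you should make explicit: your reduction invokes $\Coo(\Sigma\times\Sigma')\cong\Coo(\Sigma)\,\widehat{\otimes}\,\Coo(\Sigma')$, which in the paper is stated as a \emph{corollary} of the present lemma (the case $M=\ast$); to avoid circularity the product case must be established first by an independent argument (it follows from the coproduct description of the free $\Coo$-algebras $\Coo(\bbR^n)$ in the Lawvere theory together with localisation and partitions of unity), which is indeed the order in which Moerdijk--Reyes proceed.
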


If we choose $M=\ast\,$ to be the point and the smooth maps $f,g$ to be the terminal maps to the point in the category of smooth manifolds, we immediately have the following proposition.

\begin{corollary}[$\Coo$-algebra of functions on product manifolds]
For any pair of manifolds $M,N\in\Mfd$, we have an isomorphism of $\Coo$-algebras
\begin{equation}
    \Coo(M)\,\widehat{\otimes}_{\bbR}\,\Coo(N) \;=\; \Coo(M\times N).
\end{equation}
\end{corollary}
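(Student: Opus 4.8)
The plan is to obtain this as the special case of the preceding lemma in which the base manifold is the one-point manifold $\ast$. First I would verify the only hypothesis that needs checking, namely transversality of the two structure maps. Taking $\Sigma = M$, $\Sigma' = N$, and base $\ast$, the relevant maps $f\colon M \rightarrow \ast$ and $g\colon N \rightarrow \ast$ are the (unique) terminal maps. Since $T\ast = 0$, the differential
\begin{equation*}
    (f\sqcup g)_\ast \,:\, TM \sqcup TN \,\longrightarrow\, T\ast \,=\, 0
\end{equation*}
is automatically surjective, so $f \sqcup g$ is a submersion and $f, g$ are transverse in the sense of the transversality definition. I expect this to be the only genuinely checkable point, and it is immediate; the rest is bookkeeping.

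Next I would invoke the lemma. Transversality guarantees that the fibre product exists in $\Mfd$, and here it is simply the Cartesian product, $M \times_\ast N \cong M \times N$. The lemma then tells us that the induced square of $\Coo$-algebras with corners $\Coo(\ast)$, $\Coo(M)$, $\Coo(N)$, $\Coo(M\times N)$ is a pushout in $\mathsf{C^\infty Alg}$, i.e.
\begin{equation*}
    \Coo(M\times N) \;=\; \Coo(M)\,\widehat{\otimes}_{\Coo(\ast)}\,\Coo(N).
\end{equation*}

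Finally I would identify the base algebra $\Coo(\ast)$. Evaluating the functor-of-points description, $\Coo(\ast)$ sends $\bbR^n \mapsto \Coo(\ast,\bbR^n) = \bbR^n$, which is the free $\Coo$-algebra on the empty generating set; equivalently it is the initial object $\bbR$ of $\mathsf{C^\infty Alg}$. Substituting $\Coo(\ast) \cong \bbR$ into the pushout, the $\Coo$-tensor product over $\Coo(\ast)$ becomes the $\Coo$-tensor product over $\bbR$, yielding
\begin{equation*}
    \Coo(M)\,\widehat{\otimes}_{\bbR}\,\Coo(N) \;=\; \Coo(M\times N),
\end{equation*}
exactly as asserted. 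Since every step is a direct specialisation of the lemma, there is no substantive obstacle; the mild care required is only in confirming transversality of terminal maps and in recognising $\Coo(\ast)$ as the initial $\Coo$-algebra $\bbR$.
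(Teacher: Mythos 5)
Your proposal is correct and is exactly the paper's argument: the corollary is obtained by specialising the intersection lemma to the terminal maps $M\rightarrow\ast$ and $N\rightarrow\ast$, which are automatically transverse, so that $M\times_\ast N = M\times N$ and the base algebra is $\Coo(\ast)\cong\bbR$. The paper states this in one sentence; your version merely spells out the (immediate) transversality check and the identification of $\Coo(\ast)$ with the initial $\Coo$-algebra.
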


Notice that the $\Coo$-tensor product $A\btimes_\bbR B$ is much smaller than the usual tensor product $A(\bbR)\otimes_{\bbR} B(\bbR)$ of the underlying $\bbR$-algebras.

\begin{definition}[Ideal of a $\Coo$-algebra]
An \textit{ideal $\mathcal{I}$ of a $\Coo$-algebra} $A$ is defined as an ideal of the underlying ring $A(\bbR)$.
\end{definition}

As shown in \cite{Moerdijk:1991, Joyce:2009}, given an ideal $\mathcal{I}$ of a $\Coo$-algebra $A$, there is a canonical $\Coo$-algebra $A/\mathcal{I}$ whose underlying ring is precisely the quotient ring $A(\bbR)/\mathcal{I}$. 

\begin{definition}[Finitely generated and finitely presented $\Coo$-algebras]
By following \cite[Chapter I]{Moerdijk:1991}, we define:
\begin{itemize}[topsep=-25pt]
    \item  a \textit{finitely generated $\Coo$-algebra} as a $\Coo$-algebra of the form $A\cong \Coo(\bbR^n)/\mathcal{I}$, for some Cartesian space $\bbR^n$ and an ideal $\mathcal{I}\subset \Coo(\bbR^n)$;
    \item a \textit{finitely presented $\Coo$-algebra} as a $\Coo$-algebra of the form $A\cong \Coo(\bbR^n)/\mathcal{I}$, for some Cartesian space $\bbR^n$ and a finitely generated ideal $\mathcal{I}\subset \Coo(\bbR^n)$.
\end{itemize}
\end{definition}
We denote by $\mathsf{C^\infty Alg}_\mathrm{fg}$ and $\mathsf{C^\infty Alg}_\mathrm{fp}$ the full subcategories of $\mathsf{C^\infty Alg}$ on those objects which are respectively finitely generated and finitely presented $\Coo$-algebras.

The archetypal example of finitely presented $\Coo$-algebra is again the the $\Coo$-algebra $\Coo(M)$ of functions on any smooth manifold $M\in\Mfd$. This is because any smooth manifold can be embedded in $\bbR^N$ for $N$ large enough.

\begin{example}[Smooth manifold as finitely presented $\Coo$-algebra]
Consider a circle $S^1$. Its $\Coo$-algebra of functions is $\Coo(S^1) = \Coo(\bbR^2)/(x^2+y^2-1)$, which is finitely presented.
\end{example}

\begin{example}[Local Artinian $\bbR$-algebra]\label{ex:weil_algebra}
Another crucial example is provided by local Artinian $\bbR$-algebras, also known as Weil algebras in the context of differential geometry. Recall that a local Artinian algebra is a finite-dimensional commutative $\bbR$-algebra $W$ with a maximal differential ideal $\mathfrak{m}_W\subset W$ such that $W/\mathfrak{m}_W\cong \bbR$ and $\mathfrak{m}_W^N=0$ for some $N$ large enough. By \cite[Proposition 1.5]{Dubuc1979}, any local Artinian $\bbR$-algebra can be uniquely lifted to $\Coo$-algebra, which is always finitely presented.
\end{example}

\begin{example}[Algebra of truncated Taylor series as finitely presented $\Coo$-algebra]
The local Artinian algebra $W^n_k = \Coo(\bbR^n)/(x_1,\dots,x_n)^k$ of $k$-truncated Taylor series in $n$ variables comes with canonical $\Coo$-algebra structure.
\end{example}

\begin{remark}[Reduced $\Coo$-algebras]\label{rem:reducedalgebras}
Let $\mathsf{C^\infty Alg}^\mathrm{red}$ be the full sub-category of $\mathsf{C^\infty Alg}$ on those $\Coo$-algebras whose underlying $\bbR$-algebra is reduced in the usual sense, i.e. it has no non-zero nilpotent elements. 
Then, we have an adjunction
\begin{equation}
    \begin{tikzcd}[row sep=scriptsize, column sep=8.2ex, row sep=18.0ex]
     \mathsf{C^\infty Alg}^\mathrm{red} \arrow[r, "\iota^\mathrm{red}"',""{name=0, anchor=center, inner sep=0}, shift left=-1.1ex, hook] &  \mathsf{C^\infty Alg}. \arrow[l, "(-)^\mathrm{red}"', ""{name=1, anchor=center, inner sep=0}, shift left=-1.1ex] \arrow["\dashv"{anchor=center, rotate=-90}, draw=none, from=1, to=0]
    \end{tikzcd}
\end{equation}
where $\iota^\mathrm{red}$ is the natural embedding and $(-)^\mathrm{red}$ is the functor which sends a $\Coo$-algebra $A$ to the reduced $\Coo$-algebra $A^\mathrm{red}\coloneqq A/\mathfrak{m}_A$, where we called $\mathfrak{m}_A$ the nilradical of the the underlying $\bbR$-algebra. 
\end{remark}

\begin{example}[Examples of reduction]
Consider a local Artinian algebra $W$, then we have $W^\mathrm{red} = \bbR$. If $M$ is a smooth manifold, then we have $\Coo(M)^\mathrm{red}= \Coo(M)$.
Moreover, for a $\Coo$-tensor product of the form $\Coo(M)\,\widehat{\otimes}\,W$, then we have $(\Coo(M)\,\widehat{\otimes}\,W)^\mathrm{red}= \Coo(M)$.
\end{example}

\begin{remark}[Smooth manifolds embed into reduced $\Coo$-algebras]\label{rem:embedding_manifolds}
Notice from the previous example that the $\Coo$-algebra $\Coo(M)$ of functions on an ordinary smooth manifold $M$ lies always in $\mathsf{C^\infty Alg}^\mathrm{red}$.
More precisely, the embedding of smooth manifolds into $\Coo$-algebras factors by $\Coo(-):\Mfd^\op \longhookrightarrow \mathsf{C^\infty Alg}^\mathrm{red}_\mathrm{fp} \longhookrightarrow \mathsf{C^\infty Alg}_\mathrm{fp}\longhookrightarrow  \mathsf{C^\infty Alg}_\mathrm{fg} \longhookrightarrow \mathsf{C^\infty Alg}$, where we called $\mathsf{C^\infty Alg}_\mathrm{fp}^\mathrm{red}$ the category of reduced finitely presented $\Coo$-algebras.
\end{remark}

\subsection{$\Coo$-varieties and formal smooth manifolds}

As we have seen in the previous subsection, we have a fully faithful embedding $\Mfd \hookrightarrow \mathsf{C^\infty Alg}_\mathrm{fg}^\op$ of smooth manifolds into the opposite category of finitely generated $\Coo$-algebras.
Thus, in a certain sense, we may interpret the category $\mathsf{C^\infty Alg}_\mathrm{fg}^\op$ as a category of generalised smooth spaces of some sort. Such an intuition, for instance, underlies the formalisation by \cite{Moerdijk:1991} of analysis.

\begin{definition}[$\Coo$-variety \cite{Moerdijk:1991}]
We define a \textit{$\Coo$-variety} as an element of the opposite category of finitely generated $\Coo$-algebras, i.e. of the category
\begin{equation}
    \mathsf{C^\infty Var} \;\coloneqq\; \mathsf{C^\infty Alg}^\op_\mathrm{fg}.
\end{equation}
We use the notation $X=\Spec(A)$ for the $\Coo$-variety corresponding to the finitely generated $\Coo$-algebra $A\in\mathsf{C^\infty Alg}$. Conversely, we may use the notation $\mathcal{O}(X)$ for the finitely generated $\Coo$-algebra corresponding to the $\Coo$-variety $X\in\mathsf{C^\infty Var}$.
\end{definition}

Let us look at a few simple examples of such a geometric object which go beyond the notion of smooth manifolds. First, we can consider infinitesimally thickened points, i.e. formal disks. 

\begin{example}[Thickened point]
Consider the local Artinian algebra of $k$-truncated Taylor series $W^n_k = \Coo(\bbR^n)/(x_1,\dots,x_n)^k$ with its canonical $\Coo$-algebra structure.
Then we have an infinitesimally thickened point given by $D^n_k = \Spec(W^n_k)$.
\end{example}

This example can be directly generalised to construct an example of infinitesimally thickened smooth manifolds.

\begin{example}[Thickened circle]
Consider the thickened circle given by $S^1\times \Spec W$, where $S^1$ is a circle and $W = \Coo(\bbR)/(z^2)$. Dually, this can be constructed by $\Coo$-tensor product of the corresponding $\Coo$-algebras
\begin{equation}
    \frac{\Coo(\bbR^2)}{(x^2+y^2-1)} \,\widehat{\otimes}\, \frac{\Coo(\bbR)}{(z^2)} \;=\; \frac{\Coo(\bbR^3)}{(x^2+y^2-1, z^2)}
\end{equation}
Thus, it can be expressed as $S^1\times \Spec W = \mathrm{Spec}(\Coo(\bbR^3)/(x^2+y^2-1, z^2))$. 
\end{example}

Now, the category $\mathsf{C^\infty Var}$ of $\Coo$-varieties that we have presented here does not have an internal hom-functor, in general. However, we have the following stricter statement.

\begin{lemma}[Exponential by a thickened point]
Let $D=\Spec W$ where $W$ is a local Artinian algebra and $Y$ any $\Coo$-variety. Then there exists a endofunctor of $\Coo$-varieties
\begin{equation}
    (-)^D \,: \,Y \,\longmapsto\, Y^D,
\end{equation}
which is the right adjoint of the functor $(-)\times D$ given by taking the product with $D$.
In other words, $Y^D$ is a $\Coo$-variety which satisfies the property
\begin{equation}\label{eq:adjunction_varieties}
    \Hom_{\mathsf{C^\infty Var}}(X, \,Y^D) \;\simeq\; \Hom_{\mathsf{C^\infty Var}}(X\times D,\, Y)
\end{equation}
for any $\Coo$-variety $X\in\mathsf{C^\infty Var}$.
\end{lemma}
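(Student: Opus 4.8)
The plan is to pass to the opposite, $\Coo$-algebra side and realise $(-)^D$ as the opposite of a left adjoint. Write $W\coloneqq\mathcal{O}(D)$, which is local Artinian, hence finite-dimensional, say $m\coloneqq\dim_\bbR W<\infty$, and $B\coloneqq\mathcal{O}(Y)$. Under the equivalence $\mathsf{C^\infty Var}\simeq\mathsf{C^\infty Alg}^\op_\mathrm{fg}$ the product $X\times D$ corresponds to the $\Coo$-tensor product, so the endofunctor $(-)\times D$ of $\mathsf{C^\infty Var}$ is opposite to $T_W\coloneqq(-)\btimes_\bbR W$ on $\Coo$-algebras. Hence building a right adjoint $(-)^D$ to $(-)\times D$ amounts to building a left adjoint $L_W$ to $T_W$ and setting $Y^D\coloneqq\Spec\big(L_W(\mathcal{O}(Y))\big)$: the bijection \eqref{eq:adjunction_varieties} is then exactly the defining adjunction bijection $\Hom(L_W B,A)\simeq\Hom(B,A\btimes_\bbR W)$ read in the opposite category.

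The first technical input I would record is that, $W$ being a Weil algebra, the $\Coo$-tensor product $A\btimes_\bbR W$ coincides with the ordinary $\bbR$-algebra tensor product $A\otimes_\bbR W$ equipped with its canonical $\Coo$-structure (the Taylor expansions defining the smooth operations terminate because $\mathfrak{m}_W$ is nilpotent). Fixing a basis of $W$ then gives a natural isomorphism $U\circ T_W\cong U(-)^m$, where $U\coloneqq U_\mathsf{CartSp}$ is the underlying-set functor. Since $U$ is monadic — it creates limits and filtered colimits, $\mathsf{C^\infty Alg}$ being the category of algebras over the Lawvere theory $\mathsf{CartSp}$ — and $S\mapsto S^m$ preserves both, the functor $T_W$ preserves all small limits and all filtered colimits. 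As $\mathsf{C^\infty Alg}$ is locally finitely presentable, the adjoint functor theorem for presentable categories furnishes the desired left adjoint $L_W\colon\mathsf{C^\infty Alg}\to\mathsf{C^\infty Alg}$.

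It remains to check that $L_W$ preserves finite generation, which is what guarantees that $Y^D$ is again a $\Coo$-variety; I expect this to be the crux. I would first evaluate $L_W$ on free algebras: as $\Coo(\bbR^n)$ is free on $n$ generators, $\Hom\big(\Coo(\bbR^n),A\btimes_\bbR W\big)\cong U(A\btimes_\bbR W)^n\cong U(A)^{nm}\cong\Hom\big(\Coo(\bbR^{nm}),A\big)$ naturally in $A$, so $L_W(\Coo(\bbR^n))\cong\Coo(\bbR^{nm})$. For a general finitely generated $B\cong\Coo(\bbR^n)/\mathcal{I}$ the quotient map $\Coo(\bbR^n)\twoheadrightarrow B$ is a regular epimorphism, and in the algebraic category $\mathsf{C^\infty Alg}$ regular epimorphisms are precisely the surjections. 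Being a left adjoint, $L_W$ preserves coequalizers and hence regular epimorphisms, so $\Coo(\bbR^{nm})\cong L_W(\Coo(\bbR^n))\twoheadrightarrow L_W(B)$ is a surjection; therefore $L_W(B)\cong\Coo(\bbR^{nm})/\mathcal{J}$ for some ideal $\mathcal{J}$, i.e.\ it is finitely generated.

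Finally I would assemble the pieces: $L_W$ restricts to an endofunctor of $\mathsf{C^\infty Alg}_\mathrm{fg}$, and dualising yields the asserted endofunctor $(-)^D$ of $\mathsf{C^\infty Var}$, with \eqref{eq:adjunction_varieties} following by restricting the $\mathsf{C^\infty Alg}$-adjunction to finitely generated $A$ (and noting that $A\btimes_\bbR W$ remains finitely generated, so $(-)\times D$ does preserve $\mathsf{C^\infty Var}$). The existence of $L_W$ on all of $\mathsf{C^\infty Alg}$ is comparatively formal once limit- and filtered-colimit-preservation of $T_W$ is verified through the conservative functor $U$; the one genuinely delicate step is the preservation of finite generation above.
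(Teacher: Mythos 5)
Your proof is correct, but it takes a genuinely different route from the paper's. The paper argues bottom-up and explicitly: it first computes $\bbR^D\simeq\bbR^{\dim W}$ by hand (using the same basic input as you, namely that every element of $\mathcal{O}(X)\,\widehat{\otimes}\,W$ expands as a $\dim W$-tuple of elements of $\mathcal{O}(X)$), extends this to $(\bbR^k)^D$ and to smooth maps $f^D$, and then, writing $Y=\Spec\big(\Coo(\bbR^n)/(f_1,\dots,f_m)\big)$ as a pullback of Cartesian spaces, uses the fact that $\Hom_{\mathsf{C^\infty Var}}(X\times D,-)$ preserves pullbacks to exhibit $Y^D$ concretely as $(\bbR^n)^D\times_{(\bbR^m)^D}(\bbR^0)^D$ — an argument modelled on Moerdijk--Reyes. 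You instead dualise to $\Coo$-algebras, verify that $T_W=(-)\,\widehat{\otimes}_\bbR\,W$ is continuous and finitary through the conservative forgetful functor, invoke the adjoint functor theorem for locally presentable categories, and then separately check preservation of finite generation via free algebras and regular epimorphisms. The paper's approach buys an explicit presentation of $Y^D$ by generators and relations; your approach is less constructive but is arguably more robust at the one delicate point: the paper's presentation $\Coo(\bbR^n)/(f_1,\dots,f_m)$ tacitly treats the finitely \emph{presented} case, whereas your regular-epi argument ($L_W$ of a surjection out of $\Coo(\bbR^n)$ is a surjection out of $\Coo(\bbR^{nm})$) covers arbitrary finitely generated ideals of relations, which is the generality the lemma actually claims. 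Both proofs ultimately rest on the same computation on free algebras, $L_W(\Coo(\bbR^n))\cong\Coo(\bbR^{n\dim W})$, i.e.\ $(\bbR^n)^D\cong\bbR^{n\dim W}$.
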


\begin{proof}
We deploy an argument similar to \cite[Theorem 1.13]{Moerdijk:1991}. 
First we have to verify that $\bbR^D$ exists. So, for any $\Coo$-variety $X\in\mathsf{C^\infty Var}$ we have the equivalences
\begin{equation}
 \begin{aligned}
     \Hom_{\mathsf{C^\infty Var}}(X\times D,\bbR)   \;  &\simeq\; (\mathcal{O}(X)\,\widehat{\otimes}\, W)(\bbR) \\
     &\simeq\; \mathcal{O}(X)(\bbR^{\mathrm{dim}(W)}) \\
     &\simeq\; \Hom_{\mathsf{C^\infty Var}}(X,\bbR^{\mathrm{dim}(W)}),
 \end{aligned}   
\end{equation}
where in the penultimate step we used the fact that any smooth function $g\in\mathcal{O}(X)\,\widehat{\otimes}\, W$ can be expanded as $(g_1,\dots,g_{\mathrm{dim}(W)})$ with each $g_i\in\mathcal{O}(X)$. Thus we have $\bbR^D \simeq \bbR^{\mathrm{dim}(W)}$, which exists.
By the same argument, we have an equivalence $\Hom_{\mathsf{C^\infty Var}}(X\times D,\bbR^k)\simeq \Hom_{\mathsf{C^\infty Var}}(X,\bbR^{k\,\mathrm{dim}(W)})$ for any natural number $k$ and $\Coo$-variety $X$. This implies that $(\bbR^0)^D \simeq \bbR^0$ exist and  that $(\bbR^k)^D \simeq (\bbR^D)^k$ exist for any $k>0$.
Now, given a smooth map $f:\bbR^n\rightarrow \bbR^m$, the new map $f^D:(\bbR^n)^D\rightarrow (\bbR^m)^D$ is given by the equivalence $\Hom_{\mathsf{C^\infty Var}}(X,f^D)\simeq \Hom_{\mathsf{C^\infty Var}}(X\times D,f)$ for any $\Coo$-variety $X$.
Now, let us fix a generic $\Coo$-variety $Y = \Spec(A)$, where $A \cong \Coo(\bbR^n)/(f_1,\dots,f_m)$ is a finitely generated $\Coo$-algebra with $f_i\in\Coo(\bbR^n)$. We must show that there exists a $\Coo$-variety $Y^D$ such that the equivalence \ref{eq:adjunction_varieties} holds.
Since $A$ is a quotient, $Y= \Spec(A)$ is equivalently defined by the pullback square
\begin{equation}
    \begin{tikzcd}[row sep=scriptsize, column sep=10.0ex, row sep=10.5ex]
     Y \arrow[r, ""]\arrow[d, ""] & \bbR^0 \arrow[d, "0"] \\
     \bbR^n \arrow[r, "{(f_1,\dots,f_m)}"] & \bbR^m.
    \end{tikzcd}
\end{equation}
On the one hand, since the functor $\Hom_{\mathsf{C^\infty Var}}(X\times D,-)$ preserves pullbacks for any $\Coo$-variety $X$, we have a pullback square of sets
\begin{equation*}
    \begin{tikzcd}[row sep=scriptsize, column sep=19.0ex, row sep=16.0ex]
     \Hom_{\mathsf{C^\infty Var}}(X\times D,Y) \arrow[r, ""]\arrow[d, ""] & \Hom_{\mathsf{C^\infty Var}}(X,(\bbR^0)^D) \arrow[d, "{\Hom_{\mathsf{C^\infty Var}}(X,0^D)}"] \\
     \Hom_{\mathsf{C^\infty Var}}(X,(\bbR^n)^D) \arrow[r, "{\Hom_{\mathsf{C^\infty Var}}(X,f_1^D\!,\dots,f_m^D)}"] & \Hom_{\mathsf{C^\infty Var}}(X,(\bbR^m)^D).
    \end{tikzcd}
\end{equation*}
for any $\Coo$-variety $X$.
On the other hand, we have the pullback square of $\Coo$-varieties
\begin{equation}
    \begin{tikzcd}[row sep=scriptsize, column sep=7.0ex, row sep=15.0ex]
     (\bbR^n)^D \!\times_{(\bbR^m)^D}\!(\bbR^0)^D \arrow[r, ""]\arrow[d, ""] & (\bbR^0)^D \arrow[d, "0^D"] \\
     (\bbR^n)^D \arrow[r, "{(f_1^D\!,\dots,f_m^D)}"] & (\bbR^m)^D.
    \end{tikzcd}
\end{equation}
Thus, the $\Coo$-variety $Y^D$ exists and it is indeed given by $Y^D \simeq (\bbR^n)^D \!\times_{(\bbR^m)^D}\!(\bbR^0)^D$.
\end{proof}

Notice that, for any $\Coo$-variety $Y$ and $D=\Spec W$ where $W$ is a local Artinian algebra, there is a natural morphism $\mathrm{ev}_0:Y^D\rightarrow Y$ from the $D$-exponential to the original $Y$. This is induced by the canonical inclusion $\ast \rightarrow D$ of the point into the canonical point of $D$.

\begin{definition}[$\mathcal{D}$-\'etale map]
We say that a morphism $f:X\rightarrow Y$ of $\Coo$-varieties is \textit{$\mathcal{D}$-\'etale} if we have a pullback diagram
\begin{equation}\label{eq:def_of_formallyetale}
    \begin{tikzcd}[row sep=scriptsize, column sep=8.2ex, row sep=9.0ex]
     X^{D} \arrow[r, "f^{D}"]\arrow[d, "\mathrm{ev}_0"] & Y^{D} \arrow[d, "\mathrm{ev}_0"] \\
     X \arrow[r, "f"] & Y
    \end{tikzcd}
\end{equation}
for any thickened point $D=\Spec W$, where $W$ is a local Artinian algebra.
\end{definition}

\begin{corollary}[$\mathcal{D}$-\'etale maps generalise local diffeomorphisms]
Let $M$ and $N$ be ordinary smooth manifolds, seen as $\Coo$-varieties. Then, we have that any $\mathcal{D}$-\'etale map $f:M\rightarrow N$ is equivalently a local diffeomorphism in the ordinary differential geometry sense.
\end{corollary}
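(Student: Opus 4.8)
The plan is to establish the two directions of the asserted equivalence separately, the main computational input being the explicit form of the exponentials $M^{\Spec W}$ provided by the preceding lemma. The key observation is that for the first-order thickened point $D_1 \coloneqq \Spec(\Coo(\bbR)/(z^2))$ the exponential $M^{D_1}$ is canonically the tangent bundle $TM$, with $\mathrm{ev}_0\colon TM\to M$ the projection. Indeed, since $\dim(\Coo(\bbR)/(z^2))=2$, the lemma gives $(\bbR^n)^{D_1}\simeq\bbR^{2n}$ with $\mathrm{ev}_0$ the projection onto the first factor; writing a manifold $M$ as a transverse zero locus $M=\bbR^n\times_{\bbR^m}\bbR^0$ of a map $F$ and using that $(-)^{D_1}$, being a right adjoint, preserves this pullback, one identifies $M^{D_1}\simeq\{(x,v):F(x)=0,\ \mathrm{d}F_x v=0\}=TM$ over $M$.

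For the direction ``$\mathcal D$-\'etale $\Rightarrow$ local diffeomorphism'', suppose $f\colon M\to N$ is $\mathcal D$-\'etale and specialise the defining square to $D=D_1$. By the previous paragraph it becomes
\[
\begin{tikzcd}[row sep=7ex, column sep=7ex]
TM \arrow[r, "Tf"]\arrow[d, "\mathrm{ev}_0"'] & TN \arrow[d, "\mathrm{ev}_0"] \\
M \arrow[r, "f"] & N,
\end{tikzcd}
\]
so that $\mathcal D$-\'etaleness forces $TM\xrightarrow{\ \simeq\ }M\times_N TN=f^\ast TN$. Fibrewise this says that $\mathrm{d}_xf\colon T_xM\to T_{f(x)}N$ is a linear isomorphism for every $x\in M$; in particular $\dim M=\dim N$, and the inverse function theorem then shows that $f$ is a local diffeomorphism.

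For the converse, let $f$ be a local diffeomorphism and fix an arbitrary thickened point $D=\Spec W$ with $W$ local Artinian. I would argue that the square is a pullback by checking it locally on the source and reducing to two elementary cases. First, the pullback assertion is local over $M$: both $\mathrm{ev}_0\colon M^D\to M$ and the projection $M\times_N N^D\to M$ restrict compatibly to an open cover. Choosing a cover $\{U_i\}$ with $f|_{U_i}\colon U_i\xrightarrow{\simeq}V_i\subseteq N$ a diffeomorphism onto an open subset, the statement splits into (i) the case of a diffeomorphism, for which the square is a pullback by functoriality, and (ii) the case of an open immersion $j\colon V\hookrightarrow N$, for which one must show $V^D\simeq V\times_N N^D$. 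By the Yoneda lemma this last point reduces to the claim that any map $Z\times D\to N$ whose restriction to $Z$ factors through $V$ already factors through $V$. This holds because $Z\hookrightarrow Z\times D$ is a nilpotent thickening, so $Z\times D$ and $Z$ share the same underlying point set; the preimage of the open $V$ is then an open subset of $Z\times D$ containing every point, hence all of $Z\times D$.

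I expect the main obstacle to be making this last nilpotence argument rigorous at the level of $\Coo$-varieties: one must verify that passing to $M^{D}$ along an infinitesimal thickening leaves the underlying point set unchanged and that factoring through an open subvariety is detected on underlying points, together with the locality-of-pullbacks claim, which relies on $(-)^{D}$ interacting well with the open covers defining the \'etale structure. The first-order computation identifying $M^{D_1}$ with $TM$ is the other step that must be pinned down carefully, since it is what converts the abstract $\mathcal D$-\'etale condition into the classical statement about $\mathrm{d}f$.
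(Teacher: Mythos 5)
Your argument follows essentially the same route as the paper's: the forward direction specialises to the dual numbers $D_1=\Spec(\Coo(\bbR)/(z^2))$, identifies $(-)^{D_1}$ with the tangent bundle so that the $\mathcal{D}$-\'etale square becomes $TM\simeq f^\ast TN$, and invokes the inverse function theorem; the converse covers $M$ by opens on which $f$ is a diffeomorphism onto its image and pastes pullback squares, which is exactly the content of the paper's cube diagram (your cases (i) and (ii) are its back face and its side faces). You are in fact more careful than the paper, which omits both the inverse-function-theorem step and the justification that restriction to an open subset commutes with $(-)^D$.

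One step as literally written would fail: not every smooth manifold is a \emph{global} transverse zero locus $\bbR^n\times_{\bbR^m}\bbR^0$, since such a presentation forces the normal bundle of $M$ in $\bbR^n$ to be trivial (e.g.\ $\bbR\mathrm{P}^2$ admits no such presentation, as $w_1$ of its normal bundle is nonzero). The identification $M^{D_1}\simeq TM$ survives, either by running your computation chart-by-chart using the locality you establish in the converse direction, or by presenting $\Coo(M)$ as $\Coo(\bbR^n)/\mathcal{I}$ with $\mathcal{I}$ the full vanishing ideal of an embedded copy of $M$, for which the common kernel of the differentials $\di f_x$, $f\in\mathcal{I}$, is exactly $T_xM$. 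With that repair the proof is complete and matches the paper's.
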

\begin{proof}
To see this, notice that by setting $D = \Spec(\Coo(\bbR)/(x^2))$ to be the local Artinian algebra of dual numbers, then the pullback square \eqref{eq:def_of_formallyetale} becomes precisely
\begin{equation}
    \begin{tikzcd}[row sep=scriptsize, column sep=7.5ex, row sep=9.0ex]
     TM \arrow[r, "f_\ast"]\arrow[d, "\pi_M"] & TN \arrow[d, "\pi_N"] \\
     M \arrow[r, "f"] & N,
    \end{tikzcd}
\end{equation}
making $f$ into a local diffeomorphism. 
Conversely, a local diffeomorphism $f$ induces a diffeomorphism $U_x \xrightarrow{\;\simeq\;} V_{f(x)}$ of open neighborhoods respectively of $x$ and of its image for any point $x\in M$. Thus we have the diagram
\begin{equation}
\begin{tikzcd}[row sep=scriptsize, column sep={8.0ex,between origins}, row sep={7.5ex,between origins}]
	U_x^D && V_{f(x)}^D \\
	& M^D && N^D \\
	{U_x} && {V_{f(x)}} \\
	& M && N,
	\arrow["", from=2-2, to=4-2]
	\arrow[from=2-2, to=2-4]
	\arrow[""', from=2-4, to=4-4]
	\arrow["", from=4-2, to=4-4]
	\arrow[""', from=1-1, to=2-2]
	\arrow["", from=1-1, to=3-1]
	\arrow[" ", from=3-1, to=4-2]
	\arrow[" ", from=1-3, to=2-4]
	\arrow["\simeq", from=1-1, to=1-3]
	\arrow["", from=1-3, to=3-3]
	\arrow["\simeq\;\;\;\;", from=3-1, to=3-3]
	\arrow[" ", from=3-3, to=4-4]
\end{tikzcd}
\end{equation}
which implies that the square on the front is a pullback.
\end{proof}

In the spirit of interpreting $\Coo$-varieties as formal generalisations of ordinary smooth manifolds, we can equip their category $\mathsf{C^\infty Var}$ with a coverage which is compatible with the coverage of $\mathsf{Mfd}$ from previous section.
Thus we define a coverage as follows.

\begin{lemma}[$\mathcal{D}$-\'etale covering family of a $\Coo$-variety]\label{lem:covering_var}
We may declare a covering family of a $\Coo$-variety $X$ to be a set of $\mathcal{D}$-\'etale monomorphisms
\begin{equation}
    \{ U_i \,\xhookrightarrow{\;\;\phi_i\;\;} X \}_{i\in I}
\end{equation}
such that they induce the $\mathcal{D}$-\'etale epimorphism
\begin{equation}
    \coprod_{i\in I}U_i \,\xtwoheadrightarrow{\;\;(\phi_i)_{i\in I}\;\;} X .
\end{equation}
\end{lemma}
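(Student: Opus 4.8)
The plan is to verify that the declared families satisfy the axioms of a coverage (indeed of a Grothendieck pretopology) on $\mathsf{C^\infty Var}$, exactly parallel to the way the injective local diffeomorphisms of the previous section equip $\mathsf{Mfd}$ with its \'etale site structure. Concretely I would check three things: that isomorphisms are covering, that covering families are stable under pullback, and that they are closed under composition (transitivity). The identification, in the earlier corollary, of $\mathcal{D}$-\'etale maps between ordinary manifolds with local diffeomorphisms is what guarantees that this coverage restricts to the \'etale coverage on $\mathsf{Mfd}$.

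The key technical input, which I would isolate first, is that the class of $\mathcal{D}$-\'etale maps is closed under composition and stable under pullback. Both facts reduce to the observation that $(-)^D$ is a right adjoint (to $(-)\times D$, by the exponential lemma) and hence preserves all limits, in particular pullbacks. For composition, given $\mathcal{D}$-\'etale $f:X\to Y$ and $g:Y\to Z$, one has $X^D\simeq X\times_Y Y^D\simeq X\times_Y(Y\times_Z Z^D)\simeq X\times_Z Z^D$, which is precisely the defining pullback for $g\circ f$. For stability, given $\mathcal{D}$-\'etale $f:X\to Y$ and arbitrary $h:Y'\to Y$ with pullback $X'\coloneqq X\times_Y Y'$, preservation of pullbacks by $(-)^D$ gives $(X')^D\simeq X^D\times_{Y^D}(Y')^D\simeq (X\times_Y Y^D)\times_{Y^D}(Y')^D\simeq X\times_Y(Y')^D\simeq X'\times_{Y'}(Y')^D$, using naturality of $\mathrm{ev}_0$; this is exactly the $\mathcal{D}$-\'etale square for $X'\to Y'$.

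With this in hand the coverage axioms follow formally. An isomorphism is simultaneously $\mathcal{D}$-\'etale (its defining square is trivially a pullback), monic and epic, so it is a covering family. For pullback stability, given a covering family $\{U_i\hookrightarrow X\}$ and any $h:Y\to X$, each pullback $U_i\times_X Y\to Y$ exists, is again a $\mathcal{D}$-\'etale monomorphism by the previous paragraph together with the fact that monomorphisms pull back to monomorphisms, and the induced map $\coprod_i(U_i\times_X Y)\to Y$ is the pullback of $\coprod_i U_i\twoheadrightarrow X$, hence remains jointly $\mathcal{D}$-\'etale surjective. Transitivity follows from closure of $\mathcal{D}$-\'etale maps, of monomorphisms, and of (jointly surjective) epimorphisms under composition.

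The main obstacle I anticipate is not the formal pasting but the existence and good behaviour of the relevant pullbacks inside $\mathsf{C^\infty Var}=\mathsf{C^\infty Alg}_{\mathrm{fg}}^\op$, which is not closed under arbitrary limits. I would need to check that pullbacks along $\mathcal{D}$-\'etale monomorphisms genuinely exist at the level of $\Coo$-algebras -- dually, that the relevant $\Coo$-tensor products $\mathcal{O}(U_i)\btimes_{\mathcal{O}(X)}\mathcal{O}(Y)$ stay finitely generated -- and that the coproduct $\coprod_i U_i$ appearing in the covering is handled correctly, so that the epimorphism condition (whose precise meaning I read as ``jointly $\mathcal{D}$-\'etale surjective,'' paralleling the surjective local diffeomorphism condition on $\mathsf{Mfd}$) is stable under base change. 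This is precisely the point at which the analogy with the ordinary \'etale site of smooth manifolds, made rigorous earlier by the corollary that $\mathcal{D}$-\'etale maps of manifolds are local diffeomorphisms, does the real work.
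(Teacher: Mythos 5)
Your proposal is correct and follows essentially the same route as the paper: the core step in both is that $\mathcal{D}$-\'etale maps are stable under pullback because $(-)^D$, being a right adjoint, preserves pullbacks, combined with the pasting law, after which pullback-stability of the covering families is formal. The paper's proof only verifies this pullback-stability axiom (presented via a cube diagram rather than your chain of equivalences) and does not explicitly address the isomorphism and transitivity axioms, nor the existence of the relevant pullbacks in $\mathsf{C^\infty Var}$, so your additional checks are a reasonable supplement rather than a divergence.
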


\begin{proof}
First, we show that $\mathcal{D}$-\'etale morphisms are stable under pullback.
Consider a pullback diagram of $\Coo$-varieties of the form
\begin{equation}
    \begin{tikzcd}[row sep=scriptsize, column sep=7.3ex, row sep=9.0ex]
     X\times_ZY \arrow[r, " "]\arrow[d, "\psi"] & Y \arrow[d, "\phi"] \\
     X \arrow[r, ""] & Z,
    \end{tikzcd}
\end{equation}
where we assume that $\phi$ is a $\mathcal{D}$-\'etale map.
As previously noticed, $(-)^D$ preserves pullbacks, thus we have a bigger diagram
\begin{equation}
\begin{tikzcd}[row sep=scriptsize, column sep={10.0ex,between origins}, row sep={9.5ex,between origins}]
	{(X\times_ZY)^D} && {Y^D} \\
	& {X\times_ZY} && Y \\
	{X^D} && {Z^D} \\
	& X && Z,
	\arrow["\psi", near start, from=2-2, to=4-2]
	\arrow[from=2-2, to=2-4]
	\arrow["\phi", from=2-4, to=4-4]
	\arrow[from=4-2, to=4-4]
	\arrow["{\mathrm{ev}_0}"', from=1-1, to=2-2]
	\arrow["{\psi^D}"', from=1-1, to=3-1]
	\arrow["{\mathrm{ev}_0}", from=3-1, to=4-2]
	\arrow["{\mathrm{ev}_0}", from=1-3, to=2-4]
	\arrow[from=1-1, to=1-3]
	\arrow["{\phi^D}", near start, from=1-3, to=3-3]
	\arrow[from=3-1, to=3-3]
	\arrow["{\mathrm{ev}_0}", from=3-3, to=4-4]
\end{tikzcd}
\end{equation}
where both the front and the back square are pullbacks.
Moreover, $\phi$ being $\mathcal{D}$-\'etale implies that the right square is a pullback too. Then, by applying the pasting law for pullbacks we obtain that the left square is a pullback and thus $\psi$ is $\mathcal{D}$-\'etale. Therefore, $\mathcal{D}$-\'etale maps are stable under pullbacks.
Now, consider a covering family $\{ U_i \,\xhookrightarrow{\;\;\phi_i\;\;} X \}_{i\in I}$ as above and a morphism $f:Y\rightarrow X$. We can form the pullback square
\begin{equation}
    \begin{tikzcd}[row sep=scriptsize, column sep=6.8ex, row sep=9.0ex]
     Y\times_XU_i \arrow[r, ""]\arrow[d, "\psi_i"] & U_i \arrow[d, "\phi_i", hook] \\
     Y \arrow[r, "f"] & X,
    \end{tikzcd}
\end{equation}
Since monomorphisms and $\mathcal{D}$-\'etale monomorphisms are stable under pullbacks, then $\psi_i$ is a $\mathcal{D}$-\'etale monomorphism.
Moreover, we have that the morphism $\coprod_{i\in I}Y\times_X U_i \xrightarrow{\;(\psi_i)_{i\in I}\;} Y$ is a $\mathcal{D}$-\'etale epimorphism.
\end{proof}

The following definition is a specialization of the general one provided by \cite{kock_2006}.

\begin{definition}[Formal smooth manifolds]
We define a \textit{formal smooth manifold} $M$ as a $\Coo$-variety such that there exist a family $\{\bbR^{n\!}\times \Spec W \,\xhookrightarrow{\;\,\phi_i\,\;}\, M\}_{i\in I}$ of $\mathcal{D}$-\'etale monomorphisms, where $W$ is Artinian, with the property that the induced map
\begin{equation}
    \bigsqcup_{i\in I}\bbR^{n\!}\times \Spec W \,\xtwoheadrightarrow{\;\;(\phi_i)_{i\in I}\;\;}\, M
\end{equation}
is an \'etale epimorphism.
We denote by $\mathsf{FMfd}$ the category of formal smooth manifolds, i.e. the full and faithful subcategory of $\mathsf{C^\infty Var}$ whose objects are all the formal smooth manifolds and we denote its embedding into the latter by
\begin{equation}\label{eq:embedding_formalmanifolds}
    \iota^{\mathsf{FMfd}}:\,\,\mathsf{FMfd} \,\longhookrightarrow\, \mathsf{C^\infty Var}.
\end{equation}
\end{definition}

In other words, a $\Coo$-variety is a formal smooth manifold if it admits a covering of thickened charts of the form $\bbR^n\times \Spec W$ for some $n\in\mathbb{N}$ and local Artinian algebra $W\in\mathrm{Art}_\bbR$.

\begin{remark}[Covering family of a formal smooth manifold]
Notice that we can naturally make the category $\mathsf{FMfd}$ of formal smooth manifold into a site by restricting the covering families of the site $\mathsf{C^\infty Var}$ of $\Coo$-varieties we constructed in theorem \ref{lem:covering_var}.  
\end{remark}

\begin{example}[Thickened circle]
Consider the thickened circle from the previous subsection
\begin{equation}
    S^1\times \Spec W \;=\; \mathrm{Spec}_{\!}\big(\Coo(\bbR^3)/(x^2+y^2-1, z^2)\big)
\end{equation}
Notice that it can be covered by a covering $\{\bbR^{\!}\times \Spec W \,\xhookrightarrow{\;\,(\psi_i,\mathrm{id})\,\;}\, S^1\times \Spec W\}_{i=0,1}$ where the set $\{\bbR\,\xhookrightarrow{\;\,\psi_i\,\;}\, S^1\}_{i=0,1}$ is just a covering of the underlying circle as a smooth manifold.
\end{example}

\begin{construction}[Reduction of formal smooth manifolds]
By reduction and co-reduction of adjunction \ref{rem:reducedalgebras}, we can obtain the adjunction of categories
\begin{equation}\label{rem:reducedalgebras2}
    \begin{tikzcd}[row sep=scriptsize, column sep=12.2ex, row sep=15.0ex]
     \mathsf{Mfd} \arrow[r, " "', ""{name=0, anchor=center, inner sep=0}, shift left=1.1ex, hook] &  \mathsf{FMfd}. \arrow[l, " "', ""{name=1, anchor=center, inner sep=0}, shift left=1.1ex] \arrow["\dashv"{anchor=center, rotate=-90}, draw=none, from=1, to=0]
    \end{tikzcd}
\end{equation}
In particular, this is an adjunction of ordinary sites, since by construction of formal smooth manifolds both functors send covering families to covering families on the nose.
\end{construction}

\begin{definition}[Formally \'etale map]
\begin{equation}
    \begin{tikzcd}[row sep=scriptsize, column sep=3.5ex, row sep=11.0ex]
     X \arrow[r, "f"] & Y \\
     \Spec(R) \arrow[r, ""]\arrow[u, ""] & \Spec(R/\mathfrak{m}_R) \arrow[u, ""]\arrow[ul, "\exists!"', dashed]
    \end{tikzcd}
\end{equation}
\end{definition}

\begin{lemma}[Formally \'etale $\Rightarrow$ $\mathcal{D}$-\'etale]
Let $f:X\rightarrow Y$ be a formally \'etale morphism of $\Coo$-varieties. Then $f$ is a $\mathcal{D}$-\'etale morphism.
\end{lemma}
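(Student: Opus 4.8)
The plan is to verify the defining pullback square of $\mathcal{D}$-étaleness objectwise, through the functor of points, and to recognise the resulting condition as precisely the infinitesimal lifting property defining formal étaleness. So fix a thickened point $D=\Spec W$ with $W$ a local Artinian algebra. Since the pullback of $X\xrightarrow{f} Y\xleftarrow{\mathrm{ev}_0} Y^D$ is computed by a $\Coo$-tensor product, it exists in $\mathsf{C^\infty Var}$, and by the Yoneda lemma it suffices to show that the canonical comparison map $X^D\to X\times_Y Y^D$ induces a bijection on $\Hom_{\mathsf{C^\infty Var}}(T,-)$ for every $\Coo$-variety $T$.

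First I would unwind both sides using the exponential adjunction $\Hom_{\mathsf{C^\infty Var}}(T,(-)^D)\simeq\Hom_{\mathsf{C^\infty Var}}(T\times D,-)$ from the exponential lemma. Under it, $f^D$ becomes postcomposition with $f$, and, since $\mathrm{ev}_0$ is induced by the inclusion $\ast\hookrightarrow D$, the two maps $\mathrm{ev}_0$ become restriction along $T=T\times\ast\hookrightarrow T\times D$. Hence a point of the target fibre product is a pair $(a,b)$ with $a:T\to X$ and $b:T\times D\to Y$ satisfying $f\circ a=b|_{T}$, and the comparison map sends $c:T\times D\to X$ to $(c|_{T},\,f\circ c)$. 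Bijectivity for all $T$ is therefore exactly the statement that every such pair $(a,b)$ admits a unique lift $c:T\times D\to X$ with $c|_{T}=a$ and $f\circ c=b$.

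I would then identify this as an infinitesimal lifting problem for $f$ against the closed immersion $T\hookrightarrow T\times D$. Writing $T\times D=\Spec(\mathcal{O}(T)\btimes W)$, this immersion is induced by the quotient $W\to W/\mathfrak{m}_W\cong\bbR$ and is cut out by a nilpotent ideal, because $\mathfrak{m}_W^N=0$; so it is an infinitesimal thickening of exactly the type against which a formally étale map has the unique lifting property. Feeding this into the definition of formal étaleness produces the desired unique $c$ for every $T$, whence the square is a pullback for every $D$ and $f$ is $\mathcal{D}$-étale.

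The one delicate point, which I expect to be the main obstacle, is matching the thickenings: the defining diagram for formal étaleness is phrased in terms of the reduction immersions $\Spec(R/\mathfrak{m}_R)\hookrightarrow\Spec(R)$, whereas $T\hookrightarrow T\times D$ need not be a reduction when $T$ already carries nilpotents. This is settled by a short dévissage. Since $(\mathcal{O}(T)\btimes W)^{\mathrm{red}}\cong\mathcal{O}(T)^{\mathrm{red}}$ (as in the reduction examples of the previous section), the reduction of $T\times D$ is $T_{\mathrm{red}}$, and the immersion factors as $T_{\mathrm{red}}\hookrightarrow T\hookrightarrow T\times D$. Applying formal étaleness to the reduction thickening $T_{\mathrm{red}}\hookrightarrow T\times D$ yields a unique $c$ with $c|_{T_{\mathrm{red}}}=a|_{T_{\mathrm{red}}}$ and $f\circ c=b$; applying it to $T_{\mathrm{red}}\hookrightarrow T$ and invoking uniqueness then forces $c|_{T}=a$, since both $c|_{T}$ and $a$ restrict to $a|_{T_{\mathrm{red}}}$ and satisfy $f\circ(c|_{T})=b|_{T}=f\circ a$. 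This is where the actual content sits, and it is precisely the step that leverages uniqueness of infinitesimal lifts rather than mere existence.
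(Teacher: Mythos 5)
Your proof is correct and follows essentially the same route as the paper's: the paper phrases the key step as the pasting law applied to the pullback squares of Hom-sets over $R'\,\widehat{\otimes}\,W \to R' \to R'/\mathfrak{m}_{R'} \cong (R'\,\widehat{\otimes}\,W)^{\mathrm{red}}$, which is exactly your d\'evissage through $T_{\mathrm{red}}\hookrightarrow T\hookrightarrow T\times D$ written at the level of unique lifts rather than pullback diagrams. Your version is in fact more explicit about the point the paper leaves implicit, namely that formal \'etaleness only gives lifting against reduction thickenings and so must be invoked twice.
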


\begin{proof}
By element chasing, we have the pullback square of sets for any $\Coo$-algebra $R$
\begin{equation}
    \begin{tikzcd}[row sep=scriptsize, column sep=2.0ex, row sep=11.0ex]
     \Hom(A,R)\arrow[d, ""] \arrow[r, ""] & \Hom(B,R)\arrow[d, ""] \\
     \Hom(A,R/\mathfrak{m}_R) \arrow[r, ""] & \Hom(A,R/\mathfrak{m}_R)
    \end{tikzcd}
\end{equation}
By pasting law for pullbacks, 
\begin{equation}
    \begin{tikzcd}[row sep=scriptsize, column sep=2.0ex, row sep=11.0ex]
     \Hom(A,R'\,\widehat{\otimes}\,W)\arrow[d, ""] \arrow[r, ""] & \Hom(B,R'\,\widehat{\otimes}\,W)\arrow[d, ""] \\
     \Hom(A,R') \arrow[r, ""] & \Hom(A,R')
    \end{tikzcd}
\end{equation}
Dually, in the category of $\Coo$-varieties we have the pullback square of sets
\begin{equation}
    \begin{tikzcd}[row sep=scriptsize, column sep=2.0ex, row sep=11.0ex]
     \Hom(U\times D, X)\arrow[d, ""] \arrow[r, ""] & \Hom(U\times D,Y)\arrow[d, ""] \\
     \Hom(U,X) \arrow[r, ""] & \Hom(U,Y)
    \end{tikzcd}
\end{equation}
which implies that the morphism $f:X\rightarrow Y$ is $\mathcal{D}$-\'etale.
\end{proof}

\begin{lemma}[Formally \'etale covering family of a $\Coo$-variety]
We may declare a covering family of a $\Coo$-variety $X$ to be a set of formally \'etale monomorphisms
\begin{equation}
    \{ U_i \,\xhookrightarrow{\;\;\phi_i\;\;} X \}_{i\in I}
\end{equation}
such that they induce the formally \'etale epimorphism
\begin{equation}
    \coprod_{i\in I}U_i \,\xtwoheadrightarrow{\;\;(\phi_i)_{i\in I}\;\;} X .
\end{equation}
\end{lemma}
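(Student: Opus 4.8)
The plan is to follow the strategy of the proof of Lemma~\ref{lem:covering_var} essentially verbatim, replacing $\mathcal{D}$-\'etale maps by formally \'etale maps throughout. To exhibit the proposed families as a coverage on $\mathsf{C^\infty Var}$, it suffices to show that they are stable under pullback: given a covering family $\{U_i \hookrightarrow X\}_{i\in I}$ of formally \'etale monomorphisms and an arbitrary morphism $f:Y\to X$, I would form the pullbacks $Y\times_X U_i$ and verify that $\{Y\times_X U_i \to Y\}_{i\in I}$ is again a covering family of $Y$ in the required sense.

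The key structural step is that formally \'etale morphisms are stable under pullback, and this is essentially formal. A formally \'etale map is defined by a \emph{unique} right lifting property against the maps $\Spec(R)\to\Spec(R/\mathfrak{m}_R)$, and any class of maps characterised by such an orthogonality condition is automatically closed under base change. Explicitly, given a formally \'etale $\phi:Y'\to X$ and $g:Z\to X$, any lifting problem for the projection $Z\times_X Y' \to Z$ transports along $g$ to a lifting problem for $\phi$, whose unique solution combines with the universal property of $Z\times_X Y'$ to produce the unique required diagonal. Equivalently, one argues at the level of hom-sets exactly as in the preceding lemma, invoking the defining pullback square of sets together with the pasting law for pullbacks.

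With this in place, the rest follows the $\mathcal{D}$-\'etale case closely. Monomorphisms are always stable under pullback, so each projection $\psi_i:Y\times_X U_i \to Y$ is both a monomorphism and formally \'etale, hence a formally \'etale monomorphism. It then remains to check that the induced map $\coprod_{i\in I}(Y\times_X U_i)\to Y$ is a formally \'etale epimorphism, realising it as the base change along $f$ of the hypothesised formally \'etale epimorphism $\coprod_{i\in I}U_i \twoheadrightarrow X$.

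The step I expect to be genuinely delicate is this last one: confirming that the \emph{epimorphism} part of the covering condition is preserved under pullback. Stability of formally \'etale maps under base change is automatic, but showing that the surjectivity encoded by $\coprod_i U_i \twoheadrightarrow X$ descends to $Y$ requires that the relevant coproducts be stable under pullback in $\mathsf{C^\infty Var}$ and that ``epimorphism'' be read in the appropriate \'etale-local sense, so that pulling back along $f$ does not destroy surjectivity. This is exactly the point asserted but not expanded in the $\mathcal{D}$-\'etale proof, and it is where I would concentrate the argument.
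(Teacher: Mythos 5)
Your proposal matches the paper's argument: the paper's proof consists precisely of the cube of hom-sets $\Hom(\Spec R,-)$ and $\Hom(\Spec R/\mathfrak{m}_R,-)$ together with the pasting law for pullbacks to establish that formally \'etale morphisms are stable under base change, with the remaining steps (monomorphisms, the epimorphism condition, assembling the coverage) left implicit by analogy with the $\mathcal{D}$-\'etale case. Your version is in fact slightly more complete, since you spell out those remaining steps and correctly flag that the preservation of the epimorphism condition under pullback is the one point the paper asserts without expansion in either lemma.
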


\begin{proof}
We have the diagram
\begin{equation}
\begin{tikzcd}[row sep=scriptsize, column sep={15.0ex,between origins}, row sep={11ex,between origins}]
	{\Hom(\Spec{R},X_{\!}\times_{Z\!}Y)} && \Hom(\Spec{R},Y) \\
	& {\Hom(\Spec{R}/\mathfrak{m}_R,X_{\!}\times_{Z\!}Y)} && \Hom(\Spec{R}/\mathfrak{m}_R,Y) \\
	\Hom(\Spec{R},X) && \Hom(\Spec{R},Z) \\
	& \Hom(\Spec{R}/\mathfrak{m}_R,X) && \Hom(\Spec{R}/\mathfrak{m}_R,Z),
	\arrow["", near start, from=2-2, to=4-2]
	\arrow[from=2-2, to=2-4]
	\arrow["", from=2-4, to=4-4]
	\arrow[from=4-2, to=4-4]
	\arrow[""', from=1-1, to=2-2]
	\arrow[""', from=1-1, to=3-1]
	\arrow["", from=3-1, to=4-2]
	\arrow["", from=1-3, to=2-4]
	\arrow[from=1-1, to=1-3]
	\arrow["", near start, from=1-3, to=3-3]
	\arrow[from=3-1, to=3-3]
	\arrow["", from=3-3, to=4-4],
\end{tikzcd}
\end{equation}
where the front, the back and the right squares are pullbacks. 
Then, by pasting, we have that the left square is a pullback, which implies that formally \'etale morphisms are stable under pullbacks.
\end{proof}

\subsection{Definition of formal smooth stacks}\label{subsec:formalsmoothset}

In this subsection, we will generalise smooth sets and smooth stacks, respectively, to formal smooth set and formal smooth stacks.

Let us start from the definition of formal smooth sets, which are roughly ordinary sheaves on formal smooth manifolds.
Geometrically, they provide a rich class of generalisations of smooth manifolds.
In particular, they allow us to formalise a large variety of infinite-dimensional smooth spaces, such as smooth mapping spaces and smooth spaces of sections of a bundle.

\begin{definition}[Formal smooth sets]\label{def:formal_smooth_set}
\textit{Formal smooth sets} are defined  as sheaves on the site of formal smooth manifolds $\mathsf{FMfd}$. The category of formal smooth sets is, then, defined by
\begin{equation}
    \mathsf{FSmoothSet} \;\coloneqq\; \mathsf{Sh}(\mathsf{FMfd}).
\end{equation}
\end{definition}

This definition is equivalent\footnote{
In \cite{Dubuc1979} formal smooth sets were defined as sheaves on the site $\mathsf{FCartSp}$ of formal Cartesian spaces, i.e. spaces of the form $\bbR^n\times \Spec W$, where $\bbR^n$ is a Cartesian space and $W$ is a local Artinian algebra. However, $\mathsf{FCartSp}$ is by construction a dense sub-site of $\mathsf{FMfd}$. This implies a natural equivalence $\mathsf{Sh}(\mathsf{FCartSp})\simeq \mathsf{Sh}(\mathsf{FMfd})$, which makes the definition in the reference equivalent to definition \ref{def:formal_smooth_set}.
} to the original one provided by \cite{Dubuc1979}.
Since this is a category of sheaves on a site, it is naturally a topos, which is known as \textit{Cahiers topos} after the reference. 

\begin{definition}[Formal smooth stacks]
\textit{Formal smooth stacks} are defined  as stacks on the site of formal smooth manifolds $\mathsf{FMfd}$. The $(\infty,1)$-category of formal smooth stacks is, then, defined by
\begin{equation}
\begin{aligned}
        \mathbf{FSmoothStack} \;&\coloneqq\; \mathbf{St}(\mathsf{FMfd}) \\[0.2ex]
        \;&\,=\; \mathbf{N}_{hc}(\mathsf{sPreSh}(\mathsf{FMfd})_{\mathsf{proj,loc}}^\circ).
\end{aligned}
\end{equation}
\end{definition}

\begin{construction}[Diagram of sites]\label{rem:diagram_of_sites}
By combining adjunctions \ref{rem:reducedalgebras} and \eqref{rem:reducedalgebras2} with functors \ref{rem:embedding_manifolds} and \eqref{eq:embedding_formalmanifolds},
we have the following commuting diagram of ordinary sites:
\begin{equation}
    \begin{tikzcd}[column sep=14.2ex, row sep=8.0ex]
     \mathsf{Mfd} \arrow[dd, ""{name=0, anchor=center, inner sep=0}, shift left=-1ex, hook] \arrow[r, "\iota^{\mathsf{Mfd}}", hook] & \mathsf{C^\infty Var}^\mathrm{red} \arrow[dd, ""{name=2, anchor=center, inner sep=0}, shift left=-1ex, hook]       \\
     & \\
     \mathsf{FMfd} \arrow[uu, ""{name=1, anchor=center, inner sep=0}, shift left=-1ex]\arrow[r, "\iota^{\mathsf{FMfd}}", hook] & \mathsf{C^\infty Var}\arrow[uu, ""{name=3, anchor=center, inner sep=0}, shift left=-1ex]  
     \arrow["\dashv"{anchor=center}, draw=none, from=0, to=1]\arrow["\dashv"{anchor=center}, draw=none, from=2, to=3]
    \end{tikzcd}
\end{equation}
\end{construction}

Given the diagram of sites presented in construction \ref{rem:diagram_of_sites}, we could be tempted to extend the notions of formal smooth sets and formal smooth stacks, which we defined above.

\begin{definition}[Extended smooth sets and stacks]\label{def:extended_formal_smooth}
Let us give the following definitions:
\begin{itemize}
    \item We define the $1$-category \textit{extended smooth sets} as the $1$-category of sheaves on the site of reduced $\Coo$-varieties, i.e.
    \begin{equation}
    \mathsf{SmoothSet}^{+} \;\coloneqq\; \mathsf{Sh}(\mathsf{C^\infty Var}^\mathrm{red}).
    \end{equation}
    \item We define the $1$-category \textit{extended formal smooth sets} as the $1$-category of sheaves on the site of $\Coo$-varieties, i.e.
    \begin{equation}
    \mathsf{FSmoothSet}^{+} \;\coloneqq\; \mathsf{Sh}(\mathsf{C^\infty Var}).
    \end{equation}
    \item We define the $(\infty,1)$-category of \textit{extended smooth stacks} as the $(\infty,1)$-category of stacks on the site of reduced $\Coo$-varieties, i.e.
    \begin{equation}
    \mathbf{SmoothStack}^{\pmb{+}} \;\coloneqq\; \mathbf{St}(\mathsf{C^\infty Var}^\mathrm{red}).
    \end{equation}
    \item We define the $(\infty,1)$-category of \textit{extended formal smooth stacks} as the $(\infty,1)$-category of stacks on the site of $\Coo$-varieties, i.e.
    \begin{equation}
    \mathbf{FSmoothStack}^{\pmb{+}} \;\coloneqq\; \mathbf{St}(\mathsf{C^\infty Var}).
    \end{equation}
\end{itemize}
\end{definition}

\begin{remark}[Embeddings]\label{rem: embeddings}
Since the definitions \ref{def:extended_formal_smooth} are given on the sites of diagram \ref{rem:diagram_of_sites}, we can obtain a diagram of $(\infty,1)$-categories
\begin{equation}
\begin{tikzcd}[column sep=-4.0ex, row sep=8.0ex]
	& \mathbf{N}({\mathsf{SmoothSet}}) && \mathbf{N}({\mathsf{SmoothSet}^+}) \\
	{\mathbf{SmoothStack}} && {\mathbf{SmoothStack}^{\pmb{+}}} \\
	& \mathbf{N}({\mathsf{FSmoothSet}}) && \mathbf{N}({\mathsf{FSmoothSet}^+}) \\
	{\mathbf{FSmoothStack}} && {\mathbf{FSmoothStack}^{\pmb{+}}}
	\arrow[hook', from=1-2, to=2-1]
	\arrow[hook, from=2-1, to=4-1]
	\arrow[hook', from=1-2, to=3-2]
	\arrow[hook', from=3-2, to=4-1]
	\arrow[hook, from=1-2, to=1-4]
	\arrow[hook, from=3-2, to=3-4]
	\arrow[hook', from=1-4, to=3-4]
	\arrow[hook, from=4-1, to=4-3]
	\arrow[hook', from=3-4, to=4-3]
	\arrow[hook', from=1-4, to=2-3]
	\arrow[hook, from=2-1, to=2-3]
	\arrow[hook, from=2-3, to=4-3]
\end{tikzcd}
\end{equation}
\end{remark}

\section{Formal derived smooth stacks}\label{sec:fdsmoothstack}

In this section we will propose a definition for the notion of formal derived smooth stack. 
Our construction of formal derived smooth stacks is related to \cite{Wallbridge:2016} and to the research program by \cite{Grady:2014oqa, Grady:2016LIEAA, Grady:2020}.

In the previous two sections we considered at most stacks on ordinary sites, such as smooth stacks on the site of smooth manifolds.
In principle, it is possible to generalise the construction of stacks to the case where the site $\mathsf{C}$ itself is a simplicial category -- usually, presenting some $(\infty,1)$-category. Consider a simplicially-enriched category $\mathsf{C}$ equipped with the structure of a simplicial-site, i.e. such that its homotopy category $\mathrm{Ho}(\mathsf{C})$ has the structure of a site. Recall that, given two simplicially-enriched categories $\mathsf{C}$ and $\mathsf{D}$, the functor category $[\mathsf{C}^\op,\mathsf{D}]$ is naturally a simplicially-enriched category. 
In particular, we can define the simplicial-category of presheaves $[\mathsf{C}^\op,\sSet]$ on the simplicial-site $\mathsf{C}$. 
By \cite[Theorem 3.4.1]{ToenVezzo05}, for suitable simplicial-sites, there is still a notion of local projective simplicial model category structure $[\mathsf{C}^\op,\sSet]_{\mathsf{proj,loc}}$ that allows us to define the simplicial-category of \textit{derived stacks} on $\mathsf{C}$ by
\begin{equation}
    \Shoo(\mathsf{C}) \;\simeq\; [\mathsf{C}^\op,\sSet]_{\mathsf{proj,loc}}^\circ.
\end{equation}
Finally, by applying the homotopy coherent nerve functor on such a simplicial category, it is possible to obtain the $(\infty,1)$-category of derived stacks on $\mathsf{C}$, i.e. 
\begin{equation}\label{eq:gentheory_derived}
    \St(\mathsf{C}) \;\coloneqq\; \mathbf{N}_{hc}([\mathsf{C}^\op,\sSet]_{\mathsf{proj,loc}}^\circ).
\end{equation}
In this section, we will introduce the $(\infty,1)$-site of formal derived smooth manifolds, we will equip it with the structure of a site and we we will construct derived stacks on it: these will be the $(\infty,1)$-category of formal derived smooth stacks.

\subsection{Homotopy $\Coo$-algebras}

Let $\mathsf{T}$ be a generic Lawvere theory, as we reviewed at the beginning of section \ref{sec:Coo}.
As suggested first by \cite{Quillen:1967ha}, we can consider the simplicial category $[\Delta^{\op},\mathsf{TAlg}]$ of simplicial $\mathsf{T}$-algebras, where $\Delta$ is the simplex category.
By \cite[Section 2.4]{Quillen:1967ha} this can be equipped with a natural model structure, known as projective model structure.
The following model category can be called category of strict simplicial $\mathsf{T}$-algebras:
\begin{equation}
    \mathsf{sTAlg} \;\coloneqq\; [\Delta^{\op},\mathsf{TAlg}]_{\mathrm{proj}},
\end{equation}
where weak equivalences and fibrations are given object-wise.
In fact, the fibrant-cofibrant simplicial $\mathsf{T}$-algebras according to this model structure are known as strict simplicial $\mathsf{T}$-algebras in the literature.
By following \cite{badzioch2002algebraic}, there is a Quillen equivalence $\mathsf{sTAlg}\simeq_{\mathrm{Qu}}[\mathsf{T},\sSet]_{\mathrm{proj,loc}}$ between the model category above and the local projective model structure on the simplicial category of pre-cosheaves on $\mathsf{T}$. Fibrant-cofibrant objects in the latter model category are known in the literature as homotopy $\mathsf{T}$-algebras and they are given as follows.

\begin{definition}[Homotopy $\mathsf{T}$-algebra]
A \textit{homotopy algebra over a Lawvere theory} $\mathsf{T}$ is a functor
\begin{equation}
    A\,:\;\mathsf{T} \,\longrightarrow\, \sSet
\end{equation}
valued in Kan complexes, such that for any $\bbR^n\in\mathrm{CartSp}$ the canonical morphism 
\begin{equation}
    \bigsqcup_{i=1}^n A(\mathrm{prod}_i):\, A(\bbR^n)\,\xrightarrow{\;\;\simeq\;\;}\,A(\bbR)^n
\end{equation}
is a weak equivalence of simplicial sets.
\end{definition}

By the Quillen equivalence above, any homotopy $\mathsf{T}$-algebra is equivalent to a strict simplicial $\mathsf{T}$-algebra and both the model categories provide a model for the same $(\infty,1)$-category, which we will denote by $\mathbf{sT Alg}$.
This $(\infty,1)$-category $\mathbf{sT Alg}$ of homotopy $\mathsf{T}$-algebras can be constructed by applying the homotopy-coherent nerve to the simplicial category of fibrant-cofibrant objects, namely by
\begin{equation}
    \mathbf{sT Alg} \;=\; \mathbf{N}_{hc}([\Delta^{\op},\mathsf{TAlg}]_{\mathrm{proj}}^\circ).
\end{equation}

Now, we can specify $\mathsf{T}=\mathsf{CartSp}$ to be the Lawvere theory of $\Coo$-algebras, as in section \ref{sec:Coo}. 
Thus, a \textit{simplicial $\Coo$-algebra} is going to be defined as simplicial algebra over the Lawvere theory $\mathsf{CartSp}$.
Accordingly, we can define the model category of {simplicial $\Coo$-algebras} by
\begin{equation}
    \mathsf{sC^\infty Alg} \;=\; [\Delta^{\op},\mathsf{C^\infty Alg}]_{\mathrm{proj}}.
\end{equation}
A fibrant-cofibrant element of the model category $\mathsf{sC^\infty Alg}$ is precisely a homotopy $\Coo$-algebra.
The corresponding $(\infty,1)$-category of homotopy $\Coo$-algebras is given by the homotopy coherent nerve of the simplicial category $[\Delta^{\op},\mathsf{C^\infty Alg}]_{\mathrm{proj}}^\circ$ of fibrant-cofibrant objects.

\begin{definition}[$(\infty,1)$-category of homotopy $\Coo$-algebras]
The \textit{$(\infty,1)$-category of homotopy $\Coo$-algebras} is defined by
\begin{equation}
    \mathbf{sC^\infty Alg} \;=\; \mathbf{N}_{hc}([\Delta^{\op},\mathsf{C^\infty Alg}]_{\mathrm{proj}}^\circ).
\end{equation}
\end{definition}

Crucially, the $(\infty,1)$-category of homotopy $\Coo$-algebras can be naturally equipped with a $\Coo$-version of a derived tensor product which is going to be very relevant for geometric reasons. Recall that homotopy pushouts exist; see e.g. \cite{Joyce:2009}.

\begin{definition}[Derived $\Coo$-tensor product]
We define the \textit{derived $\Coo$-tensor product} in the category $\mathbf{sC^\infty Alg}$ by the homotopy pushout
\begin{equation}
    A \,\widehat{\otimes}_{C}^\bfL\, B \;\simeq \; A \sqcup_{C}^h B
\end{equation}
for any homotopy $\Coo$-algebras $A,B,C\in\mathbf{sC^\infty Alg}$.
\end{definition}

It is known that an ordinary $\Coo$-algebra $A$ is finitely presented precisely if its co-Yoneda embedding $\Hom(A,-):\mathsf{C^\infty Alg}\longrightarrow \mathsf{Set}$ preserves filtered colimits (see e.g. \cite{adámek_rosický_vitale_lawvere_2010}).
In \cite{Carchedi2019OnTU}, homotopically finitely presented $\Coo$-algebras are defined by generalising this statement to homotopy $\Coo$-algebras as follows.

\begin{definition}[Homotopically finitely presented $\Coo$-algebra]
A \textit{homotopically finitely presented $\Coo$-algebra} is defined as a homotopy $\Coo$-algebra $A\in\mathbf{sC^\infty Alg}$ such that it is a compact object in the $(\infty,1)$-category $\mathbf{sC^\infty Alg}$, i.e. such that the co-Yoneda $(\infty,1)$-functor $\Hom(A,-):\mathbf{sC^\infty Alg}\longrightarrow \mathbf{\infty Grpd}$ preserves filtered $(\infty,1)$-colimits.
The $(\infty,1)$-category of homotopically finitely presented $\Coo$-algebras $\mathbf{sC^\infty Alg}_\mathrm{fp} \longhookrightarrow \mathbf{sC^\infty Alg}$ is defined as the full subcategory on those objects which are homotopically finitely presented $\Coo$-algebras.
\end{definition}

In analogy with \cite{ToenVezzo05}, in the rest of the paper we will denote by $\mathsf{sC^\infty Alg}_\mathrm{fp}\hookrightarrow\mathsf{sC^\infty Alg}$ the model sub-category on those objects whose derived co-Yoneda functor preserves filtered homotopy colimits, so that we have $\mathbf{sC^\infty Alg}_\mathrm{fp} \simeq \mathbf{N}_{hc}(\mathsf{sC^\infty Alg}_\mathrm{fp}^\circ)$.

Now, as stressed by \cite{toen2007moduli}, being finitely presented is quite a stringent condition on a homotopy $\Coo$-algebra.
In analogy with the non-derived case, we can introduce the notion of homotopically finitely generated $\Coo$-algebra\footnote{In a previous version of this paper we proposed a faulty definition of almost finitely presented $\Coo$-algebra. We would like to thank Pelle Steffens for pointing out this issue.}.

\begin{definition}[Finitely generated $\Coo$-algebra]
A \textit{finitely generated $\Coo$-algebra} is defined as a homotopy $\Coo$-algebra $A\in\mathbf{sC^\infty Alg}$ such that $\pi_0A$ is finitely generated as an ordinary $\Coo$-algebra.
The $(\infty,1)$-category of finitely generated $\Coo$-algebras $\mathbf{sC^\infty Alg}_\mathrm{fg} \longhookrightarrow \mathbf{sC^\infty Alg}$ is defined as the full sub-category on those objects which are finitely generated $\Coo$-algebras.
\end{definition}

\begin{remark}[Finitely presented $\Coo$-algebras are finitely generated]
We have the following full sub-$(\infty,1)$-categories of homotopy $\Coo$-algebras:
\begin{equation}
    \mathbf{sC^\infty Alg}_\mathrm{fp} \,\longhookrightarrow\, \mathbf{sC^\infty Alg}_\mathrm{fg} \,\longhookrightarrow\, \mathbf{sC^\infty Alg}.
\end{equation}
In fact, by \cite[Proposition 3.27]{carchedi2023derived} any homotopically finitely presented simplicial $\Coo$-algebra $A\in\mathbf{sC^\infty Alg}_\mathrm{fp}$ has a $0$-th truncation $\pi_0A$ which is, in particular, a finitely presented $\Coo$-algebra in the ordinary sense.
\end{remark}

\subsection{Formal derived smooth manifolds}

In this subsection, we will introduce the $(\infty,1)$-category of formal derived smooth manifolds and we explore some of its entailments.
A formal derived smooth manifold is a slight generalisation of the notion of derived manifold \'a la Spivak \cite{Spivak:2010} and Carchedi-Steffens \cite{Carchedi2019OnTU}. 
Other relevant references on derived manifolds include \cite{Borisov:2011, Borisov:2012derived, Joyce:2012dmanifolds, vogler2013derived, Joyce:2014, joyce2017kuranishi, zeng2022derived}. Moreover, during the final stage of the preparation of this paper, the systematic foundational work of \cite{steffens2023derived} for the geometry of derived $\Coo$-schemes appeared.
Derived manifolds are a categorifications of smooth manifolds which are designed to crucially generalise the ordinary concept of intersection of smooth manifolds. In contrast to its ordinary counterpart, this \textit{derived} intersection always comes with a natural smooth structure.

Let us investigate more the core issue with intersections of smooth manifolds.
Let $M$ be an ordinary smooth manifold and $\Sigma,\Sigma'\subset M$ two smooth submanifolds of $M$. One would be tempted to categorically define the intersection of these submanifolds by the pullback $\Sigma\cap \Sigma' = \Sigma\times_M\Sigma'$. However, this definition generally fails, since the intersection may not a smooth manifold. More precisely, this happens if the embeddings $\Sigma,\Sigma'\hookrightarrow M$ are not transversal embeddings. 
To have a concrete example in mind, the reader can look at figure \ref{fig:intersection_example}.

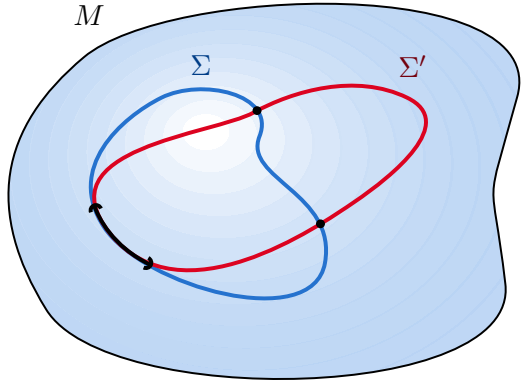
\begin{figure}[h!]
    \centering
\tikzset {_5qzrs5jsg/.code = {\pgfsetadditionalshadetransform{ \pgftransformshift{\pgfpoint{89.1 bp } { -108.9 bp }  }  \pgftransformscale{1.32 }  }}}
\pgfdeclareradialshading{_q83t3mpr0}{\pgfpoint{-72bp}{88bp}}{rgb(0bp)=(1,1,1);
rgb(0bp)=(1,1,1);
rgb(25bp)=(0.29,0.56,0.89);
rgb(400bp)=(0.29,0.56,0.89)}
\tikzset{_xluw9z1ml/.code = {\pgfsetadditionalshadetransform{\pgftransformshift{\pgfpoint{89.1 bp } { -108.9 bp }  }  \pgftransformscale{1.32 } }}}
\pgfdeclareradialshading{_2h5ctahj3} { \pgfpoint{-72bp} {88bp}} {color(0bp)=(transparent!0);
color(0bp)=(transparent!0);
color(25bp)=(transparent!71);
color(400bp)=(transparent!71)} 
\pgfdeclarefading{_1cdquizzy}{\tikz \fill[shading=_2h5ctahj3,_xluw9z1ml] (0,0) rectangle (50bp,50bp); } 
\tikzset{every picture/.style={line width=0.75pt}} 
\begin{tikzpicture}[x=0.75pt,y=0.75pt,yscale=-1,xscale=1]
\path  [shading=_q83t3mpr0,_5qzrs5jsg,path fading= _1cdquizzy ,fading transform={xshift=2}] (62.83,58.08) .. controls (105.83,22.08) and (298.33,16.08) .. (281.33,78.08) .. controls (264.33,140.08) and (268.33,112.58) .. (275.33,172.58) .. controls (282.33,232.58) and (74.83,235.58) .. (43.83,185.58) .. controls (12.83,135.58) and (19.83,94.08) .. (62.83,58.08) -- cycle ; 
 \draw   (62.83,58.08) .. controls (105.83,22.08) and (298.33,16.08) .. (281.33,78.08) .. controls (264.33,140.08) and (268.33,112.58) .. (275.33,172.58) .. controls (282.33,232.58) and (74.83,235.58) .. (43.83,185.58) .. controls (12.83,135.58) and (19.83,94.08) .. (62.83,58.08) -- cycle ; 
\draw  [color={rgb, 255:red, 37; green, 114; blue, 206 }  ,draw opacity=1 ][line width=1.5]  (101.83,78.58) .. controls (121.83,68.58) and (159.83,76.58) .. (151.33,98.08) .. controls (142.83,119.58) and (190.67,126.67) .. (184.33,162.08) .. controls (178,197.5) and (93,174) .. (73,144) .. controls (53,114) and (81.83,88.58) .. (101.83,78.58) -- cycle ;
\draw  [color={rgb, 255:red, 220; green, 3; blue, 32 }  ,draw opacity=1 ][line width=1.5]  (148.33,86.08) .. controls (168.33,75.08) and (200,66) .. (225.83,78.58) .. controls (251.67,91.17) and (218.5,118) .. (199.33,131.08) .. controls (180.17,144.17) and (111.83,191.08) .. (76.33,147.58) .. controls (40.83,104.08) and (128.33,97.08) .. (148.33,86.08) -- cycle ;
\draw  [color={rgb, 255:red, 0; green, 0; blue, 0 }  ,draw opacity=1 ][fill={rgb, 255:red, 0; green, 0; blue, 0 }  ,fill opacity=1 ] (148.33,85.08) .. controls (148.33,84.14) and (149.12,83.38) .. (150.09,83.38) .. controls (151.06,83.38) and (151.85,84.14) .. (151.85,85.08) .. controls (151.85,86.02) and (151.06,86.79) .. (150.09,86.79) .. controls (149.12,86.79) and (148.33,86.02) .. (148.33,85.08) -- cycle ;
\draw  [color={rgb, 255:red, 0; green, 0; blue, 0 }  ,draw opacity=1 ][fill={rgb, 255:red, 0; green, 0; blue, 0 }  ,fill opacity=1 ] (180.33,142.23) .. controls (180.33,141.29) and (181.12,140.52) .. (182.09,140.52) .. controls (183.06,140.52) and (183.85,141.29) .. (183.85,142.23) .. controls (183.85,143.17) and (183.06,143.93) .. (182.09,143.93) .. controls (181.12,143.93) and (180.33,143.17) .. (180.33,142.23) -- cycle ;
\draw [color={rgb, 255:red, 0; green, 0; blue, 0 }  ,draw opacity=1 ][line width=1.5]    (67.9,132.95) .. controls (70.19,139.05) and (76.76,153.33) .. (96.48,163.05) ;
\draw [shift={(96.48,163.05)}, rotate = 206.23] [color={rgb, 255:red, 0; green, 0; blue, 0 }  ,draw opacity=1 ][line width=1.5]      (2.68,-2.68) .. controls (1.2,-2.68) and (0,-1.48) .. (0,0) .. controls (0,1.48) and (1.2,2.68) .. (2.68,2.68) ;
\draw [shift={(67.9,132.95)}, rotate = 69.44] [color={rgb, 255:red, 0; green, 0; blue, 0 }  ,draw opacity=1 ][line width=1.5]      (2.68,-2.68) .. controls (1.2,-2.68) and (0,-1.48) .. (0,0) .. controls (0,1.48) and (1.2,2.68) .. (2.68,2.68) ;
\draw (56,30.4) node [anchor=north west][inner sep=0.75pt]    {$M$};
\draw (115,55.4) node [anchor=north west][inner sep=0.75pt]    {$\textcolor[rgb]{0,0.24,0.53}{\Sigma }$};
\draw (220,55.4) node [anchor=north west][inner sep=0.75pt]    {$\textcolor[rgb]{0.45,0,0.05}{\Sigma '}$};
\end{tikzpicture}
    \caption{Example of non-transverse intersection of smooth submanifolds $\Sigma,\Sigma'\subset M$.}
    \label{fig:intersection_example}
\end{figure}

Let us now explore an interesting example more in detail.

\begin{example}[Intersection is not locally homeomorphic to a Cartesian space]\label{ex:intersection}
Consider the ordinary smooth manifolds $\Sigma,\Sigma' \,=\, \bbR^2$, and  $M \,=\, \bbR^3$, together with embeddings $e_\Sigma:\Sigma\hookrightarrow \bbR^3$ and $e_{\Sigma'}:\Sigma'\hookrightarrow \bbR^3$ given by the maps
\begin{equation*}
    e_\Sigma \,:\,  (x,y) \mapsto (x,y, x^2y^2), \quad\;\; e_{\Sigma'} \,:\,  (x,y) \mapsto (x,y, 0).
\end{equation*}
As a set, the intersection of these two submanifolds is $\{(x,y,0)\in\bbR^3\,|\,x^2y^2=0\}$, which is precisely the union of the line $\{(x,0,0)\in\bbR^3\,|\,x\in\bbR\}$ and the line $\{(0,y,0)\in\bbR^3\,|\,y\in\bbR\}$. This cross-shaped subset of $\bbR^3$ is clearly not locally homeomorphic to $\bbR$ and, therefore, it does not allow the structure of an ordinary smooth manifold.
\end{example}

To make sense of arbitrary intersections of smooth manifolds, we need to introduce the concept of a \textit{derived manifold}. We will exploit the following proposition by \cite[Corollary 5.4]{Carchedi2019OnTU}.

\begin{proposition}[Derived manifolds \cite{Carchedi2019OnTU}]
There is a natural equivalence of $(\infty,1)$-categories
\begin{equation}\label{eq:Carcheditheorem}
    \mathbf{dMfd} \; \simeq\; \mathbf{sC^{\infty\!}Alg}^\op_{\mathrm{fp}}
\end{equation}
between the $(\infty,1)$-category $\mathbf{dMfd}$ of derived manifolds, and the opposite of the $(\infty,1)$-category $\mathbf{sC^{\infty\!}Alg}_{\mathrm{fp}}$ of homotopically finitely presented homotopy $\Coo$-algebras.
\end{proposition}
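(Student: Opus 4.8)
The statement is quoted from \cite{Carchedi2019OnTU} as their Corollary 5.4, so the plan is to recall the strategy by which such an equivalence is established rather than to reprove it from scratch. The proof proceeds in two stages: first identifying the \emph{affine} derived manifolds with the opposite of homotopically finitely presented $\Coo$-algebras, and then showing that every derived manifold is in fact affine, so that the affine pieces already exhaust all of $\dMfd$. The functor realising the equivalence is the derived $\Coo$-Spec, sending a homotopy $\Coo$-algebra $A$ to the structured space it presents, and its pseudo-inverse is the global-sections functor $X\mapsto\mathcal{O}(X)$.

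For the first stage, one observes that an affine derived manifold is by definition the \emph{derived zero locus} $\bfR Z(f)$ of a smooth map $f:\bbR^n\to\bbR^k$, whose algebra of functions is computed by the derived $\Coo$-tensor product $\Coo(\bbR^n)\,\widehat{\otimes}^{\bfL}_{\Coo(\bbR^k)}\,\Coo(\bbR^0)$, which is homotopically finitely presented. Conversely, every homotopically finitely presented $\Coo$-algebra is a retract of such a finite homotopy pushout of free algebras $\Coo(\bbR^n)$, since finite presentation in the homotopical sense is precisely compactness, i.e. the condition of being built from finitely many cells. The key compatibility is that ordinary \emph{transverse} intersections of manifolds are already computed by the derived tensor product: this follows from the transverse-intersection lemma of \cite{Moerdijk:1991} quoted above, because transversality forces the ordinary $\Coo$-tensor product to coincide with its derived refinement (the higher Tor terms vanish), so the classical and derived pictures agree exactly where the classical one makes sense.

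The second stage, which I expect to be the main obstacle, is to show that the gluing used to build general derived manifolds produces nothing genuinely new, so that every derived manifold is equivalent to an affine one. This is the $\Coo$-geometric analogue of the fact, already used in the excerpt, that an ordinary smooth manifold $M$ is an affine $\Coo$-scheme because $\Coo(M)$ recovers $M$ completely (partitions of unity and bump functions encode all the local data). Concretely one invokes the Whitney embedding theorem to present $M\hookrightarrow\bbR^N$ as a closed submanifold and hence globally as a zero locus, and then checks that open covers of derived manifolds correspond on the algebra side to $\Coo$-localisations, so that the étale descent built into the definition of $\dMfd$ is matched precisely by the local projective model structure on simplicial $\Coo$-algebras.

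Finally, full faithfulness of $\Coo$-Spec is the most delicate point: one must show that mapping spaces of derived manifolds agree with mapping spaces of the corresponding algebras. The natural route is obstruction theory along the Postnikov towers of simplicial $\Coo$-algebras, where the successive obstructions and the indeterminacy in lifts are governed by the cotangent complex; comparing these with the deformation theory of derived manifolds gives the identification of mapping spaces stage by stage, the base case being the known full faithfulness of $M\mapsto\Coo(M)$ on ordinary manifolds. The hard technical core throughout is controlling this homotopy theory — in particular that the derived $\Coo$-tensor product is well-behaved and that the cotangent complex controls deformations — which is exactly the content developed in \cite{Carchedi2019OnTU}.
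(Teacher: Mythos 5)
The paper contains no proof of this proposition: it is quoted directly from \cite{Carchedi2019OnTU} (their Corollary 5.4), and the authors immediately declare that they will ``regard the equivalence as an effective definition of derived manifolds.'' So there is nothing internal to compare your argument against; the only question is whether your reconstruction of the external proof is faithful and sound.

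On its own terms, your sketch is a reasonable account, and the first stage is essentially right: homotopical finite presentation is compactness, compact objects are retracts of finite homotopy colimits of the free algebras $\Coo(\bbR^n)$, and the transverse-intersection lemma of \cite{Moerdijk:1991} is exactly what guarantees that the derived $\Coo$-tensor product agrees with the classical one where the latter is defined (so the embedding of manifolds preserves transverse pullbacks). However, your second and third stages describe a route closer to Spivak's original structured-spaces construction (glue affine derived zero loci, then prove affineness via Whitney embedding and partitions of unity, then compare mapping spaces by Postnikov/obstruction theory). The argument in \cite{Carchedi2019OnTU} is organised differently: Spivak's $(\infty,1)$-category of derived manifolds is characterised by a universal property --- it is the idempotent-complete $(\infty,1)$-category with finite limits freely generated, in a suitable sense, by $\mathsf{CartSp}$ subject to the functor from manifolds preserving transverse pullbacks and the terminal object --- and the content of the theorem is that $\mathbf{sC^{\infty\!}Alg}^\op_{\mathrm{fp}}$ satisfies this universal property. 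That reformulation is what lets one avoid re-running the gluing and obstruction-theoretic analysis you flag as ``the main obstacle'' and ``the most delicate point''; those difficulties are absorbed once and for all into the verification of the universal property. Your sketch is not wrong, but as a reconstruction of the cited proof it attributes to it a harder, more hands-on architecture than it actually has, and the two genuinely delicate steps you identify are left as expectations rather than arguments.
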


In this paper we will regard the equivalence \eqref{eq:Carcheditheorem} as an effective definition of derived manifolds. 
However, we will need a slight generalisation of the notion of derived manifold. 
In fact, as stressed by \cite{toen2007moduli}, being homotopically finitely presented is much more a stringent notion than being finitely presented in the ordinary sense.
In fact, in general, ordinary finitely presented $\Coo$-algebras $A\in\mathsf{C^\infty Alg}_{\mathrm{fp}}$ such as Weil algebras do not embed into $\mathbf{C^{\infty\!} Alg}_{\mathrm{fp}}$.
For this reason, in analogy with the discussion of \cite[Section 2]{Calaque:2017} in the context of algebraic geometry, we give the following definition.

\begin{definition}[Formal derived smooth manifolds]
We define the $(\infty,1)$-category of \textit{formal derived smooth manifolds} by
\begin{equation}
    \mathbf{dFMfd} \; \coloneqq\; \mathbf{sC^{\infty\!}Alg}^\op_{\mathrm{fg}},
\end{equation}
where $\mathbf{sC^{\infty\!}Alg}^\op_{\mathrm{fg}}$ is the $(\infty,1)$-category of finitely generated $\Coo$-algebras.
\end{definition}

In analogy with the ordinary case from the previous section, these may also be thought of as derived $\Coo$-varieties.

\begin{remark}[Intuitive picture of formal derived smooth manifolds]
At an intuitive level, a formal derived smooth manifold $U\in \mathbf{dFMfd}$ is a geometric object whose algebra of smooth function is, by definition, a homotopically finitely presented homotopy $\Coo$-algebra modelled by some simplicial object
\begin{equation}
    \mathcal{O}(U)\;=\; \left( \begin{tikzcd}[row sep=scriptsize, column sep=5ex]
    \; \cdots\; \arrow[r, yshift=1.4ex] \arrow[r, yshift=2.8ex] \arrow[r] \arrow[r, yshift=-1.4ex]\arrow[r, yshift=-2.8ex] & \mathcal{O}(U)_3 \arrow[r, yshift=1.8ex]\arrow[r, yshift=0.6ex]\arrow[r, yshift=-1.8ex]\arrow[r, yshift=-0.6ex]& \mathcal{O}(U)_2
    \arrow[r, yshift=1.4ex] \arrow[r] \arrow[r, yshift=-1.4ex] & \mathcal{O}(U)_1  \arrow[r, yshift=0.7ex] \arrow[r, yshift=-0.7ex] & \mathcal{O}(U)_0 
    \end{tikzcd}    \right),
\end{equation}
where each $\mathcal{O}(U)_k$ is an ordinary $\Coo$-algebra.   
\end{remark}

\begin{remark}[Derived-extension of ordinary smooth manifolds]
According to the construction by \cite{Carchedi2019OnTU}, we have a natural fully faithful functor $\mathbf{N}(\mathsf{Mfd}) \longrightarrow\mathbf{dMfd}$ which embeds ordinary smooth manifolds into derived manifolds which preserves transverse pullbacks and the terminal object. Thus, by composition with the embedding $\mathbf{dMfd}\hookrightarrow \mathbf{dFMfd}$, we naturally have a fully faithful functor
\begin{equation}
    i\,:\; \mathbf{N}(\mathsf{Mfd}) \,\longrightarrow\, \mathbf{dFMfd}
\end{equation}
preserving transverse pullbacks and the terminal object.
From now on, we will call this functor \textit{derived-extension} of ordinary smooth manifolds. 
\end{remark}

\begin{remark}[Homotopy pullback of ordinary smooth manifolds]
Given a pair of smooth maps $f:M\rightarrow B$ and $g:N\rightarrow B$ of smooth manifolds $M,N,B\in\mathsf{Mfd}$, we can consider the formal derived smooth manifold given by the $(\infty,1)$-pullback
\begin{equation}
    \begin{tikzcd}[row sep={14.5ex,between origins}, column sep={16ex,between origins}]
    i(M)\times_{i(B)\!}^hi(N) \arrow[d]\arrow[r] & i(M) \arrow[d, "i(f)"] \\
    i(N) \arrow[r, "i(g)"] & i(B).
    \end{tikzcd}
\end{equation}
Only if  $f$ and $g$ are transverse smooth maps in $\Mfd$, the ordinary pullback $M\times_B N$ exists in $\Mfd$ and so there is a natural morphism of formal derived smooth manifolds
\begin{equation}
   i(M\times_B N)    \;\xrightarrow{\;\;\simeq\;\;}\;      i(M)\times^h_{i(B)\!}i(N), 
\end{equation}
which is, in particular, an equivalence. In other words, the derived-extension functor preserves transverse pullbacks.
As a corollary, we can notice that for any pair of smooth manifolds $M$ and $N$, we have the equivalence of formal derived smooth manifolds
\begin{equation}
   i(M \times N) \;\xrightarrow{\;\;\simeq\;\;}\;  i(M)\times^h i(N), 
\end{equation}
which means that the functor $i$ preserves finite products.
On the other hand, if $f$ and $g$ are not transverse smooth maps in $\Mfd$, then the ordinary pullback $M\times_B N$ does not exists in $\Mfd$. The great power of formal derived smooth manifolds comes from the fact that the homotopy pullback $i(M)\times^h_{i(B)\!}i(N)$ in $\mathbf{dFMfd}$ always exists.
\end{remark}

\begin{remark}[Derived intersection of smooth manifolds]
Let $\Sigma,\Sigma'\subset M$ be two smooth submanifolds of $M$. As we have seen, if we try to define their intersection by the pullback $\Sigma\cap \Sigma' = \Sigma\times_M\Sigma'$, this definition fails whenever the embeddings $\Sigma,\Sigma'\hookrightarrow M$ are not transversal. 
However, we have seen that in the $(\infty,1)$-category of formal derived smooth manifolds, homotopy pullbacks always exist.
Thus, we can embed our diagram of smooth manifolds $\Sigma\rightarrow M\leftarrow \Sigma'$ into a diagram of formal derived smooth manifolds $i(\Sigma)\rightarrow i(M)\leftarrow i(\Sigma)$. 
Then, we can call \textit{derived intersection} of the smooth manifolds $\Sigma$ and $\Sigma'$ in $M$ their homotopy pullback $i(\Sigma)\times_{i(M)\!}^hi(\Sigma')$. Crucially, the derived intersection is always a well-defined formal derived smooth manifold.
\end{remark}

A notational warning: whenever it is clear from the context, we will tend to omit the symbol of the embedding $i$ and simply write $\Sigma\times_{M}^h\Sigma'$ to mean the derived intersection $i(\Sigma)\times_{i(M)\!}^hi(\Sigma')$ of ordinary smooth manifolds.

Now, since the $(\infty,1)$-category of formal derived smooth manifolds satisfies the equivalence $\mathbf{dFMfd} \simeq \mathbf{sC^\infty Alg}_{\mathrm{fp}}^\op$, we have that the homotopy $\Coo$-algebra $\mathcal{O}_{{}_{\!}}\big( \Sigma\times^h_{M\!}\Sigma' \big)$ of smooth functions on a homotopy pullback $\Sigma\times^h_{M\!}\Sigma'$ is given by the derived $\Coo$-tensor product of the corresponding ordinary $\Coo$-algebras, i.e.
\begin{equation}\label{eq:functions_derived_tensor_prod}
\begin{aligned}
     \mathcal{O}_{{}_{\!}}\big( \Sigma\times^h_{M\!}\Sigma' \big) \;&\simeq\; \Coo(\Sigma)\,\widehat{\otimes}_{\Coo(M)}^\bfL\,\Coo(\Sigma').
     \end{aligned}
\end{equation}

\begin{construction}[Computing the derived intersection of smooth manifolds]
Equivalence \eqref{eq:functions_derived_tensor_prod} suggests a practical way to compute the derived intersection of given smooth manifolds.
In fact, we can consider a cofibrant replacement $Q\Coo(\Sigma) \longrightarrow \Coo(\Sigma)$ in the co-slice category $\mathsf{sC^\infty Alg}_{\Coo(M)/}$ of homotopy $\Coo$-algebras over $\Coo(M)$ with respect to its model structure.
By replacing $\Coo(\Sigma)$ with a cofibrant replacement $Q\Coo(\Sigma)$ in equation \eqref{eq:functions_derived_tensor_prod}, we can compute the derived $\Coo$-tensor product as an ordinary $\Coo$-tensor product, namely we have
\begin{equation}
    \mathcal{O}_{{}_{\!}}\big( \Sigma\times^h_{M\!}\Sigma' \big) \;\simeq\; Q\Coo(\Sigma)\,\widehat{\otimes}_{\Coo(M)}\,\Coo(\Sigma').
\end{equation}
In principle, we may exploit the Bar construction $\mathrm{Bar}(\Coo(M),\Coo(\Sigma))\longrightarrow\Coo(\Sigma)$ to produce a suitable cofibrant replacement, but other methods may be available depending on the amount of structure.
The simplicial $\Coo$-algebra obtained by this $\Coo$-tensor product will be an explicit model of the wanted homotopy $\Coo$-algebra.
\end{construction}

\begin{example}[Back to previous example]
We look back at example \ref{ex:intersection}. Let us exploit the fact that $e_\Sigma$ and $e_{\Sigma'}$ are sections of the vector bundle $\pi:\bbR^3\rightarrow \bbR^2$, given by the obvious projection $(x,y,z)\mapsto(x,y)$.
We want to compute the derived $\Coo$-tensor product 
\begin{equation}
\begin{aligned}
     \mathcal{O}_{{}_{\!}}\big( \bbR^{2\!}\times^h_{\bbR^3\!}\bbR^2 \big) \;&\simeq\; \Coo(\bbR^2)\,\widehat{\otimes}_{\Coo(\bbR^3)}^\bfL\,\Coo(\bbR^2) \\[0.5ex]
     \;&\simeq\; Q\Coo(\bbR^2)\,\widehat{\otimes}_{\Coo(\bbR^3)}\,\Coo(\bbR^2),
     \end{aligned}
\end{equation}
by using some cofibrant replacement $Q\Coo(\bbR^2)\longrightarrow \Coo(\bbR^2)$ in the simplicial co-slice model category $\mathsf{sC^{\infty}Alg}_{\Coo(\bbR^3)/}$ of homotopy $\Coo$-algebras over $\Coo(\bbR^3)$. Such a homotopy $\Coo$-algebra must be a simplicial resolution of the ordinary $\Coo$-algebra $\Coo(\bbR^2)$. 
Now, let us consider the $\bbR$-algebra $B \coloneqq \Coo(\bbR^2,\bbR) \oplus \Gamma(\bbR^2,\bbR^3)$, where $\Coo(\bbR^2,\bbR)$ and $\Gamma(\bbR^2,\bbR^3)$ are respectively the vector space of functions on $\bbR^2$ and of sections of the bundle $\pi:\bbR^3\rightarrow\bbR^2$, and where the product given by $(f,\phi)\cdot(f',\phi') = (ff',f\phi'+f'\phi)$ for any $f,f'\in\Coo(\bbR^2)$ and $\phi,\phi'\in\Gamma(\bbR^2,\bbR^3)$. We can canonically equip the $\bbR$-algebra $b$ with the structure of a $\Coo$-algebra by the pre-cosheaf $\widehat{B}: \bbR^k\mapsto \Hom_{\mathsf{Alg}_\bbR\!}(\Coo(\bbR^k)^\mathrm{alg},B)$ on Cartesian spaces.
Let us then try with the following:
\begin{equation}
        Q\Coo( \bbR^{2} )_n \;=\; \begin{cases}
      \Coo(\bbR^3), & n=0 \\
      \Coo(\bbR^3)\,\widehat{\otimes}_{\Coo(\bbR^2)}\,\widehat{B}, & n>0,
    \end{cases}
\end{equation}
where we used the the fact that there is a pullback map $\pi^\ast:\Coo(\bbR^2)\rightarrow \Coo(\bbR^3)$.
So, the simplicial $\Coo$-algebra $Q\Coo( \bbR^{2} )$ will be truncated at $n=1$, which, more precisely, means that it is $1$-skeletal.
In fact, we construct a simplicial $\Coo(\bbR^3)$-algebra
\begin{equation*}
    Q\Coo( \bbR^{2} ) \;\simeq\; \mathrm{sk}_1\bigg( \begin{tikzcd}[row sep=scriptsize, column sep=3.8ex]
    \Coo(\bbR^3)\,\widehat{\otimes}_{\Coo(\bbR^2)}\,\widehat{B}\, \arrow[r, yshift=0.7ex, "\partial_0"] \arrow[r, yshift=-0.7ex, "\partial_1"'] & \Coo(\bbR^3)
    \end{tikzcd}    \bigg),
\end{equation*}
with face maps given by the morphisms
\begin{equation}
    \partial_0(1\otimes (f,\phi))= f, \qquad \partial_1(1\otimes (f,\phi))= f+ (z - x^2y^2)\phi,
\end{equation}
for any pair $f\in\Coo(\bbR^2)$ and $\phi\in\Gamma(\bbR^2,\bbR^3)$.
To see that this is indeed a cofibrant replacement of $\Coo( \bbR^{2} )$, notice that we have $\pi_0Q\Coo( \bbR^{2} ) = \Coo( \bbR^{3} )/(z-x^2y^2) \cong \Coo( \bbR^{2} )$ and $\pi_iQ\Coo( \bbR^{2} ) \cong 0$ for $i>0$.
Now we can compute the ordinary $\Coo$-tensor product
\begin{equation}
\begin{aligned}
     \mathcal{O}_{{}_{\!}}\big( \bbR^{2\!}\times^h_{\bbR^3\!}\bbR^2 \big) \;&\simeq\; Q\Coo(\bbR^2)\,\widehat{\otimes}_{\Coo(\bbR^3)}\,\Coo(\bbR^2),
     \end{aligned}
\end{equation}
which is given by the $1$-skeletal simplicial $\Coo$-algebra
\begin{equation}
    \mathcal{O}_{{}_{\!}}\big( \bbR^{2\!}\times^h_{\bbR^3\!}\bbR^2 \big) \;\simeq\;  \mathrm{sk}_1\bigg( \begin{tikzcd}[row sep=scriptsize, column sep=3.8ex] \widehat{B}  \arrow[r, yshift=0.7ex, "\partial_0"] \arrow[r, yshift=-0.7ex, "\partial_1"'] & \Coo(\bbR^2)
    \end{tikzcd}    \bigg),
\end{equation}
with face maps given by the morphisms
\begin{equation}
    \partial_0(f,\phi)= f, \qquad \partial_1(f,\phi)= f+ x^2y^2\phi.
\end{equation}
This provides a model for the homotopy $\Coo$-algebra of functions on the derived intersection $\bbR^{2\!}\times^h_{\bbR^3\!}\bbR^2$ of the smooth manifold in the example.

\begin{figure}[h!]
    \centering
\tikzset{every picture/.style={line width=0.75pt}} 
\begin{tikzpicture}[x=0.75pt,y=0.75pt,yscale=-1,xscale=1]
\draw [color={rgb, 255:red, 74; green, 144; blue, 226 }  ,draw opacity=1 ][fill={rgb, 255:red, 74; green, 144; blue, 226 }  ,fill opacity=0.3 ][line width=0.75]  [dash pattern={on 0.75pt off 0.75pt}]  (76.07,89.11) .. controls (78.24,88.2) and (79.79,88.83) .. (80.7,91) .. controls (81.61,93.17) and (83.15,93.8) .. (85.32,92.89) .. controls (87.49,91.98) and (89.04,92.61) .. (89.95,94.78) .. controls (90.86,96.95) and (92.41,97.59) .. (94.58,96.68) .. controls (96.75,95.77) and (98.3,96.4) .. (99.21,98.57) .. controls (100.12,100.74) and (101.67,101.37) .. (103.84,100.46) .. controls (106.01,99.55) and (107.56,100.18) .. (108.47,102.35) .. controls (109.38,104.52) and (110.93,105.15) .. (113.1,104.24) .. controls (115.27,103.33) and (116.81,103.96) .. (117.72,106.13) .. controls (118.63,108.3) and (120.18,108.93) .. (122.35,108.02) .. controls (124.52,107.11) and (126.07,107.74) .. (126.98,109.91) .. controls (127.89,112.08) and (129.44,112.71) .. (131.61,111.8) .. controls (133.78,110.89) and (135.33,111.53) .. (136.24,113.7) .. controls (137.15,115.87) and (138.7,116.5) .. (140.87,115.59) .. controls (143.04,114.68) and (144.59,115.31) .. (145.5,117.48) .. controls (146.41,119.65) and (147.95,120.28) .. (150.12,119.37) .. controls (152.29,118.46) and (153.84,119.09) .. (154.75,121.26) .. controls (155.66,123.43) and (157.21,124.06) .. (159.38,123.15) .. controls (161.55,122.24) and (163.1,122.87) .. (164.01,125.04) .. controls (164.92,127.21) and (166.47,127.84) .. (168.64,126.93) .. controls (170.81,126.02) and (172.36,126.65) .. (173.27,128.82) .. controls (174.18,130.99) and (175.73,131.63) .. (177.9,130.72) .. controls (180.07,129.81) and (181.61,130.44) .. (182.52,132.61) .. controls (183.43,134.78) and (184.98,135.41) .. (187.15,134.5) .. controls (189.32,133.59) and (190.87,134.22) .. (191.78,136.39) .. controls (192.69,138.56) and (194.24,139.19) .. (196.41,138.28) .. controls (198.58,137.37) and (200.13,138) .. (201.04,140.17) .. controls (201.95,142.34) and (203.5,142.97) .. (205.67,142.06) .. controls (207.84,141.15) and (209.39,141.78) .. (210.3,143.95) .. controls (211.21,146.12) and (212.75,146.75) .. (214.92,145.84) .. controls (217.09,144.93) and (218.64,145.56) .. (219.55,147.73) .. controls (220.46,149.9) and (222.01,150.54) .. (224.18,149.63) .. controls (226.35,148.72) and (227.9,149.35) .. (228.81,151.52) .. controls (229.72,153.69) and (231.27,154.32) .. (233.44,153.41) .. controls (235.61,152.5) and (237.16,153.13) .. (238.07,155.3) .. controls (238.98,157.47) and (240.53,158.1) .. (242.7,157.19) .. controls (244.87,156.28) and (246.42,156.91) .. (247.33,159.08) -- (249.23,159.86) -- (249.23,159.86)(74.93,91.89) .. controls (77.1,90.98) and (78.65,91.61) .. (79.56,93.78) .. controls (80.47,95.95) and (82.02,96.58) .. (84.19,95.67) .. controls (86.36,94.76) and (87.91,95.39) .. (88.82,97.56) .. controls (89.73,99.73) and (91.28,100.36) .. (93.45,99.45) .. controls (95.62,98.54) and (97.17,99.17) .. (98.08,101.34) .. controls (98.99,103.51) and (100.53,104.15) .. (102.7,103.24) .. controls (104.87,102.33) and (106.42,102.96) .. (107.33,105.13) .. controls (108.24,107.3) and (109.79,107.93) .. (111.96,107.02) .. controls (114.13,106.11) and (115.68,106.74) .. (116.59,108.91) .. controls (117.5,111.08) and (119.05,111.71) .. (121.22,110.8) .. controls (123.39,109.89) and (124.94,110.52) .. (125.85,112.69) .. controls (126.76,114.86) and (128.31,115.49) .. (130.48,114.58) .. controls (132.65,113.67) and (134.19,114.3) .. (135.1,116.47) .. controls (136.01,118.64) and (137.56,119.27) .. (139.73,118.36) .. controls (141.9,117.45) and (143.45,118.08) .. (144.36,120.25) .. controls (145.27,122.42) and (146.82,123.06) .. (148.99,122.15) .. controls (151.16,121.24) and (152.71,121.87) .. (153.62,124.04) .. controls (154.53,126.21) and (156.08,126.84) .. (158.25,125.93) .. controls (160.42,125.02) and (161.97,125.65) .. (162.88,127.82) .. controls (163.79,129.99) and (165.33,130.62) .. (167.5,129.71) .. controls (169.67,128.8) and (171.22,129.43) .. (172.13,131.6) .. controls (173.04,133.77) and (174.59,134.4) .. (176.76,133.49) .. controls (178.93,132.58) and (180.48,133.21) .. (181.39,135.38) .. controls (182.3,137.55) and (183.85,138.18) .. (186.02,137.27) .. controls (188.19,136.36) and (189.74,137) .. (190.65,139.17) .. controls (191.56,141.34) and (193.11,141.97) .. (195.28,141.06) .. controls (197.45,140.15) and (198.99,140.78) .. (199.9,142.95) .. controls (200.81,145.12) and (202.36,145.75) .. (204.53,144.84) .. controls (206.7,143.93) and (208.25,144.56) .. (209.16,146.73) .. controls (210.07,148.9) and (211.62,149.53) .. (213.79,148.62) .. controls (215.96,147.71) and (217.51,148.34) .. (218.42,150.51) .. controls (219.33,152.68) and (220.88,153.31) .. (223.05,152.4) .. controls (225.22,151.49) and (226.77,152.12) .. (227.68,154.29) .. controls (228.59,156.46) and (230.13,157.1) .. (232.3,156.19) .. controls (234.47,155.28) and (236.02,155.91) .. (236.93,158.08) .. controls (237.84,160.25) and (239.39,160.88) .. (241.56,159.97) .. controls (243.73,159.06) and (245.28,159.69) .. (246.19,161.86) -- (248.1,162.64) -- (248.1,162.64) ;
\draw [color={rgb, 255:red, 74; green, 144; blue, 226 }  ,draw opacity=1 ][fill={rgb, 255:red, 74; green, 144; blue, 226 }  ,fill opacity=0.31 ][line width=0.75]  [dash pattern={on 0.75pt off 0.75pt}]  (244.61,76.16) .. controls (244.04,78.45) and (242.62,79.31) .. (240.33,78.75) .. controls (238.04,78.18) and (236.62,79.04) .. (236.05,81.33) .. controls (235.49,83.62) and (234.07,84.48) .. (231.78,83.92) .. controls (229.49,83.36) and (228.07,84.22) .. (227.5,86.51) .. controls (226.93,88.8) and (225.51,89.66) .. (223.22,89.1) .. controls (220.93,88.54) and (219.51,89.4) .. (218.94,91.69) .. controls (218.37,93.98) and (216.95,94.84) .. (214.66,94.28) .. controls (212.37,93.71) and (210.95,94.57) .. (210.39,96.86) .. controls (209.82,99.15) and (208.4,100.01) .. (206.11,99.45) .. controls (203.82,98.89) and (202.4,99.75) .. (201.83,102.04) .. controls (201.26,104.33) and (199.84,105.19) .. (197.55,104.63) .. controls (195.26,104.07) and (193.84,104.93) .. (193.27,107.22) .. controls (192.71,109.51) and (191.29,110.37) .. (189,109.81) .. controls (186.71,109.24) and (185.29,110.1) .. (184.72,112.39) .. controls (184.15,114.68) and (182.73,115.54) .. (180.44,114.98) .. controls (178.15,114.42) and (176.73,115.28) .. (176.16,117.57) .. controls (175.59,119.86) and (174.17,120.72) .. (171.88,120.16) .. controls (169.59,119.6) and (168.17,120.46) .. (167.61,122.75) .. controls (167.04,125.04) and (165.62,125.9) .. (163.33,125.34) .. controls (161.04,124.77) and (159.62,125.63) .. (159.05,127.92) .. controls (158.48,130.21) and (157.06,131.07) .. (154.77,130.51) .. controls (152.48,129.95) and (151.06,130.81) .. (150.5,133.1) .. controls (149.93,135.39) and (148.51,136.25) .. (146.22,135.69) .. controls (143.93,135.13) and (142.51,135.99) .. (141.94,138.28) .. controls (141.37,140.57) and (139.95,141.43) .. (137.66,140.87) .. controls (135.37,140.3) and (133.95,141.16) .. (133.38,143.45) .. controls (132.82,145.74) and (131.4,146.6) .. (129.11,146.04) .. controls (126.82,145.48) and (125.4,146.34) .. (124.83,148.63) .. controls (124.26,150.92) and (122.84,151.78) .. (120.55,151.22) .. controls (118.26,150.66) and (116.84,151.52) .. (116.27,153.81) .. controls (115.7,156.1) and (114.28,156.96) .. (111.99,156.4) .. controls (109.7,155.83) and (108.28,156.69) .. (107.72,158.98) .. controls (107.15,161.27) and (105.73,162.13) .. (103.44,161.57) .. controls (101.15,161.01) and (99.73,161.87) .. (99.16,164.16) .. controls (98.59,166.45) and (97.17,167.31) .. (94.88,166.75) .. controls (92.59,166.19) and (91.17,167.05) .. (90.6,169.34) .. controls (90.04,171.63) and (88.62,172.49) .. (86.33,171.92) .. controls (84.04,171.36) and (82.62,172.22) .. (82.05,174.51) .. controls (81.48,176.8) and (80.06,177.66) .. (77.77,177.1) -- (75.61,178.41) -- (75.61,178.41)(243.06,73.59) .. controls (242.49,75.88) and (241.07,76.74) .. (238.78,76.18) .. controls (236.49,75.62) and (235.07,76.48) .. (234.5,78.77) .. controls (233.93,81.06) and (232.51,81.92) .. (230.22,81.36) .. controls (227.93,80.79) and (226.51,81.65) .. (225.95,83.94) .. controls (225.38,86.23) and (223.96,87.09) .. (221.67,86.53) .. controls (219.38,85.97) and (217.96,86.83) .. (217.39,89.12) .. controls (216.82,91.41) and (215.4,92.27) .. (213.11,91.71) .. controls (210.82,91.15) and (209.4,92.01) .. (208.83,94.3) .. controls (208.27,96.59) and (206.85,97.45) .. (204.56,96.89) .. controls (202.27,96.32) and (200.85,97.18) .. (200.28,99.47) .. controls (199.71,101.76) and (198.29,102.62) .. (196,102.06) .. controls (193.71,101.5) and (192.29,102.36) .. (191.72,104.65) .. controls (191.15,106.94) and (189.73,107.8) .. (187.44,107.24) .. controls (185.15,106.68) and (183.73,107.54) .. (183.17,109.83) .. controls (182.6,112.12) and (181.18,112.98) .. (178.89,112.42) .. controls (176.6,111.85) and (175.18,112.71) .. (174.61,115) .. controls (174.04,117.29) and (172.62,118.15) .. (170.33,117.59) .. controls (168.04,117.03) and (166.62,117.89) .. (166.05,120.18) .. controls (165.49,122.47) and (164.07,123.33) .. (161.78,122.77) .. controls (159.49,122.21) and (158.07,123.07) .. (157.5,125.36) .. controls (156.93,127.65) and (155.51,128.51) .. (153.22,127.95) .. controls (150.93,127.38) and (149.51,128.24) .. (148.94,130.53) .. controls (148.37,132.82) and (146.95,133.68) .. (144.66,133.12) .. controls (142.37,132.56) and (140.95,133.42) .. (140.39,135.71) .. controls (139.82,138) and (138.4,138.86) .. (136.11,138.3) .. controls (133.82,137.74) and (132.4,138.6) .. (131.83,140.89) .. controls (131.26,143.18) and (129.84,144.04) .. (127.55,143.48) .. controls (125.26,142.91) and (123.84,143.77) .. (123.27,146.06) .. controls (122.71,148.35) and (121.29,149.21) .. (119,148.65) .. controls (116.71,148.09) and (115.29,148.95) .. (114.72,151.24) .. controls (114.15,153.53) and (112.73,154.39) .. (110.44,153.83) .. controls (108.15,153.27) and (106.73,154.13) .. (106.16,156.42) .. controls (105.59,158.71) and (104.17,159.57) .. (101.88,159) .. controls (99.59,158.44) and (98.17,159.3) .. (97.61,161.59) .. controls (97.04,163.88) and (95.62,164.74) .. (93.33,164.18) .. controls (91.04,163.62) and (89.62,164.48) .. (89.05,166.77) .. controls (88.48,169.06) and (87.06,169.92) .. (84.77,169.36) .. controls (82.48,168.8) and (81.06,169.66) .. (80.5,171.95) .. controls (79.93,174.24) and (78.51,175.1) .. (76.22,174.53) -- (74.06,175.84) -- (74.06,175.84) ;
\draw    (160.33,209.25) -- (160.33,31.75) ;
\draw [shift={(160.33,29.75)}, rotate = 90] [color={rgb, 255:red, 0; green, 0; blue, 0 }  ][line width=0.75]    (10.93,-3.29) .. controls (6.95,-1.4) and (3.31,-0.3) .. (0,0) .. controls (3.31,0.3) and (6.95,1.4) .. (10.93,3.29)   ;
\draw    (74.83,177.13) -- (253.62,69.28) ;
\draw [shift={(255.33,68.25)}, rotate = 148.9] [color={rgb, 255:red, 0; green, 0; blue, 0 }  ][line width=0.75]    (10.93,-3.29) .. controls (6.95,-1.4) and (3.31,-0.3) .. (0,0) .. controls (3.31,0.3) and (6.95,1.4) .. (10.93,3.29)   ;
\draw    (75.5,90.5) -- (260.98,166) ;
\draw [shift={(262.83,166.75)}, rotate = 202.15] [color={rgb, 255:red, 0; green, 0; blue, 0 }  ][line width=0.75]    (10.93,-3.29) .. controls (6.95,-1.4) and (3.31,-0.3) .. (0,0) .. controls (3.31,0.3) and (6.95,1.4) .. (10.93,3.29)   ;
\draw [color={rgb, 255:red, 52; green, 101; blue, 155 }  ,draw opacity=1 ][line width=1.5]    (75.5,90.5) -- (248.67,161.25) ;
\draw [color={rgb, 255:red, 52; green, 101; blue, 155 }  ,draw opacity=1 ][line width=1.5]    (74.83,177.13) -- (243.83,74.88) ;
\draw (264.33,167.65) node [anchor=north west][inner sep=0.75pt]  [font=\footnotesize]  {$x$};
\draw (257.83,67.15) node [anchor=north west][inner sep=0.75pt]  [font=\footnotesize]  {$y$};
\draw (167.5,30.9) node [anchor=north west][inner sep=0.75pt]  [font=\footnotesize]  {$z$};
\end{tikzpicture}
    \caption{Morally speaking, we can picture the formal derived smooth manifold in the example above as a smooth "cloud" around the bare set of the intersection.}
    \label{fig:derived_manifold_example}
\end{figure}
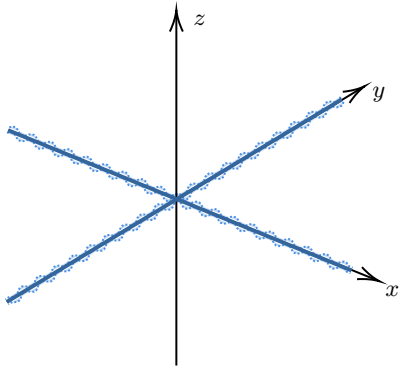

\end{example}

Let us see how this simple example can be generalised to a relevant class of examples: the derived intersection of the graph of a section of a vector bundle with the one of the zero-section.

\begin{example}[Derived zero locus of a section of a vector bundle]
Let $\Sigma,M$ be again ordinary smooth manifolds and let $\pi_\Sigma:M\rightarrow \Sigma$ be an ordinary vector bundle. Let us also fix a section $e_\Sigma:\Sigma\hookrightarrow M$ of such a vector bundle.
The derived intersection $\Sigma\times^h_{M\!}\Sigma$ of $e_\Sigma:\Sigma\hookrightarrow M$ with the zero-section $0:\Sigma\hookrightarrow M$ is also known as derived zero locus of $e_\Sigma$.
To explicitly find such a derived intersection it is convenient to deploy the notion of dg-$\Coo$-algebra, see e.g. \cite{Pridham:2018, carchedi2023derived}.
A dg-$\Coo$-algebra $K_\bullet$ is a dg-commutative $\bbR$-algebra where $K_0$ is equipped with a compatible $\Coo$-algebra structure. Maps of dg-$\Coo$-algebra are maps of dg-commutative $\bbR$-algebras which respect the $\Coo$-structure in degree $0$ and, 
similarly to $\bbR$-algebras, the category of non-positively graded dg-$\Coo$-algebras is naturally simplicially enriched.
Now, there exists a non-positively graded dg-$\Coo$-algebra $K_{-n} = \wedge^n_{\mathcal{C}^{\infty\!}(\Sigma)}\Gamma(\Sigma,M^\vee)$ with differential $\di_K = \langle e_\Sigma,- \rangle$ given by contraction with the section $e_\Sigma$.
By the construction in \cite{carchedi2023derived}, we can consider the following simplicial $\Coo$-algebra:
\begin{equation}
        \mathcal{O}_{\!}\big( \Sigma\times^h_{M\!}\Sigma \big)\,:\,\bbR^k \,\longmapsto\, \Hom_{\mathsf{dgC^\infty Alg}^{\leq 0\!}}(\Coo(\bbR^k), K_\bullet),
\end{equation}
which provides a model for the homotopy $\Coo$-algebra of functions on the derived zero locus.
In fact, by forgetting the $\Coo$-structure, the underlying simplicial set of such a homotopy $\Coo$-algebra is the $(\mathrm{dim} M- \mathrm{dim} \Sigma)$-skeletal simplicial set given by the usual Dold-Kan correspondence
\begin{equation*}
    \hspace{-0.25cm}\mathcal{O}_{\!}\big( \Sigma\times^h_{M\!}\Sigma \big)^{\mathrm{alg}\!} \,\simeq\, \left(\!\!\! \begin{tikzcd}[row sep=scriptsize, column sep=1.6ex]
    \, \cdots  \arrow[r, yshift=1.8ex]\arrow[r, yshift=0.6ex]\arrow[r, yshift=-1.8ex]\arrow[r, yshift=-0.6ex]&  \Coo(\Sigma)_{\!}\oplus_{\!}\Gamma(\Sigma,M^\vee)^{\oplus 2\!}_{\!}\oplus_{\!} \wedge^2_{\mathcal{C}^{\infty\!}(\Sigma)}\Gamma(\Sigma,M^\vee) 
    \arrow[r, yshift=1.4ex] \arrow[r] \arrow[r, yshift=-1.4ex] & \Coo(\Sigma)_{\!}\oplus_{\!} \Gamma(\Sigma,M^\vee)  \arrow[r, yshift=0.7ex] \arrow[r, yshift=-0.7ex] & \Coo(\Sigma)
    \end{tikzcd}    \!\!\!\right)
\end{equation*}
where the face maps of the $1$-simplices are given by $\partial_0(f,\phi)= f$ and $\partial_1(f,\phi)= f+ \langle \phi, e_\Sigma\rangle$.
\end{example}

\subsection{Definition of formal derived smooth stacks}

Now that we have the $(\infty,1)$-category of formal derived manifolds, the next step will be to equip it with the structure of an $(\infty,1)$-site. To do that, we must first introduce a suitable notion of \'etale map, which is going to generalise local diffeomorphisms of smooth manifolds. Once we have constructed an $(\infty,1)$-site of formal derived manifolds, we will be able to define derived stacks on such a site. These will be called formal derived smooth stacks and will provide a language to make precise the idea of formal derived smooth spaces which are infinite-dimensional and which have a notion of gauge-transformations.

To provide the category of formal derived smooth manifolds with the structure of an \'etale site, we must first understand more about  their truncations.

\begin{construction}[Relation between formal derived smooth manifolds and $\Coo$-varieties]\label{construction: relation between formal derived smooth manifolds and c-infinity varieties}
The first step is to notice that by \cite[Proposition 3.27]{Carchedi2019OnTU} the ordinary category of $\Coo$-algebras is a coreflective sub-$(\infty,1)$-category of the $(\infty,1)$-category of homotopy $\Coo$-algebras. In fact, we have an $(\infty,1)$-adjunction $\pi_0\dashv \iota$ of the form
\begin{equation}\label{eq:coref}
    \begin{tikzcd}[row sep=scriptsize, column sep=8.2ex, row sep=13.0ex]
     \mathbf{N}(\mathsf{C^\infty Alg}) \arrow[r, "\iota"', ""{name=0, anchor=center, inner sep=0}, shift left=-1.1ex, hook] &  \mathbf{sC^\infty Alg}, \arrow[l, "\pi_0"', ""{name=1, anchor=center, inner sep=0}, shift left=-1.1ex] \arrow["\dashv"{anchor=center, rotate=-90}, draw=none, from=0, to=1]
    \end{tikzcd}
\end{equation}
where $\iota$ is the natural inclusion sending an ordinary $\Coo$-algebra to the
corresponding constant simplicial $\Coo$-algebra and $\pi_0$ is the $(\infty,1)$-functor sending a homotopy $\Coo$-algebra $R$ to the coequaliser $\pi_0R\coloneqq\mathrm{coeq}(\!\!\begin{tikzcd}[row sep=scriptsize, column sep=2.5ex] R_1\! \arrow[r, shift left=0.8ex]\arrow[r, shift left=-0.8ex] &  \!R_0  \!\!\end{tikzcd})$ of the face maps of the $1$-simplices.
If we restrict the functor $\pi_0$ to finitely generated $\Coo$-algebras, we obtain $\pi_0:\mathbf{sC^\infty Alg}_{\mathrm{fg}}\rightarrow \mathbf{N}(\mathsf{C^\infty Alg}_\mathrm{fg})$, where $\mathsf{C^\infty Alg}_\mathrm{fg}$ is the ordinary category of finitely generated $\Coo$-algebras.
If we go to the opposite categories, we immediately get the $(\infty,1)$-functor
\begin{equation}
t_0\coloneqq \pi_0^\op\,:\;\mathbf{dFMfd} \,\longrightarrow\,  \mathbf{N}(\mathsf{C^{\infty} Var}).
\end{equation}
Notice that formal derived smooth manifolds provide a derived enhancement not only of usual formal smooth manifold, but, more generally, of $\Coo$-varieties.
\end{construction}

Now, we must introduce a notion of \'etale maps between formal derived smooth manifolds.

\begin{definition}[Formally \'etale map of formal derived smooth manifolds]
We say that a morphism $f:M \longrightarrow N$ of formal derived smooth manifolds is a \textit{formally \'etale map} if
\begin{enumerate}[label=(\textit{\roman*})]
    \item its underived-truncation $t_0f:t_0M \longrightarrow t_0N$ is a formally \'etale map of $\Coo$-varieties,
    \item for each $i\in\mathbb{N}$ the canonical morphism
    \begin{equation}
    \pi_i\mathcal{O}(N)\,\widehat{\otimes}_{\pi_0\mathcal{O}(N)}\,\pi_0\mathcal{O}(M) \;\xrightarrow{\;\;\,\simeq\,\;\;}\; \pi_i\mathcal{O}(M)
\end{equation}
is an isomorphism of ordinary $\Coo$-algebras.
\end{enumerate}
\end{definition}

In the definition above, the $\pi_i$ are the categorical homotopy groups in the $(\infty,1)$-category of homotopy $\Coo$-algebras as constructed in \cite[section 6.5.1]{topos}.

\begin{construction}[\'Etale $(\infty,1)$-site of formal derived smooth manifolds]
Now, by following \cite{ToenVezzo05, ToenVezzo08}, the $(\infty,1)$-category $\mathbf{dFMfd}$ of formal derived smooth manifolds can be naturally equipped with the structure of an \'etale $(\infty,1)$-site, whose coverage is provided by the assignment of \'etale covers $\{U_i\xrightarrow{\;\,\phi_i\;\,} M\}_{i\in I}$ to any formal derived smooth manifold $M$. Such \'etale covers are collections of morphisms such that: 
\begin{enumerate}[label=(\textit{\roman*})]
    \item each $U_i\xrightarrow{\;\,\phi_i\;\,} M$ is a formally \'etale map,
    \item there exist a finite subset $I'\subset I$ such that the truncation $\{t_0U_i\xrightarrow{\;\,t_0\phi_i\;\,} t_0M\}_{i\in I'}$ is a covering in the ordinary site of $\Coo$-varieties.
\end{enumerate}
\end{construction}

\begin{construction}[Simplicial category of formal derived smooth stacks]
Now, given the definition of the $(\infty,1)$-site of formal derived smooth manifolds, we can apply the general discussion above about derived stacks to our case of interest. By \cite[Theorem 3.4.1]{ToenVezzo05} there exists a local projective model structure on the simplicial-category of pre-stacks $[\mathsf{sC^\infty Alg}_\mathrm{fg},\mathsf{sSet}]$ induced by the definition of formally \'etale maps of formal derived smooth manifolds. Thus, by localisation of such a simplicial model structure, one can obtain the simplicial category of formal derived smooth stacks, i.e.
\begin{equation}
    \mathsf{dFSmoothStack} \;\coloneqq\; [\mathsf{sC^\infty Alg}_\mathrm{fg},\mathsf{sSet}]_{\mathrm{proj,loc}}^\circ.
\end{equation}
\end{construction}

To write concretely a formal derived smooth stack, we need to introduce a certain refinement of an \'etale cover, namely we need the definition of an \'etale hypercover.

\begin{definition}[\'Etale hypercover of a formal derived smooth manifold]\label{etale hypercover}
An \'etale hypercover $H(U)_\bullet\rightarrow U$ of a formal derived smooth manifold $U$ is a simplicial object $H(U)_\bullet$ in the \'etale $(\infty,1)$-site $\mathbf{dFMfd}$ such that 
$H(U)_0\rightarrow U$ is an \'etale cover and all natural morphisms 
\begin{equation}
    H(U)_n \;\longrightarrow\; (\mathrm{cosk}_{n-1\!}\circ\mathrm{tr}_{n-1}H(U)_\bullet)_n
\end{equation}
for $n>0$ are \'etale covers.
\end{definition}

In the definition above, $\mathrm{tr}_{n}$ and $\mathrm{cosk}_{n}$ are respectively the $n$-truncation functor and the $n$-coskeleton functor on simplicial objects.
Thus, $H(U)_\bullet\rightarrow U$ being an \'etale hypercover means that, for each $n\geq 0$, one has the equivalence of the form
\begin{equation}
    H(U)_n \;\simeq\; \coprod_{i\in I_n} \!U_{i}^n
\end{equation}
where $U_i^n$ are formal derived smooth manifolds such that the following are all \'etale covers:
\begin{equation}
\begin{aligned}
    \{U_{i}^0 \,\rightarrow\, U\}_{i\in I_0} \\[0.6ex]
    \Big\{U_{i}^1 \,\rightarrow \!\!\!\coprod_{j_1,j_2\in I_0}\!\!\! U_{j_1\!}^0\times_{U\!}U_{j_2}^0\Big\}_{i\in I_1} \\[0.5ex]
    \Big\{U_{i}^2 \,\rightarrow (\mathrm{cosk}_{1\!}\circ\mathrm{tr}_{1}H(U)_\bullet)_2\Big\}_{i\in I_2} \\
    \vdots \quad
\end{aligned}
\end{equation}

Now, we have all the ingredients to unravel the definition of formal derived smooth stacks in concrete terms.

\begin{remark}[Formal derived smooth stack in concrete terms]
A formal derived smooth stack $X\in \mathsf{dFSmoothStack}$ is modelled by a fibrant object in the simplicial model category $[\mathsf{dFMfd}^\op,\mathsf{sSet}]_{\mathrm{proj,loc}}$. Thus, by the general argument in \cite{ToenVezzo05, Moerdijk2010SimplicialMF}, we have that a formal derived smooth stack $X$ is concretely given by a simplicial functor $X:\mathsf{dFMfd}^\op\longrightarrow\mathsf{sSet}$ such that the following conditions are satisfied:
\begin{enumerate}[label=(\textit{\roman*})]
    \item \textit{object-wise fibrancy}: for any $U\in\mathsf{dFMfd}$, the simplicial set $X(U)$ is Kan-fibrant;
    \item \textit{pre-stack condition}: for any equivalence $U\xrightarrow{\,\simeq\,} U'$ in $\mathsf{dFMfd}$, the induced morphism $X(U')\longrightarrow X(U)$ is an equivalence of simplicial sets;
    \item \textit{descent condition}: for any \'etale hypercover $H(U)_\bullet\rightarrow U$ in $\mathsf{dFMfd}$, the natural morphism
    \begin{equation}
        X(U) \;\longrightarrow\; \bfR\!\!\lim_{\!\!\!\![n]\in\Delta}\bigg(\prod_{i\in I_n}X(U_i^n)\bigg)
    \end{equation}
    is an equivalence of simplicial sets.
\end{enumerate}
Notice that this last condition provides an interesting generalisation of the gluing conditions of ordinary sheaves. 
Moreover, from the perspective of applications, it provides a recipe to construct a formal derived smooth stack by gluing together simpler spaces of sections. 
\end{remark}

Finally, we can take the homotopy-coherent nerve of the simplicial-category of formal derived smooth stacks to obtain its $(\infty,1)$-categorical version, as previously discussed at the beginning of this section at equality \eqref{eq:gentheory_derived}.
\begin{definition}[$(\infty,1)$-category of formal derived smooth stacks]
We define the \textit{$(\infty,1)$-category of formal derived smooth stacks} by
\begin{equation}
    \mathbf{dFSmoothStack} \;\coloneqq\; \mathbf{N}_{hc}([\mathsf{dFMfd}^\op,\mathsf{sSet}]_{\mathrm{proj,loc}}^\circ),
\end{equation}
i.e. by the $(\infty,1)$-category of stacks on the \'etale $(\infty,1)$-site presented by $\mathsf{dFMfd}= \mathsf{sC^\infty Alg}^\op_{\mathrm{fg}}$ of formal derived smooth manifolds.
\end{definition}

As we will see in \cref{sec:derdiffcohe} below, the $(\infty,1)$-category $\mathbf{dFSmoothStack}$ comes equipped with a very rich structure: it is a differential cohesive $(\infty,1)$-topos in the sense of \cite{DCCTv2}.

\begin{proposition}[Relation with usual smooth stacks]\label{prop:ext-trunc}
There exists an adjunction $(i\dashv t_0)$ of $(\infty,1)$-functors between the $(\infty,1)$-category of smooth stacks into the $(\infty,1)$-category of formal derived smooth stacks
\begin{equation}\label{eq:ref_emb}
    \begin{tikzcd}[row sep=scriptsize, column sep=15.5ex, row sep=13.0ex]
     \mathbf{dFSmoothStack} \arrow[r, "t_0", shift left=-1.5ex] &  \mathbf{SmoothStack}, \arrow[l, "i "', shift left=-1.5ex, hook'] 
    \end{tikzcd}
\end{equation}
where $i$ is fully faithful and $t_0$ preserves finite products.
\end{proposition}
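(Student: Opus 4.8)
The plan is to obtain the adjunction from the derived-extension functor of sites together with the standard machinery for stacks on an $(\infty,1)$-site. First I would recall the site-level derived-extension functor $i\colon \mathbf{N}(\mathsf{Mfd}) \to \mathbf{dFMfd}$ and check that it is continuous, i.e. that it sends covering families to covering families. A good open cover $\{U_\alpha \to U\}$ consists of local diffeomorphisms, which the corollary on $\mathcal{D}$-\'etale maps identifies with formally \'etale maps, and since each $i(U_\alpha)$ is an ordinary manifold its truncation is already a covering in $\mathsf{C^\infty Var}$; hence $\{i(U_\alpha) \to i(U)\}$ is an \'etale cover in $\mathbf{dFMfd}$. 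Because $i$ also preserves the transverse fibre products entering the \v{C}ech nerves, restriction along $i$ carries formal derived smooth stacks to smooth stacks, so $i^*$ descends to the localizations.

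Next I would invoke the $(\infty,1)$-sheaf-theoretic adjunction $(i_! \dashv i^*)$ on stacks, where $i^* = (-)\circ i^{\mathrm{op}}$ is restriction and $i_! = a_{\mathbf{dFMfd}}\circ \mathrm{Lan}_i$ is the stackification of the left Kan extension, and then simply set
\[
 i \;\coloneqq\; i_! \colon \mathbf{SmoothStack} \longrightarrow \mathbf{dFSmoothStack}, \qquad t_0 \;\coloneqq\; i^* \colon \mathbf{dFSmoothStack} \longrightarrow \mathbf{SmoothStack},
\]
so that the required $(i \dashv t_0)$ is precisely $(i_! \dashv i^*)$. On a representable $\Spec A$ with $A \in \mathbf{sC^\infty Alg}_{\mathrm{fg}}$ one computes $t_0(\Spec A)(U) \simeq \Map_{\mathbf{dFMfd}}(i(U), \Spec A)$, which, as $\Coo(U)$ is discrete, only sees $\pi_0 A$; this identifies $t_0$ on representables with the site-level truncation $\pi_0^{\mathrm{op}}$ and justifies the notation.

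The claim that $t_0$ preserves finite products is then immediate: $t_0 = i^*$ is a right adjoint, and it is moreover computed objectwise, so it preserves all limits — in particular finite products, and also the homotopy fibre products $i(X)\times^h_{i(Z)} i(Y)$ appearing in the subsequent display.

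The hard part will be the full faithfulness of $i = i_!$. At the presheaf level this is the general fact that left Kan extension along a fully faithful functor is fully faithful — equivalently that the unit $\mathrm{id} \to i^* i_!$ is an equivalence on representables and hence everywhere — and the earlier remark supplies that $i$ is fully faithful on the site. The genuine obstacle is that the passage to stacks inserts the stackification $a_{\mathbf{dFMfd}}$, so one must check it does not change the value after restricting back along $i$, i.e. that the unit $X \to t_0\,i\,X$ stays an equivalence of smooth stacks. I would settle this by a Comparison-Lemma argument showing that the \'etale topology on $\mathbf{dFMfd}$ pulls back along the fully faithful $i$ to exactly the open-cover topology on $\mathsf{Mfd}$. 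The key technical input is that any formally \'etale map into the derived extension of an ordinary manifold is itself classical: if $V \to i(U)$ is formally \'etale then condition $(ii)$ of formal \'etaleness forces $\pi_{>0}\mathcal{O}(V)=0$ with $\mathcal{O}(V)$ reduced, while $\mathcal{D}$-\'etaleness over a manifold makes $V$ an ordinary manifold locally diffeomorphic to $U$. Consequently stackification modifies $i_! X$ only on genuinely derived or infinitesimally thickened probes, leaving its restriction along $i$ untouched, which gives $t_0\,i\,X \simeq X$ and hence the full faithfulness of $i$.
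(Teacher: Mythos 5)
Your proposal is correct and follows essentially the same route as the paper: both arguments run the adjunction through the site-level embedding $\upiota^{\mathsf{Mfd}}\colon\Mfd\hookrightarrow\mathsf{dFMfd}$, obtain $(i_!\dashv i^\ast)$ by (left Kan extension, restriction) on simplicial presheaves, descend to stacks by checking that the right adjoint preserves descent (\v{C}ech nerves of good open covers embed into \'etale hypercovers), and deduce product preservation from the fact that $i^\ast$ computes limits object-wise.

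The one place where you genuinely diverge is the full faithfulness of $i$. The paper disposes of this in a single sentence — ``since $\upiota^{\mathsf{Mfd}}$ is fully faithful, $\upiota^{\mathsf{Mfd}}_!$ is also fully faithful'' — i.e.\ it appeals to the presheaf-level fact that left Kan extension along a fully faithful functor is fully faithful, and does not explicitly address whether the subsequent localisation/stackification could spoil the unit $X\to t_0\,i\,X$ on non-representable $X$. You correctly identify this as the actual content of the statement and close it with a comparison-lemma argument, whose technical input is that a formally \'etale map into the derived extension of an ordinary manifold is itself classical (condition $(ii)$ of formal \'etaleness kills the higher homotopy of $\mathcal{O}(V)$ when the target is discrete, and $\mathcal{D}$-\'etaleness over a manifold then pins down $V$ up to local diffeomorphism). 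This is a more careful treatment of the same step rather than a different proof; it buys you an explicit reason why stackification does not alter $i_!X$ over classical probes, which the paper leaves implicit.
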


\begin{proof}
The logic of the proof is the following: first, we must show that we have an adjunction between the corresponding $(\infty,1)$-categories of pre-stacks and, then, that this restricts to an adjunction of the the $(\infty,1)$-categories of stacks.
A simplicial functor $f:\mathsf{C}\rightarrow\mathsf{D}$ gives rise to an adjunction $(f_!\dashv f^\ast)$ between the corresponding simplicial-functor categories $[\mathsf{C}^\op,\sSet]$ and $[\mathsf{D}^\op,\sSet]$, where the pullback functor $f^\ast = (-)\circ f$ is just the pre-composition with $f$ and $f_!$ is the left Kan extension of $f$. (See e.g. \cite{dubuc2006kan}.) 
In our case of interest, the embedding $\upiota^{\mathsf{Mfd}}:\mathsf{Mfd}\hookrightarrow\mathsf{dFMfd}$ induces an adjunction of functors between $[\mathsf{dFMfd}^\op,\mathsf{sSet}]$ and $[\Mfd^\op,\mathsf{sSet}]$.
Thus, by \cite[Section 2.3.1]{ToenVezzo05} we have a Quillen adjunction of simplicial-functors
\begin{equation}
    \begin{tikzcd}[row sep=scriptsize, column sep=15.5ex, row sep=13.0ex]
     {[\mathsf{dFMfd}^\op,\mathsf{sSet}]}_{\mathrm{proj}} \arrow[r, "\upiota^{\mathsf{Mfd}\ast}", shift left=-1.5ex] &  {[\Mfd^\op,\mathsf{sSet}]}_{\mathrm{proj}} \arrow[l, "\upiota_!^{\mathsf{Mfd}} "', shift left=-1.5ex, hook'] .
    \end{tikzcd}
\end{equation}
(Recall that, in the projective model structure, fibrations and weak equivalences are computed object-wise).
This simplicial Quillen adjunction provides a model of an $(\infty,1)$-adjunction of pre-stacks.
Now, to see that this restricts to stacks, we need to show that these simplicial-functors send locally fibrant/cofibrant objects (i.e. fibrant/cofibrant objects in the local projective model structure) to other locally fibrant/cofibrant objects.
However, by the properties of Quillen adjunctions, it is sufficient to check this for the right adjoint functor.
So, given any $X\in{[\mathsf{dFMfd}^\op,\mathsf{sSet}]}_{\mathrm{proj,loc}}^\circ$, its image is $\upiota^{\mathsf{Mfd}\ast\!}X = X\circ \upiota^{\mathsf{Mfd}}$. For any manifold $U\in\Mfd$, a \v{C}ech nerve $\check{C}(U)_\bullet\rightarrow U$ precisely embeds into an \'etale hypercover, thus $\upiota^{\mathsf{Mfd}\ast\!}X$ satisfying descent on ordinary smooth manifold is an immediate consequence of $X$ satisfying descent on formal derived smooth manifolds.
Therefore, there is a Quillen adjunction of simplicial-functors
\begin{equation}
    \begin{tikzcd}[row sep=scriptsize, column sep=15.5ex, row sep=13.0ex]
     {[\mathsf{dFMfd}^\op,\mathsf{sSet}]}_{\mathrm{proj,loc}} \arrow[r, "\upiota^{\mathsf{Mfd}\ast}", shift left=-1.5ex] &  {[\Mfd^\op,\mathsf{sSet}]}_{\mathrm{proj,loc}} \arrow[l, "\upiota_!^{\mathsf{Mfd}} "', shift left=-1.5ex, hook'].
    \end{tikzcd}
\end{equation}
This simplicial Quillen adjunction provides a model of an $(\infty,1)$-adjunction of stacks.
Now, since the functor $\upiota^{\mathsf{Mfd}}$ is fully faithful, we have that $\upiota_!^{\mathsf{Mfd}}$ is also fully faithful.
Finally, $\upiota^{\mathsf{Mfd}\ast}$ preserves finite products, since finite limits are computed object-wise, so we have $(X\times^hY)(\upiota^{\mathsf{Mfd}}U)\simeq X(\upiota^{\mathsf{Mfd}}U)\times Y(\upiota^{\mathsf{Mfd}}U)$ for any smooth manifold $U$ and formal derived smooth stacks $X,Y$.
\end{proof}

\begin{definition}[Derived-extension and underived-truncation functor]\label{def:der_ext_fun}
In the diagram right above we defined the following functors:
\vspace{-0.2cm}\begin{itemize}
    \item the \textit{derived-extension functor} $i\coloneqq\upiota_!^{\mathsf{Mfd}}$ in the diagram above,
    \item the \textit{underived-truncation functor} $t_0 \coloneqq \upiota^{\mathsf{Mfd}\ast}$ in the diagram above.
\end{itemize}
\end{definition}

More concretely, the underived-truncation functor $t_0$ sends any formal derived smooth stack $X\in\mathbf{dFSmoothStack}$ to the smooth stack $t_0X\in \mathbf{SmoothStack}$ given by the composition
\begin{equation}
\begin{aligned}
    t_0X \,:\; \Mfd^\op\;\xhookrightarrow{\;\;\;\upiota^{\mathsf{Mfd}}\;\;\;}\;\mathsf{dFMfd}^\op \;\xrightarrow{\;\;\;X\;\;\;}\; \sSet.
\end{aligned}
\end{equation}

\begin{remark}[Derived-extension functor does not preserve limits]
As we noticed, the derived-extension functor $i$ preserves finite products. However, crucially, it does not generally preserve pullbacks or other limits.
\end{remark}

\begin{remark}[Homotopy pullback of non-derived stacks]
Let $f:X\rightarrow Z$ and $g:Y\rightarrow Z$ be morphisms of smooth stacks. We can consider the formal derived smooth stack given by the $(\infty,1)$-pullback
\begin{equation}
    \begin{tikzcd}[row sep={14.0ex,between origins}, column sep={16ex,between origins}]
    i(X)_{\!}\times_{i(Z)\!}^hi(Y) \arrow[d]\arrow[r] & i(Y) \arrow[d, "i(f)"] \\
    i(X) \arrow[r, "i(g)"] & i(Z).
    \end{tikzcd}
\end{equation}
Since, as we just remarked, the $(\infty,1)$-functor $i$ does not generally preserve limits, there is therefore a natural morphism of formal derived smooth stacks
\begin{equation}\label{eq:natural_map_fiber_product}
    i(X)\times^h_{i(Z)}i(Y)\;\longrightarrow\; i(X\times_Z Y),
\end{equation}
which is generally not an equivalence.
However, the underived-truncation of such a morphism
\begin{equation}
    {t}_0\big(i(X)\times_{i(Z)}^h i(Y)\big)\,\xrightarrow{\;\;\simeq\;\;}\, {t}_0i(X\times_Z Y) \,\simeq\, X\times_ZY
\end{equation}
is an equivalence of smooth stacks.
\end{remark}

\begin{example}[Derived-extension of a quotient smooth stack]\label{derived stack warm up}
Let us consider a simple smooth stack: a quotient stack $[M\!/G]\in\mathbf{SmoothStack}$, where $M$ is an ordinary smooth manifold and $G$ a Lie group.
Recall that, on a smooth manifold $U\simeq \bbR^n$ diffeomorphic to a Cartesian space, its simplicial set of sections is given by
\begin{equation*}
    [M/G](U) \;\simeq\; \mathrm{cosk}_{2\!}\left(\! \begin{tikzcd}[row sep={22.5ex,between origins}, column sep={4.5ex}]
    \Hom(U,G^{\times 2}\!\times\!M) \, \arrow[r, yshift=2.0ex , "{ }"] \arrow[r, description] \arrow[r, yshift=-2.0ex, "{ }"'] & \Hom(U,G\!\times\!M) \arrow[r, yshift=1.0ex , "{\partial_0}"] \arrow[r, yshift=-1.0ex, "{\partial_1}"'] & \Hom(U,M)
    \end{tikzcd}\!\right)\!,
\end{equation*}
where the face maps, on $1$-simplices, are $\partial_0(g,f)\mapsto f$ and $\partial_1(g,f)\mapsto g\cdot f$ and, on $2$-simplices, are given respectively by the group multiplication and by bare projections, as usual. This means that $1$-simplices from a $0$-simplex $f\in \Hom(U,M)$ to a $0$-simplex $f'\in \Hom(U,M)$ are of the form $f'=g\cdot f$ for some $g\in\Hom(U,G)$.
How does this picture of $1$-simplices generalise when we consider the space of sections of the derived-extension $i[M/G]\in\mathbf{dFSmoothStack}$ of our quotient stack?
Let now $U$ be a formal derived smooth manifold. By unravelling its definition, the simplicial set of sections of our formal derived smooth stack is of the form\footnote{From now on we will denote by $\bfR\mathrm{Hom}(X,Y)$ the $(\infty,1)$-categorical hom-space between formal derived smooth stacks $X$ and $Y$. Notice that such a notation is different from the one deployed so far for non-derived stacks.}
\begin{equation*}
    i[M/G](U) \,\simeq\,\left(\!\! \begin{tikzcd}[row sep={22.5ex,between origins}, column sep={3.0ex}] 
    \cdots\arrow[r, yshift=1.0ex , "{ }"] \arrow[r, yshift=-1.0ex, "{ }"']\arrow[r, yshift=3.0ex , "{ }"] \arrow[r, yshift=-3.0ex, "{ }"'] &
    \!\!\begin{array}{c}\bfR\Hom(U,G^{\times 2}\!\times\!M)_0\\ \times\bfR\Hom(U,G\!\times\!M)_1\\ \times \bfR\Hom(U,M)_2 \end{array} \!\! \arrow[r, yshift=2.0ex , "{ }"] \arrow[r, description] \arrow[r, yshift=-2.0ex, "{ }"'] & \!\!\begin{array}{c}\bfR\Hom(U,G\!\times\!M)_0 \\ \times\bfR\Hom(U,M)_1\end{array}\!\! \arrow[r, yshift=1.0ex , " "] \arrow[r, yshift=-1.0ex, " "'] &  \bfR\Hom(U,M)_0
    \end{tikzcd}\!\!\right)\!.
\end{equation*}
So, a $1$-simplex is a triplet $(g,f,f_1)$, where  $(g,f)\in\bfR\Hom(U,G\times M)_0$ and $f_1\in\bfR\Hom(U,M)_1$ is a homotopy of the form  $f'\xleftarrow{\;f_1\;}g\cdot f$, where this compact notation means that the homotopy $f_1$ has boundaries $\partial_0f_1 = g\cdot f$ and $\partial_1f_1 =f'$. This means that a $1$-simplex $(g,f,f_1)$ goes from $f$ to $f'$, where the latter is not anymore equal on the nose to $g\cdot f$, but only homotopic to it by $f_1$.
An analogous story holds for $2$-simplices, where homotopies of homotopies will appear and so on for higher simplices. 
This example will be propaedeutic to the study of more complicated stacks in section \ref{sec:global_aspects_of_bv_theory}.
\end{example}

\subsection{Discussion of formal derived smooth sets}

In the previous subsection we constructed formal derived smooth stacks.
In analogy with non-derived smooth stacks, we may wonder if there is any possible notion of formal derived smooth set.
We should remark, however, that there is no meaningful notion of sheaf on formal derived smooth manifolds, so that the idea of defining formal derived smooth sets this way seems hopeless.
Having said that, in this subsection we will propose a working definition of formal derived smooth sets based on a different principle: a formal derived smooth set will be defined as a formal derived smooth stack which is the derived enhancement of an ordinary smooth set.

Recall from Remark \ref{rem: embeddings} that there is a natural embedding $\mathbf{N}(\mathsf{SmoothSet})\longhookrightarrow \mathbf{SmoothStack}$ of smooth sets into smooth stacks. Moreover, by \cite[Section 5.6]{topos}, such an embedding has a left adjoint functor $\tau_0$, which is known as $0$\textit{-truncation} of smooth stacks.
So, by putting everything together, we have the following diagram of coreflective and reflective embeddings of $(\infty,1)$-categories:
\begin{equation}
    \begin{tikzcd}[row sep=scriptsize, column sep=6.2ex, row sep=15.0ex]
    \mathbf{dFSmoothStack} \arrow[r, "t_0"', ""{name=0, anchor=center, inner sep=0}, shift left=-1.0ex] & \mathbf{SmoothStack} \arrow[d, "\tau_0", ""{name=2, anchor=center, inner sep=0}, shift left=1.0ex] \arrow[l, hook, shift left=-1.0ex, "i"', ""{name=1, anchor=center, inner sep=0}] \\
    & \mathbf{N}(\mathsf{SmoothSet}) \arrow[u, hook, shift left=1.0ex, ""{name=3, anchor=center, inner sep=0}] . \arrow["\dashv"{anchor=center, rotate=-90}, draw=none, from=0, to=1]\arrow["\dashv"{anchor=center}, draw=none, from=2, to=3]
    \end{tikzcd}
\end{equation}

We have now all the ingredients to provide the definition of formal derived smooth sets.

\begin{definition}[Formal derived smooth set]
A \textit{formal derived smooth set} $X$ is a formal derived smooth stack $X\in\mathbf{dFSmoothStack}$ such that its underived-truncation $t_0X$ is in the essential image of the natural embedding $\mathbf{N}(\mathsf{FSmoothSet})\longhookrightarrow\mathbf{FSmoothStack}$.
Thus, we define the $(\infty,1)$-category of formal derived smooth sets by the pullback
\begin{equation}
    \mathbf{dFSmoothSet} \;\coloneqq\; \mathbf{dFSmoothStack}\times^h_{\mathbf{SmoothStack}}\mathbf{N}(\mathsf{SmoothSet})
\end{equation}
in the $(\infty,1)$-category of $(\infty,1)$-categories.
\end{definition}

In other words, we construct a formal derived smooth set to be a formal derived smooth stack $X$ whose underived-truncation $t_0X$ is, in particular, a $0$-truncated smooth stack, or equivalently just an ordinary smooth set.

Now, we have the following square of reflective embeddings:
\begin{equation}
    \begin{tikzcd}[row sep=scriptsize, column sep=6.2ex, row sep=15.0ex]
    \mathbf{dFSmoothStack} \arrow[r, "t_0"',""{name=0, anchor=center, inner sep=0},  shift left=-1.0ex] \arrow[d, "\tau_0", shift left=1.0ex, ""{name=6, anchor=center, inner sep=0}] & \mathbf{SmoothStack} \arrow[d, "\tau_0", shift left=1.0ex, ""{name=2, anchor=center, inner sep=0}] \arrow[l, hook, shift left=-1.0ex] \\
    \mathbf{dFSmoothSet} \arrow[r, "t_0"', shift left=-1.0ex, ""{name=4, anchor=center, inner sep=0}] \arrow[u, hook, shift left=1.0ex, ""{name=7, anchor=center, inner sep=0}, ] & \mathsf{SmoothSet} \arrow[l, hook, ""{name=5, anchor=center, inner sep=0},  shift left=-1.0ex] \arrow[u, hook, shift left=1.0ex, ""{name=3, anchor=center, inner sep=0}] \arrow["\dashv"{anchor=center, rotate=-90}, draw=none, from=0, to=1]\arrow["\dashv"{anchor=center}, draw=none, from=2, to=3] \arrow["\dashv"{anchor=center, rotate=-90}, draw=none, from=4, to=5]\arrow["\dashv"{anchor=center}, draw=none, from=6, to=7].
    \end{tikzcd}
\end{equation}
Reflective sub-categories are stable under pullback along cocartesian fibrations, as shown for example in \cite[Proposition 6.2.2.17]{Kerodon}. But any left fibration is a cocartesian fibration, as seen in \cite[Example 3.3]{Barwick:2016}, so $\tau_0$ on the right a cocartesian fibration. This implies that $\mathbf{dFSmoothSet}\hookrightarrow\mathbf{dFSmoothStack}$ is itself a reflective sub-category.

Let us now look at a few relevant examples of formal derived smooth sets, which will be useful later in dealing with physics.  

\begin{example}[Formal derived smooth manifold]
The simplest, but also the archetypal, class of examples of formal derived smooth set is provided by the formal derived smooth manifolds themselves.
Let $M\in\mathsf{dFMfd}$ be a formal derived smooth manifold. It naturally Yoneda-embeds into a formal derived smooth set of the form
\begin{equation}
\begin{aligned}
    M\,:\,\mathsf{dFMfd} \,&\longrightarrow\, \sSet  \\
    U \,&\longmapsto\,  \bfR\Hom_\mathsf{dFMfd}(U,M),
\end{aligned}
\end{equation}
where $\bfR\Hom_\mathsf{dFMfd}(U,M)= \bfR\Hom_\mathsf{sC^\infty Alg_\mathrm{fp}}\big(\mathcal{O}(M),\mathcal{O}(U)\big)$ and $\mathcal{O}(M),\mathcal{O}(U)$ are respectively the homotopy $\Coo$-algebras of functions on $M,U$.
Thus, we have an embedding of $(\infty,1)$-categories $\mathbf{dFMfd} \,\longhookrightarrow\, \mathbf{dFSmoothSet}$.
\end{example}
We can now explicitly show that the natural embedding of smooth manifolds into derived smooth manifolds is compatible with the embedding of smooth sets into formal derived smooth sets, i.e. with the derived-extension functor. 

\begin{example}[Ordinary smooth manifolds]\label{ex:explanation_i}
Recall that, given a smooth manifold $M\in\mathsf{Mfd}$, we have that it Yoneda embeds into smooth sets to the functor $M:U\mapsto \Hom_{\mathsf{Mfd}}(U,M)$ on the site of smooth manifolds $U\in\mathsf{Mfd}$.
The derived-extension functor embeds this smooth set into the following formal derived smooth set
\begin{equation}
\begin{aligned}
    i(M)\,:\,\mathsf{dFMfd} \,&\longrightarrow\, \sSet  \\
    U \,&\longmapsto\,  \bfR\Hom_\mathsf{dFMfd}(U,\upiota^{\mathsf{Mfd}}(M)),
\end{aligned}
\end{equation}
where $\upiota^{\mathsf{Mfd}}:\mathsf{Mfd}\hookrightarrow\mathsf{dFMfd}$ is the natural embedding of smooth manifolds into formal derived smooth manifolds.
\end{example}

The following is the first non-obvious class of examples which we can study in the context of formal derived smooth sets.

\begin{example}[Formal derived mapping space]
A more interesting class of examples of formal derived smooth sets is provided by mapping spaces.
Let $M,N\in\mathsf{Mfd}$ be a pair of ordinary smooth manifolds. We can define a formal derived smooth set $[iM,iN]\in\mathsf{dFSmoothSet}$ by
\begin{equation}
\begin{aligned}
    [iM,iN]\,:\,\mathsf{dFMfd} \,&\longrightarrow\, \sSet  \\
    U \,&\longmapsto\,  \bfR\Hom_\mathsf{dFMfd}(U\times \upiota^{\mathsf{Mfd}}(M),\,\upiota^{\mathsf{Mfd}}(N)),
\end{aligned}
\end{equation}
functorially on elements $U\in\mathsf{dFMfd}$ of the site. 
This is the natural derived enhancement of the ordinary mapping space of two ordinary smooth manifolds.
To see that this is indeed a formal derived smooth set, it is enough to notice that we have the equivalences of simplicial sets
$[iM,iN](\ast) \simeq \bfR\Hom_\mathsf{dFMfd}(\upiota^{\mathsf{Mfd}}(M),\,\upiota^{\mathsf{Mfd}}(N))\simeq \Hom_{\Mfd}(M,N)$.
\end{example}

Let, more generally, $M,N\in\mathbf{dFMfd}$ be a pair of formal derived smooth manifolds. Then we can construct their formal derived mapping stack
$[M,N]\,:\, U \,\mapsto\,  \bfR\Hom_\mathsf{dFMfd}(U\times M,N)$.
However, notice that this is not a derived formal smooth set, contrarily to what one may have expected. To see this, one can pick $U=\ast$, so that $[M,N](\ast) \simeq \bfR\Hom_\mathsf{dFMfd}(M,N)$ is generally not a constant simplicial set.

\subsubsection{Derived affine $\Coo$-schemes}

We will now introduce a fundamental and very concrete class of examples of formal derived smooth sets: derived affine $\Coo$-schemes. 
These geometric objects are defined similarly to the derived affine schemes of derived algebraic geometry, but instead of derived commutative algebras, they correspond to homotopy $\Coo$-algebras.

\begin{remark}[Searching for formal derived smooth pro-manifolds]\label{foot:pro}
The ind-category of an $(\infty,1)$-category $\mathbf{C}$ is defined by $\mathrm{Ind}(\mathbf{C}) \simeq [\mathbf{C}^\op,\mathbf{\infty Grpd}]_{\mathrm{acc,lex}}$, where we called $[-,-]_{\mathrm{acc,lex}}$ the $(\infty,1)$-category of functors which are accessible and left-exact (see for instance \cite[Section 5.3]{topos}).
In the case of formal derived smooth manifolds, \cite[Theorem 3.10]{Carchedi2019OnTU} tells us that the $(\infty,1)$-category $\mathbf{sC^\infty Alg}$ of homotopy $\Coo$-algebras is compactly generated and, in particular, there is an equivalence
\begin{equation}\label{eq:int}
    \mathrm{Ind}\big(\mathbf{sC^\infty Alg}_\mathrm{fp}\big) \;\simeq\; \mathbf{sC^\infty Alg}
\end{equation}
between the ind-$(\infty,1)$-category of finitely presented homotopy $\Coo$-algebras and the $(\infty,1)$-category of homotopy $\Coo$-algebras.
The pro-$(\infty,1)$-category $\mathrm{Pro}(\mathbf{C})$ of any given $(\infty,1)$-category $\mathbf{C}$ is defined by the equivalence $\mathrm{Pro}(\mathbf{C}) \simeq \mathrm{Ind}(\mathbf{C}^\op)^\op$.
Thus, from the equivalence \eqref{eq:int}, we can be immediately obtain the following equivalences:
\begin{equation}\label{corollary:pro}
    \mathrm{Pro}(\mathbf{dMfd}) \;\simeq\; \mathrm{Ind}\big(\mathbf{sC^\infty Alg}_\mathrm{fp}\big)^\op \;\simeq\; \mathbf{sC^\infty Alg}^\op,
\end{equation}
where $\mathbf{dMfd}\simeq \mathbf{sC^\infty Alg}_\mathrm{fp}^\op$ is the $(\infty,1)$-category of derived manifolds in the sense of \cite{Carchedi2019OnTU}.
Thus, there is a natural notion of pro-object for the $(\infty,1)$-category of derived manifolds, which can be seen as the opposite of a general homotopy $\Coo$-algebra.
This provides a motivation for the definition of derived affine $\Coo$-schemes: they can be seen as derived pro-manifolds.
\end{remark}

\begin{definition}[Derived affine $\Coo$-scheme]
We define the $(\infty,1)$-category of \textit{derived affine $\Coo$-schemes} by the opposite $(\infty,1)$-category of homotopy $\Coo$-algebras, i.e. by
\begin{equation}
    \mathbf{dC^{\infty\!\!} Aff} \;\coloneqq\; \mathbf{sC^\infty Alg}^\op.
\end{equation}
An alternative nomenclature for such spaces would be \textit{derived pro-manifolds}, in the light of the discussion at Remark \ref{foot:pro} above.
\end{definition}

\begin{lemma}[Derived affine $C^\infty$-schemes are formal derived smooth stacks]\label{lem:FdMfd_are_dCooAff}
There is a natural embedding of derived affine $C^\infty$-schemes into formal derived smooth stacks.
If we denote by $\bfR\Spec(A)\in \mathbf{dC^{\infty\!\!} Aff}$ the derived affine $\Coo$-scheme whose homotopy $\Coo$-algebra is $A\in\mathbf{sC^\infty Alg}$, its embedding into formal derived smooth stacks is given by
\begin{equation}
\begin{aligned}
    \bfR\Spec(A)\,:\,\mathsf{dFMfd} \,&\longrightarrow\, \sSet  \\
    U \,&\longmapsto\,  \bfR\Hom_\mathsf{sC^\infty Alg}(A,\,\mathcal{O}(U)).
\end{aligned}
\end{equation}
\end{lemma}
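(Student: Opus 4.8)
The plan is to realise $\bfR\Spec$ as the restricted (co-)Yoneda functor along the chain of inclusions $\mathbf{dMfd}\simeq\mathbf{sC^\infty Alg}_{\mathrm{fp}}^\op \hookrightarrow \mathbf{dFMfd} = \mathbf{sC^\infty Alg}_{\mathrm{fg}}^\op \hookrightarrow \mathbf{dC^{\infty\!\!} Aff} = \mathbf{sC^\infty Alg}^\op$, and then to prove two things: that the presheaf $U\mapsto\bfR\Hom(A,\mathcal{O}(U))$ actually satisfies \'etale descent (so that it is a genuine formal derived smooth stack), and that the assignment $A\mapsto\bfR\Spec(A)$ is fully faithful. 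Throughout I would use the compact generation recorded in Remark \ref{foot:pro}: every $A\in\mathbf{sC^\infty Alg}$ is a filtered colimit $A\simeq\mathrm{colim}_\alpha A_\alpha$ of finitely presented homotopy $\Coo$-algebras, and finitely presented implies finitely generated, so each $A_\alpha$ lies in $\mathbf{sC^\infty Alg}_{\mathrm{fg}}$ and in fact defines a derived manifold $U_\alpha\coloneqq\bfR\Spec(A_\alpha)\in\mathbf{dMfd}$.

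For descent, I would first note that contravariance of the internal hom gives, pointwise on any $U\in\mathbf{dFMfd}$,
\[
\bfR\Spec(A)(U) \;=\; \bfR\Hom(\mathrm{colim}_\alpha A_\alpha,\,\mathcal{O}(U)) \;\simeq\; \lim_\alpha\bfR\Hom(A_\alpha,\,\mathcal{O}(U)),
\]
so that $\bfR\Spec(A)\simeq\lim_\alpha\bfR\Spec(A_\alpha)$ as a presheaf. Each $\bfR\Spec(A_\alpha)$ is the representable presheaf on $U_\alpha\in\mathbf{dMfd}\subset\mathbf{dFMfd}$, which is a stack because the \'etale topology on $\mathbf{dFMfd}$ is subcanonical (a standard consequence of the local projective model structure of \cite{ToenVezzo05} used to define $\mathbf{dFSmoothStack}$). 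Since the inclusion $\mathbf{dFSmoothStack}\hookrightarrow[\mathsf{dFMfd}^\op,\sSet]$ is a right adjoint, stacks are closed under limits of presheaves, whence $\bfR\Spec(A)$ is a stack. This produces the functor $\bfR\Spec:\mathbf{dC^{\infty\!\!} Aff}\to\mathbf{dFSmoothStack}$ and simultaneously shows it preserves cofiltered limits.

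For full faithfulness I would argue in two stages. On finitely generated algebras the functor is simply the Yoneda embedding $j:\mathbf{dFMfd}\hookrightarrow\mathbf{dFSmoothStack}$, which is fully faithful by the same subcanonicity; concretely $\bfR\Hom(\bfR\Spec B',\bfR\Spec A')\simeq\bfR\Hom_{\mathbf{sC^\infty Alg}}(A',B')$ for $A',B'\in\mathbf{sC^\infty Alg}_{\mathrm{fg}}$. For general $A,B$, writing $A\simeq\mathrm{colim}_\alpha A_\alpha$ with $A_\alpha$ finitely presented and using that mapping into a limit commutes with that limit,
\[
\bfR\Hom_{\mathbf{dFSmoothStack}}(\bfR\Spec B,\,\bfR\Spec A)\;\simeq\;\lim_\alpha\bfR\Hom_{\mathbf{dFSmoothStack}}(\bfR\Spec B,\,j(U_\alpha)),
\]
so it suffices to identify $\bfR\Hom_{\mathbf{dFSmoothStack}}(\bfR\Spec B,\,j(U_\alpha))\simeq\bfR\Hom_{\mathbf{sC^\infty Alg}}(A_\alpha,B)$ for $U_\alpha\in\mathbf{dMfd}$, after which $\lim_\alpha\bfR\Hom(A_\alpha,B)\simeq\bfR\Hom(\mathrm{colim}_\alpha A_\alpha,B)\simeq\bfR\Hom(A,B)$ closes the argument.

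The hard part will be exactly this last identification: writing the pro-object $\bfR\Spec B\simeq\lim_\beta j(V_\beta)$ with $V_\beta\in\mathbf{dMfd}$, one must show that mapping it into the representable $j(U_\alpha)$ exchanges the cofiltered limit for a filtered colimit,
\[
\bfR\Hom_{\mathbf{dFSmoothStack}}(\lim_\beta j(V_\beta),\,j(U_\alpha))\;\simeq\;\mathrm{colim}_\beta\,\bfR\Hom_{\mathbf{dFSmoothStack}}(j(V_\beta),\,j(U_\alpha)).
\]
Mapping out of a limit does not commute with that limit in general, so this step is not formal. I would justify it through the cocompactness of $j(U_\alpha)$ arising from $\mathcal{O}(U_\alpha)=A_\alpha$ being a \emph{compact} (finitely presented) object of $\mathbf{sC^\infty Alg}$, together with the identification $\mathbf{dC^{\infty\!\!} Aff}\simeq\mathrm{Pro}(\mathbf{dMfd})$ of Remark \ref{foot:pro}. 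Indeed, the right-hand side equals $\mathrm{colim}_\beta\bfR\Hom(A_\alpha,\mathcal{O}(V_\beta))\simeq\bfR\Hom(A_\alpha,\mathrm{colim}_\beta\mathcal{O}(V_\beta))\simeq\bfR\Hom(A_\alpha,B)$ by compactness, so the entire content is that the pro-extension of the fully faithful embedding $\mathbf{dMfd}\hookrightarrow\mathbf{dFSmoothStack}$ remains fully faithful — the one genuinely non-formal input, and precisely the place where the finitely presented/compact hypothesis on the targets $U_\alpha$ is indispensable.
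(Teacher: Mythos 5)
Your reduction is set up correctly, and you are right to flag the displayed exchange
\[
\bfR\Hom_{\mathbf{dFSmoothStack}}\bigl(\lim_\beta j(V_\beta),\,j(U_\alpha)\bigr)\;\simeq\;\mathrm{colim}_\beta\,\bfR\Hom_{\mathbf{dFSmoothStack}}\bigl(j(V_\beta),\,j(U_\alpha)\bigr)
\]
as the crux; but neither of the two justifications you offer actually establishes it, so the full-faithfulness half of the argument has a genuine hole. Compactness of $A_\alpha$ in $\mathbf{sC^\infty Alg}$ only computes the \emph{right}-hand side: it gives $\mathrm{colim}_\beta\bfR\Hom(A_\alpha,\mathcal{O}(V_\beta))\simeq\bfR\Hom(A_\alpha,B)$, and says nothing about mapping \emph{out of} a cofiltered limit in the stack category; representables are essentially never cocompact in a presheaf $(\infty,1)$-category. (Already for set-valued presheaves on finite sets, taking $V_n=\{0,1\}^n$ with the projections, $\mathrm{Hom}(\lim_n j(V_n),j(U))$ is the set of \emph{all} functions from the Cantor set to $U$, vastly larger than the colimit $\mathrm{colim}_n\mathrm{Hom}(V_n,U)$ of cylinder functions.) Invoking $\mathbf{dC^{\infty\!\!}Aff}\simeq\mathrm{Pro}(\mathbf{dMfd})$ is circular: the hom-spaces of the pro-category are \emph{defined} by the colimit formula on the right, so the displayed equivalence is precisely the assertion that $\mathrm{Pro}(\mathbf{dMfd})\to\mathbf{dFSmoothStack}$ is fully faithful --- the statement you are trying to prove. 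Concretely, if you compute the left-hand side by writing $\bfR\Spec B$ as a colimit of representables over its category of elements, you obtain a limit of $\bfR\Hom(A_\alpha,\mathcal{O}(W))$ indexed by the coslice $B/\mathbf{sC^\infty Alg}_{\mathrm{fg}}$, a diagram of a completely different shape from the filtered presentation $B\simeq\mathrm{colim}_\beta\mathcal{O}(V_\beta)$; a map of stacks $\bfR\Spec B\to j(U_\alpha)$ has no a priori reason to factor through any finite stage $j(V_\beta)$.

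For comparison, the paper argues in the opposite direction and thereby never has to commute a cofiltered limit past a mapping space: it builds the colimit-preserving functor $\mathcal{O}:\mathbf{dFSmoothStack}\to\mathbf{sC^{\infty\!}Alg}^\op$ with $\mathcal{O}(X)\simeq\bfR\lim_{U\to X}\mathcal{O}(U)$, obtains $\bfR\Spec$ as its right adjoint via presentability and the adjoint functor theorem (which also yields the stated pointwise formula and makes the descent question moot, since a right adjoint of the localisation lands in stacks), and then reduces full faithfulness to the single identity $\mathcal{O}(\bfR\Spec A)\simeq A$ for the counit. If you want to salvage your route, the missing ingredient is exactly an argument for that counit equivalence, i.e.\ for $\bfR\lim_{A\to\mathcal{O}(U)}\mathcal{O}(U)\simeq A$; that is where all the non-formal content sits, and it cannot be bought with compactness of the $A_\alpha$ alone. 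Your descent argument, by contrast, is fine (modulo noting that subcanonicity of the \'etale topology on $\mathbf{dFMfd}$ needs to be recorded) and is in fact more explicit than what the paper provides on that point.
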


\begin{proof}
Recall that we have an embedding $\mathbf{dFMfd}\simeq\mathbf{sC^{\infty\!}Alg}_\mathrm{fg}^\op\longhookrightarrow \mathbf{sC^{\infty\!}Alg}^\op$.
We can construct a functor $\mathcal{O}:\mathbf{dFSmoothStack}\longrightarrow \mathbf{sC^{\infty\!}Alg}^\op$ by Yoneda extension of such an embedding.
More concretely, we can write any formal derived smooth stack $X\in\mathbf{dFSmoothStack}$ as the colimit of representables and construct the limit of homotopy $\Coo$-algebras
\begin{equation}
    \mathcal{O}(X) \;\simeq\; \bfR\!\!\lim_{\!\!U\rightarrow X}\mathcal{O}(U) \quad \text{for}\quad   X \;\simeq\; \bfL\mathrm{co}\!\!\lim_{\!\!U\rightarrow X} U,
\end{equation}
where $\mathcal{O}(U)$ is the usual homotopically finitely presented $\Coo$-algebra of functions on the formal derived smooth manifold $U$.
Since limits become colimits in the opposite category, by construction, the $(\infty,1)$-functor $\mathcal{O}$ preserves colimit.
Notice that both $\mathbf{dFSmoothStack}$ and $\mathbf{sC^{\infty\!}Alg}$ are presentable $(\infty,1)$-categories, the former since it is an $(\infty,1)$-topos and the latter by \cite[Proposition 3.6]{Carchedi2019OnTU}.
Therefore, by the adjoint $(\infty,1)$-functor theorem, the $(\infty,1)$-functor $\mathcal{O}$ has a right adjoint $\bfR\Spec:\mathbf{sC^{\infty\!}Alg}^\op\longrightarrow\mathbf{dFSmoothStack}$.
In fact, for any $X\in\mathbf{dFSmoothStack}$ and $A\in\mathbf{sC^{\infty\!}Alg}$ we have the following chain of equivalences:
\begin{equation}
    \begin{aligned}
    \bfR\Hom(X,\bfR\Spec A) \;&\simeq\; \bfR\Hom\big(\bfL\mathrm{co}\!\!\lim_{\!\!U\rightarrow X} U, \,\bfR\Spec A\big)  \\
    &\simeq\; \bfR\!\!\lim_{\!\!U\rightarrow X} \bfR\Hom(U,\,\bfR\Spec A) \\
    &\simeq\; \bfR\!\!\lim_{\!\!U\rightarrow X} \bfR\Hom_{\mathbf{sC^{\infty\!}Alg}}(A,\,\mathcal{O}(U)) \\
    &\simeq\;  \bfR\Hom_{\mathbf{sC^{\infty\!}Alg}}\big(A,\,\bfR\!\!\lim_{\!\!U\rightarrow X}\mathcal{O}(U)\big) \\
    &\simeq\;  \bfR\Hom_{\mathbf{sC^{\infty\!}Alg}^\op}(\mathcal{O}(X),\,A) 
    \end{aligned}
\end{equation}
Now, a sufficient and necessary condition for $\bfR\Spec$ being a fully faithful $(\infty,1)$-functor is that the counit is an equivalence, which means that the morphism $\mathcal{O}(\bfR\Spec A) \,\xrightarrow{\;\,\simeq\,\;}\, A$.
must be an equivalence for any homotopy $\Coo$-algebra $A$.
Notice that we have the equivalences
\begin{equation}
    \begin{aligned}
    \mathcal{O}(\bfR\Spec A) \;&\simeq\; \;\;\bfR\!\!\!\!\!\!\!\!\lim_{\!\!U\rightarrow \bfR\Spec A}\! \mathcal{O}(U)  \;\,\simeq\; \;\bfR\!\!\!\!\!\!\lim_{\!\!A\rightarrow\mathcal{O}(U)}\! \mathcal{O}(U)   \;\simeq\;  A,
    \end{aligned}
\end{equation}
where in the second line we used the fact that $\bfR\Spec$ is the right adjoint to $\mathcal{O}$. This then shows that the $(\infty,1)$-functor $\bfR\Spec$ is indeed full and faithful.
\end{proof}

The relevance of derived affine $\Coo$-schemes will be mostly a consequence of the fact that they constitute a particularly tractable example of formal derived smooth sets which generalise formal derived smooth manifolds.

\begin{remark}[Formal derived smooth manifolds are derived affine $C^\infty$-schemes]
There is a natural (coreflective) embedding $\mathbf{dFMfd}\simeq\mathbf{sC^{\infty\!} Alg}_\mathrm{fg}^\op\hookrightarrow \mathbf{sC^{\infty\!}Alg}^\op$, since any derived smooth manifold $M$ is immediately equivalent to the spectrum of its homotopy $\Coo$-algebra of functions, i.e. $M\simeq\bfR\Spec\mathcal{O}(M)$.
This embedding allows us to naturally embed formal derived smooth manifolds into derived $\Coo$-schemes.
Thus, by combining this fact with proposition \ref{lem:FdMfd_are_dCooAff}, we obtain the following inclusions of $(\infty,1)$-categories:
\begin{equation}
    \mathbf{dFMfd} \;\longhookrightarrow\; \mathbf{dC^{\infty\!\!} Aff} \;\longhookrightarrow\; \mathbf{dFSmoothSet} \;\longhookrightarrow\; \mathbf{dFSmoothStack},
\end{equation}
where, as before, $\mathbf{dFMfd}$ is the $(\infty,1)$-category of formal derived smooth manifolds, $\mathbf{dC^{\infty\!\!} Aff}$ is the $(\infty,1)$-category of derived $\Coo$-schemes and $\mathbf{dFSmoothSet}$ is the $(\infty,1)$-category of formal derived smooth sets.
\end{remark}

By construction above, the $(\infty,1)$-functor $\bfR\Spec:\mathbf{sC^{\infty\!}Alg}^\op\longrightarrow\mathbf{dFSmoothStack}$ preserves limits. Thus, we have the following corollary.

\begin{corollary}[Pullbacks of affine derived $\Coo$-schemes]
We have the following equivalence of formal derived smooth stacks:
\begin{equation}
    \bfR\Spec A \times^h_{\bfR\Spec C}\bfR\Spec B \;\,\simeq\,\; \bfR\Spec (A\,\widehat{\otimes}^\bfL_C\,B),
\end{equation}
for any given homotopy $\Coo$-algebras $A,B,C\in\mathbf{sC^{\infty\!}Alg}$.
\end{corollary}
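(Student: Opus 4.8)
The plan is to reduce the statement to the single fact, recorded just above, that $\bfR\Spec$ preserves $(\infty,1)$-limits, together with the observation that a derived $\Coo$-tensor product is nothing but a homotopy pullback once one passes to the opposite $(\infty,1)$-category. No genuinely new input is needed beyond these two ingredients and the pushout–pullback duality.

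First I would recall that the derived $\Coo$-tensor product is by definition the homotopy pushout $A\,\widehat{\otimes}^\bfL_C\,B \simeq A\sqcup^h_C B$ in $\mathbf{sC^\infty Alg}$, formed over the structure maps $C\to A$ and $C\to B$. Passing to the opposite $(\infty,1)$-category $\mathbf{dC^{\infty\!\!} Aff} = \mathbf{sC^\infty Alg}^\op$ reverses every arrow, so this pushout square becomes a pullback square. Concretely, the object $A\,\widehat{\otimes}^\bfL_C\,B$ together with the images of the two canonical coprojections is exhibited as the homotopy fibre product $A\times^h_C B$ computed in $\mathbf{sC^\infty Alg}^\op$, over the reversed maps $A\to C$ and $B\to C$.

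Next I would apply $\bfR\Spec$ to this pullback square. Since $\bfR\Spec$ is the right adjoint of $\mathcal{O}$, as established in Lemma \ref{lem:FdMfd_are_dCooAff}, and right adjoints preserve limits — the fact stated in the remark immediately preceding this corollary — the functor $\bfR\Spec$ carries the homotopy pullback $A\times^h_C B$ in $\mathbf{sC^\infty Alg}^\op$ to the homotopy pullback of the image diagram in $\mathbf{dFSmoothStack}$. This yields directly
\begin{equation}
    \bfR\Spec\big(A\,\widehat{\otimes}^\bfL_C\,B\big)\;\simeq\;\bfR\Spec A\times^h_{\bfR\Spec C}\bfR\Spec B,
\end{equation}
which is the claimed equivalence of formal derived smooth stacks.

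I expect essentially no obstacle here beyond careful bookkeeping: the only point deserving attention is the variance in the pushout–pullback duality, namely verifying that the cone consisting of $A\to A\,\widehat{\otimes}^\bfL_C\,B$ and $B\to A\,\widehat{\otimes}^\bfL_C\,B$ really does become a limit cone after reversing arrows, so that the universal property preserved by $\bfR\Spec$ is the intended one. Everything else is formal, following from preservation of limits by a right adjoint.
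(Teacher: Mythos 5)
Your argument is exactly the one the paper intends: the corollary is stated as an immediate consequence of the fact that $\bfR\Spec$, being a right adjoint to $\mathcal{O}$, preserves $(\infty,1)$-limits, combined with the observation that the derived $\Coo$-tensor product is a homotopy pushout in $\mathbf{sC^\infty Alg}$ and hence a homotopy pullback in the opposite category. Your proof is correct and matches the paper's route, with the variance bookkeeping spelled out slightly more explicitly.
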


\begin{remark}[Underived-truncation of derived affine $\Coo$-schemes]
Notice that the underived-truncation functor sends a derived affine $\Coo$-scheme $\bfR\Spec(R)\in\mathbf{dC^{\infty\!\!}Aff}$ corresponding to a simplicial $\Coo$-algebra $R\in\mathsf{sC^\infty Alg}$ to an ordinary affine $\Coo$-scheme 
\begin{equation}
    t_0\bfR\Spec(R) \,\simeq\, \Spec(\pi_0R),
\end{equation}
corresponding to the ordinary $\Coo$-algebra $\pi_0R\in\mathsf{C^\infty Alg}$.
\end{remark}

\begin{remark}[Derived-extension of affine $\Coo$-schemes]
Notice that the derived-extension functor $i$ sends an ordinary affine $\Coo$-scheme $\Spec(R)$ corresponding to the ordinary $\Coo$-algebra $R\in\mathsf{C^\infty Alg}$ to a derived affine $\Coo$-scheme 
\begin{equation}
    i\Spec(R) \,\simeq\, \bfR\Spec(\iota(R))
\end{equation}
in $\mathbf{dC^{\infty\!\!}Aff}$, which corresponds to the homotopy $\Coo$-algebra $\iota(R)\in\mathbf{sC^\infty Alg}$.
\end{remark}

More generally, these last remarks provide a good intuition for the role played by the underived-truncation and derived-extension of formal derived smooth stacks.

\subsubsection{Formal derived diffeological spaces}

In this subsection we will define and explore the derived version of a diffeological space, which we will call formal derived diffeological space. 
Recall from Definition \ref{def:diffeological_space} that an ordinary diffeological space is a concrete smooth set, i.e. a concrete sheaf on the site of ordinary smooth manifolds.

\begin{definition}[Formal derived diffeological space]
The \textit{$(\infty,1)$-category of formal derived diffeological spaces} is defined by the pullback of $(\infty,1)$-categories
\begin{equation}
    \mathbf{dFDiffSp} \;\coloneqq\; \mathbf{dFSmoothStack}\times_{\mathbf{SmoothStack}}^h\mathbf{N}(\mathsf{DiffSp}),
\end{equation}
An element of such an $(\infty,1)$-category will be called \textit{formal derived diffeological space}.
\end{definition}

In other words, we have a pullback diagram
\begin{equation}
    \begin{tikzcd}[row sep=scriptsize, column sep=5.5ex, row sep=16.5ex]
    \mathbf{dFDiffSp}  \arrow[d, "", hook]  \arrow[r, "t_0"] & \mathbf{N}(\mathsf{DiffSp}) \arrow[d, "", hook] \\
    \mathbf{dFSmoothStack} \arrow[r, "t_0"] &  \mathbf{SmoothStack},
    \end{tikzcd}
\end{equation}
which, since monomorphisms are stable under pullback by \cite[Proposition 6.5.1.16]{topos}, makes $\mathbf{dFDiffSp} \hookrightarrow \mathbf{dFSmoothSet}$ a full and faithful reflective sub-$(\infty,1)$-category.

\begin{lemma}[Derived affine $\Coo$-schemes are formal derived diffeological spaces]
The $(\infty,1)$-category $\mathbf{dC^{\infty\!\!}Aff}$ of derived affine $\Coo$-schemes is a full and faithful sub-$(\infty,1)$-category of the $(\infty,1)$-category $\mathbf{dFDiffSp}$ of formal derived diffeological spaces.
\end{lemma}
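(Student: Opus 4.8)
The plan is to promote the fully faithful embedding $\bfR\Spec:\mathbf{dC^{\infty\!\!}Aff}\hookrightarrow\mathbf{dFSmoothStack}$ of \cref{lem:FdMfd_are_dCooAff} to a fully faithful functor landing in $\mathbf{dFDiffSp}$. By the defining pullback of $\mathbf{dFDiffSp}$, a derived affine $\Coo$-scheme $\bfR\Spec(A)$ lies in $\mathbf{dFDiffSp}$ precisely when its underived-truncation $t_0\bfR\Spec(A)$ belongs to the essential image of $\mathbf{N}(\mathsf{DiffSp})\hookrightarrow\mathbf{SmoothStack}$, i.e. is a diffeological space. Granting this for all $A$, the resulting functor $\mathbf{dC^{\infty\!\!}Aff}\to\mathbf{dFDiffSp}$ is fully faithful by cancellation: its composite with the fully faithful inclusion $\mathbf{dFDiffSp}\hookrightarrow\mathbf{dFSmoothStack}$ is the fully faithful $\bfR\Spec$, which forces the induced functor to be fully faithful on all hom-spaces.

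First I would pin down the truncation. By the preceding remark we already know $\mathbf{dC^{\infty\!\!}Aff}\hookrightarrow\mathbf{dFSmoothSet}$, so $t_0\bfR\Spec(A)$ is an ordinary smooth set; concretely, using $t_0\bfR\Spec(R)\simeq\Spec(\pi_0R)$, for every ordinary smooth manifold $U\in\Mfd$ unwinding the definitions gives
\begin{equation}
    t_0\bfR\Spec(A)(U)\;\simeq\;\bfR\Hom_{\mathbf{sC^\infty Alg}}\big(A,\iota\Coo(U)\big)\;\simeq\;\Hom_{\mathsf{C^\infty Alg}}\big(\pi_0A,\Coo(U)\big),
\end{equation}
the second equivalence being the adjunction $(\pi_0\dashv\iota)$ of \eqref{eq:coref} together with the discreteness of $\iota\Coo(U)$. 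Writing $B\coloneqq\pi_0A$, it therefore remains only to check that the smooth set $\Spec(B)$ is \emph{concrete} in the sense of \cref{def:diffeological_space}.

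This concreteness check is the heart of the argument. The set of global sections is $\mathit{\Gamma}\Spec(B)\simeq\Hom_{\mathsf{C^\infty Alg}}(B,\bbR)$, a point being a $\Coo$-algebra map $B\to\bbR=\Coo(\ast)$. For a manifold $U$ the comparison map
\begin{equation}
    \Spec(B)(U)=\Hom_{\mathsf{C^\infty Alg}}\big(B,\Coo(U)\big)\;\longrightarrow\;\Hom_\mathsf{Set}\big(\mathit{\Gamma}U,\,\mathit{\Gamma}\Spec(B)\big)
\end{equation}
carries $\phi:B\to\Coo(U)$ to the assignment $u\mapsto\mathrm{ev}_u\circ\phi$, where $\mathrm{ev}_u:\Coo(U)\to\bbR$ is pullback along the point $u:\ast\to U$. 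I would establish injectivity directly: if $\phi,\psi$ have the same image then $\phi(b)(u)=\psi(b)(u)$ for all $b\in B$ and all $u\in U$, and since smooth functions on a manifold are determined by their pointwise values this forces $\phi(b)=\psi(b)$ in $\Coo(U)$ for every $b$, hence $\phi=\psi$. Thus the comparison map is a monomorphism and $\Spec(B)$ is a diffeological space.

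I do not anticipate a serious obstacle, and it is worth noting why: the injectivity argument uses only that $\Coo(U)$ is point-determined, a property of the manifold $U$ holding regardless of whether $B=\pi_0A$ is finitely generated, reduced, or carries nilpotents. The one point meriting care is the reduction itself — verifying that restricting $t_0\bfR\Spec(A)$ to the ordinary site $\Mfd$ really yields the representable sheaf $\Spec(\pi_0A)$ with no surviving higher homotopy — which follows from the discreteness of $\iota\Coo(U)$ via the adjunction $(\pi_0\dashv\iota)$. After this the concreteness argument is uniform across all homotopy $\Coo$-algebras, giving the desired fully faithful inclusion $\mathbf{dC^{\infty\!\!}Aff}\hookrightarrow\mathbf{dFDiffSp}$.
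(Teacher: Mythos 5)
Your proof is correct and follows essentially the same route as the paper's: reduce to showing the underived-truncation of $\bfR\Spec(A)$ is a concrete sheaf on $\Mfd$, then check injectivity of the comparison map into $\Hom_{\mathsf{Set}}(\mathit{\Gamma}U,\mathit{\Gamma}\Spec(B))$. The only (immaterial) differences are that you work with $\pi_0A$ where the paper passes to $(\pi_0A)^{\mathrm{red}}$ — these represent the same sheaf on ordinary manifolds since $\Coo(U)$ is reduced — and your injectivity argument via point-determinedness of $\Coo(U)$ is in fact a slightly sharper justification than the paper's appeal to reducedness of both algebras.
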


\begin{proof}
Derived affine $\Coo$-schemes form a full and faithful sub-$(\infty,1)$-category of formal derived smooth stacks. Therefore, it is enough to show that every object of $\mathbf{dC^{\infty\!\!}Aff}$ is an object of $\mathbf{dFDiffSp}$.
Consider a derived affine $\Coo$-scheme $\bfR\Spec(R)\in\mathbf{dC^{\infty\!\!}Aff}$, for any given homotopy $\Coo$-algebra $R\in\mathsf{sC^\infty Alg}$. Its underived-truncation is the ordinary $\Coo$-scheme $t_0\bfR\Spec(R)\simeq \Spec (R^{\mathrm{red}})$ with $R^{\mathrm{red}}=\pi_0(R)/\mathfrak{m}_{\pi_0(R)}$.
Thus, it is enough to show that $\Spec(R^{\mathrm{red}})$ is an ordinary diffeological space, i.e that it is a concrete sheaf on the site of smooth manifolds: namely, that for any ordinary smooth manifold $U\in\Mfd$ there is an injective map of sets 
\begin{equation}
    \Hom_{\mathsf{C^\infty Alg}}(R^\mathrm{red}\!,\,\Coo(U)) \;\longhookrightarrow\; \Hom_\mathsf{Set}\big(\mathit{\Gamma}(U), \,\mathit{\Gamma}(\Spec R^\mathrm{red})\big),
\end{equation}
where $\mathit{\Gamma}(\Spec(R^\mathrm{red})) = \Hom_{\mathsf{C^\infty Alg}}(R^\mathrm{red}\!,\, \bbR)$ is the underlying set of points of the reduced scheme and where $\mathit{\Gamma}(U)=\Hom_{\mathsf{C^\infty Alg}}(\Coo(U), \bbR)$ is the underlying set of point of the smooth manifold.
Such a function is given by mapping every element
$f\in\Hom_{\mathsf{C^\infty Alg}}(R^\mathrm{red}\!,\,\Coo(U))$ 
to the precomposition function
$(-)\circ f: \mathit{\Gamma}(U) \rightarrow \mathit{\Gamma}(\Spec(R^\mathrm{red}))$ which sends points of the smooth manifold $U$ to their image in the underlying set of points of the smooth set $\Spec(R^\mathrm{red})$. This function is, in fact, injective, since both $\Coo(U)$ and $R^\mathrm{red}$ are reduced $\Coo$-algebras.
\end{proof}

\begin{remark}[Embeddings of $(\infty,1)$-categories of derived spaces]
To sum up, we have the following full and faithful inclusions of $(\infty,1)$-categories:
\begin{equation*}
    \begin{tikzcd}[row sep=scriptsize, column sep=4.6ex, row sep=9.5ex]
    \mathbf{dFMfd} \arrow[r, hook]  & \mathbf{dC^{\infty\!\!}Aff} \arrow[r, hook] & \mathbf{dFDiffSp} \arrow[r, hook]  &  \mathbf{dFSmoothSet}\arrow[r, hook]  &  \mathbf{dFSmoothStack}.
    \end{tikzcd}
\end{equation*}
\end{remark}

\subsection{Derived mapping stacks and bundles}

In this subsection we focus briefly on the definition of mapping stacks and fibre bundles in the $(\infty,1)$-category of formal derived smooth stacks.

\begin{example}[Formal derived mapping stack]
An interesting and motivational class of examples of formal derived smooth set is provided by mapping spaces of formal derived smooth manifolds.
Let $X,Y\in\mathsf{dFSmoothStack}$ be a pair of ordinary smooth manifolds, then we can define a formal derived smooth set $[X,Y]\in\mathbf{dFSmoothStack}$ by
\begin{equation}
\begin{aligned}
    [X,Y]\,:\,\mathsf{dFMfd} \,&\longrightarrow\, \sSet  \\
    U \,&\longmapsto\,  \bfR\Hom(U\times X,Y),
\end{aligned}
\end{equation}
functorially on elements $U\in\mathsf{dFMfd}$ of the site, where $\bfR\Hom(-,-)$ is the Hom-$\infty$-groupoid of formal derived smooth stacks.
\end{example}

Now, we will introduce the notion of fibre bundle of formal derived smooth sets.
The following two definitions are specific cases of the general definitions appearing in \cite[Section 4]{Principal1}.

\begin{definition}[Fiber bundle]\label{def:fibre_bundle}
A \textit{bundle} is a morphism $E\xrightarrow{\;p\;} X$.
A \textit{fiber bundle} is a morphism $E\xrightarrow{\;p\;} X$ such that there is an effective epimorphism $Y\twoheadrightarrow X$ and, for some formal derived smooth stack $F$, a pullback of the form
\begin{equation}
    \begin{tikzcd}[row sep=scriptsize, column sep=6ex, row sep=7.5ex]
    Y\times F \arrow[d] \arrow[r] & E \arrow[d] \\
    Y \arrow[r] & X,
    \end{tikzcd}
\end{equation}
in the $(\infty,1)$-category $\mathbf{dFSmoothStack}$ of formal derived smooth stacks.
We say that the fiber bundle $E\rightarrow X$ locally trivialises with respect to $Y$ and we call $F$ the fiber of the bundle.
\end{definition}

\begin{definition}[$\infty$-groupoid of sections]\label{def:sec_fibre_bundle}
The $\infty$-\textit{groupoid of sections} of a bundle $E\xrightarrow{\;p\;} X$ is defined as the homotopy fiber
\begin{equation}
    \Gamma(X,E) \;\coloneqq\; \bfR\Hom(X,E) \times_{\bfR\Hom(X,X)\!} \{\mathrm{id}_X\}
\end{equation}
of the $\infty$-groupoid of all morphisms $X\rightarrow E$ on those who cover the identity on $X$.
\end{definition}

Notice that, if $E\rightarrow X$ is a fibre bundle of ordinary smooth manifolds, then by Yoneda embedding $\Gamma(X,E)$ as defined above reduces to the usual notion of set of smooth sections. 

\begin{remark}[On the slice category]
Notice that the $\infty$-groupoid of sections of a bundle $E\xrightarrow{\;p\;} X$ can be equivalently expressed as the $\infty$-groupoid
\begin{equation}
    \Gamma(X,E) \;\simeq\; \bfR\Hom_{{/X}}(\mathrm{id}_X,p)
\end{equation}
where $\bfR\Hom_{{/X}}(-,-)$ is the hom-$\infty$-groupoid of the slice $(\infty,1)$-category $\mathbf{dFSmoothStack}_{/X}$.
\end{remark}

\subsection{Derived de Rham cohomology}

In this section we will define a notion of quasi-coherent $(\infty,1)$-sheaves of modules on formal derived smooth stacks. In particular, we will introduce the notion of tangent and cotangent complex of formal derived smooth stacks, which will be instrumental to the construction of derived differential forms. 
Moreover, this discussion will be a crucial premise for \cite{AlfYouFuture}, in preparation.

\subsubsection{Quasi-coherent $(\infty,1)$-sheaves of modules}\label{sec:Qcoh}

Our strategy in this subsection will be to use the notion of homotopy $\Coo$-algebra $\mathcal{O}(X)$ of functions on a formal derived smooth stack $X$ to construct the $(\infty,1)$-category of quasi-coherent sheaves of modules $\QCoh{X}$ on $X$.
First, recall that the definition of module for a homotopy $\Coo$-algebra appears in \cite{Carchedi2019OnTU} and it is exactly the following.

\begin{definition}[Module for a homotopy $\Coo$-algebra]
A \textit{module for a homotopy $\Coo$-algebra} $R\in\mathbf{sC^\infty Alg}$ is a module for the underlying derived commutative algebra $R^\mathrm{alg}\in\mathbf{scAlg}_\bbR$.
\end{definition}

Here $\mathbf{scAlg}_\bbR$ is the $(\infty,1)$-category of derived commutative $\bbR$-algebras, i.e. simplicial commutative $\bbR$-algebras with the classical simplicial algebra model structure.
In the following, let $\quasicat$ be the $(\infty,1)$-category of $(\infty,1)$-categories.

For any given simplicial commutative $\bbR$-algebra $A\in\mathsf{scAlg}_\bbR$, let $\mathrm{N}A\in\mathsf{dgAlg_\bbR}$ be the dg-commutative algebra given by the normalized chains complex functor $\mathrm{N}:\mathsf{scAlg}_\bbR\longrightarrow\mathsf{dgAlg_\bbR}$ and let $\mathrm{N}A\text{-}\mathsf{Mod}$ be the category of $\mathrm{N}A$-dg-modules, which is naturally simplicially-enriched.
Moreover, let $\mathsf{W}_{\mathrm{qi}}$ be the set of quasi-isomorphisms in the category $\mathrm{N}A\text{-}\mathsf{Mod}$.
Thus we can define the $(\infty,1)$-functor
\begin{equation}
\begin{aligned}
    \mathrm{QCoh}\, :\;  \mathbf{dFMfd} \,&\longrightarrow\, \quasicat \\
    M \,&\longmapsto\, L_{\mathsf{W}_{\mathrm{qi}}}\mathrm{N}\mathcal{O}(M)^\mathrm{alg}\text{-}\mathsf{Mod},
\end{aligned}
\end{equation}
which sends any $\Coo$-algebra $A\in\mathsf{sC^\infty Alg}$ to the $(\infty,1)$-category obtained by simplicial localisation of the simplicial category of dg-modules of its underlying algebra.

Let us now provide a definition of quasi-coherent sheaves on a general derived smooth stack.
First, we must recall that any stack can be canonically written as a colimit of representables (see for instance \cite{Dugger2009SHEAVESAH}) by
\begin{equation}
    X \;\simeq\; \bfL\mathrm{co}\!\lim_{\!\!\!\!\!{\substack{U\rightarrow X}}} U.
\end{equation}

\begin{definition}[Quasi-coherent sheaves of modules]
Given any formal derived smooth stack $X\in\mathbf{dFSmoothStack}$, the $(\infty,1)$-category of \textit{quasi-coherent $(\infty,1)$-sheaves} on $X$ is given by the homotopy limit
\begin{equation}
    \QCoh{X} \;\simeq\; \bfR\!\lim_{\!\!\!\!\!{\substack{U\rightarrow X}}} \,\QCoh{U} \;\;\;\in\;\quasicat,
\end{equation}
where $U\in\mathbf{dFMfd}$ runs over all formal derived smooth manifolds.
\end{definition}

\begin{definition}[Complex of sections of a quasi-coherent $(\infty,1)$-sheaf]
The \textit{dg-vector space of global sections of a quasi-coherent $(\infty,1)$-sheaf of modules}  $\mathbb{M}_X\in\QCoh{X}$ is given by the functor
\begin{equation}
\begin{aligned}
        \bfR\Gamma(X,-) \,:\; \QCoh{X} &\longrightarrow\, \mathbf{dgVec}_\bbR\\[0.1ex]
        \mathbb{M}_X &\longmapsto\,\bfR\Gamma(X,\mathbb{M}_X),
\end{aligned}
\end{equation}
which is defined as the base change morphism $\bbR^0_\ast :\QCoh{X}\rightarrow\QCoh{\bbR^0}\simeq \mathbf{dgVec}_\bbR$ along the unique terminal morphism $\bbR^0:X\rightarrow \bbR^0$ to the point, where $\mathbf{dgVec}_\bbR$ is the $(\infty,1)$-category of dg-vector spaces.
\end{definition}

\begin{definition}[Quasi-coherent sheaf cohomology]
We define the \textit{quasi-coherent $(\infty,1)$-sheaf cohomology} $\mathrm{H}^n(X,\mathbb{M}_X)$ of any $\mathbb{M}_X\in\QCoh{X}$ on a given formal derived smooth stack $X\in\mathbf{dFSmoothStack}$ by the cohomology of the dg-vector space of its sections, i.e. by
\begin{equation}
    \mathrm{H}^n(X,\mathbb{M}_X) \;\coloneqq\; \mathrm{H}^{n\!}\big(\bfR\Gamma(X,\mathbb{M}_X)\big).
\end{equation}
\end{definition}\vspace{-0.3cm}

Recall that Dold-Kan correspondence gives us a Quillen equivalence $|-|:\mathsf{sSet}\;\begin{matrix}\leftarrow \\[-1.6ex] \rightarrow\end{matrix}\; \mathsf{dgVec}_\bbR^{\leq 0}:\mathrm{N}$ between simplicial sets and non-positively graded dg-vector spaces.
Then, if $A$ is a general dg-algebra, we will denote by $|A|$ the $\infty$-groupoid obtained by Dold-Kan correspondence applied to the dg-vector space given by the non-positive truncation of $A$.

\begin{definition}[$\infty$-groupoid of sections of a quasi-coherent sheaf]
The \textit{$\infty$-groupoid of $n$-shifted sections of a quasi-coherent $(\infty,1)$-sheaf} $\mathbb{M}_X\in\QCoh{X}$ on a formal derived smooth stack $X\in\mathbf{dFSmoothStack}$ is defined by the $\infty$-groupoid
\begin{equation}
    \mathcal{M}(X,n) \;\coloneqq\; \big| \bfR\Gamma(X,\mathbb{M}_X)[n] \big|.
\end{equation}
\end{definition}\vspace{-0.2cm}

Notice that the set of connected components of such a groupoid is related to the $n$-th quasi-coherent sheaf cohomology by the isomorphism $\pi_0\mathcal{M}(X,n) \cong \mathrm{H}^n(X,\mathbb{M}_X)$.

Let us now look at the most important examples of quasi-coherent $(\infty,1)$-sheaves on formal derived smooth stacks, which we will need in the rest of the paper.

\begin{example}[Structure sheaf]
Given a formal derived smooth stack $X\in\mathbf{dFSmoothStack}$, its \textit{structure sheaf} $\mathbb{O}_X\in\QCoh{X}$ is defined by the homotopy limit
\begin{equation}
    \mathbb{O}_X \;\coloneqq\;  \bfR\!\lim_{\!\!\!\!\!{\substack{U\rightarrow X}}}  \mathrm{N}\mathcal{O}(U)^\mathrm{alg},
\end{equation}
where, clearly, $\mathrm{N}\mathcal{O}(U)^\mathrm{alg}$ is in $\mathrm{N}\mathcal{O}(U)^\mathrm{alg}\text{-}\mathsf{Mod}$.
\end{example}

In analogy with \cite{Joyce:2009}, we want to define a cotangent complex for formal derived smooth stacks which is compatible with their smooth structure. In fact, even if in our definition a module of a $\Coo$-algebra is just a module of the underlying $\bbR$-algebra, we will introduce a cotangent module, whose definition is non-trivially reliant on the smooth structure of $\Coo$-algebras. 
We remark that, in the spirit of \cite{Joyce:2009}, such a cotangent module is not the usual one given by the usual K\"ahler differentials which one can find in algebraic geometry.

\begin{definition}[Cotangent module of a formal derived smooth manifold]\label{def:cotmod}
Let $U\in\mathbf{dFMfd}$ be a formal derived smooth manifold. The \textit{cotangent module} $\Omega^1_{\mathcal{O}(U)}\in \mathcal{O}(U)\text{-}\mathsf{Mod}$ is defined as the $\mathrm{N}\mathcal{O}(U)^\mathrm{alg}$-dg-module generated by elements of the form $\di_\dR f$, where $f\in \mathrm{N}\mathcal{O}(U)^\mathrm{alg}$ is any homogeneous element, such that the following conditions hold
\begin{enumerate}[label=(\textit{\roman*})]
    \item the degree of $\di_\dR f$ is the same as the degree of $f$,
    \item Leibniz's rule holds, i.e. $\di_\dR(f_1f_2)=(\di_\dR f_1)f_2 + (-1)^{|f_1|} f_1(\di_\dR f_2)$,
    \item for any $f_1,\cdots,f_n\in  \mathrm{N}\mathcal{O}(U)^\mathrm{alg}$ and any smooth map $\phi:\bbR^n\rightarrow \bbR$, we have
\begin{equation}
    \di_\dR \big(\mathrm{N}\mathcal{O}(U,\phi)(f_1,\dots,f_n)\big) \,=\, \sum_{i=1}^n \mathrm{N}\mathcal{O}\Big(U,\frac{\partial \phi}{\partial x^i}\Big)(f_1,\dots,f_n)\cdot\di_\dR f_i,
\end{equation}
where $\mathcal{O}(U,\phi):\mathcal{O}(U,\bbR)^n\rightarrow \mathcal{O}(U,\bbR)$ is the image of the smooth map $\phi$ on $\mathcal{O}(U)$.
\end{enumerate}
\end{definition}

By following \cite{Pridham:2021}, we can define the cotangent complex $\bbL_M\in\QCoh{M}$ of a formal derived smooth manifold $M\in\mathbf{dFMfd}$ by deriving the functor on the slice category $\mathsf{sC^{\infty}Alg}_{/\mathcal{O}(M)}$
\begin{equation}
    \Omega^1_{(-)\!}\,\widehat{\otimes}_{(-)\!}\,\mathcal{O}(M)\,: \;U\, \longmapsto \, \Omega^1_{U}\,\widehat{\otimes}_{\mathcal{O}(U)}\,\mathcal{O}(M),
\end{equation}
where $\widehat{\otimes}$ is the $\Coo$-tensor product of homotopy $\Coo$-algebras, and evaluating it at $M$.
More precisely, we can define the cotangent complex $\bbL_M \coloneqq \bfL\big(\Omega^1_{(-)}\widehat{\otimes}_{(-)}\mathcal{O}(M)\big)(M)$. In other words, we have $\bbL_M = \Omega^1_{Q\mathcal{O}(M)}\widehat{\otimes}_{Q\mathcal{O}(M)}\mathcal{O}(M)$, where $Q\mathcal{O}(M)$ is a cofibrant replacement of the original homotopy $\Coo$-algebra $\mathcal{O}(M)$. 

\begin{definition}[Cotangent complex]
The \textit{cotangent complex} $\bbL_X\in\QCoh{X}$ is defined by the homotopy limit
\begin{equation}
   \bbL_X \;\coloneqq\; \bfR\!\lim_{\!\!\!\!\!{\substack{U\rightarrow X}}} \, \bbL_{U},
\end{equation}
where $\bbL_{U}$ is the cotangent complex of the formal derived smooth manifold $U\in\mathbf{dFMfd}$ we introduced right above.
\end{definition}

\begin{definition}[Relative cotangent complex]
Any morphism $f:X\rightarrow Y$ of stacks induces a morphism $f_!:f^\ast\bbL_Y  \rightarrow\bbL_X$ of quasi-coherent $(\infty,1)$-sheaves. The \textit{relative cotangent complex} $\bbL_f\in\QCoh{X}$ is defined by the homotopy cofibre of such a map, i.e.
\begin{equation}
    \begin{tikzcd}[row sep=6.0ex, column sep=4.5ex]
    f^\ast\bbL_Y  \arrow[d] \arrow[r, "f_!"] & \bbL_X  \arrow[d, "\mathrm{hocofib}(f_!)"]\\
    0 \arrow[r] & \bbL_f .
    \end{tikzcd}
\end{equation}
\end{definition}

\begin{definition}[Tangent complex]
Whenever the cotangent complex $\bbL_X$ of a formal derived smooth stack $X\in\mathbf{dFSmoothStack}$ is a perfect complex, we can define the \textit{tangent complex} of $X$ by
\begin{equation}
    \bbT_X \;\coloneqq\; \bbL^\vee_X,
\end{equation}
where $\bbL_X^\vee\coloneqq[\bbL_X,\mathbb{O}_X]\in \QCoh{X}$ is the dual quasi-coherent sheaf of the cotangent complex.
\end{definition}

The $\infty$-groupoid of $n$-shifted vectors on $X\in\mathbf{dFSmoothStack}$ is given by the $\infty$-groupoid of $n$-shifted sections of $\bbT_X$, i.e. by $\mathfrak{X}(X,n) \;\coloneqq\; \big| \bfR\Gamma(X,\mathbb{T}_X)[n] \big|$.

\subsubsection{Derived de Rham algebra}

In this subsection we will provide a definition of differential forms on a formal derived smooth stack.
By using the fact that a module for a homotopy $\Coo$-algebra is defined as a module for the underlying derived commutative algebra, we will translate the formulation by \cite{ToenVezzo08, Toen14} in our framework.
Moreover, we will introduce the notion of formal derived smooth stack of differential forms on a formal derived smooth stack.

\begin{definition}[Complex of $p$-forms]
We define the \textit{complex of $p$-forms} on the derived stack $X\in\mathbf{dFSmoothStack}$ by the dg-vector space of sections
\begin{equation}
    \mathrm{A}^p(X) \,\coloneqq\, \bfR\Gamma(X,\wedge^p_{\mathbb{O}_X}\bbL_X).
\end{equation}
We denote by the symbol $\mathrm{A}^{\!p}(X)_n$ the degree $n\in\mathbb{Z}$ component of the dg-vector space $\mathrm{A}^{\!p}(X)$ and by $Q:\mathrm{A}^p(X)_n\rightarrow\mathrm{A}^p(X)_{n+1}$ its differential.
\end{definition}

\begin{remark}[Homotopy between $p$-forms]
A homotopy from an element $\alpha$ to an element $\beta$ of $\mathrm{A}^p(X)_n$ is given by an element $\gamma\in \mathrm{A}^{p}(X)_{n-1}$ such that
\begin{equation}
    \beta - \alpha \,=\, Q \gamma.
\end{equation}
\end{remark}

\begin{definition}[$n$-degree differential $p$-form]
An \textit{$n$-degree differential $p$-form} on a formal derived smooth stack $X\in\mathbf{dFSmoothStack}$ is defined as a cohomology class in
\begin{equation}
    \Omega^p(X)_n \,\coloneqq\, \mathrm{H}^n(\mathrm{A}^p(X)).
\end{equation}
\end{definition}

Notice that, in general, we obtain a bi-complex $\mathrm{A}^p(X)_n$ with $(p,n)\in\mathbb{N}\times\mathbb{Z}$ of the form
\begin{equation}
    \begin{tikzcd}[row sep=6.8ex, column sep=5.0ex]
    \vdots \arrow[d, "Q"] & \vdots \arrow[d, "Q"] & \vdots \arrow[d, "Q"] & \\
    \mathrm{A}^0(X)_{-2} \arrow[r, "\di_\dR"]\arrow[d, "Q"] & \mathrm{A}^1(X)_{-2} \arrow[r, "\di_\dR"]\arrow[d, "Q"] & \mathrm{A}^2(X)_{-2} \arrow[r, "\di_\dR"]\arrow[d, "Q"] & \,\cdots \\
    \mathrm{A}^0(X)_{-1} \arrow[r, "\di_\dR"]\arrow[d, "Q"] & \mathrm{A}^1(X)_{-1} \arrow[r, "\di_\dR"]\arrow[d, "Q"] & \mathrm{A}^2(X)_{-1} \arrow[r, "\di_\dR"]\arrow[d, "Q"] & \,\cdots \\
    \mathrm{A}^0(X)_{0} \arrow[r, "\di_\dR"] & \mathrm{A}^1(X)_{0} \arrow[r, "\di_\dR"] & \mathrm{A}^2(X)_{0} \arrow[r, "\di_\dR"] & \,\cdots  \\
    \vdots \arrow[u, "Q", leftarrow] & \vdots \arrow[u, "Q", leftarrow] & \vdots \arrow[u, "Q", leftarrow] &
    \end{tikzcd}
\end{equation}
where the following relations between de Rham and internal differentials are satisfied:
\begin{equation}
\begin{aligned}
    \di_\dR^2 \;=\; Q^2 \;=\; \di_\dR \circ Q + Q \circ \di_\dR \;=\; 0.
\end{aligned}
\end{equation}

We will now start introducing the technology which will allow us to deal with closed differential forms on derived formal smooth stacks.

\begin{definition}[Total de Rham dg-algebra]
The \textit{total de Rham algebra} is the dg-algebra whose underlying dg-vector space is defined by the totalisation
\begin{equation}
    \mathrm{DR}(X) \;\coloneqq\; \prod_{n\in\mathbb{N}}\mathrm{A}^n(X)[-n],
\end{equation}
with total differential $\di_\dR + Q$, where $\di_\dR$ is the de Rham differential and $Q$ is the internal differential of each dg-vector space $\mathrm{A}^p(X)$.
\end{definition}

\begin{definition}[Complex of closed $p$-forms]
Consider the following filtration of the total de Rham algebra:
\begin{equation}
    F^p\mathrm{DR}(X) \;=\; \prod_{n\geq p}\mathrm{A}^n(X)[-n] \;\subset\; \mathrm{DR}(X).
\end{equation}
The \textit{complex of closed $p$-forms} is defined for any $p\in\mathbb{N}$ by the following dg-vector space:
\begin{equation}
    \mathrm{A}^p_{\mathrm{cl}}(X) \;\coloneqq\; F^p\mathrm{DR}(X)[p].
\end{equation}
\end{definition}

\begin{remark}[Homotopy between closed $p$-forms]
A homotopy from an element $(\alpha_i)$ to $(\beta_i)$ in $\mathrm{A}^p_\mathrm{cl}(X)_n$ is given by an element $(\gamma_i)\in \mathrm{A}^{p}_\mathrm{cl}(X)_{n-1}$ such that
\begin{equation}
    \beta_i - \alpha_i \,=\, \di_\dR\gamma_{i-1} + Q \gamma_i.
\end{equation}
\end{remark}

\begin{definition}[Closed form]
An \textit{$n$-shifted closed $p$-form} on a derived formal smooth stacks $X$ is defined as an $n$-cocycle $(\omega_i)\in\mathrm{Z}^n\mathrm{A}^p_{\mathrm{cl}}(X)$ of the dg-vector space of closed $p$-forms on $X$, i.e. as an element $(\omega_i)\in\mathrm{A}^p_{\mathrm{cl}}(X)$ such that $(\di_\dR + Q)(\omega_i)=0$.
\end{definition}

In other words, an $n$-cocycle in $\mathrm{A}_\mathrm{cl}^p(X)$ is given by a formal sum $(\omega_i)=(\omega_p + \omega_{p+1} + \dots)$, where each form $\omega_{i}\in \mathrm{A}^{i}(X)$ is an element of degree $n+p-i$, satisfying the equations
\begin{equation}
\begin{aligned}
    Q\omega_{p} \,&=\, 0,\\
    \di_\dR\omega_{i} + Q\omega_{i+1} \,&=\, 0,
\end{aligned}
\end{equation}
for every $i\geq p$.

This embodies the idea that the underlying $p$-form $\omega_p\in \mathcal{A}^{p}(X)$ is de Rham-closed up to homotopy, which is given by a choice of higher forms $\omega_i$ with $i>p$.

\begin{definition}[$n$-degree closed differential $p$-form]
An \textit{$n$-degree closed $p$-form} is defined as a cohomology class in
\begin{equation}
    \Omega^p_\mathrm{cl}(X)_n \,\coloneqq\, \mathrm{H}^n(\mathrm{A}_\mathrm{cl}^p(X)).
\end{equation}
\end{definition}

\begin{definition}[$\infty$-groupoid of differential forms]
We define the \textit{$\infty$-groupoid of differential $p$-forms} $\mathcal{A}^p(X,n)$ and \textit{of closed differential $p$-forms} $\mathcal{A}^p_{\mathrm{cl}}(X,n)$ by
\begin{equation}
\begin{aligned}
    \mathcal{A}^p(X,n) \;&\simeq\; \big|\mathrm{A}^p(X)[n]\big|, \\[0.7ex]
    \mathcal{A}^p_{\mathrm{cl}}(X,n) \;&\simeq\; \big|\mathrm{A}^p_{\mathrm{cl}}(X)[n]\big|,
    \end{aligned}
\end{equation}
where the functor $|-|:\mathsf{dgcAlg}_\bbR\rightarrow\sSet$, as before, is the Dold-Kan correspondence functor applied on the non-positive truncation of the argument. 
\end{definition}

\begin{remark}[Differential forms from $\infty$-groupoid of differential forms]\label{rem:homcat}
Notice that the $\infty$-groupoid of differential $p$-forms $\mathcal{A}^p(X,n)$ and of closed differential $p$-forms $\mathcal{A}^p_{\mathrm{cl}}(X,n)$ have the following sets of connected components
\begin{equation}
\begin{aligned}
        \pi_0\mathcal{A}^p(X,n) \,\;&\simeq\;\,   \mathrm{H}^n(\mathrm{A}^p(X))_{\phantom{\mathrm{cl}}} \,\eqqcolon\, \Omega^p(X)_n,\\[0.5ex]
        \pi_0\mathcal{A}^p_\mathrm{cl}(X,n) \,\,&\simeq\,\,  \mathrm{H}^n(\mathrm{A}^p(X)_\mathrm{cl}) \;\eqqcolon\; \Omega^p_\mathrm{cl}(X)_n.
\end{aligned}
\end{equation}
\end{remark}

As we discussed in \cref{subsec:formalsmoothset}, in ordinary smooth geometry it is possible to construct a smooth set $\Omega^p$ such that the hom-set $\Hom(M,\Omega^p)$ in the category of smooth sets from a smooth manifold $M$ to $\Omega^p$ is exactly the set of differential forms $\Omega^p(M)\in\mathsf{Set}$. This (formal) smooth set $\Omega^p$ also known as moduli space of differential $p$-forms.
We will now construct something analogous for formal derived smooth stacks.

\begin{proposition}[Derived stack of differential forms]\label{def:stacksiffform}
There exist formal derived smooth stacks $\pmb{\mathcal{A}}^p(n)$ and $\pmb{\mathcal{A}}^p_\mathrm{cl}(n)$ satisfying respectively the universal properties
\begin{equation}
\begin{aligned}
    \bfR\Hom\big(X,\,\pmb{\mathcal{A}}^p(n)\big) \;&\simeq\; \mathcal{A}^p(X,n), \\[0.7ex]
    \bfR\Hom\big(X,\,\pmb{\mathcal{A}}^p_{\mathrm{cl}}(n)\big) \;&\simeq\; \mathcal{A}^p_{\mathrm{cl}}(X,n),
    \end{aligned}
\end{equation}
where $X$ is any formal derived smooth stack and $\bfR\Hom(-,-)$ is the hom-$\infty$-groupoid of the $(\infty,1)$-category $\mathbf{dFSmoothStack}$.
\end{proposition}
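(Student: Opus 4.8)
The plan is to produce $\pmb{\mathcal{A}}^p(n)$ and $\pmb{\mathcal{A}}^p_\mathrm{cl}(n)$ directly as the formal derived smooth stacks represented by the evident presheaves, and then to deduce the universal properties from the homotopy-limit formulas that define $\mathrm{A}^p(X)$ and $\mathrm{A}^p_\mathrm{cl}(X)$. Concretely, I would first restrict the assignments $X\mapsto\mathcal{A}^p(X,n)$ and $X\mapsto\mathcal{A}^p_\mathrm{cl}(X,n)$ to the site $\mathsf{dFMfd}$ of formal derived smooth manifolds, obtaining simplicial presheaves
\begin{equation*}
    \underline{\pmb{\mathcal{A}}^p(n)}\,:\, U \,\longmapsto\, \big|\mathrm{A}^p(U)[n]\big|, \qquad \underline{\pmb{\mathcal{A}}^p_\mathrm{cl}(n)}\,:\, U \,\longmapsto\, \big|\mathrm{A}^p_\mathrm{cl}(U)[n]\big|
\end{equation*}
on $\mathsf{dFMfd}^\op$. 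Here each $\mathrm{A}^p(U)=\bfR\Gamma(U,\wedge^p_{\mathbb{O}_U}\bbL_U)$ is computed entirely from the homotopy $\Coo$-algebra $\mathcal{O}(U)$ and its cotangent module of Definition \ref{def:cotmod}, so these are genuinely functorial in $U$; object-wise fibrancy is automatic since $|-|$ lands in Kan complexes, and the pre-stack condition is immediate from functoriality.

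The crux is to verify that these presheaves satisfy étale (hyper-)descent, so that they are already fibrant objects of $[\mathsf{dFMfd}^\op,\sSet]_{\mathrm{proj,loc}}^\circ$ and hence define formal derived smooth stacks $\pmb{\mathcal{A}}^p(n),\pmb{\mathcal{A}}^p_\mathrm{cl}(n)\in\mathbf{dFSmoothStack}$. For this I would reduce the descent statement to the level of dg-vector spaces: given an \'etale hypercover $H(U)_\bullet\to U$, the relative cotangent complex $\bbL_f$ is contractible for every formally \'etale $f$, so the cotangent complex restricts compatibly along the hypercover, and $\bfR\Gamma$ together with the wedge power $\wedge^p$ carry the hyperdescent homotopy limit of quasi-coherent sheaves to a homotopy limit of dg-vector spaces. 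Applying the Dold-Kan functor $|-|$, which is a right Quillen equivalence and hence preserves homotopy limits, converts descent of $\mathrm{A}^p(-)$ into descent of $\underline{\pmb{\mathcal{A}}^p(n)}$. I expect this descent verification — in particular the \'etale-local behaviour of the $\Coo$-cotangent module and the interaction of $|-|$ with the totalization and filtration defining $\mathrm{A}^p_\mathrm{cl}$ — to be the main obstacle, exactly as in the algebraic case of \cite{ToenVezzo08, Toen14}.

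Once descent is established, the universal properties are formal. Writing any $X\in\mathbf{dFSmoothStack}$ as a colimit of representables $X\simeq\bfL\operatorname{colim}_{U\to X}U$ and using that $\bfR\Hom(-,\pmb{\mathcal{A}}^p(n))$ sends colimits to limits, together with the Yoneda identification $\bfR\Hom(U,\pmb{\mathcal{A}}^p(n))\simeq\underline{\pmb{\mathcal{A}}^p(n)}(U)$, I obtain
\begin{equation*}
    \bfR\Hom\big(X,\pmb{\mathcal{A}}^p(n)\big) \;\simeq\; \bfR\!\lim_{U\to X}\big|\mathrm{A}^p(U)[n]\big|.
\end{equation*}
On the other hand, the defining homotopy limits $\QCoh{X}\simeq\bfR\lim_{U\to X}\QCoh{U}$ and $\bbL_X\simeq\bfR\lim_{U\to X}\bbL_U$ give $\mathrm{A}^p(X)\simeq\bfR\lim_{U\to X}\mathrm{A}^p(U)$, and invoking once more that $|-|$ preserves homotopy limits yields $\bfR\lim_{U\to X}|\mathrm{A}^p(U)[n]|\simeq|\mathrm{A}^p(X)[n]|=\mathcal{A}^p(X,n)$, which is the first claimed equivalence. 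The closed case $\bfR\Hom(X,\pmb{\mathcal{A}}^p_\mathrm{cl}(n))\simeq\mathcal{A}^p_\mathrm{cl}(X,n)$ follows verbatim, noting only that the filtration $F^p\mathrm{DR}(X)$ and its shift are built from products and shifts of the $\mathrm{A}^n(X)$, all of which commute with the homotopy limit over $U\to X$.
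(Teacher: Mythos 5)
Your proposal is correct and follows essentially the same route as the paper's own proof: define the pre-stacks by $U\mapsto\mathcal{A}^p(U,n)$ on representables, deduce descent from the quasi-coherence of $\wedge^p_{\mathbb{O}_U}\bbL_U$, and then obtain the universal property by writing $X\simeq\bfL\operatorname{colim}_{U\to X}U$ and commuting $\bfR\Hom$ past the colimit. The only difference is that you spell out the descent verification (étale-triviality of the relative cotangent complex, compatibility of Dold–Kan with homotopy limits) in more detail than the paper does.
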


\begin{proof}
First, notice that we can immediately define a pre-stack $\pmb{\mathcal{A}}^p(n):U\mapsto\mathcal{A}^p(U,n)$ on the $(\infty,1)$-category $\mathbf{dFMfd}$ of formal derived smooth manifolds.
The fact that this satisfies the descent respect to the $(\infty,1)$-\'etale site structure of $\mathbf{dFMfd}$ is a consequence of the fact that the functor $U\mapsto \wedge_{\mathbb{O}_U}^p\!\bbL_U$ with $U\in\mathbf{dFMfd}$ satisfies descent, as $\wedge_{\mathbb{O}_U}^p\!\bbL_U\in\QCoh{U}$ is a quasi-coherent $(\infty,1)$-sheaf on any $U$. 
We have the following chain of equivalences:
\begin{equation}
    \begin{aligned}
    \bfR\Hom\big(X,\pmb{\mathcal{A}}^p(n)\big) \;&\simeq\; \bfR\Hom\big(\bfL\mathrm{co}\!\!\lim_{\!\!U\rightarrow X} U, \,\pmb{\mathcal{A}}^p(n) \big)  \\
    &\simeq\; \bfR\!\!\lim_{\!\!U\rightarrow X} \bfR\Hom\big(U,\, \pmb{\mathcal{A}}^p(n)\big) \\
    &\simeq\; \bfR\!\!\lim_{\!\!U\rightarrow X} \mathcal{A}^p(U,n) \\
    &\simeq\;  \mathcal{A}^p(X,n).
    \end{aligned}
\end{equation}
Moreover, by a completely analogous argument, also the pre-stack $\pmb{\mathcal{A}}^p_\mathrm{cl}(n)$ satisfies descent.
\end{proof}

\begin{definition}[Derived stack of differential forms]
We call $\pmb{\mathcal{A}}^p(n)$ the \textit{formal derived smooth stacks of differential $p$-forms} and $\pmb{\mathcal{A}}^p_{\mathrm{cl}}(n)$ the one \textit{of closed differential $p$-forms}.
Moreover, we will write $\pmb{\mathcal{A}}^p\coloneqq\pmb{\mathcal{A}}^p(0)$ and $\pmb{\mathcal{A}}^p_\mathrm{cl}\coloneqq\pmb{\mathcal{A}}^p_\mathrm{cl}(0)$ for the $0$-shifted cases.
\end{definition}

\begin{corollary}[Differential forms from the homotopy category]
By putting together remark \ref{rem:homcat} and proposition \ref{def:stacksiffform}, we have the following equivalences of sets
\begin{equation}
\begin{aligned}
    \Hom_{\mathrm{Ho}}\big(X,\,\pmb{\mathcal{A}}^p(n)\big) \;&\simeq\; \pi_0\mathcal{A}^p(X,n) \, \;\simeq\; \Omega^p(X)_n, \\[0.7ex]
    \Hom_{\mathrm{Ho}}\big(X,\,\pmb{\mathcal{A}}^p_{\mathrm{cl}}(n)\big) \;&\simeq\; \pi_0\mathcal{A}^p_{\mathrm{cl}}(X,n) \;\simeq\; \Omega^p_\mathrm{cl}(X)_n,
    \end{aligned}
\end{equation}
where $\Hom_{\mathrm{Ho}}(-,-)$ is the hom-set of the homotopy category $\mathrm{Ho}(\mathbf{dFSmoothStack})$ of formal derived smooth stacks.
Therefore, a morphism $\xi:X\rightarrow \pmb{\mathcal{A}}^p(n)$ in the homotopy category $\mathrm{Ho}(\mathbf{dFSmoothStack})$ is equivalently an $n$-shifted $p$-form $\xi\in\Omega^p(X)_n$. Similarly for $\pmb{\mathcal{A}}^p_\mathrm{cl}(n)$.
\end{corollary}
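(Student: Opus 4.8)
The plan is to recognise the corollary as a formal consequence of the general relationship between an $(\infty,1)$-category and its homotopy category, combined with the two results it explicitly cites. The structural fact I would invoke first is that, for any two objects $X,Y$ of an $(\infty,1)$-category $\mathbf{C}$, the hom-set in the homotopy category is the set of connected components of the mapping $\infty$-groupoid, i.e.
\begin{equation}
    \Hom_{\mathrm{Ho}(\mathbf{C})}(X,Y) \;\simeq\; \pi_0\,\bfR\Hom_{\mathbf{C}}(X,Y).
\end{equation}
This is standard (see e.g.\ \cite[Section 1.2.3]{topos}) and needs no new input; I would simply cite it and specialise to $\mathbf{C}=\mathbf{dFSmoothStack}$ with $Y=\pmb{\mathcal{A}}^p(n)$.

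First I would treat the non-closed case. Applying the identity above and then Proposition \ref{def:stacksiffform}, which supplies an equivalence of $\infty$-groupoids $\bfR\Hom(X,\pmb{\mathcal{A}}^p(n))\simeq\mathcal{A}^p(X,n)$, together with the elementary fact that $\pi_0$ sends equivalences of $\infty$-groupoids to bijections of sets, I obtain
\begin{equation}
    \Hom_{\mathrm{Ho}}\big(X,\pmb{\mathcal{A}}^p(n)\big) \;\simeq\; \pi_0\,\bfR\Hom\big(X,\pmb{\mathcal{A}}^p(n)\big) \;\simeq\; \pi_0\,\mathcal{A}^p(X,n).
\end{equation}
The remaining identification $\pi_0\mathcal{A}^p(X,n)\simeq\Omega^p(X)_n$ is exactly Remark \ref{rem:homcat}, which records that the connected components of $\mathcal{A}^p(X,n)$ form the degree-$n$ cohomology $\mathrm{H}^n(\mathrm{A}^p(X))=\Omega^p(X)_n$. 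Concatenating these bijections yields the first line of the claim. The closed case then follows by running the identical argument with $\pmb{\mathcal{A}}^p(n)$, $\mathcal{A}^p(X,n)$ and $\mathrm{A}^p(X)$ replaced by $\pmb{\mathcal{A}}^p_\mathrm{cl}(n)$, $\mathcal{A}^p_\mathrm{cl}(X,n)$ and $\mathrm{A}^p_\mathrm{cl}(X)$, invoking the second halves of Proposition \ref{def:stacksiffform} and Remark \ref{rem:homcat}. The concluding sentence of the corollary is merely a reading of the established bijection under the dictionary \emph{morphism in the homotopy category $=$ element of the hom-set}, so no further argument is required.

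Frankly, there is no genuine obstacle here: every non-trivial ingredient has already been proved, and the only thing demanding care is the bookkeeping of the $\pi_0$-functor and its compatibility with the cited equivalences. If any subtlety were to surface, it would be in making precise that the moduli $\infty$-groupoids $\mathcal{A}^p(X,n)$ — defined via the Dold–Kan functor $|-|$ — really are the mapping $\infty$-groupoids $\bfR\Hom\big(X,\pmb{\mathcal{A}}^p(n)\big)$ appearing in the homotopy-category formula; but this identification is precisely the content of the universal property asserted in Proposition \ref{def:stacksiffform}, so the point is already secured and the proof reduces to chaining the three displayed equivalences.
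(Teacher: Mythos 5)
Your proposal is correct and follows exactly the route the paper intends: the corollary is stated as an immediate consequence of chaining $\Hom_{\mathrm{Ho}}(X,Y)\simeq\pi_0\,\bfR\Hom(X,Y)$ with the universal property of Proposition \ref{def:stacksiffform} and the $\pi_0$-computation of Remark \ref{rem:homcat}, which is precisely what you do. No gaps; the paper gives no further argument beyond this concatenation.
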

\newpage

\begin{example}[Derived zero locus]
The affine derived zero locus $\bfR f^{-1}(0)\in\mathbf{dFMfd}$ of a smooth function $f:\bbR^n\rightarrow\bbR^k$ is a formal derived smooth manifold defined by a homotopy pullback of the following form
\begin{equation}
\begin{tikzcd}[column sep={6em,between origins}, row sep={5.5em,between origins}]
   \bfR f^{-1}(0) \arrow[r]\arrow[d] & \bbR^n\arrow[d, "{(0,\mathrm{id})}"]\\
   \bbR^n \arrow[r, "{(f,\mathrm{id})}"] & \bbR^{k+n},
\end{tikzcd}
\end{equation}
where $\mathrm{id}:\bbR^n\rightarrow \bbR^n$ is the identity, in the $(\infty,1)$-category of derived manifolds. For more details about its algebraic geometric version see \cite{Vezzosi:2011}.
The tangent complex will be given by $\bbT_{\bfR f^{-1}(0)} = \Big(T_{\bbR^n}[0] \xrightarrow{\;\;f_\ast\;\;} f^\ast T_{\bbR^k}[-1]\Big )$, concentrated in cohomological degree $0$ and $1$.
In degree $1$ we have the sheaf $f^\ast T_{\bbR^k}\simeq\Coo_{\bbR^n}(-,\bbR^k)$.
Analogously, the cotangent complex will be $\bbL_{\bfR f^{-1}(0)} = \Big(f^\ast\Omega_{\bbR^k}^1[1] \xrightarrow{\;\;f_\ast\;\;} \Omega_{\bbR^n}^1[0] \Big)$, concentrated in  cohomological degree $-1$ and $0$. In degree $-1$ we have the sheaf $f^\ast\Omega_{\bbR^k} \simeq \Coo_{\bbR^n}(-,(\bbR^{k})^\vee)$.
Thus, by unravelling its definition, the complex of $0$-forms is the following:
\begin{equation}
\begin{aligned}
    \mathrm{A}^0(\bfR f^{-1}(0)) \;&=\; \bfR\Gamma(\bfR f^{-1}(0),\mathbb{O}_{\bfR f^{-1}(0)}) \\
    &=\; \Coo(\bbR^n) \otimes_\bbR \wedge^{\!\ast} (\bbR^{k})^\vee,
\end{aligned}
\end{equation}
where the differential is given by $Qx^i = 0$ and $Qx^+_j = f_j(x)$, on $\{x^i\}_{i=1,\dots,n}$ global coordinates of $\bbR^n$ in degree $0$ and $\{x_j^+\}_{j=1,\dots,k}$ the generators of the exterior algebra $\wedge^\ast (\bbR^{k})^\vee$ in degree $-1$.
By unravelling its definition, we can explicitly see that the complex of $1$-forms is the following:
\begin{equation}
\begin{aligned}
    \mathrm{A}^1(\bfR f^{-1}(0)) \;&=\; \bfR\Gamma(\bfR f^{-1}(0),\mathbb{L}_{\bfR f^{-1}(0)}) \\
    &=\; \bigoplus_{i=1}^{n}\mathcal{A}^0(\bfR f^{-1}(0))[\di x^i] \oplus \bigoplus_{j=1}^{k}\mathcal{A}^0(\bfR f^{-1}(0))[\di x_j^+], 
\end{aligned}
\end{equation}
with the graded-commutation relations given by the equations
\begin{equation}
    \di x^i\wedge \di x^j = - \di x^j\wedge \di x^i, \,\quad \di x^i\wedge \di x_j^+ =  \di x_j^+\wedge \di x^i, \,\quad \di x_i^+\wedge \di x_j^+ =  \di x_j^+\wedge \di x_i^+.
\end{equation}
Similarly, one obtains all the differential $p$-forms.
\end{example}

\section{Derived differential geometry}\label{sec:derdiffcohe}

In the previous section, we constructed the $(\infty,1)$-category $\mathbf{dFSmoothStack}$ of formal derived smooth stacks.

In this section, we show that the formalism of differential structures, introduced by Schreiber \cite{DCCTv2} in the setting of formal smooth stacks, extends very naturally to our present setting of formal derived smooth stacks.
Many statements and constructions follow through very naturally.

Since it is known that an $(\infty,1)$-category of stacks is an $(\infty,1)$-topos (see e.g. \cite{ToenVezzo05,topos}), the $(\infty,1)$-category $\mathbf{dFSmoothStack}$ is, in particular, an $(\infty,1)$-topos.
In subsection \ref{subsec:ddiffcohe}, we show that the $(\infty,1)$-topos of formal derived smooth stacks comes naturally equipped with a differential structure.
Roughly speaking, a differential structure provides an $(\infty,1)$-topos with the properties required for differential geometry to take place in it and for its objects to be fully-fledged formal spaces.
In subsection \ref{subsec:FMP}, we will show that the formal moduli problems appearing in BV-theory naturally arise in the context of derived differential structures.
In the last two subsections, we explore some entailments of such a structure, including generalisations of the notions of $L_\infty$-algebroids and jet bundles.

\subsection{Derived differential $(\infty,1)$-topos}\label{subsec:ddiffcohe}

After showing that formal derived smooth stacks constitute an $(\infty,1)$-topos, we will investigate its natural differential structure.

In the previous section, we stressed the fact that we are working not on the site of derived smooth manifolds, but, slightly more generally, on the site of formal derived smooth manifolds.
Now we will directly exploit the formal aspects of our formal derived smooth stacks.
The $(\infty,1)$-topos $\mathbf{dFSmoothStack}$ is naturally equipped with a differential structure, as defined in \cite[Section 4.2.1]{DCCTv2}. Such a differential structure includes a functor $\Im$ sending a formal derived smooth stack to its de Rham space, which can be thought as its infinitesimal path groupoid (see the reference for more details).

The notion of differential topos can be traced back to the seminal work of \cite{Simpson:1997, kontsevich1998noncommutative}. The concept of a differential topos provides a unifying framework for studying a range of structures, including formal smooth manifolds and, more generally, spaces that admit some notion of local chart and infinitesimal extension. For a detailed and comprehensive discussion of differential structures, we point at the main reference.
At its core, a differential topos is a category of sheaves over a site that satisfies certain axioms, which ensure that it has enough structure to capture formal geometry. 

First, let us look at the global sections functor for formal derived smooth stacks.
Every ordinary topos of sheaves $\mathsf{Sh(C)}$ on some site $\mathsf{C}$ comes naturally equipped with a global section functor $\mathit{\Gamma}:\mathsf{Sh(C)}\rightarrow\mathsf{Set}$ which sends a sheaf $X$ to the section $\mathit{\Gamma}(X)\coloneqq \Hom(\ast,X)$ at the point (i.e. at the terminal object, which exists).
The global sections functor $\mathit{\Gamma}$ naturally fits into a geometric morphism, which is given by the adjunction $\mathrm{Disc}\dashv\mathit{\Gamma}$, where the functor $\mathrm{Disc}:\mathsf{Set}\rightarrow \mathsf{Sh(C)}$ embeds sets into the corresponding locally constant sheaves.
As explained by \cite{DCCTv2}, this construction can be generalised to a $(\infty,1)$-topos of stacks if we replace the ordinary category of sets with the $(\infty,1)$-category of $\infty$-groupoids.

\begin{remark}[Terminal geometric morphism]
The terminal geometric morphism on an $(\infty,1)$-topos $\mathbf{H}$ is the datum of a pair of adjoint $(\infty,1)$-functors of the following form \footnote{In a previous version of this paper was wrongly claimed that the $(\infty,1)$-topos of formal derived smooth stacks is cohesive. We would like to thank David Carchedi for pointing out the issue.}:
\begin{equation}\label{eq:tergeom_strcuture}
    \begin{tikzcd}[row sep={12ex,between origins}, column sep={12ex}]
    \mathbf{H}\,\arrow[r, "\mathit{\Gamma}"',"\text{\rotatebox[origin=c]{-90}{$\dashv$}}", shift right=1.1ex] &  \mathbf{\infty Grpd} , \arrow[l, "\mathrm{Disc}"', shift right=1.1ex, hook']   
    \end{tikzcd}
\end{equation}
such that:
\begin{enumerate}[label=(\textit{\roman*})]
    \item the $(\infty,1)$-functor $\mathit{\Gamma}$ is the global section functor.
    \item the $(\infty,1)$-functors $\mathrm{Disc}$.
\end{enumerate} 
\end{remark}

\begin{remark}[Global section functor factors through $t_0$]
Notice that the point $\ast\simeq \bbR^0\in\mathbf{dFSmoothStack}$ lies in the essential image of $\mathsf{Mfd}$. This immediately implies that the global section functor $\mathit{\Gamma}(-)=\bfR\Hom(\bbR^0,-)$ will factor through the underived-truncation $t_0$.
\end{remark}

\begin{example}[Global sections of a formal derived smooth set]
Because of the remark right above, the global sections $\mathit{\Gamma}(X)$ of a formal derived smooth set $X\in\mathbf{dFSmoothSet}$ will be nothing but a set $\mathit{\Gamma}(X)=\bfR\Hom(\bbR^0,X)\simeq \Hom(\bbR^0,t_0X)$.
\end{example}

\begin{definition}[Flat modality]
We define the \textit{flat modality} as the following endofunctor:
\begin{equation}
   \flat\coloneqq \mathrm{Disc}\circ \mathit{\Gamma}\;:\,\mathbf{dFSmoothStack}\,\longrightarrow\,\mathbf{dFSmoothStack}.
\end{equation}
\end{definition}

The term "modality" was imported by \cite{DCCTv2} from type theory.
The flat modality sends a formal derived smooth stack $X$ to the formal derived smooth stack $\flat X$ with the same underlying simplicial set, but which has forgotten all the formal derived smooth structure, i.e.
\begin{equation}\label{eq:flat_points}
    \flat X \;\simeq\, \coprod_{x:\ast\rightarrow X}\! \ast.
\end{equation}
To see this, we use the fact that any $\infty$-groupoid $S\in\mathbf{\infty Grpd}$ is equivalent to the colimit $S \simeq \coprod_{S} \ast$. Moreover, since it is a left adjoint, the functor $\mathrm{Disc}$ preserves colimits, but we can see that it also preserves the terminal object. Thus, we have the equivalences of formal derived smooth stacks $\mathrm{Disc}(S) \simeq \coprod_S \mathrm{Disc}(\ast) \simeq \coprod_S \ast$. Equivalence \eqref{eq:flat_points} is then obtained by choosing the $\infty$-groupoid to be the one of global sections $S=\mathit{\Gamma}(X)$ of some formal derived smooth stack.


A differential topos is a category of sheaves over a site that satisfies certain axioms, which ensure that it has enough structure to capture the basic features of smooth and topological spaces. A differential topos includes a natural notion of differentiation and integration that allows us to define differential forms and cohomology on the sheaves.
The following definition appears in \cite[Section 4.2.1]{DCCTv2}.

\begin{definition}[Differential structure]
A \textit{differential structure} on an $(\infty,1)$-topos $\mathbf{H}$ is the datum of a sub-$(\infty,1)$-topos $\mathbf{H}^\mathrm{red}$ which is embedded via a quadruple of adjoint functors
\begin{equation}\label{eq:diff_strcuture}
    \begin{tikzcd}[row sep={12ex,between origins}, column sep={16ex}]
    \mathbf{H}\, \arrow[r, "\mathit{\Gamma}^\mathrm{dif}", shift left=-4.5ex]\arrow[r, "\mathit{\Pi}^\mathrm{dif}", shift right=-1.5ex] &  \,\mathbf{H}^\mathrm{red} , \arrow[l, "\hat{\imath}"', shift right=4.5ex, hook']  \arrow[l, "\mathrm{Disc}^\mathrm{dif}"', shift left=1.5ex, hook']  
    \end{tikzcd}
\end{equation}
such that the functor $\hat{\imath}$ is fully faithful and preserves finite products.
\end{definition}

Let $\mathsf{C^\infty Alg}^\mathrm{red}\hookrightarrow\mathsf{C^\infty Alg}$ be the full and faithful sub-category of reduced $\Coo$-algebras, i.e. of those $\Coo$-algebras whose underlying $\bbR$-algebra is reduced in the usual sense.
We introduce the reduction functor by the following assignment:
\begin{equation}
\begin{aligned}
    (-)^\mathrm{red}\,:\; \mathsf{sC^\infty Alg} \,&\longrightarrow\, \mathsf{C^\infty Alg}^\mathrm{red} \\
    R\,&\longmapsto\, R^\mathrm{red}\,\coloneqq\,\pi_0R/\mathfrak{m}_{\pi_{0\!}R}
\end{aligned}
\end{equation}
where $\pi_0R$ is the ordinary $\Coo$-algebra given by the coequaliser $\pi_0R = \mathrm{coeq}(\!\!\begin{tikzcd}[row sep=scriptsize, column sep=2.5ex] R_1\! \arrow[r, shift left=0.8ex]\arrow[r, shift left=-0.8ex] &  \!R_0  \!\!\end{tikzcd})$ of the face maps of the $1$-simplices of $R$ (which exists, since all finite colimits exist in $\mathsf{C^\infty  Alg}$, see e.g. \cite{Joyce:2009}) and where $\mathfrak{m}_{\pi_{0\!}R}\subset \pi_0R$ is the nilradical of $\pi_0R$, i.e. the ideal consisting of the nilpotent elements of $\pi_0R$ regarded as an $\bbR$-algebra. Recall from example \ref{ex:weil_algebra} that the quotient $R^\mathrm{red}=\pi_0R/\mathfrak{m}_{\pi_{0\!}R}$ of a $\Coo$-algebra by any of its ideals is canonically a $\Coo$-algebra. Recall adjunction \eqref{eq:coref}.
Now, we can see that we have a simplicial Quillen adjunction $(-)^\mathrm{red}\dashv\iota^\mathrm{red}$, where $\iota^\mathrm{red}:\mathsf{C^\infty Alg}^\mathrm{red} \hookrightarrow \mathsf{sC^\infty Alg}$ is the natural embedding (in fact, $(-)^\mathrm{red}$ automatically preserves cofibrant objects and $\iota^\mathrm{red}$ fibrant objects).
Now, we can restrict everything to finitely generated algebras and obtain the following simplicial Quillen adjunction:
\begin{equation}\label{eq:coref_inf}
    \begin{tikzcd}[row sep=scriptsize, column sep=8.2ex, row sep=18.0ex]
     \mathsf{C^\infty Alg}_\mathrm{fg}^\mathrm{red} \arrow[r, "\iota^\mathrm{red}"', shift left=-0.8ex, hook] &  \mathsf{sC^\infty Alg}_\mathrm{fg}. \arrow[l, "(-)^\mathrm{red}"', shift left=-0.8ex]
    \end{tikzcd}
\end{equation}
Since it is a simplicial Quillen adjunction, it gives naturally rise to a reflective embedding of $(\infty,1)$-categories $\mathbf{N}(\mathsf{C^\infty Alg}_\mathrm{fg}^\mathrm{red})\;\begin{matrix}\leftarrow \\[-1.6ex] \hookrightarrow\end{matrix}\; \mathbf{sC^{\infty\!}Alg}_\mathrm{fg}$.

\begin{construction}[Diagram of sites]
Let us now denote by $\mathsf{C^\infty Var}^\mathrm{red}\coloneqq (\mathsf{sC^\infty Alg}_\mathrm{fg}^\mathrm{red})^\op$ the category of reduced $\Coo$-varieties.
We can extend the diagram of ordinary sites from remark \ref{rem:diagram_of_sites} to include the $(\infty,1)$-category of formal derived smooth manifolds.
Thus, by putting all together, we have the following diagram of $(\infty,1)$-sites:
\begin{equation}
    \begin{tikzcd}[row sep=scriptsize, column sep=16.2ex, row sep=8.0ex]
     \mathsf{Mfd} \arrow[dd, "", shift left=-1.0ex, hook]\arrow[rrd, "\upiota^\mathsf{Mfd}", hook, bend left=55] \arrow[r, "\iota^{\mathsf{Mfd}}", hook] & \mathsf{C^\infty Var}^\mathrm{red} \arrow[dd, "", shift left=-1.0ex, hook] \arrow[rd, "\iota^\mathrm{red}", shift left=1.0ex, hook] &     \\
     && \,\mathsf{dFMfd} \arrow[ld, "", shift left=+1.0ex] \arrow[lu, "(-)^\mathrm{red}", shift left=+1.0ex]\\
     \mathsf{FMfd} \arrow[rru, "\upiota^\mathsf{FMfd}"', hook, bend right=55] \arrow[uu, "", shift left=-1.0ex]\arrow[r, "\iota^{\mathsf{FMfd}}", hook] & \mathsf{C^\infty Var}\arrow[uu, "", shift left=-1.0ex] \arrow[ru, "\iota", shift left=1.0ex, hook] &  
    \end{tikzcd}
\end{equation}
\end{construction}\vspace{-0.4cm}

The diagram of $(\infty,1)$-sites we constructed above encodes all the relations between the relevant sites in the context of derived smooth geometry and it is going to be the main ingredient to show the following theorems of this subsection.

\begin{theorem}[Differential $(\infty,1)$-topos of formal derived smooth stacks]
The $(\infty,1)$-topos $\mathbf{dFSmoothStack}$ of formal derived smooth stacks is naturally equipped with a differential structure, i.e. with a quadruplet of adjoint $(\infty,1)$-functors
\begin{equation}\label{eq:diffcohesiondiag}
    \begin{tikzcd}[row sep={12ex,between origins}, column sep={16ex}]
    \mathbf{dFSmoothStack}\, \arrow[r, "\mathit{\Gamma}^\mathrm{dif}", shift left=-4.5ex]\arrow[r, "\mathit{\Pi}^\mathrm{dif}", shift right=-1.5ex] &  \,\mathbf{SmoothStack}^{\pmb{+}} , \arrow[l, "\hat{\imath}"', shift right=4.5ex, hook']  \arrow[l, "\mathrm{Disc}^\mathrm{dif}"', shift left=1.5ex, hook']  
    \end{tikzcd}
\end{equation}
such that the functor $\hat{\imath}$ is fully faithful and preserves finite products.
\end{theorem}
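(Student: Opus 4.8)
Here $\mathbf{H}^\mathrm{red}=\mathbf{SmoothStack}^{\pmb{+}}=\mathbf{St}(\mathsf{C^\infty Var}^\mathrm{red})$, so the plan is to manufacture the required quadruple by transporting the reflective localisation of sites $\big((-)^\mathrm{red}\dashv\iota^\mathrm{red}\big)$ of \eqref{eq:coref_inf} to the level of stacks, in exactly the spirit of the proof of \cref{prop:ext-trunc}. Passing to the opposite $(\infty,1)$-sites, write $j\coloneqq(\iota^\mathrm{red})^\op:\mathsf{C^\infty Var}^\mathrm{red}\hookrightarrow\mathsf{dFMfd}$ and $r\coloneqq((-)^\mathrm{red})^\op:\mathsf{dFMfd}\to\mathsf{C^\infty Var}^\mathrm{red}$. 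Since taking opposites reverses adjunctions, we get $j\dashv r$ with $j$ fully faithful, and by the construction of the diagram of $(\infty,1)$-sites above both $j$ and $r$ are morphisms of sites, i.e. they carry covering families to covering families.

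First I would record the adjoint data on \emph{pre}-stacks. Each of $j$ and $r$ induces, by left Kan extension, restriction and right Kan extension, an adjoint triple $(j_!\dashv j^\ast\dashv j_\ast)$ and $(r_!\dashv r^\ast\dashv r_\ast)$; in the projective model structures these are (Quillen) adjunctions because restriction is computed object-wise (and precomposition is simultaneously left and right Quillen). The adjunction $j\dashv r$ of the underlying sites then yields the canonical identifications $j^\ast\simeq r_!$ and $r^\ast\simeq j_\ast$ (for $L\dashv R$ one has $L^\ast\simeq R_!$ and $R^\ast\simeq L_\ast$ on presheaves, checked on representables and extended by colimit-preservation). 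Splicing the two triples along these equivalences produces the adjoint quadruple
\begin{equation}
    j_!\;\dashv\; j^\ast\simeq r_!\;\dashv\; r^\ast\simeq j_\ast\;\dashv\; r_\ast
\end{equation}
on the $(\infty,1)$-categories of pre-stacks.

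The main work, and the step I expect to be the real obstacle, is to descend this quadruple to the local (stack) model structures on $\mathbf{dFSmoothStack}$ and $\mathbf{SmoothStack}^{\pmb{+}}$. As in \cref{prop:ext-trunc}, it suffices to check that the relevant right Quillen functors preserve locally fibrant objects, i.e. objects satisfying descent. Because $j$ and $r$ are morphisms of sites, the two \emph{restriction} functors $j^\ast$ and $r^\ast$ send an \'etale hypercover to an \'etale hypercover and hence preserve descent, so both descend directly to stacks; the surrounding left adjoints descend by postcomposing with stackification. The delicate point is the outermost right adjoint $r_\ast$, the right Kan extension: its compatibility with descent, and indeed its existence as a functor on stacks, should be argued from presentability of both $(\infty,1)$-topoi together with the adjoint functor theorem, using that $r^\ast\simeq j_\ast$ preserves all colimits so that $r^\ast$ remains a genuine left adjoint after localisation.

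Finally I would identify the four functors against \eqref{eq:diffcohesiondiag}. The natural choice is $\hat{\imath}\coloneqq r^\ast\simeq j_\ast$, the functor including a reduced stack $Y$ as $U\mapsto Y(U^\mathrm{red})$: it is fully faithful, since $j$ is fully faithful and both Kan extensions along a fully faithful functor are fully faithful (a property preserved by the localisation), and it preserves finite products because, being a restriction functor, it preserves all limits object-wise. The remaining functors are then $\mathit{\Pi}^\mathrm{dif}\coloneqq j^\ast\simeq r_!$, $\mathrm{Disc}^\mathrm{dif}\coloneqq j_!$ and $\mathit{\Gamma}^\mathrm{dif}\coloneqq r_\ast$, so that \eqref{eq:diffcohesiondiag} is exhibited as a differential structure. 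As a consistency check I would verify that the induced idempotent comonad/monad/comonad $\Re\dashv\Im\dashv\&$ on $\mathbf{dFSmoothStack}$ — namely $\Re\simeq j_!j^\ast$, $\Im\simeq j_\ast j^\ast$ and $\&\simeq j_\ast r_\ast$ — reproduce precisely the reduction, infinitesimal shape and infinitesimal flat modalities already announced, confirming that the quadruple is the expected derived differential structure.
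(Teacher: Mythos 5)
Your overall strategy is the paper's own: transport the site adjunction $\big((-)^\mathrm{red}\dashv\iota^\mathrm{red}\big)$ through left Kan extension, restriction and right Kan extension, splice the two adjoint triples along the identifications $j^\ast\simeq r_!$ and $r^\ast\simeq j_\ast$ into the quadruple $j_!\dashv j^\ast\dashv j_\ast\dashv r_\ast$, and descend to the local model structures. The step you flag as the real obstacle is indeed where the paper spends most of its effort: it verifies descent for the outermost right adjoint by an explicit hypercover argument (using that $H(U)\to U$ is formally \'etale to produce a pullback square exhibiting $\iota^{\mathrm{red}}_\ast H(U)\to\iota^{\mathrm{red}}_\ast U$ as a local equivalence), whereas you appeal to presentability and the adjoint functor theorem. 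That alternative is defensible in outline but is not carried out, and it still presupposes that $r^\ast$ restricted to stacks preserves colimits after localisation, which needs the same compatibility with hypercovers you would have to check anyway.

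The genuine error is in your final identification of the functors. In the quadruple $\hat{\imath}\dashv\mathit{\Pi}^\mathrm{dif}\dashv\mathrm{Disc}^\mathrm{dif}\dashv\mathit{\Gamma}^\mathrm{dif}$ the functor $\hat{\imath}$ is the \emph{leftmost} adjoint, so it must be $j_!=\iota^\mathrm{red}_!$, the left Kan extension; your choice $\hat{\imath}\coloneqq r^\ast\simeq j_\ast$ occupies the third slot and is in fact $\mathrm{Disc}^\mathrm{dif}$. Your own consistency check betrays the slip: you write $\Re\simeq j_!j^\ast$ and $\Im\simeq j_\ast j^\ast$, but with your assignment $\Re=\hat{\imath}\circ\mathit{\Pi}^\mathrm{dif}$ would equal $j_\ast j^\ast$ and the reduction and infinitesimal shape modalities would be interchanged. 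The consequence is not merely notational. With the correct $\hat{\imath}=j_!$, full faithfulness still follows from full faithfulness of $j$, but preservation of finite products is \emph{not} automatic for a left Kan extension and is the one genuinely non-formal verification in this part of the proof: the paper establishes it by writing $j_!X$ at a formal derived smooth manifold $U$ as a colimit over the comma category $\iota^\mathrm{red}\!\downarrow\!\mathcal{O}(U)$ and showing that this comma category has finite coproducts (equivalently, that $U\!\downarrow\!(\iota^\mathrm{red})^\op$ has finite products, because $(\iota^\mathrm{red})^\op$ preserves finite products). Your observation that a restriction functor preserves limits object-wise is true but applies to the wrong functor, so this gap remains open in your argument.
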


\begin{proof}
Recall that we have an equivalence $\mathsf{sC^\infty Alg}_\mathrm{fg}^\op \simeq \mathsf{dFMfd}$ of the opposite category of finitely generated $\Coo$-algebras to the category of formal derived smooth manifolds.
By left and right Kan extension, the reflective embedding \eqref{eq:coref_inf} of simplicial sites induces the following Quillen adjunctions:
\begin{equation}
\begin{aligned}
    &\begin{tikzcd}[row sep={18ex,between origins}, column sep={24ex}]
     {[\mathsf{sC^\infty Alg}_\mathrm{fg},\mathsf{sSet}]_\mathrm{proj}} \arrow[r, "\,\iota^\mathrm{red\ast\!} \,\simeq\,(-)^{\mathrm{red}}_!", shift right=+1.5ex] &  {[\mathsf{C^\infty Alg}^\mathrm{red}_\mathrm{fg},\mathsf{sSet}]_\mathrm{proj}} \arrow[l, "\iota^\mathrm{red}_!\phantom{\,\,\simeq\,(-)^{\mathrm{red}}_!}"', shift left=-1.5ex],
    \end{tikzcd} \\[1.1ex]
    &\begin{tikzcd}[row sep={18ex,between origins}, column sep={24ex}]
     {[\mathsf{sC^\infty Alg}_\mathrm{fg},\mathsf{sSet}]_\mathrm{proj}} \arrow[r, "\,\iota^\mathrm{red\ast\!} \,\simeq\,(-)^{\mathrm{red}}_!", shift right=-1.5ex] &  {[\mathsf{C^\infty Alg}^\mathrm{red}_\mathrm{fg},\mathsf{sSet}]_\mathrm{proj}}  \arrow[l, "\,\,\iota^\mathrm{red}_\ast\,\,\,\simeq\,(-)^{\mathrm{red}\ast\!}"', shift right=-1.5ex],
    \end{tikzcd} \\[1.1ex]
    &\begin{tikzcd}[row sep={18ex,between origins}, column sep={25ex}]
     {[\mathsf{sC^\infty Alg}_\mathrm{fg},\mathsf{sSet}]_\mathrm{inj}} \arrow[r, "\phantom{\,\iota^\mathrm{red}_!\,\,\simeq\,}(-)^\mathrm{red}_\ast", shift left=-1.5ex] &  {[\mathsf{C^\infty Alg}^\mathrm{red}_\mathrm{fg},\mathsf{sSet}]_\mathrm{inj}}  \arrow[l, "\,\,\iota^\mathrm{red}_\ast\,\,\,\simeq\,(-)^{\mathrm{red}\ast\!}"', shift right=+1.5ex] ,
    \end{tikzcd}
\end{aligned}
\end{equation}  
which encodes a quadruple of adjoint $(\infty,1)$-functors between the corresponding $(\infty,1)$-categories of pre-stacks.
Now we must show that the adjunctions above give rise to a quadruplet of adjoint $(\infty,1)$-functors from stacks to stacks.
First, we notice that the functors $(\iota^{\mathrm{red}})^\op$ and $((-)^{\mathrm{red}})^\op$ preserve \'etale maps. Moreover, $((-)^{\mathrm{red}})^\op$ preserves limits, as it is a right adjoint.
This is enough to show that $\iota^{\mathrm{red}}_!$ preserves \'etale hypercovers and, therefore, that $\iota^{\mathrm{red}\ast}$ preserves locally fibrant objects.
Similarly, we have that $(-)^{\mathrm{red}}_!$ preserves \'etale hypercovers and, thus, that $(-)^{\mathrm{red}\ast}$ preserves locally fibrant objects.
Thus, by \cite[Section 4.8]{ToenVezzo05}, we have that the functor $\iota^\mathrm{red}$ is continuous and cocontinuous, which means that the triplet of adjoint functors $(\iota^{\mathrm{red}}_!\dashv\iota^{\mathrm{red}\ast}\dashv\iota^{\mathrm{red}}_\ast)$ restricts and corestricts to stacks.

Now, we have left to show that the last Quillen adjunction above restricts and corestricts to stacks or, in other words, that the pre-stack $(-)^\mathrm{red}_\ast X\in[\mathsf{C^\infty Alg}^\mathrm{red}_\mathrm{fg},\mathsf{sSet}]_\mathrm{inj}$ satisfies descent for any locally fibrant object $X\in[\mathsf{sC^\infty Alg}_\mathrm{fg},\mathsf{sSet}]_\mathrm{loc,inj}$.
Since the the adjunction exists already at the level of global model structure, we have the $(\infty,1)$-adjunctions
\begin{equation*}
\begin{tikzcd}[row sep={18ex,between origins}, column sep={9ex}]
     \mathbf{dFSmoothStack}\arrow[r, "", shift left=-1.5ex, hook] & {\mathbf{PredFSmoothStack}} \arrow[l, "L"', shift right=+1.5ex] \arrow[r, "(-)^\mathrm{red}_\ast", shift left=-1.5ex] &  {\mathbf{PreSmoothStack}^+}  \arrow[l, "\iota^\mathrm{red}_\ast"', shift right=+1.5ex] ,
    \end{tikzcd}
\end{equation*} 
where $\mathbf{PredFSmoothStack}$ and $\mathbf{PreSmoothStack}^+$ are the $(\infty,1)$-categories of pre-stacks respectively on $\mathsf{dFMfd}$ and $\mathsf{C^\infty Var}^\mathrm{red}$.
Let us denote by $H(U) \coloneqq \bfL\mathrm{colim}_n H(U)_n$ the geometric realisation of a hypercover.
By \cite[Proposition 3.5.2]{ToenVezzo05}, it is sufficient to check that the map $\iota^{\mathrm{red}}_\ast H(U)\rightarrow \iota^{\mathrm{red}}_\ast U$ is a local equivalence of formal derived smooth pre-stacks for any \'etale hypercover $H(U)_\bullet$ of any representable $U\in\mathsf{C^\infty Var}^\mathrm{red}$.
To check this fact, we construct the composite $\varphi:\iota^{\mathrm{red}}_!\rightarrow  \iota^{\mathrm{red}}_\ast\iota^{\mathrm{red}\ast}\iota^{\mathrm{red}}_! \simeq \iota^{\mathrm{red}}_\ast$.
Notice that $\varphi_U:L\iota^{\mathrm{red}}_! U \rightarrow \iota^{\mathrm{red}}_\ast U$ is a $\tau$-covering for the \'etale topology in the sense of \cite{ToenVezzo05}.
Therefore, by \cite[Corollary 3.3.4]{ToenVezzo05}, if the pullback
\begin{equation}\label{eq:map_loc_equiva}
    \iota^{\mathrm{red}}_! U \times^h_{\iota^{\mathrm{red}}_\ast U}\iota^{\mathrm{red}}_\ast H(U) \,\longrightarrow\, \iota^{\mathrm{red}}_! U
\end{equation}
is a local equivalence, we have that the morphism $\iota^{\mathrm{red}}_\ast H(U) \rightarrow \iota^{\mathrm{red}}_\ast U$ is a local equivalence too.
For this reason, it is enough to show that \eqref{eq:map_loc_equiva} is a local equivalence.
Since the morphism $H(U)\rightarrow U$ is formally \'etale, there is a pullback square
\begin{equation}
    \begin{tikzcd}[row sep=scriptsize, column sep=6.0ex, row sep=9.5ex]
    \iota^{\mathrm{red}}_! H(U) \arrow[d, "\varphi_{H(U)}"] \arrow[r, ""] & \iota^{\mathrm{red}}_! U \arrow[d, "\varphi_U"] \\
    \iota^{\mathrm{red}}_\ast H(U)  \arrow[r, ""] & \iota^{\mathrm{red}}_\ast U,
    \end{tikzcd} 
\end{equation}
implying that $\iota^{\mathrm{red}}_! H(U) \simeq \iota^{\mathrm{red}}_! U \times^h_{\iota^{\mathrm{red}}_\ast U}\iota^{\mathrm{red}}_\ast H(U)$.
At this point, since the functor $\iota^{\mathrm{red}}$ is continuous (as we have seen above), the map $\iota^{\mathrm{red}}_! H(U) \rightarrow \iota^{\mathrm{red}}_! U$ is already a local equivalence. 
Thus, the morphism \eqref{eq:map_loc_equiva} is a local equivalence, and so is the morphism $\iota^{\mathrm{red}}_\ast H(U) \rightarrow \iota^{\mathrm{red}}_\ast U$. In particular, this means that there is an equivalence of formal derived smooth stacks $L\iota^{\mathrm{red}}_\ast H(U) \simeq L\iota^{\mathrm{red}}_\ast U$.
This is enough to shows that the adjunction $(\iota^\mathrm{red}_\ast\dashv (-)_\ast^\mathrm{red})$ restricts to stacks.

Thus, so far, we have constructed a quadruple of adjoint $(\infty,1)$-functors of $(\infty,1)$-categories, which we will denote by
\begin{equation}
    \begin{tikzcd}[row sep={18ex,between origins}, column sep={24ex}]
     \mathbf{dFSmoothStack} \arrow[r, "\phantom{\,\iota^\mathrm{red}_!\,\,\simeq\,}(-)^\mathrm{red}_\ast", shift left=-4.5ex]\arrow[r, "\,\iota^\mathrm{red\ast\!} \,\simeq\,(-)^{\mathrm{red}}_!", shift right=-1.5ex] &  \mathbf{SmoothStack}^{\pmb{+}}  \arrow[l, "\,\,\iota^\mathrm{red}_\ast\,\,\,\simeq\,(-)^{\mathrm{red}\ast\!}"', shift right=-1.5ex]  \arrow[l, "\iota^\mathrm{red}_!\phantom{\,\,\simeq\,(-)^{\mathrm{red}}_!}"', shift left=-4.5ex],
    \end{tikzcd}
\end{equation}
where $\mathbf{SmoothStack}^{\pmb{+}}=\mathbf{N}_{hc}([\mathsf{C^\infty Alg}^\mathrm{red}_\mathrm{fg},\mathsf{sSet}]_\mathrm{proj,loc}^\circ)$ is by definition the $(\infty,1)$-topos of stacks on the ordinary \'etale site of reduced $\Coo$-varieties $\mathsf{C^\infty Var}^\mathrm{red}=(\mathsf{C^\infty Alg}_\mathrm{fg}^\mathrm{red})^\op$, which we constructed in definition \ref{def:extended_formal_smooth}. 
Now, we have left to show that $\iota^\mathrm{red}_!$ is fully faithful and preserves finite products. As for the first property, $\iota^\mathrm{red}_!,\iota^\mathrm{red}_\ast$ are both fully faithful, since $\iota^\mathrm{red}$ fully faithful implies that $\mathrm{id}\rightarrow\iota^\mathrm{red\ast}\iota^\mathrm{red}_!$ and $\iota^\mathrm{red\ast}\iota^\mathrm{red}_\ast \rightarrow \mathrm{id}$ are object-wise equivalences.
As for the second one, it is sufficient to show that for any formal derived smooth stack $X$ and formal derived smooth manifold $U$ the functor
$$X \;\longmapsto\; \bfL\mathrm{colim}\big(\iota^\mathrm{red}\!\!\downarrow\!\mathcal{O}(U) \,\rightarrow\, \mathsf{C^\infty Alg}^\mathrm{red}_\mathrm{fg} \,\xrightarrow{\,\;X\;\,}\, \sSet\big)$$
preserves finite products, which is the case if the comma category $\iota^\mathrm{red}\!\!\downarrow\!\mathcal{O}(U)$ has finite coproducts. This is equivalent to $U\!\!\downarrow\!\!(\iota^\mathrm{red})^\op$ having finite products, which, since $(\iota^\mathrm{red})^\op$ preserves finite products, is true.
Therefore, if we redefine the functors by $\hat{\imath}\coloneqq \iota^\mathrm{red}_!$, $\mathit{\Pi}^\mathrm{dif}\coloneqq \iota^\mathrm{red\ast} \simeq (-)^{\mathrm{red}}_!$, $\mathrm{Disc}^{\mathrm{dif}}\coloneqq \iota^\mathrm{red}_\ast \simeq (-)^{\mathrm{red}\ast}$ and $\mathit{\Gamma}^\mathrm{dif}\coloneqq (-)^{\mathrm{red}}_\ast$, we have the conclusion.
\end{proof}

In the terminology of \cite[Definition 4.2.1]{DCCTv2}, the quadruple of adjoint functors in diagram \eqref{eq:diffcohesiondiag} characterises the $(\infty,1)$-topos $\mathbf{dFSmoothStack}$ as an infinitesimal neighbourhood of the $(\infty,1)$-topos $\mathbf{SmoothStack}^{\pmb{+}}$.
Intuitively speaking, this tells us that, in a certain sense, any stack in $\mathbf{dFSmoothStack}$ can be thought of as an infinitesimal extension of some stack in $\mathbf{SmoothStack}^{\pmb{+}}$. 

\begin{remark}[Interpretation of reduced and co-reduced objects]
In analogy with non-derived differential structures, we could call the functor $\hat{\imath}$ inclusion of \textit{reduced objects} and $\mathrm{Disc}^{\mathrm{dif}}$ inclusion of \textit{co-reduced objects}.
\begin{itemize}
    \item The reduced objects are, intuitively, the ones whose infinitesimal and derived behaviour is determined by their non-infinitesimal ordinary behavior;  
    \item on the other hand, the co-reduced objects are the ones who are lacking of any infinitesimal and derived behaviour.
\end{itemize}
Finally, the functor $\mathit{\Pi}^\mathrm{dif}$ can be thought of as the functor which contracts away the infinitesimal and derived extension of a formal derived smooth stack.
\end{remark}

\begin{remark}[Extending smooth stacks into formal derived smooth stacks]
Notice that we have a diagram of $(\infty,1)$-categories
\begin{equation}
    \begin{tikzcd}[row sep=scriptsize, column sep=9.2ex]
     \mathbf{N}(\mathsf{Mfd}) \arrow[r, "\iota^{\mathsf{Mfd}}", hook] & \mathbf{N}(\mathsf{C^\infty Var}^\mathrm{red}) \arrow[r, "\iota^\mathrm{red}", shift left=+0.8ex, hook] &  \mathbf{dFMfd} \arrow[l, "(-)^\mathrm{red}", shift left=+0.8ex] 
    \end{tikzcd}
\end{equation}
where $\iota^{\mathsf{Mfd}}$ is the full and faithful embedding of smooth manifolds into reduced finitely generated $\Coo$-algebras. Notice that such an embedding does not come with a natural adjoint. In fact, we can always see a smooth manifold as a $\Coo$-variety, but there is no standard way to make a $\Coo$-variety into a smooth manifold.
Thus, the diagram above gives rise to a diagram of $(\infty,1)$-categories of the form
\begin{equation*}
    \begin{tikzcd}[row sep={18ex,between origins}, column sep={17ex}]
     \mathbf{dFSmoothStack} \arrow[r, "\phantom{\,\iota^\mathrm{red}_!\,\,\simeq\,}(-)^\mathrm{red}_\ast", shift left=-4.5ex]\arrow[r, "\,\iota^\mathrm{red\ast\!} \,\simeq\,(-)^{\mathrm{red}}_!", shift right=-1.5ex] &  \mathbf{SmoothStack}^{\pmb{+}}  \arrow[l, "\,\,\iota^\mathrm{red}_\ast\,\,\,\simeq\,(-)^{\mathrm{red}\ast\!}"', shift right=-1.5ex, hook]  \arrow[l, "\iota^\mathrm{red}_!\phantom{\,\,\simeq\,(-)^{\mathrm{red}}_!}"', shift left=-4.5ex, hook]                                                                                    \arrow[r, "\phantom{\ast}\iota^{\mathsf{Mfd}\ast}", shift right=-1.5ex] &  \mathbf{SmoothStack}   \arrow[l, "\iota^\mathsf{Mfd}_!"', shift left=-4.5ex, hook].
    \end{tikzcd}
\end{equation*}
We have that the derived-extension functor is equivalently the composition 
$i = \iota^{\mathrm{red}}_! \circ \iota^{\mathsf{Mfd}}_!$ and the underived-truncation functor is
$t_0 = \iota^{\mathsf{Mfd}\ast}\circ\iota^{\mathrm{red}\ast}$ constructed in proposition \ref{prop:ext-trunc}.
\end{remark}

\begin{lemma}[Underived-truncation and derived-extension of affine $\Coo$-schemes]\label{lem:reduction}
We have the following results about affine $\Coo$-schemes.
\vspace{-0.2cm}\begin{itemize}
    \item For any given ordinary reduced affine $\Coo$-scheme $\Spec(R)\in\mathsf{C^\infty Aff}$ corresponding to the ordinary reduced $\Coo$-algebra $R\in\mathsf{C^\infty Alg}^\mathrm{red}$ we have an equivalence 
\begin{equation}
    \hat{\imath}_{\,}\Spec(R) \;\simeq\; \bfR\Spec(\iota^\mathrm{red}(R))
\end{equation}
in $\mathbf{dC^{\infty\!\!}Aff}$, which corresponds to the homotopy $\Coo$-algebra $\iota^\mathrm{red}(R)\in\mathsf{sC^\infty Alg}$.
    \item For any given homotopy $\Coo$-algebra $R\in\mathsf{sC^\infty Alg}$, we have the equivalence of ordinary smooth sets
    \begin{equation}
        \mathit{\Pi}^\mathrm{dif}\bfR\Spec (R) \;\simeq\; \Spec (R^\mathrm{red}),
    \end{equation}
where $\Spec (R^\mathrm{red})$ is the ordinary affine $\Coo$-scheme corresponding to the reduced ordinary $\Coo$-algebra $R^\mathrm{red}\in\mathbf{C^\infty Alg}^\mathrm{red}$.
\end{itemize}
\end{lemma}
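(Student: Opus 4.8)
The plan is to reduce both statements to the explicit descriptions of $\hat{\imath}$ and $\mathit{\Pi}^\mathrm{dif}$ obtained in the proof of the previous theorem, namely $\hat{\imath}\simeq\iota^\mathrm{red}_!$ and $\mathit{\Pi}^\mathrm{dif}\simeq\iota^\mathrm{red\ast}\simeq(-)^\mathrm{red}_!$, combined with the (co)Yoneda lemma and the reduction adjunction $(-)^\mathrm{red}\dashv\iota^\mathrm{red}$ recalled in Remark~\ref{rem:reducedalgebras}. Throughout I identify a stack on $\mathsf{C^\infty Var}^\mathrm{red}$ (respectively on $\mathbf{dFMfd}$) with a covariant functor on $\mathsf{C^\infty Alg}^\mathrm{red}_\mathrm{fg}$ (respectively on $\mathsf{sC^\infty Alg}_\mathrm{fg}$), so that $\Spec(R)$ corepresents evaluation at $R$ and $\bfR\Spec(A)=\bfR\Hom_{\mathbf{sC^\infty Alg}}(A,-)$, as in Lemma~\ref{lem:FdMfd_are_dCooAff}.

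For the first statement I would invoke the standard fact that left Kan extension commutes with the Yoneda embedding: since $\Spec(R)$ is the Yoneda image of the site object $R\in\mathsf{C^\infty Alg}^\mathrm{red}_\mathrm{fg}$ and $\hat{\imath}=\iota^\mathrm{red}_!$ is left Kan extension along the site inclusion $\iota^\mathrm{red}$, one expects $\iota^\mathrm{red}_!(\Spec R)\simeq\bfR\Spec(\iota^\mathrm{red}R)$. Concretely I would verify this through the adjunction: for any $Y\in\mathbf{dFSmoothStack}$,
\begin{equation*}
\bfR\Hom\big(\iota^\mathrm{red}_!\Spec R,\,Y\big)\simeq\bfR\Hom\big(\Spec R,\,\iota^\mathrm{red\ast}Y\big)\simeq(\iota^\mathrm{red\ast}Y)(R)= Y(\iota^\mathrm{red}R)\simeq\bfR\Hom\big(\bfR\Spec(\iota^\mathrm{red}R),\,Y\big),
\end{equation*}
where the outer two equivalences are co-Yoneda in $\mathbf{SmoothStack}^{\pmb{+}}$ and in $\mathbf{dFSmoothStack}$, and the central identification is the definition of $\iota^\mathrm{red\ast}$ as precomposition; the claim then follows by Yoneda. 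Here one uses that $R$ finitely generated makes $\iota^\mathrm{red}(R)$ a finitely generated constant simplicial homotopy $\Coo$-algebra, so $\bfR\Spec(\iota^\mathrm{red}R)$ is genuinely representable in $\mathbf{dFMfd}$ and hence already a stack. The final clause, that this object corresponds to $\iota^\mathrm{red}(R)\in\mathsf{sC^\infty Alg}$, is then immediate from the definition of $\bfR\Spec$.

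For the second statement I would compute $\mathit{\Pi}^\mathrm{dif}\bfR\Spec(R)=\iota^\mathrm{red\ast}\bfR\Spec(R)$ as the restriction of the functor $\bfR\Spec(R)$ along $\iota^\mathrm{red}$, and evaluate it object-wise. For any $S\in\mathsf{C^\infty Alg}^\mathrm{red}_\mathrm{fg}$,
\begin{equation*}
\big(\iota^\mathrm{red\ast}\bfR\Spec R\big)(S)\;=\;\bfR\Hom_{\mathbf{sC^\infty Alg}}\big(R,\,\iota^\mathrm{red}(S)\big)\;\simeq\;\Hom_{\mathsf{C^\infty Alg}^\mathrm{red}}\big(R^\mathrm{red},\,S\big)\;=\;\Spec(R^\mathrm{red})(S),
\end{equation*}
where the middle equivalence is precisely the reduction adjunction $(-)^\mathrm{red}\dashv\iota^\mathrm{red}$ applied to the $0$-truncated target $\iota^\mathrm{red}(S)$, and the mapping space is discrete because $R^\mathrm{red}$ and $S$ both lie in the ordinary $1$-category $\mathbf{N}(\mathsf{C^\infty Alg}^\mathrm{red})$. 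Since this is natural in $S$, it yields the desired equivalence of ordinary smooth sets $\mathit{\Pi}^\mathrm{dif}\bfR\Spec(R)\simeq\Spec(R^\mathrm{red})$.

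Setting aside the routine bookkeeping, the one point that genuinely needs care — and which I expect to be the main obstacle — is justifying that these manipulations are valid at the derived level rather than merely on homotopy categories: specifically, that $\iota^\mathrm{red}_!$ applied to a cofibrant representable lands on an object that is already local, so no stackification correction intervenes in the first statement, and that the derived hom $\bfR\Hom_{\mathbf{sC^\infty Alg}}(R,\iota^\mathrm{red}S)$ collapses to the discrete set of reduced $\Coo$-algebra maps in the second. Both are consequences of facts established in the theorem's proof — representables are cofibrant and their $\iota^\mathrm{red}_!$-images are representable, hence fibrant-local, and the reduction adjunction is a genuine $(\infty,1)$-adjunction with $0$-truncated right-hand side — but they are exactly the steps where the argument could silently fail without invoking that model-categorical enhancement.
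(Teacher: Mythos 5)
Your proposal is correct, and for the second bullet it is essentially the paper's own proof: evaluate $\mathit{\Pi}^\mathrm{dif}\bfR\Spec(R)=\iota^{\mathrm{red}\ast}\bfR\Spec(R)$ object-wise and collapse $\bfR\Hom_{\mathbf{sC^\infty Alg}}(R,\iota^\mathrm{red}S)$ to $\Hom(R^\mathrm{red},S)$ via the adjunction $(-)^\mathrm{red}\dashv\iota^\mathrm{red}$. The paper in fact only writes out that second computation and leaves the first bullet unproved, so your Yoneda/left-Kan-extension argument for $\hat{\imath}\,\Spec(R)\simeq\bfR\Spec(\iota^\mathrm{red}R)$ supplies a step the paper omits, and it is valid given that $\iota^\mathrm{red}(R)$ is finitely generated (so the target is representable) and that $\iota^{\mathrm{red}\ast}$ preserves stacks, both established in the proof of the preceding theorem.
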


\begin{proof}
For any finitely generated $\Coo$-algebra $A\in\mathsf{sC^\infty Alg}_\mathrm{fg}$, we have the equivalences
\begin{equation}
    \begin{aligned}
        (\mathit{\Pi}^\mathrm{dif}\bfR\Spec R)(A) \;&\simeq\; (\bfR\Spec R)(\iota^\mathrm{red} A) \\
        \;&\simeq\; \Hom_{\mathsf{sC^{\infty\!}Alg}}(R,\iota^\mathrm{red} A) \\
        \;&\simeq\; \Hom_{\mathsf{C^{\infty\!}Alg}}(R^{\mathrm{red}},A) \\
        \;&\simeq\; (\Spec R^\mathrm{red})(A)
    \end{aligned}
\end{equation}
where in the penultimate line we used the adjunction $(-)^\mathrm{red}\dashv \iota^\mathrm{red}$. Thus, the conclusion.
\end{proof}

Just like any quadruple of adjoint $(\infty,1)$-functors, the derived differential structure presented by diagram \eqref{eq:diffcohesiondiag} gives naturally rise to a triplet of adjoint $(\infty,1)$-endofunctors.

\begin{definition}[Modalities of derived differential structure]\label{def:infinitesimalshape}
We define the following endofunctors:
\begin{equation}
   (\Re \;\dashv\;  \Im \;\dashv\; \&):\,\mathbf{dFSmoothStack}\,\longrightarrow\,\mathbf{dFSmoothStack},
\end{equation}
where we respectively called
\begin{enumerate}[label=(\textit{\roman*})]
    \item \textit{infinitesimal reduction modality} $\Re \coloneqq \hat{\imath}\circ \mathit{\Pi}^\mathrm{dif}$,
    \item \textit{infinitesimal shape modality} $\Im \coloneqq \mathrm{Disc}^\mathrm{dif} \circ \mathit{\Pi}^\mathrm{dif}$,
    \item \textit{infinitesimal flat modality} $\& \coloneqq \mathrm{Disc}^\mathrm{dif} \circ \mathit{\Gamma}^\mathrm{dif}$.
\end{enumerate}
\end{definition}

The modalities of our derived differential structure will constitute our fundamental toolbox in dealing with the geometry of formal derived smooth stacks.

\begin{remark}[Infinitesimal reduction counit]
Since there is an adjunction $(\hat{\imath}\dashv\mathit{\Pi}^\mathrm{dif})$, there will be an adjunction counit $\mathfrak{r}:\Re\rightarrow \mathrm{id}$, which, at any $X\in\mathbf{dFSmoothStack}$, will give rise to the canonical morphism
\begin{equation}
\begin{aligned}
    \mathfrak{r}_X:\,\Re(X)\,\longrightarrow\,X.
\end{aligned}
\end{equation}
We will call this \textit{infinitesimal reduction counit}, for short.
\end{remark}

Since by construction we have $\mathit{\Pi}^\mathrm{dif}\circ\hat{\imath}\simeq \mathrm{id}$, it is possible to see that the infinitesimal reduction modality is an idempotent comonad, i.e. we have an equivalence
\begin{equation}
    \Re \,\xrightarrow{\;\simeq\;}\, \Re\Re.
\end{equation}

Let us show the geometric meaning of the infinitesimal reduction counit more concretely.
The following corollary will provide a concrete characterisation of the infinitesimal reduction on the relevant class of examples of derived affine $\Coo$-schemes.

\begin{corollary}[Infinitesimal reduction of derived affine $\Coo$-schemes]
For any given homotopy $\Coo$-algebra $R\in\mathsf{sC^\infty Alg}$, by lemma \ref{lem:reduction} we directly obtain the equivalence
\begin{equation}
    \Re(\bfR\Spec R) \;\simeq\; \bfR\Spec (R^\mathrm{red}).
\end{equation}
\end{corollary}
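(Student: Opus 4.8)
The plan is to obtain the statement as an immediate consequence of unfolding the definition of the infinitesimal reduction modality and then chaining the two parts of Lemma~\ref{lem:reduction}. By Definition~\ref{def:infinitesimalshape} we have $\Re \coloneqq \hat{\imath}\circ\mathit{\Pi}^\mathrm{dif}$, so no new geometric input is needed beyond that lemma; the content is purely in composing equivalences in the right order.

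First I would write, straight from the definition,
\begin{equation*}
    \Re(\bfR\Spec R) \;\simeq\; \hat{\imath}\big(\mathit{\Pi}^\mathrm{dif}(\bfR\Spec R)\big).
\end{equation*}
Then I would apply the second item of Lemma~\ref{lem:reduction}, which evaluates $\mathit{\Pi}^\mathrm{dif}$ on a derived affine $\Coo$-scheme, to get $\mathit{\Pi}^\mathrm{dif}(\bfR\Spec R)\simeq\Spec(R^\mathrm{red})$, hence $\Re(\bfR\Spec R)\simeq\hat{\imath}(\Spec(R^\mathrm{red}))$.

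The last step applies the first item of Lemma~\ref{lem:reduction} to $\Spec(R^\mathrm{red})$. This is admissible precisely because $R^\mathrm{red}=\pi_0R/\mathfrak{m}_{\pi_0R}$ is reduced by construction and therefore lies in $\mathsf{C^\infty Alg}^\mathrm{red}$, so the lemma applies with its input algebra taken to be $R^\mathrm{red}$. It yields $\hat{\imath}(\Spec(R^\mathrm{red}))\simeq\bfR\Spec(\iota^\mathrm{red}(R^\mathrm{red}))$, and under the standing convention that a reduced $\Coo$-algebra is viewed as a homotopy $\Coo$-algebra through $\iota^\mathrm{red}$ (equivalently, as a constant simplicial object), the right-hand side is exactly $\bfR\Spec(R^\mathrm{red})$, closing the chain.

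I do not expect a genuine obstacle, since the corollary is a formal consequence of Lemma~\ref{lem:reduction}; the only thing demanding care is the bookkeeping. One must check that the two equivalences are natural enough to be composed and that the object fed into the first item is really reduced. Both are immediate: the naturality underlying $\Re$ comes for free from the adjunction $(\hat{\imath}\dashv\mathit{\Pi}^\mathrm{dif})$, and reducedness of $R^\mathrm{red}$ is definitional. Thus all the real mathematical weight rests in Lemma~\ref{lem:reduction}, which I am assuming.
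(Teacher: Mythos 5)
Your proposal is correct and is exactly the argument the paper intends: the corollary is stated as a direct consequence of Lemma \ref{lem:reduction}, obtained by unfolding $\Re=\hat{\imath}\circ\mathit{\Pi}^\mathrm{dif}$, applying the second item of the lemma, and then the first item to the (definitionally reduced) algebra $R^\mathrm{red}$, identifying $\iota^\mathrm{red}(R^\mathrm{red})$ with $R^\mathrm{red}$ as a constant simplicial $\Coo$-algebra. Your bookkeeping fills in precisely the steps the paper leaves implicit.
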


Roughly speaking, we can see that the infinitesimal reduction modality is the functor whcih contracts away the formal derived directions of a formal derived smooth stack.

\begin{definition}[Infinitesimal object]
We say that $X$ is an \textit{infinitesimal object} if $\Re(X)\simeq \ast$.
\end{definition}

Notice that the infinitesimal reduction counit of an infinitesimal object $X\in \mathbf{dFSmoothStack}$ becomes the embedding of the canonical point $\mathfrak{r}_X: \ast \rightarrow X$.

\begin{definition}[Reduced object]
We say that $X$ is a \textit{reduced object} if $\Re(X)\simeq X$.
\end{definition}

Notice that the infinitesimal reduction counit of a reduced object $X\in \mathbf{dFSmoothStack}$ becomes the identity $\mathfrak{r}_X: X \xrightarrow{\;\simeq\;} X$.

\begin{remark}[Infinitesimal shape unit]
Since there is an adjunction $(\mathit{\Pi}^\mathrm{dif}\dashv\mathrm{Disc}^\mathrm{dif})$, there will be an adjunction unit $\mathfrak{i}:\mathrm{id}\rightarrow \Im$, which, at any $X\in\mathbf{dFSmoothStack}$, will give rise to the canonical morphism
\begin{equation}
\begin{aligned}
    \mathfrak{i}_X:\,X\,\longrightarrow\,\Im(X),
\end{aligned}
\end{equation}
We will call this \textit{infinitesimal shape unit}, for short.
\end{remark}

Similarly to $\Re$, the infinitesimal shape modality is an idempotent monad, i.e. we have an equivalence
\begin{equation}
    \Im\Im  \,\xrightarrow{\;\simeq\;}\,  \Im.
\end{equation}

Let us show the geometric meaning of the infinitesimal shape unit more concretely. Let us consider a derived formal smooth stack $X\in\mathbf{dFSmoothStack}$. Then we can see that the infinitesimal shape modality will send it to the formal derived smooth stack
\begin{equation}
    \begin{aligned}
        \Im(X)\,:\; \mathsf{dFMfd}^\op \;&\longmapsto\; \sSet  \\
        U \;&\longmapsto\; X(t_0U),
    \end{aligned}
\end{equation}
where $t_0 U$ is the underived-truncation of the formal derived smooth manifold $U$.
Moreover, the infinitesimal shape unit $\mathfrak{i}_X:X\rightarrow\Im(X)$ of $X$ will be concretely given by the natural map of simplicial sets
\begin{equation}
    \mathfrak{i}_X(U)\,:\; X(U) \;\longrightarrow\; X(t_0U)
\end{equation}
on each formal derived smooth manifold $U\in\mathbf{dFMfd}$ in our site.

\begin{definition}[de Rham space]
The \textit{de Rham space} of a formal derived smooth stack $X\in\mathbf{dFSmoothStack}$ is defined by the formal derived smooth stack $\Im(X)\in\mathbf{dFSmoothStack}$.
\end{definition}

\begin{remark}[$\mathscr{D}$-modules]
The $(\infty,1)$-category of $\mathscr{D}$-modules on a formal derived smooth stack $X\in\mathbf{dFSmoothStack}$ can be defined precisely by $\mathscr{D}(X) \coloneqq \QCoh{\Im(X)}$, i.e. by the $(\infty,1)$-category of quasi-coherent sheaves on its de Rham space $\Im(X)$. 
\end{remark}

\begin{remark}[Infinitesimal flat unit]
Since there is an adjunction $(\mathrm{Disc}^\mathrm{dif}\dashv \mathit{\Gamma}^\mathrm{dif})$, there will be an adjunction unit $\mathfrak{e}:\mathrm{id}\rightarrow\&$, which, at any $X\in\mathbf{dFSmoothStack}$, will give rise to the canonical morphism
\begin{equation}
\begin{aligned}
    \mathfrak{e}_X:\,X\,\longrightarrow\,\&(X).
\end{aligned}
\end{equation}
We will call this \textit{infinitesimal flat unit}, for short.
\end{remark}

Similarly to $\Re$ and $\Im$, the infinitesimal flat modality is an idempotent comonad, i.e. we have an equivalence
\begin{equation}
    \&  \,\xrightarrow{\;\simeq\;}\,  \&\&.
\end{equation}

\begin{remark}[Analogy with derived algebraic geometry]
The adjoint $(\infty,1)$-functors $(\Re\dashv\Im)$ from the derived differential structure described above can be thought of as a smooth version of the adjunction constructed by \cite[Section 2]{Calaque:2017} in the context of derived algebraic geometry.
\end{remark}

In the rest of this subsection, we will provide generalisations of the formal geometric objects constructed in \cite{khavkine2017synthetic} to formal derived smooth stacks.

\begin{remark}[Points on a formal derived smooth stack]
Notice that the point $\ast \simeq \bfR\Spec(\bbR)$ is the terminal object in $\mathbf{dFSmoothStack}$. Thus, the Hom-space of morphisms  $\ast \rightarrow X$ from the point into any formal derived smooth stack $X\in\mathbf{dFSmoothStack}$ is nothing but the simplicial set $\mathit{\Gamma}(X)=X(\ast)\in\mathsf{sSet}_{\mathrm{Quillen}}^\circ$. Therefore, we can equivalently give a point $x:\ast \rightarrow X$ on the formal derived smooth stack $X\in\mathbf{dFSmoothStack}$ as an element $x\in \mathit{\Gamma}(X)=X(\ast)$.
\end{remark}

We can now provide a well-defined notion of formal neighborhood of a formal derived smooth stack at any of its points.

\begin{definition}[Formal disk]\label{def:formal_disk}
The \textit{formal disk} $\bbD_{X,x}$ at the point $x:\ast\rightarrow X$ of the formal derived smooth stack $X\in\mathbf{dFSmoothStack}$ is defined by the homotopy pullback
\begin{equation}
    \begin{tikzcd}[row sep=scriptsize, column sep=6.5ex, row sep=8ex]
    \bbD_{X,x}\arrow[d] \arrow[r, hook, "\iota_x"] & X \arrow[d, "\mathfrak{i}_X"] \\
    \ast \arrow[r, hook, "x"] & \Im(X)
    \end{tikzcd}
\end{equation}
in the $(\infty,1)$-category $\mathbf{dFSmoothStack}$ of formal derived smooth stacks.
\end{definition}

\begin{figure}[h]
    \centering
\tikzset {_9126sosej/.code = {\pgfsetadditionalshadetransform{ \pgftransformshift{\pgfpoint{0 bp } { 0 bp }  }  \pgftransformscale{1 }  }}}
\pgfdeclareradialshading{_61jtso174}{\pgfpoint{0bp}{0bp}}{rgb(0bp)=(0.29,0.56,0.89);
rgb(0bp)=(0.29,0.56,0.89);
rgb(14.939547947474887bp)=(0.29,0.56,0.89);
rgb(25bp)=(1,1,1);
rgb(400bp)=(1,1,1)}
\tikzset{_8m362wbff/.code = {\pgfsetadditionalshadetransform{\pgftransformshift{\pgfpoint{0 bp } { 0 bp }  }  \pgftransformscale{1 } }}}
\pgfdeclareradialshading{_9gk41blp1} { \pgfpoint{0bp} {0bp}} {color(0bp)=(transparent!44.99999999999999);
color(0bp)=(transparent!44.99999999999999);
color(14.939547947474887bp)=(transparent!75);
color(25bp)=(transparent!100);
color(400bp)=(transparent!100)} 
\pgfdeclarefading{_ez8r125a6}{\tikz \fill[shading=_9gk41blp1,_8m362wbff] (0,0) rectangle (50bp,50bp); } 
\tikzset{every picture/.style={line width=0.75pt}} 
\begin{tikzpicture}[x=0.75pt,y=0.75pt,yscale=-1,xscale=1]
\path  [shading=_61jtso174,_9126sosej,path fading= _ez8r125a6 ,fading transform={xshift=2}] (62.38,76.12) .. controls (62.38,69.8) and (67.51,64.67) .. (73.83,64.67) .. controls (80.16,64.67) and (85.29,69.8) .. (85.29,76.12) .. controls (85.29,82.45) and (80.16,87.58) .. (73.83,87.58) .. controls (67.51,87.58) and (62.38,82.45) .. (62.38,76.12) -- cycle ; 
 \draw  [color={rgb, 255:red, 74; green, 144; blue, 226 }  ,draw opacity=0.59 ] (62.38,76.12) .. controls (62.38,69.8) and (67.51,64.67) .. (73.83,64.67) .. controls (80.16,64.67) and (85.29,69.8) .. (85.29,76.12) .. controls (85.29,82.45) and (80.16,87.58) .. (73.83,87.58) .. controls (67.51,87.58) and (62.38,82.45) .. (62.38,76.12) -- cycle ; 
\draw   (9.92,10.5) -- (141.17,10.5) -- (141.17,141.75) -- (9.92,141.75) -- cycle ;
\draw  [fill={rgb, 255:red, 0; green, 0; blue, 0 }  ,fill opacity=1 ] (72.13,76.12) .. controls (72.13,75.18) and (72.89,74.42) .. (73.83,74.42) .. controls (74.78,74.42) and (75.54,75.18) .. (75.54,76.12) .. controls (75.54,77.07) and (74.78,77.83) .. (73.83,77.83) .. controls (72.89,77.83) and (72.13,77.07) .. (72.13,76.12) -- cycle ;
\draw (143.17,13.9) node [anchor=north west][inner sep=0.75pt]  [font=\footnotesize]  {$X$};
\draw (79,55.9) node [anchor=north west][inner sep=0.75pt]  [font=\footnotesize]  {$\bbD_{X,x}$};
\draw (69.5,79.4) node [anchor=north west][inner sep=0.75pt]  [font=\scriptsize]  {$x$};
\end{tikzpicture}
    \caption{The formal disk $\bbD_{X,x}$ at a point $x:\ast\rightarrow X$ of a formal derived smooth stack.}
    \label{fig:disk}
\end{figure}
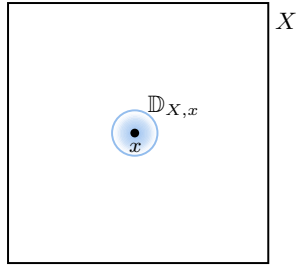

In other words, the formal disk $\bbD_{X,x}$ is the fibre at the point $x:\ast\rightarrow X\rightarrow\Im(X)$ of the bundle provided by the infinitesimal shape unit $\mathfrak{i}_X:X\longrightarrow\Im(X)$, which is the canonical morphism from the stack $X$ to its de Rham space.

\begin{remark}[Formal disk is infinitesimal]
Notice that the formal disk is an infinitesimal object, since we have the natural equivalence $\Re(\bbD_{X,x})\simeq\ast$.
This can be seen by unpacking the definition $\Re \coloneqq \hat{\imath}\circ \mathit{\Pi}^\mathrm{dif}$. In fact, object-wise, one has $(\mathit{\Pi}^\mathrm{dif}\bbD_{X,x})(A)\simeq \bbD_{X,x}(A^\mathrm{red})\simeq \ast$ at any $A$. The, by embedding back trough the functor $\hat{\imath}$, the final object is preserved.
\end{remark}

Before we proceed further, let us provide an extremely simple example of formal disk, namely a formal disk on the real line.

\begin{example}[Formal disk $\mathbb{D}_{\bbR,0}$ on $\bbR$]
Recall that the real line can be Yoneda-embedded into a formal derived smooth stack by the functor $\bbR(U) \simeq \mathcal{O}(U,\,\bbR)$, on any $U\in\mathsf{dFMfd}$.
Let us consider the formal disk $\bbD_{\bbR,0}\xhookrightarrow{\;\iota_{0}}\bbR$ defined as above at the zero point $0\in\bbR$ on the real line. Thus sections are given by pullback of simplicial sets 
\begin{equation}
    \bbD_{\bbR,0}(U) \;\simeq\; \mathcal{O}(U,\,\bbR) \times_{\mathcal{O}(\mathit{\Pi}^\mathrm{dif}U,\,\bbR)}^h \{0\}
\end{equation}
on any formal derived smooth manifold $U\in\mathsf{dFMfd}$.
This means that, for example: 
\begin{itemize}
    \item if $U$ is in the essential image of an ordinary smooth manifold, i.e. if $\Re(U)\simeq U$, then we have that the space of sections is just $\bbD_{\bbR,0}(U)\simeq\{0\}$;
    \item on the other hand, if $U$ is a derived thickened point, i.e. if $\Re(U)\simeq \ast$, then we have that the space of sections $\bbD_{\bbR,0}(U)$ is given by the simplicial set of nilpotent elements of $\mathcal{O}(U)$.
\end{itemize}
Notice that the infinitesimal derived behaviour of $\bbD_{\bbR,0}$ is seen only by formal derived probing spaces $U$ and if we try to probe $\bbD_{\bbR,0}$ with ordinary smooth manifolds we do not see anything but a point. This shows why we can think at $\bbD_{\bbR,0}$ as a derived thickened point.
\end{example}

Now, we can introduce the notion of formal disk bundle, i.e. a fibre bundle of formal disks.

\begin{definition}[Formal disk bundle]\label{def:formal_disk_bundle}
The \textit{formal disk bundle} $T^\infty X$ of a formal derived smooth stack $X\in\mathbf{dFSmoothStack}$ is defined by the homotopy pullback
\begin{equation}
    \begin{tikzcd}[row sep=scriptsize, column sep=6ex, row sep=8.5ex]
    T^\infty X\arrow[d, "\pi"] \arrow[r, "\mathrm{ev}"] & X \arrow[d, "\mathfrak{i}_X"] \\
    X \arrow[r, "\mathfrak{i}_X"] & \Im(X),
    \end{tikzcd}
\end{equation}
in the $(\infty,1)$-category $\mathbf{dFSmoothStack}$ of formal derived smooth stacks.
\end{definition}

\begin{remark}[Formal disk as fibre of the formal disk bundle]
Notice that the fibre at any point $x:\ast\rightarrow X$ of the bundle $T^\infty X \rightarrow X$ is an infinitesimal disk $\bbD_{X,x}\xhookrightarrow{\;\iota_x\;}X$ at such point.
\end{remark}

If we are given a formal derived smooth sub-set of our original formal derived smooth set, we can consider a natural notion of infinitesimal normal bundle. This is given as follows.

\begin{definition}[\'Etalification]
The \textit{\'etalification} $\Im_XY $ of a formal derived smooth stack $Y$ respect to a map $f:Y\rightarrow X$ in $\mathbf{dFSmoothStack}$ is defined by the homotopy pullback
\begin{equation}
    \begin{tikzcd}[row sep=scriptsize, column sep=6ex, row sep=7.5ex]
   \Im_XY \arrow[d] \arrow[r, "\iota_Y"] & X \arrow[d, "\mathfrak{i}_X"] \\
    \Im(Y) \arrow[r, "\Im(f)"] & \Im(X)
    \end{tikzcd}
\end{equation}
in the $(\infty,1)$-category $\mathbf{dFSmoothStack}$ of formal derived smooth stacks.
\end{definition}

\begin{definition}[Normal formal disk bundle]
The \textit{normal formal disk bundle} $N^\infty_X Y$ of a monomorphism $Y\xhookrightarrow{\;e\;}X$ of formal derived smooth stacks in $\mathbf{dFSmoothStack}$ is defined by the homotopy pullback
\begin{equation}
    \begin{tikzcd}[row sep=scriptsize, column sep=6ex, row sep=7.5ex]
    N^\infty_X Y\arrow[d] \arrow[r] & \Im_XY \arrow[d] \\
    Y \arrow[r, "e", hook] & X
    \end{tikzcd}
\end{equation}
in the $(\infty,1)$-category $\mathbf{dFSmoothStack}$ of formal derived smooth stacks.
\end{definition}

\begin{figure}[h]
    \centering
\tikzset {_b3yxkmzjy/.code = {\pgfsetadditionalshadetransform{ \pgftransformshift{\pgfpoint{0 bp } { 0 bp }  }  \pgftransformrotate{-90 }  \pgftransformscale{2 }  }}}
\pgfdeclarehorizontalshading{_p06qgihmx}{150bp}{rgb(0bp)=(1,1,1);
rgb(37.5bp)=(1,1,1);
rgb(49.732142857142854bp)=(0.29,0.56,0.89);
rgb(62.5bp)=(1,1,1);
rgb(100bp)=(1,1,1)}
\tikzset{_ja06e2yp7/.code = {\pgfsetadditionalshadetransform{\pgftransformshift{\pgfpoint{0 bp } { 0 bp }  }  \pgftransformrotate{-90 }  \pgftransformscale{2 } }}}
\pgfdeclarehorizontalshading{_w1snrc4xm} {150bp} {color(0bp)=(transparent!100);
color(37.5bp)=(transparent!100);
color(49.732142857142854bp)=(transparent!72);
color(62.5bp)=(transparent!100);
color(100bp)=(transparent!100) } 
\pgfdeclarefading{_tjkcxzjz1}{\tikz \fill[shading=_w1snrc4xm,_ja06e2yp7] (0,0) rectangle (50bp,50bp); } 
\tikzset{every picture/.style={line width=0.75pt}} 
\begin{tikzpicture}[x=0.75pt,y=0.75pt,yscale=-1,xscale=1]
\draw  [draw opacity=0][shading=_p06qgihmx,_b3yxkmzjy,path fading= _tjkcxzjz1 ,fading transform={xshift=2}] (140.17,74.58) .. controls (140.5,74.51) and (140.84,74.44) .. (141.17,74.37) -- (141.17,91.32) .. controls (140.84,91.41) and (140.5,91.49) .. (140.17,91.58) .. controls (80.86,107.23) and (74.76,58.67) .. (9.92,80.36) -- (9.92,63.35) .. controls (71.2,40.8) and (77.81,88.05) .. (140.17,74.58) -- cycle ;
\draw [line width=0.75]    (10.5,72) .. controls (70.67,50.08) and (80.67,96.08) .. (141.17,83.08) ;
\draw [color={rgb, 255:red, 74; green, 144; blue, 226 }  ,draw opacity=0.54 ][line width=0.75]    (9.5,80.5) .. controls (69.67,58.58) and (79.67,104.58) .. (140.17,91.58) ;
\draw [color={rgb, 255:red, 74; green, 144; blue, 226 }  ,draw opacity=0.54 ][line width=0.75]    (9.5,63.5) .. controls (69.67,41.58) and (79.67,87.58) .. (140.17,74.58) ;
\draw   (9.92,10.5) -- (141.17,10.5) -- (141.17,141.75) -- (9.92,141.75) -- cycle ;
\draw (143.17,13.9) node [anchor=north west][inner sep=0.75pt]  [font=\footnotesize]  {$X$};
\draw (143,74.9) node [anchor=north west][inner sep=0.75pt]  [font=\footnotesize]  {$Y$};
\draw (59.6,45.2) node [anchor=north west][inner sep=0.75pt]  [font=\footnotesize]  {$N_{X}^{\infty } Y$};
\end{tikzpicture}
    \caption{The normal formal disk bundle of a formal derived smooth stack $Y\hookrightarrow X$.}
    \label{fig:normaldisk}
\end{figure}
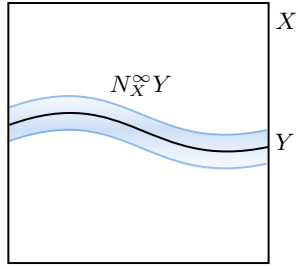

\begin{example}[Trivial embedding]
Notice that, if we consider the trivial formal embedding $e:X\xrightarrow{\;\mathrm{id}\;}X$, then we immediately have the identification $N^\infty_XX\simeq X$, i.e. the bundle with trivial fibre.
\end{example}

\begin{example}[Case of formal disk]
Notice that, if we consider the embedding $e:\ast\xrightarrow{\;x\;}X$ of a point, then we immediately have the identification $N^\infty_X\ast\simeq \bbD_{X,x}$, i.e. the formal disk at $x$.
\end{example}


\begin{example}[Thickened hypersurface]\label{ex:thicksurface}
Let $M\simeq\Sigma\times \bbR$ be a smooth manifold and let $\Sigma_{0}=\Sigma\times \{0\}$ be a submanifold for a fixed element $t_0\in\bbR$. Thus, we have the normal formal disk bundle $N^\infty_M\Sigma_{0} = \Sigma\times \bbD_{\bbR,0}$, where $\bbD_{\bbR,0}$ is the formal disk of $\bbR$ at $0$. 
Let us look at the formal embedding map of the normal formal disk bundle $N^\infty_M\Sigma_{0}$ into $M$ in detail, i.e. at the map
\begin{equation}
    N^\infty_M\Sigma_{0} \,\simeq\, \Sigma\times \bbD_{\bbR,0} \;\xhookrightarrow{\;\;\;\;\iota_{\Sigma}\;\;\;\;}\; \Sigma\times \bbR \,\simeq\, M.
\end{equation}
This can be understood dually by the map 
\begin{equation}
\begin{aligned}
    \Coo(M)\,\simeq\,\Coo(\Sigma\times \bbR) \;&\xtwoheadrightarrow{\;\;\mathcal{O}(\iota_\Sigma)\;\;}\; \mathcal{O}(\Sigma\times \bbD_{\bbR,0}) \,\simeq\, \mathcal{O}(N^\infty_M\Sigma_{0}) \\[0.7ex]
    f(x,t) \;&\xmapsto{\;\;\phantom{\Coo(\iota_\Sigma)}\;\;}\; f(x,0) + \sum_{n>0}\left.\frac{\partial^{n\!} f(x,t)}{\partial t^n}\right|_{t=0}\!\!\!t^n
    \end{aligned}
\end{equation}
which sends a smooth function to its Taylor series at $t=0$.
\end{example}

Now we want to study what happens when we restrict sections of some fibre bundle to the \'etalification of a sub-stack of the base stack. Let us make this idea more precise.

\begin{remark}[Formal restriction of sections]\label{rem:formalrestriction}
Let $E\rightarrow X$ be a fibre bundle, as defined in Definition \ref{def:fibre_bundle}, and let $Y\xhookrightarrow{\;e\;}X$ be a formal derived smooth stack in $\mathbf{dFSmoothStack}$.
Recall the Definition \ref{def:sec_fibre_bundle} of $\infty$-groupoid of sections of a fibre bundle.
Then, we will call the $\infty$-groupoid
$\Gamma\!\left(\Im_XY,\, \iota_{Y\!}^\ast E\right)$, where $\iota_Y$ is the formal embedding map ${\Im_XY}\hookrightarrow X$, the \textit{$\infty$-groupoid of formal restricted sections} of $E$ on $Y$.
The embedding $\iota_Y:\Im_XY \hookrightarrow X$ of formal derived smooth set induces a morphism
\begin{equation}
    \Gamma(X,E) \;\xtwoheadrightarrow{\;\;\pi_Y\;\;}\; \Gamma\!\left(\Im_XY,\, \iota_{Y\!}^\ast E\right),
\end{equation}
which we will call \textit{formal restriction} of sections.
\end{remark}

Let us come back to the example of the thickened hyper-surface and let us concretely see how sections on the total smooth manifold restrict to the aforementioned thickened hyper-surface.

\begin{example}[Scalar field on thickened hypersurface]
Consider the situation of example \ref{ex:thicksurface}. Now, we introduce a trivial vector bundle $E\coloneqq M\times V \twoheadrightarrow M$, where $V$ is a vector space. The formal restriction of sections of such a bundle to the formal submanifold $\Sigma_{t_0}$ will be given by
\begin{equation}
\begin{aligned}
    \Gamma(M,E)\,\simeq\,\Gamma(\Sigma\times \bbR, E) \;&\xtwoheadrightarrow{\;\;\pi_\Sigma\;\;}\; \Gamma(\Sigma\times \bbD_{\bbR,0},\iota_\Sigma^\ast E) \,\simeq\, \Gamma(N^\infty_M\Sigma_{t_0},\iota_\Sigma^\ast E) \\[0.7ex]
    \phi^i(x,t) \;&\xmapsto{\;\;\phantom{\pi_\Sigma}\;\;}\; \phi^i(x,0) + \sum_{n>0}\left.\frac{\partial^{n\!} \phi^i(x,t)}{\partial t^n}\right|_{0}\!\!\!t^n.
    \end{aligned}
\end{equation}
In other words, the restriction sends a scalar field $\phi^i(x,t)$ to the collection of boundary conditions $\phi^i(x,0)$, $\dot{\phi}^i(x,0)$, $\ddot{\phi}^i(x,0)$, etc \dots, at a fixed $0\in \bbR$.
\end{example}

\begin{lemma}[Restriction of formal disk bundle]\label{lem:horverformal}
Consider a formal derived smooth stack $Y\xhookrightarrow{\;e\;}X$ in $\mathbf{dFSmoothStack}$ and let $T^\infty X|_Y \coloneqq T^\infty X{\times_X}Y$ be the restriction of the formal disk bundle of $X$ to $Y$. Then we have the equivalence of formal derived smooth stacks
\begin{equation}
     T^\infty X|_Y \;\simeq\; T^\infty Y \times_Y N^\infty_X Y.
\end{equation}
\end{lemma}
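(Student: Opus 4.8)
The plan is to express every object appearing in the statement as a homotopy pullback directly from its defining square, and then to manipulate these pullbacks using only two tools: the pasting law for homotopy pullbacks, and the naturality of the infinitesimal shape unit $\mathfrak{i}:\mathrm{id}\Rightarrow\Im$, which supplies the commuting square
\[
\begin{tikzcd}[row sep=4.5ex, column sep=6ex]
Y\arrow[r,"e"]\arrow[d,"\mathfrak{i}_Y"'] & X\arrow[d,"\mathfrak{i}_X"]\\
\Im(Y)\arrow[r,"\Im(e)"'] & \Im(X)
\end{tikzcd}
\]
expressing $\mathfrak{i}_X\circ e\simeq\Im(e)\circ\mathfrak{i}_Y$. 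Throughout I write $\alpha:\Im_X Y\to\Im(Y)$ and $\iota_Y:\Im_X Y\to X$ for the two legs of the defining square of the \'etalification $\Im_X Y\simeq\Im(Y)\times_{\Im(X)}X$.

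First I would reduce the left-hand side. Since $T^\infty X\simeq X\times_{\Im(X)}X$ with $\pi$ the first projection, restricting the $\pi$-leg along $e$ and pasting the restriction square onto the defining pullback of $T^\infty X$ gives
\[
T^\infty X|_Y\;\simeq\;Y\times_{\Im(X)}X,
\]
with structure maps $\mathfrak{i}_X\circ e$ and $\mathfrak{i}_X$. Factoring $\mathfrak{i}_X\circ e$ through $\Im(Y)$ by naturality and recognising $\Im(Y)\times_{\Im(X)}X\simeq\Im_X Y$, a second pasting yields $T^\infty X|_Y\simeq Y\times_{\Im(Y)}\Im_X Y$. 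In the same manner one records $T^\infty Y\simeq Y\times_{\Im(Y)}Y$ and, pasting the defining squares of $N^\infty_X Y=Y\times_X\Im_X Y$ and of $\Im_X Y$, one obtains $N^\infty_X Y\simeq Y\times_{\Im(X)}\Im(Y)$.

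The engine of the comparison is that $\iota_Y$ is formally \'etale — this is the defining feature of the \'etalification as a pullback of $\mathfrak{i}_X$ — and that for a formally \'etale map $f\colon E\to X$ the formal disk bundle is pulled back, $T^\infty E\simeq E\times_X T^\infty X$. The pair $(\mathfrak{i}_Y,e)$ determines a section-type map $\phi\colon Y\to\Im_X Y$ with $\iota_Y\circ\phi\simeq e$ and $\alpha\circ\phi\simeq\mathfrak{i}_Y$; factoring the restriction along $\phi$ and using the displayed pullback property of $\iota_Y$ gives $T^\infty X|_Y\simeq T^\infty(\Im_X Y)|_Y$ (restriction along $\phi$). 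It then remains to split the formal disk bundle of $\Im_X Y$, restricted to $Y$, along the projection $\alpha\colon\Im_X Y\to\Im(Y)$: the $\alpha$-vertical directions reassemble into $N^\infty_X Y$, while the directions pulled back from $\Im(Y)$ along $\mathfrak{i}_Y$ reassemble into $T^\infty Y$, and presenting these as a fibre product over $Y$ produces the asserted equivalence $T^\infty X|_Y\simeq T^\infty Y\times_Y N^\infty_X Y$.

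The hard part is exactly this last splitting. It is the infinitesimal analogue of the tubular-neighbourhood decomposition of the formal disk, $\bbD_{X,e(y)}\simeq\bbD_{Y,y}\times(\text{normal disk})$, at each point of $Y$, and it is not a formal consequence of the pasting law alone. I would settle it by descent to representables: an equivalence in $\mathbf{dFSmoothStack}$ may be tested after pullback to each formal derived smooth manifold $U\to Y$, and over such $U$ the monomorphism $e$ induces the cofibre sequence $e^\ast\bbL_X\to\bbL_Y\to\bbL_e$ of cotangent complexes. Since the formal disk at a point is controlled by the dual of the cotangent complex, this cofibre sequence yields the product decomposition of the fibres; naturality of all the maps involved then lets one glue the local splittings into a single equivalence of bundles over $Y$, completing the proof.
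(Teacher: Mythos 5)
Your first half coincides with the paper's: both identify $T^\infty X|_Y$ with $Y\times_{\Im(X)}X$ by pasting the restriction square onto the defining square of $T^\infty X$. From there the paper does something far shorter than what you attempt. It writes a second grid of pullback squares: the defining square of $T^\infty Y\times_Y N^\infty_X Y$ stacked on the defining square of $T^\infty Y$ (giving $T^\infty Y\times_Y N^\infty_X Y\simeq Y\times_{\Im(Y)}N^\infty_X Y$), pasted against the rectangle with corners $N^\infty_X Y$, $X$, $\Im(Y)$, $\Im(X)$ obtained by composing the defining squares of $N^\infty_X Y$ and $\Im_X Y$; the outer rectangle then exhibits $T^\infty Y\times_Y N^\infty_X Y\simeq Y\times_{\Im(X)}X$, and one compares with the first computation. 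No cotangent complexes, no descent, no fibrewise splitting: the entire content is the pasting law applied to defining squares. Your own unwinding of $N^\infty_X Y=Y\times_X\Im_X Y$ as $Y\times_{\Im(X)}\Im(Y)$ is exactly where you part company with the paper: its argument needs the rectangle $N^\infty_X Y\to X$ over $\Im(Y)\to\Im(X)$ to be a pullback, i.e. $N^\infty_X Y\simeq\Im(Y)\times_{\Im(X)}X=\Im_X Y$, and your computation is precisely what makes that identification non-automatic. So your instinct that a splitting is being smuggled in is not unreasonable — but it sets you against the proof the paper actually gives, which is pure pasting.

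The genuine gap is that the argument you substitute does not close this step. First, you never construct a comparison morphism between $T^\infty X|_Y$ and $T^\infty Y\times_Y N^\infty_X Y$; "testing after pullback to representables $U\to Y$" presupposes a map to test, and in your setup neither side maps naturally to the other. Second, the cofibre sequence $e^\ast\bbL_X\to\bbL_Y\to\bbL_e$ does not split canonically, and passing from a (co)fibre sequence of cotangent complexes to a product decomposition $\bbD_{X,e(y)}\simeq\bbD_{Y,y}\times(\text{normal disk})$ is exactly the tubular-neighbourhood-type statement you are trying to prove, not an input to it. Third, even granting such fibrewise splittings, they are non-canonical, and "naturality of all the maps involved" does not let you glue non-canonical choices over $Y$ into a single equivalence of bundles — that gluing is precisely where tubular-neighbourhood arguments break down in families. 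As written, the proposal establishes the preliminary identifications of $T^\infty X|_Y$, $T^\infty Y$ and $N^\infty_X Y$ as fibre products but not the asserted equivalence; to repair it along the paper's lines you would need to justify the single pullback square identifying $N^\infty_X Y$ with $\Im(Y)\times_{\Im(X)}X$, after which everything else is the pasting law.
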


\begin{proof}
First, notice that the restriction of the formal disk bundle $T^\infty X|_Y \simeq  Y \times_{\Im(X)} X$ by the following pullback squares:
\begin{equation}
    \begin{tikzcd}[row sep=scriptsize, column sep=5ex, row sep=7.0ex]
    {T^\infty X|_Y}\arrow[d] \arrow[r] & T^\infty X \arrow[r] \arrow[d] & X \arrow[d]\\
    Y  \arrow[r, " "] & X \arrow[r] & \Im(X).
    \end{tikzcd}
\end{equation}
On the other hand, we also have the equivalence $T^\infty Y \times_Y N^\infty_X Y \simeq  Y \times_{\Im(X)} X$, which follows from the other pullback squares
\begin{equation}
    \begin{tikzcd}[row sep=scriptsize, column sep=4ex, row sep=7.0ex]
    T^\infty Y {\times_Y}N^\infty_X Y\arrow[d] \arrow[r] & N^\infty_XY \arrow[r] \arrow[d] & X \arrow[dd]\\
    T^\infty Y \arrow[d] \arrow[r, " "] & Y \arrow[d] & \\
    Y \arrow[r, " "] & \Im(Y)\arrow[r] & \Im(X).
    \end{tikzcd}
\end{equation}
Therefore, we have the conclusion of the lemma.
\end{proof}

\subsection{Formal moduli problems from derived infinitesimal cohesion}\label{subsec:FMP}

In this subsection we will briefly investigate the relation between formal derived smooth stacks, which we have defined in this paper, and formal moduli problems, which are the pivotal ingredient of the formalisation of BV-theory developed by \cite{FactI}.
This relation is summed up in \cref{fig:stackpicture}: a pointed formal moduli problem can be seen as a formal neighbourhood of a more general formal derived smooth stack.  

To begin with, let us consider the definition of formal moduli problem as it appears in \cite{FactII}. From now on, we will denote by $\mathsf{dgArt}_{\mathbbvar{k}}^{\leq 0}$ the category of \textit{local Artinian dg-algebras}. Recall that a local Artinian dg-algebra is a negatively graded dg-$\mathbbvar{k}$-algebra $\mathcal{A}$ concentrated in finitely many degrees, whose graded components are finite-dimensional and which comes equipped with a unique maximal differential ideal $\mathfrak{m}_\mathcal{A}\subset \mathcal{A}$ such that $\mathcal{A}/\mathfrak{m}_\mathcal{A}\cong \mathbbvar{k}$ and $\mathfrak{m}_\mathcal{A}^N$ for some $N\gg 0$. Equivalently, a local Artinian dg-algebra is a negatively graded dg-$\mathbbvar{k}$-algebra $\mathcal{A}$ concentrated in finitely many degrees, whose $0$th cohomology $\mathrm{H}^0(\mathcal{A})$ is a local Artinian algebra in the ordinary sense.
Then, the definition of formal moduli problem is the following.

\begin{definition}[Pointed formal moduli problem]
A \textit{pointed formal moduli problem} is a functor
\begin{equation}
    F \,:\; \mathsf{dgArt}_{\mathbbvar{k}}^{\leq 0} \,\longrightarrow\, \sSet,
\end{equation}
such that it satisfies the following properties:
\begin{itemize}
    \item $F(\mathbbvar{k})$ is contractible,
    \item $F$ maps surjective morphisms of Artinian dg-algebras to fibrations of simplicial sets,
    \item Let $\mathcal{A}\twoheadrightarrow \mathcal{C}$ and $\mathcal{B}\twoheadrightarrow \mathcal{C}$ be two surjective morphisms of dg-Artinian algebras. Then, the natural map $F(\mathcal{A}\times_\mathcal{C}\mathcal{B})\rightarrow F(\mathcal{A})\times_{F(\mathcal{C})}F(\mathcal{B})$ is a weak homotopy equivalence.
\end{itemize}
\end{definition}

In other words, we can see a pointed formal moduli problem as a derived stack on the $(\infty,1)$-site of dg-Artinian algebras, with the natural simplicial model structure induced by the usual $(\infty,1)$-site structure of commutative dg-algebras. 
A pivotal class of these objects will be provided by local $L_\infty$-algebras, whose definition from \cite{FactII} we now recall.

\begin{definition}[Local $L_\infty$-algebra]
A \textit{local $L_\infty$-algebra} $\mathfrak{L}(M)$ on a smooth manifold $M\in\Mfd$ is a $\mathbb{Z}$-graded vector bundle $L\twoheadrightarrow M$ whose space of sections $\mathfrak{L}(M)\coloneqq \Gamma(M,L)$ is equipped with a collection of poly-differential operators
\begin{equation}
    \ell_n \,:\; \mathfrak{L}(M)^{\otimes n} \,\longrightarrow\, \mathfrak{L}(M)
\end{equation}
of cohomological degree $2-n$ for any $n\geq 1$ such that $(\mathfrak{L}(M),\{\ell_n\}_{n\geq 1})$ is an $L_\infty$-algebra.
\end{definition}

The definition above is, then, a natural generalisation of the more familiar notion of $L_\infty$-algebra on a degree-wise finite-dimensional $\mathbb{Z}$-graded vector space to the case of a infinite-dimensional $\mathbb{Z}$-graded vector space of sections of a $\mathbb{Z}$-graded vector bundle.
As anticipated, an $L_\infty$-algebra, local or not, gives naturally rise to a formal moduli problem by the following construction.

\begin{definition}[Maurer-Cartan formal moduli problem]
Given an $L_\infty$-algebra $\mathfrak{g}$, the \textit{Maurer-Cartan formal moduli problem} $\mathbf{MC}(\mathfrak{g})$ can be defined by the functor
\begin{equation}
    \begin{aligned}
        \mathbf{MC}(\mathfrak{g}) \,:\;  \mathsf{dgArt}_{\mathbbvar{k}}^{\leq 0} \,&\longrightarrow\, \sSet \\
        \mathcal{A} \,&\longmapsto\, \mathrm{MC}(\mathfrak{g}\otimes_\mathbbvar{k} \mathfrak{m}_\mathcal{A}),
    \end{aligned}
\end{equation}
where $\mathfrak{m}_\mathcal{A}$ is the maximal differential ideal of $\mathcal{A}$ and $\mathrm{MC}(-)$ is the simplicial set of solutions to the Maurer-Cartan equation.
\end{definition}

Notice that the Maurer-Cartan formal moduli problem is a pointed formal moduli problem.

\begin{remark}[Any pointed formal moduli problem is equivalent to a Maurer-Cartan one]
Thanks to the results by \cite{HINICH2001209, pridham2010unifying}, we know that any pointed formal moduli problem $F$ is equivalent to a Maurer-Cartan formal moduli problem, i.e. there is an equivalence
\begin{equation}
    F \;\simeq\; \mathbf{MC}(\mathfrak{L}(M)),
\end{equation}
for some local $L_\infty$-algebra $\mathfrak{L}(M)$ on the smooth manifold $M$.
\end{remark}
Thus, without any loss of generality, we can focus on Maurer-Cartan moduli problems.

\begin{construction}[Artinian dg-algebras are finitely generated homotopy $\Coo$-algebras]$ $\vspace{-0.2cm}

\begin{itemize}
    \item Since Artinian dg-algebras $\mathsf{dgArt}_{\bbR}^{\leq 0}\subset \mathsf{dgcAlg}_\bbR^{\leq 0}$ naturally embed into the model category of dg-commutative algebras, then, by composing with the Dold-Kan correspondence functor $|-|_\mathrm{DK}: \mathsf{dgcAlg}_\bbR^{\leq 0}\longrightarrow\mathsf{scAlg}_\bbR$, see e.g. \cite{Jardine:1999}, we can embed Artinian dg-algebras into simplicial commutative algebras.
    \item Given any Artinian dg-algebra $\mathcal{A}\in\mathsf{dgArt}_{\bbR}^{\leq 0}$, its $0$-degree component $\mathcal{A}_0$ is an ordinary Artinian algebra and thus it is canonically a $\Coo$-algebra by the discussion in section \ref{sec:Coo}. Moreover, the $\{\mathcal{A}_{-i}\}_{i>0}$ are modules on $\mathcal{A}_0$. Therefore, we have a canonical dg-$\Coo$-algebra structure on $\mathcal{A}$ in the sense of \cite{carchedi2012homological}. Then, by \cite{carchedi2023derived}, its Dold-Kan simplicialisation is a homotopy $\Coo$-algebra, which we will denote by $|\mathcal{A}|_{\mathrm{DK}}^{\Coo}\in\mathsf{sC^\infty Alg}$.
    \item By definition, the $0$th cohomology $\mathrm{H}^0(\mathcal{A})\cong \pi_0|\mathcal{A}|_\mathrm{DK}$ of a local Artinian dg-algebra $\mathcal{A}$ is an ordinary local Artinian algebra and thus it is canonically a finitely presented $\Coo$-algebra and, in particular, a finitely generated $\Coo$-algebra. Recall that, for a simplicial $\Coo$-algebra $R$ to be finitely generated in the homotopical sense, it is sufficient that $\pi_0R$ is finitely generated in the ordinary sense. Therefore, $|\mathcal{A}|_{\mathrm{DK}}^{\Coo}\in\mathsf{sC^\infty Alg}_\mathrm{fg}$ is canonically an finitely generated $\Coo$-algebra.
\end{itemize}
Thus, by generalising the case of ordinary Artinian algebras, the Dold-Kan functor $|-|_\mathrm{DK}$ can be uniquely lifted as follows
\begin{equation}
    \begin{tikzcd}[row sep={15ex,between origins}, column sep={18ex,between origins}]
    & \mathsf{sC^\infty Alg}_\mathrm{fg} \arrow[d, "(-)^\mathrm{alg}"] \\
    \mathsf{dgArt}_\bbR^{\leq 0} \arrow[r, "|-|_\mathrm{DK}"]\arrow[ru,  "|-|_{\mathrm{DK}}^{\Coo}"] & \mathsf{scAlg}_\bbR
    \end{tikzcd}
\end{equation}
where $(-)^\mathrm{alg}:\mathsf{sC^\infty Alg}\rightarrow\mathsf{scAlg}_\bbR$ is, as usual, the forgetful functor which forgets the $\Coo$-algebra structure and leaves us with the underlying simplicial commutative algebra. Therefore we have an embedding
\begin{equation}
    |-|_{\mathrm{DK}}^{\Coo}\,:\, \mathsf{dgArt}_{\bbR}^{\leq 0} \,\longhookrightarrow\, \mathsf{sC^\infty Alg}_{\mathrm{fg}}.
\end{equation}
In other words, we can interpret an Artinian dg-algebra as the algebra of functions on a formal derived smooth manifold, which will be, in particular, a thickened point. 
\end{construction}

This means that we could see formal moduli problems as formal derived smooth stacks whose source category has been restricted to derived thickened points.
In this light, it is possible to see that we can always extract a formal moduli problem from a formal derived smooth stack $X$ by restricting the $(\infty,1)$-site of formal derived smooth manifolds to the $(\infty,1)$-site of thickened points and by sending such thickened points to some fixed point $x:\ast\rightarrow X$ of the original stack. 
Let us construct this operation step by step.

\begin{construction}[Formal moduli problems as formal completion of formal derived smooth stacks]\label{con:FMP_as_FC}
Let $X\in\mathbf{dFSmoothStack}$ be a formal derived smooth stack. As discussed above, we have the embedding $|-|_{\mathrm{DK}}^{\Coo}: \mathsf{dgArt}_{\bbR}^{\leq 0} \hookrightarrow \mathsf{sC^\infty Alg}_{\mathrm{fg}}$. This gives immediately rise to a formal moduli problem $X^\wedge$ which is defined by the pullback $X^\wedge \coloneqq (|-|_{\mathrm{DK}}^{\Coo})^\ast X$.
This is a functor
\begin{equation}\label{eq:stack_to_fmp}
\begin{aligned}
    X^\wedge \,:\;  \mathsf{dgArt}_{\bbR}^{\leq 0} \,&\longrightarrow\, \sSet \\
    \mathcal{A} \,&\longmapsto\, X\big( |\mathcal{A}|_{\mathrm{DK}}^{\Coo} \big),
\end{aligned}
\end{equation}
where $|\mathcal{A}|_{\mathrm{DK}}^{\Coo}\in\mathsf{sC^\infty Alg}_\mathrm{fg}$ is the finitely generated simplicial $\Coo$-algebra corresponding to the Artinian dg-algebra $\mathcal{A}\in\mathsf{dgArt}_\bbR^{\leq 0}$.
However, this functor does not encode a \textit{pointed} formal moduli problem, because the thickened points in the site are allowed to be sent to any point of the stack $X$ and not only to some fixed point $x\in X$. 
Let us then fix a point $x:\ast\rightarrow X$ and define the following pointed formal moduli problem:
\begin{equation}
\begin{aligned}
    X^\wedge_x \,:\;  \mathsf{dgArt}_{\bbR}^{\leq 0} \,&\longrightarrow\, \sSet \\
    \mathcal{A} \,&\longmapsto\, X\big( |\mathcal{A}|_{\mathrm{DK}}^{\Coo} \big)\times_{X(\ast)}\ast\,,
\end{aligned}
\end{equation}
which is the smooth version of the construction appearing in \cite[Section 4.2]{Toen:2014re} and \cite{Calaque:2017}, called \textit{formal completion} at $x$ of a derived stack.
\end{construction}

\begin{definition}[$(\infty,1)$-topos of formal moduli problems]
We define the $(\infty,1)$-category of \textit{formal moduli problems} by the $(\infty,1)$-category of pre-stacks
\begin{equation}
    \mathbf{FMP} \;\coloneqq\; \mathbf{N}_{hc}([\mathsf{dgArt}^{\leq 0}_\bbR, \sSet]_\mathrm{proj}^\circ),
\end{equation}
with its natural structure of $(\infty,1)$-topos of pre-stacks.
\end{definition}

\begin{proposition}[Infinitesimally cohesive $(\infty,1)$-topos of formal moduli problems]
The $(\infty,1)$-topos $\mathbf{FMP}$ of formal moduli problems has a natural infinitesimally cohesive structure as defined by \cite[Definition 4.1.21]{DCCTv2}.\end{proposition}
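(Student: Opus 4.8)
The plan is to realise the infinitesimal cohesion of $\mathbf{FMP}$ over the base $(\infty,1)$-topos $\mathbf{\infty Grpd}$ as a direct consequence of the fact that $\mathbbvar{k}=\bbR$ is a zero object of the site $\mathsf{dgArt}^{\leq 0}_\bbR$. First I would recall that, by construction, $\mathbf{FMP}=\mathbf{N}_{hc}([\mathsf{dgArt}^{\leq 0}_\bbR,\sSet]_\mathrm{proj}^\circ)$ is an $(\infty,1)$-topos of co-presheaves, i.e.\ of $(\infty,1)$-functors $\mathsf{dgArt}^{\leq 0}_\bbR\rightarrow\mathbf{\infty Grpd}$. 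As for every presheaf $(\infty,1)$-topos, it therefore comes equipped with the canonical adjoint triple
\[
\big(\mathit{\Pi}_\mathrm{inf}=\mathrm{colim}\big)\;\dashv\;\big(\mathrm{Disc}_\mathrm{inf}=\mathrm{const}\big)\;\dashv\;\big(\mathit{\Gamma}_\mathrm{inf}=\lim\big),
\]
where the homotopy colimit and limit are taken over $\mathsf{dgArt}^{\leq 0}_\bbR$. The goal is then to (a) extend this triple to the right by a fourth adjoint $\mathrm{coDisc}_\mathrm{inf}$, (b) check that $\mathrm{Disc}_\mathrm{inf}$ and $\mathrm{coDisc}_\mathrm{inf}$ are fully faithful, and (c) verify that the points-to-pieces transform $\mathit{\Gamma}_\mathrm{inf}\to\mathit{\Pi}_\mathrm{inf}$ is an equivalence, which is the defining feature of infinitesimal cohesion in the sense of \cite[Definition 4.1.21]{DCCTv2}.

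The central step is the observation that $\mathbbvar{k}=\bbR$ is simultaneously initial and terminal in $\mathsf{dgArt}^{\leq 0}_\bbR$: it is initial because every local Artinian dg-algebra $\mathcal{A}$ receives a unique unit map $\mathbbvar{k}\rightarrow\mathcal{A}$, and it is terminal because the local Artinian condition $\mathcal{A}/\mathfrak{m}_\mathcal{A}\cong\mathbbvar{k}$ supplies a unique augmentation $\mathcal{A}\twoheadrightarrow\mathbbvar{k}$. Granting this, I would use the standard fact that a homotopy limit over a category with an initial object, and a homotopy colimit over a category with a terminal object, are both computed by evaluation at that object. This yields the equivalences
\[
\mathit{\Gamma}_\mathrm{inf}(F)\;\simeq\;F(\mathbbvar{k})\;\simeq\;\mathit{\Pi}_\mathrm{inf}(F),
\]
so that both $\mathit{\Gamma}_\mathrm{inf}$ and $\mathit{\Pi}_\mathrm{inf}$ coincide with the evaluation functor $\mathrm{ev}_\mathbbvar{k}$. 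This immediately settles (c), since the canonical points-to-pieces transform is then the identity of $\mathrm{ev}_\mathbbvar{k}$, hence an equivalence; and it also yields product-preservation of $\mathit{\Pi}_\mathrm{inf}$, as $\mathit{\Gamma}_\mathrm{inf}$ preserves all limits.

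With $\mathit{\Gamma}_\mathrm{inf}\simeq\mathrm{ev}_\mathbbvar{k}$ identified as an evaluation functor, part (a) follows by taking $\mathrm{coDisc}_\mathrm{inf}$ to be the right Kan extension $\mathrm{Ran}$ along the inclusion $\{\mathbbvar{k}\}\hookrightarrow\mathsf{dgArt}^{\leq 0}_\bbR$, which is right adjoint to $\mathrm{ev}_\mathbbvar{k}$. For (b), I would note that $\mathrm{Disc}_\mathrm{inf}=\mathrm{const}$ is recovered as the left Kan extension $\mathrm{Lan}_{\{\mathbbvar{k}\}}$, using once more that $\mathbbvar{k}$ is initial so that $\mathrm{Map}(\mathbbvar{k},\mathcal{A})\simeq\ast$; the two fully-faithfulness statements then reduce to the identities $\mathit{\Gamma}_\mathrm{inf}\circ\mathrm{Disc}_\mathrm{inf}\simeq\mathrm{id}$ and $\mathit{\Gamma}_\mathrm{inf}\circ\mathrm{coDisc}_\mathrm{inf}\simeq\mathrm{id}$, both holding because $\{\mathbbvar{k}\}\hookrightarrow\mathsf{dgArt}^{\leq 0}_\bbR$ is fully faithful and $\mathbbvar{k}$ is its own evaluation point. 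Assembling these pieces produces the required adjoint quadruple $\mathit{\Pi}_\mathrm{inf}\dashv\mathrm{Disc}_\mathrm{inf}\dashv\mathit{\Gamma}_\mathrm{inf}\dashv\mathrm{coDisc}_\mathrm{inf}$ with all the infinitesimal cohesion properties.

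I expect the main obstacle to be the careful $(\infty,1)$-categorical verification that $\mathbbvar{k}$ is genuinely a zero object, rather than merely a zero object of the underlying homotopy category. The initial half is routine, since $\mathbbvar{k}$ is the cofibrant monoidal unit and mapping spaces out of it are contractible. The terminal half is the delicate point: one must show that the augmentation $\mathcal{A}\to\mathbbvar{k}$ is unique up to a contractible space of choices, i.e.\ that $\mathrm{Map}_{\mathsf{dgArt}^{\leq 0}_\bbR}(\mathcal{A},\mathbbvar{k})\simeq\ast$ for every local Artinian dg-algebra $\mathcal{A}$. This should be established by exploiting that $\mathbbvar{k}$ sits in degree $0$ with trivial differential and that $\mathrm{H}^0(\mathcal{A})$ is a local Artinian algebra with residue field $\mathbbvar{k}$, so that any homotopy-coherent algebra map to $\mathbbvar{k}$ is forced to factor through the unique ordinary augmentation, with all higher coherence data contractible. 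A secondary technical point is to ensure that the adjunctions described at the level of the projective model category $[\mathsf{dgArt}^{\leq 0}_\bbR,\sSet]_\mathrm{proj}$ are genuine Quillen adjunctions presenting the claimed $(\infty,1)$-adjoint quadruple, and that the evaluation-at-$\mathbbvar{k}$ identifications survive passage to derived functors.
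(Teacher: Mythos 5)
Your proposal is correct and follows essentially the same route as the paper: the entire argument in both cases rests on the observation that $\bbR$ is a zero object of $\mathsf{dgArt}^{\leq 0}_\bbR$ (unique unit map in, unique augmentation out). The only difference is that the paper then simply invokes \cite[Proposition 4.1.24]{DCCTv2} — pre-stacks on a site with a zero object form an infinitesimally cohesive $(\infty,1)$-topos — whereas you re-derive that general fact inline by explicitly assembling the adjoint quadruple from Kan extensions along $\{\bbR\}\hookrightarrow\mathsf{dgArt}^{\leq 0}_\bbR$; your extra care about $\mathrm{Map}(\mathcal{A},\bbR)\simeq\ast$ being contractible rather than merely connected is a reasonable refinement of the paper's bare uniqueness claim.
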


\begin{proof} By \cite[Proposition 4.1.24]{DCCTv2} the $(\infty,1)$-category of pre-stacks on an $(\infty,1)$-site containing a zero object (i.e. an object which is both initial and terminal) is an infinitesimal cohesive $(\infty,1)$-topos. The simplicial model category underlying $\mathbf{FMP}$ is precisely $[\mathsf{dgArt}^{\leq 0}_\bbR, \sSet]_\mathrm{proj}$, making $\mathbf{FMP}$ an $(\infty,1)$-category of pre-stacks. Now, we can see that the real line $\bbR$ is both a terminal and initial in $\mathsf{dgArt}^{\leq 0}_\bbR$. In fact, for any dg-Artinian algebra $\mathcal{A}$, there is not only a unique map $\bbR\rightarrow\mathcal{A}$, but crucially also a unique $\bbR$-point $\mathcal{A}\rightarrow \bbR$. Thus we have the conclusion.
\end{proof}

\begin{corollary}[Derived infinitesimal cohesion of formal moduli problems]
The immediate consequence of \cite[Proposition 4.1.24]{DCCTv2} is that, in particular, the $(\infty,1)$-topos $\mathbf{FMP}$ of formal moduli problems is naturally equipped with a cohesive structure of the form
\begin{equation}\label{eq:inf_cohesion}
    \begin{tikzcd}[row sep={15ex,between origins}, column sep={19ex}]
     \mathbf{FMP} &  \mathbf{\infty Grpd} \arrow[l, "\mathrm{Disc}^{\mathrm{inf}}"', hook', shift right=1.5ex] \arrow[l, "\mathit{\Gamma}^{\mathrm{inf}}"', shift right=-1.5ex, leftarrow] \arrow[l, "\mathit{\Pi}^{\mathrm{inf}}"', shift right=+4.5ex, leftarrow] \arrow[l, "\mathrm{coDisc}^{\mathrm{inf}}"', hook', shift left=4.5ex] .
    \end{tikzcd}
\end{equation}
\end{corollary}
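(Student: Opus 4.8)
The plan is to obtain the four-functor adjoint string as a direct unpacking of the preceding proposition. Having just shown that $\mathbf{FMP}$ is an infinitesimally cohesive $(\infty,1)$-topos via \cite[Proposition 4.1.24]{DCCTv2}, I would invoke \cite[Definition 4.1.21]{DCCTv2}, according to which an infinitesimal cohesive structure \emph{is} precisely the datum of a quadruple of adjoint $(\infty,1)$-functors $(\mathit{\Pi}^{\mathrm{inf}} \dashv \mathrm{Disc}^{\mathrm{inf}} \dashv \mathit{\Gamma}^{\mathrm{inf}} \dashv \mathrm{coDisc}^{\mathrm{inf}})$ relating the topos to $\mathbf{\infty Grpd}$, with $\mathrm{Disc}^{\mathrm{inf}}$ and $\mathrm{coDisc}^{\mathrm{inf}}$ fully faithful. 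In this sense the corollary is an extraction of the adjoint string already contained in the notion established by the proposition.

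To make the functors explicit I would identify them with the standard ones carried by any $(\infty,1)$-topos of pre-stacks on a small site. Since $\mathbf{FMP}$ is the pre-stack topos on $\mathsf{dgArt}^{\leq 0}_\bbR$, the discrete-objects functor $\mathrm{Disc}^{\mathrm{inf}}$ is the constant pre-stack functor (pullback along the projection to the point), which always admits the colimit functor $\mathit{\Pi}^{\mathrm{inf}} \simeq \mathrm{colim}$ as a left adjoint and the limit functor $\mathit{\Gamma}^{\mathrm{inf}} \simeq \lim$ as a right adjoint; the further right adjoint $\mathrm{coDisc}^{\mathrm{inf}}$ is then the right Kan extension along the inclusion of the terminal object. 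Full faithfulness of $\mathrm{Disc}^{\mathrm{inf}}$ and $\mathrm{coDisc}^{\mathrm{inf}}$ would follow from the standard unit/counit computations $\mathit{\Gamma}^{\mathrm{inf}} \circ \mathrm{Disc}^{\mathrm{inf}} \simeq \mathrm{id}$ and $\mathit{\Gamma}^{\mathrm{inf}} \circ \mathrm{coDisc}^{\mathrm{inf}} \simeq \mathrm{id}$.

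The decisive input, and the only place where the \emph{infinitesimal} character enters, is the observation already used in the proposition: $\bbR$ is a zero object of $\mathsf{dgArt}^{\leq 0}_\bbR$, hence also of the opposite diagram category. Since a limit over a diagram with an initial object and a colimit over a diagram with a terminal object are both computed by evaluation at that object, I would conclude that both $\mathit{\Gamma}^{\mathrm{inf}}$ and $\mathit{\Pi}^{\mathrm{inf}}$ are computed as evaluation at $\bbR$, giving the natural equivalence $\mathit{\Pi}^{\mathrm{inf}} \simeq \mathit{\Gamma}^{\mathrm{inf}}$ that characterises infinitesimal cohesion. The same presentation exhibits $\mathit{\Pi}^{\mathrm{inf}} \simeq \mathrm{ev}_{\bbR}$ as manifestly finite-product-preserving (indeed limit-preserving), discharging the last cohesion axiom.

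I do not expect a genuine obstacle: all the real content lives in the preceding proposition's verification that $\bbR$ is a zero object, and this corollary merely repackages that fact through the general presheaf-topos adjoint string. The single point demanding care is bookkeeping of the op-duality, so that the zero object of the site simultaneously yields $\mathit{\Gamma}^{\mathrm{inf}} \simeq \mathrm{ev}_{\bbR}$ (limit $=$ evaluation at the initial object of the opposite category) and $\mathit{\Pi}^{\mathrm{inf}} \simeq \mathrm{ev}_{\bbR}$ (colimit $=$ evaluation at the terminal object of the opposite category), which together force the infinitesimal equivalence.
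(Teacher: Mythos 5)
Your proposal is correct and follows essentially the same route as the paper, which gives no separate proof of this corollary but treats it as an unpacking of \cite[Proposition 4.1.24]{DCCTv2} together with the zero-object observation for $\bbR$ in $\mathsf{dgArt}^{\leq 0}_\bbR$ established in the preceding proposition. Your explicit identification of the adjoint string with the standard pre-stack functors, and of $\mathit{\Pi}^{\mathrm{inf}}\simeq\mathit{\Gamma}^{\mathrm{inf}}\simeq\mathrm{ev}_{\bbR}$ via the zero object, is exactly the content the paper is implicitly invoking.
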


Morally speaking, formal moduli problems in $\mathbf{FMP}$ can be thought of as infinitesimally thickened $\infty$-groupoids, in a formal derived sense.

Now, we will explore the relation between the $(\infty,1)$-topos of the formal derived smooth stacks with the $(\infty,1)$-topos of formal moduli problems.

\begin{lemma}[Derived relative base]
Formal derived smooth stacks are equipped with a relative base structure over the $(\infty,1)$-topos $\mathbf{FMP}$ of formal moduli problems, i.e. we have a quadruplet of adjoint $(\infty,1)$-functors
\begin{equation}\label{eq:rel_base}
    \begin{tikzcd}[row sep={15ex,between origins}, column sep={19ex}]
     \mathbf{dFSmoothStack} &  \mathbf{FMP} \arrow[l, "\mathrm{Disc}^{\mathrm{rel}}"', hook', shift right=1.5ex] \arrow[l, "\mathit{\Gamma}^{\mathrm{rel}}"', shift right=-1.5ex, leftarrow]
    \end{tikzcd}
\end{equation}
such that:
\begin{itemize}
    \item $\mathit{\Gamma}^{\mathrm{rel}} = (-)^\wedge$ is precisely the functor \eqref{eq:stack_to_fmp},
    \item $\mathrm{Disc}^{\mathrm{rel}}$ is fully faithful.
\end{itemize}
\end{lemma}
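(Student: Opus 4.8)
The plan is to exhibit the adjunction in \eqref{eq:rel_base} as a restriction, along the embedding $|-|_{\mathrm{DK}}^{\Coo}:\mathsf{dgArt}_\bbR^{\leq 0}\longhookrightarrow \mathsf{sC^\infty Alg}_\mathrm{fg}$ constructed above, of the general adjunction between pre-stacks induced by a functor of sites. Concretely, writing $j\coloneqq |-|_{\mathrm{DK}}^{\Coo}$, a simplicial functor $j:\mathsf{dgArt}_\bbR^{\leq 0}\to\mathsf{dFMfd}$ gives rise to a string $(j_!\dashv j^\ast\dashv j_\ast)$ of adjoint $(\infty,1)$-functors between the $(\infty,1)$-categories of pre-stacks $[\mathsf{dFMfd}^\op,\sSet]$ and $[(\mathsf{dgArt}_\bbR^{\leq 0})^\op,\sSet]$, exactly as in the proof of Proposition \ref{prop:ext-trunc}, where $j^\ast=(-)\circ j$ is pre-composition and $j_!,j_\ast$ are the left and right Kan extensions along $j$. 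First I would identify the restriction functor $j^\ast$ with $\mathit{\Gamma}^\mathrm{rel}=(-)^\wedge$: by definition of $X^\wedge$ in \eqref{eq:stack_to_fmp}, one has $(j^\ast X)(\mathcal{A})=X(j(\mathcal{A}))=X(|\mathcal{A}|^{\Coo}_\mathrm{DK})=X^\wedge(\mathcal{A})$, so $\mathit{\Gamma}^\mathrm{rel}\coloneqq j^\ast$ is literally the functor \eqref{eq:stack_to_fmp}. I would then set $\mathrm{Disc}^\mathrm{rel}\coloneqq j_!$, its left adjoint.

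The two properties to verify are fully-faithfulness of $\mathrm{Disc}^\mathrm{rel}=j_!$ and the fact that the adjunction descends from pre-stacks to stacks. Fully-faithfulness of $j_!$ follows from the standard fact that, when the index functor $j$ is itself fully faithful, the left Kan extension $j_!$ along the Yoneda embedding is fully faithful; equivalently the unit $\mathrm{id}\to j^\ast j_!$ is an equivalence. Thus I would first argue that $j=|-|^{\Coo}_\mathrm{DK}$ is fully faithful as a functor of sites — this is essentially already contained in the preceding construction, since the Dold-Kan $\Coo$-lift is a restriction of the fully faithful Dold-Kan equivalence onto the essential image of Artinian dg-algebras — and then invoke fully-faithfulness of $j_!$ as a formal consequence. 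For the descent part, since $\mathbf{FMP}$ is defined as an $(\infty,1)$-topos of \emph{pre}-stacks (no nontrivial localisation), only the target side needs care: I must check that $j^\ast=\mathit{\Gamma}^\mathrm{rel}$ lands in stacks on the left, i.e. that $j^\ast$ applied to a pre-stack is harmless, and that $j_!$ sends pre-stacks on $\mathsf{dgArt}_\bbR^{\leq 0}$ to objects satisfying étale descent on $\mathsf{dFMfd}$. In fact, because $\mathbf{FMP}$ carries the trivial (pre-stack) topology, the cleanest route is to note that $\mathit{\Gamma}^\mathrm{rel}$ is simply $j^\ast$ followed by the inclusion of stacks into pre-stacks, and that $\mathrm{Disc}^\mathrm{rel}=L\circ j_!$ where $L$ is the stackification reflector on $\mathbf{dFSmoothStack}$; fully-faithfulness of $j_!$ together with the fact that stackification is a reflection then preserves fully-faithfulness of the composite $L\circ j_!$.

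The key steps, in order, are therefore: (i) realise $j$ as a simplicial functor of sites and invoke \cite[Theorem 3.4.1]{ToenVezzo05} / \cite[Section 2.3.1]{ToenVezzo05} to obtain the Quillen adjunction $(j_!\dashv j^\ast)$ between the projective model structures, exactly mirroring Proposition \ref{prop:ext-trunc}; (ii) identify $j^\ast$ with the functor $(-)^\wedge$ of \eqref{eq:stack_to_fmp}, giving $\mathit{\Gamma}^\mathrm{rel}=(-)^\wedge$; (iii) prove $j$ fully faithful and deduce $\mathrm{Disc}^\mathrm{rel}=j_!$ fully faithful via the unit-equivalence criterion; (iv) check the adjunction passes to stackified objects on the $\mathbf{dFSmoothStack}$ side. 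The main obstacle I anticipate is step (iv): unlike in Proposition \ref{prop:ext-trunc}, the source site $\mathsf{dgArt}_\bbR^{\leq 0}$ carries no nontrivial coverage, so I cannot simply appeal to preservation of hypercovers in both directions. The crux is to verify that $j_!$ of a formal moduli problem genuinely satisfies étale descent on $\mathsf{dFMfd}$ — intuitively clear because thickened points admit no nontrivial étale covers, so their left Kan extension is supported on the infinitesimal part where descent is vacuous — but making this precise may require composing $j_!$ with the stackification reflector $L$ and checking that fully-faithfulness survives, which is where the argument must be handled with the most care.
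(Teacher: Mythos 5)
Your proposal is correct and takes essentially the same route as the paper: the adjunction is obtained by left Kan extension and restriction along the site embedding $|-|_{\mathrm{DK}}^{\Coo}$, with $\mathit{\Gamma}^{\mathrm{rel}}$ identified as precomposition (hence equal to $(-)^\wedge$) and full faithfulness of $\mathrm{Disc}^{\mathrm{rel}}$ deduced from full faithfulness of the site functor. The descent worry you raise in step (iv) is handled in the paper simply by observing that $[\mathsf{dgArt}^{\leq 0}_\bbR,\sSet]$ carries only the global projective model structure, so $\mathit{\Gamma}^{\mathrm{rel}}$ automatically preserves (locally) fibrant objects and the Quillen adjunction restricts to stacks; the survival of full faithfulness through stackification is asserted there with no more detail than you provide.
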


\begin{proof}
Recall that there is the following embedding of simplicial sites
\begin{equation}
    \begin{tikzcd}[row sep=scriptsize, column sep=12.0ex, row sep=18.0ex]
     \mathsf{sC^\infty Alg}_\mathrm{fg}   &  \arrow[l, "|-|_{\mathrm{DK}}^{\Coo}"', shift left=-0ex, hook'] \mathsf{dgArt}_{\bbR}^{\leq 0} .
    \end{tikzcd}
\end{equation}
This gives rise by left and right Kan extension to the following triplet of adjoint functors between the corresponding simplicial categories of pre-stacks:
\begin{equation}
    \begin{tikzcd}[row sep={15ex,between origins}, column sep={19ex}]
    {[ \mathsf{sC^\infty Alg}_\mathrm{fg}, \sSet]_\mathrm{proj}} \arrow[r, "{(|-|_{\mathrm{DK}}^{\Coo})_!}", shift right=-1.5ex, leftarrow] &  {[\mathsf{dgArt}^{\leq 0}_\bbR, \sSet]_\mathrm{proj}}  \arrow[l, "\,(|-|_{\mathrm{DK}}^{\Coo})^{\ast}"', shift right=-1.5ex, hook, leftarrow]  .
    \end{tikzcd}
\end{equation}
The functor $\mathit{\Gamma}^\mathrm{rel}\coloneqq(|-|_{\mathrm{DK}}^{\Coo})^{\ast}$ maps local fibrant objects into fibrant objects and, thus, it immediately preserves locally fibrant objects, since the simplicial category $[\mathsf{dgArt}^{\leq 0}_\bbR, \sSet]_\mathrm{proj}$ is equipped only with a global projective model structure. By \cite{ToenVezzo05}, the fact that the functor $\mathit{\Gamma}^\mathrm{rel}$ preserves locally fibrant objects implies that its left adjoint $\mathrm{Disc}^{\mathrm{rel}}\coloneqq(|-|_{\mathrm{DK}}^{\Coo})_!$ restricts to stacks.
Moreover, the fact that the functor $|-|_{\mathrm{DK}}^{\Coo}$ between the sites is fully faithful, implies that the functor $\mathrm{Disc}^{\mathrm{rel}}$ of stacks is fully faithful.
\end{proof}

By following \cite[Section 5.3.6]{DCCTv2}, we can also define the following modality.
\begin{definition}[Relative flat modality]
We define the \textit{relative flat modality} by the following endofunctor on formal derived smooth stacks:
\begin{equation}
   \flat^{\mathrm{rel}} \coloneqq \mathrm{Disc}^{\mathrm{rel}}\circ \mathit{\Gamma}^{\mathrm{rel}}\; :\,\mathbf{dFSmoothStack}\,\longrightarrow\,\mathbf{dFSmoothStack}.
\end{equation}
\end{definition}
Roughly speaking, the relative flat modality $\flat^{\mathrm{rel}}$ provides a formal derived thickened version of the flat modality $\flat$.

\begin{equation}
    \begin{tikzcd}[row sep={24ex,between origins}, column sep={10ex}]
     \mathbf{dFSmoothStack}\;\;\;   
        &  \;\;\;\mathbf{FMP}     \arrow[l, "\mathrm{Disc}^{\mathrm{rel}}"', hook', shift right=1.5ex] \arrow[l, "\mathit{\Gamma}^{\mathrm{rel}}"', shift right=-1.5ex, leftarrow]                        
       \arrow[d, "", shift right=-2.5ex,  shorten=3.5mm,sloped]   \\
    &  \;\;\;\mathbf{\infty Grpd},                             \arrow[u, "\mathrm{Disc}^\mathrm{inf}",sloped , shift right=-0.5ex,  shorten=3.5mm, hook] \arrow[u, "\mathit{\Gamma}^\mathrm{inf}",sloped , shift right=2.5ex, dash, shorten=3.5mm] \arrow[lu, "\mathrm{Disc}", hook', shift right=1.5ex,sloped,shorten=3.5mm] \arrow[lu, "\mathit{\Gamma}", shift right=-1.5ex, leftarrow,sloped,shorten=3.5mm]  
    \end{tikzcd} \vspace{0.5cm}
\end{equation}

\begin{lemma}[Relative flat modality as collection of formal disks]
For any given formal derived smooth stack $X\in\mathbf{dFSmoothStack}$, we have the following equivalence:
\begin{equation}
    \flat^\mathrm{rel}X \;\,\simeq\,\; \flat X \times_{\Im(X)}^h X.
\end{equation}
\end{lemma}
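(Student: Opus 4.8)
The plan is to show that both sides of the claimed equivalence are naturally equivalent to the coproduct $\coprod_{x:\ast\to X}\bbD_{X,x}$ of the formal disks of $X$ over all of its global points, and then to observe that the two identifications are compatible with the evident projections to $X$ and to $\flat X$. The right-hand side is the easy half. Using the presentation $\flat X\simeq\coprod_{x:\ast\to X}\ast$ from \eqref{eq:flat_points}, together with the fact that the counit $\flat X\to X$ restricts on the component indexed by $x$ to the point $x:\ast\to X$, the composite $\flat X\to X\xrightarrow{\mathfrak{i}_X}\Im(X)$ restricts on that component to $\mathfrak{i}_X\circ x:\ast\to\Im(X)$. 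Since $\mathbf{dFSmoothStack}$ is an $(\infty,1)$-topos, colimits are universal, so base change of $\mathfrak{i}_X$ along $\flat X\to\Im(X)$ distributes over the coproduct:
\begin{equation*}
    \flat X\times^h_{\Im(X)}X \;\simeq\; \Big(\coprod_{x:\ast\to X}\ast\Big)\times^h_{\Im(X)}X \;\simeq\; \coprod_{x:\ast\to X}\big(\ast\times^h_{\Im(X)}X\big).
\end{equation*}
By Definition \ref{def:formal_disk} each summand, computed along $\mathfrak{i}_X\circ x$, is exactly $\bbD_{X,x}$, so the right-hand side is $\coprod_{x:\ast\to X}\bbD_{X,x}$.

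For the left-hand side I would start from $\flat^{\mathrm{rel}}X=\mathrm{Disc}^{\mathrm{rel}}\mathit{\Gamma}^{\mathrm{rel}}X=\mathrm{Disc}^{\mathrm{rel}}(X^\wedge)$. The formal moduli problem $X^\wedge$ fibres over the global points: evaluating on $\mathcal{A}\in\mathsf{dgArt}^{\leq 0}_\bbR$ gives $X^\wedge(\mathcal{A})=X(|\mathcal{A}|_{\mathrm{DK}}^{\Coo})$, and restriction along the canonical point $\ast\hookrightarrow\bfR\Spec|\mathcal{A}|_{\mathrm{DK}}^{\Coo}$ (dual to the augmentation $\mathcal{A}\to\bbR$) yields a natural map $X^\wedge(\mathcal{A})\to X(\ast)=\mathit{\Gamma}(X)$. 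Decomposing over $\mathit{\Gamma}(X)$ — again by universality of colimits, now in the $(\infty,1)$-topos $\mathbf{FMP}$ — exhibits $X^\wedge\simeq\coprod_{x:\ast\to X}X^\wedge_x$, where $X^\wedge_x$ is the pointed formal completion of Construction \ref{con:FMP_as_FC}. As $\mathrm{Disc}^{\mathrm{rel}}$ is a left adjoint it preserves this coproduct, so $\flat^{\mathrm{rel}}X\simeq\coprod_{x:\ast\to X}\mathrm{Disc}^{\mathrm{rel}}(X^\wedge_x)$, and it remains to identify $\mathrm{Disc}^{\mathrm{rel}}(X^\wedge_x)$ with $\bbD_{X,x}$.

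For this I would use two facts. First, $X^\wedge_x\simeq\bbD_{X,x}^\wedge$: unfolding the formal completion and using that $\Im(X)$ sees only the reduced part of its probes, so that on the infinitesimal object $D=\bfR\Spec|\mathcal{A}|_{\mathrm{DK}}^{\Coo}$ (whose reduction is $\Re(D)\simeq\ast$, since $\pi_0$ of a local Artinian dg-algebra is local Artinian with reduction $\bbR$) one has $\Im(X)(D)\simeq\mathit{\Gamma}(X)$, one computes $\bbD_{X,x}(D)\simeq X(D)\times_{\mathit{\Gamma}(X)}\ast=X^\wedge_x(\mathcal{A})$. Second, the formal disk $\bbD_{X,x}$ lies in the essential image of the fully faithful functor $\mathrm{Disc}^{\mathrm{rel}}$, equivalently the relative-flat counit $\flat^{\mathrm{rel}}\bbD_{X,x}\to\bbD_{X,x}$ is an equivalence; granting this, $\mathrm{Disc}^{\mathrm{rel}}(X^\wedge_x)\simeq\mathrm{Disc}^{\mathrm{rel}}\mathit{\Gamma}^{\mathrm{rel}}\bbD_{X,x}\simeq\bbD_{X,x}$, and reassembling the coproduct concludes.

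The main obstacle is precisely this second fact. I expect it to follow from the observation that $\bbD_{X,x}$ is an infinitesimal object ($\Re\bbD_{X,x}\simeq\ast$, as in the remark following Definition \ref{def:formal_disk}, with moreover $\Im\bbD_{X,x}\simeq\ast$), so that all of its content is detected by the derived thickened points $|\mathcal{A}|_{\mathrm{DK}}^{\Coo}$ through which the site $\mathsf{dgArt}^{\leq 0}_\bbR$ maps into $\mathsf{dFMfd}$ — which is exactly the data recorded by $\mathbf{FMP}$. Concretely I would establish that objects which are purely infinitesimal in this sense are $\flat^{\mathrm{rel}}$-modal, by computing the counit through the left Kan extension $\mathrm{Disc}^{\mathrm{rel}}=(|-|_{\mathrm{DK}}^{\Coo})_!$ and verifying that the contribution of non-infinitesimal representables $U$ to the stackification is trivial. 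This analysis of the left Kan extension and the ensuing descent is where the real technical work lies; once it is in place, the compatibility of the two coproduct decompositions with the projections to $X$ and $\flat X$ is immediate and completes the proof.
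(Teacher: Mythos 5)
Your proposal follows the same architecture as the paper's proof: decompose $\mathit{\Gamma}^{\mathrm{rel}}X \simeq \coprod_{x:\ast\to X}X^\wedge_x$, use that $\mathrm{Disc}^{\mathrm{rel}}$ is a left adjoint to pull the coproduct through, identify each summand $\mathrm{Disc}^{\mathrm{rel}}(X^\wedge_x)$ with the formal disk $\bbD_{X,x}$, and finally match $\coprod_{x}\bbD_{X,x}$ with $\flat X\times^h_{\Im(X)}X$ via $\flat X\simeq\coprod_x\ast$ and universality of colimits. Your treatment of the right-hand side and of the decomposition of $X^\wedge$ over $\mathit{\Gamma}(X)$ is in fact more explicit than the paper's.

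The one point where your write-up is genuinely incomplete is the identification $\mathrm{Disc}^{\mathrm{rel}}(X^\wedge_x)\simeq\bbD_{X,x}$. You split it into (a) $X^\wedge_x\simeq\mathit{\Gamma}^{\mathrm{rel}}\bbD_{X,x}$, which you argue correctly from the fact that $\Im(X)$ is blind to the infinitesimal directions of Artinian probes, and (b) the assertion that the relative-flat counit $\flat^{\mathrm{rel}}\bbD_{X,x}\to\bbD_{X,x}$ is an equivalence, which you explicitly defer to an unperformed analysis of the left Kan extension $(|-|^{\Coo}_{\mathrm{DK}})_!$ and its stackification. That is a real gap as the proof stands: without (b) you only obtain a map, not an equivalence. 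The paper closes this same step by a different (and much terser) device: it invokes the adjunction to write $\bfR\Hom(\mathrm{Disc}^{\mathrm{rel}}X^\wedge_x,Y)\simeq\bfR\Hom_{\mathbf{FMP}}(X^\wedge_x,Y^\wedge)$ for all $Y$ and reads off $\mathrm{Disc}^{\mathrm{rel}}X^\wedge_x\simeq\bbD_{X,x}$ by corepresentability, i.e.\ it implicitly uses that $\bfR\Hom(\bbD_{X,x},Y)\simeq\bfR\Hom_{\mathbf{FMP}}(X^\wedge_x,Y^\wedge)$, which is essentially your fact (a) applied on the $Y$-side together with the pullback description $\bbD_{X,x}\simeq\ast\times^h_{\Im(X)}X$. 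If you route the argument that way you can bypass the modality claim (b) entirely; as written, your proof places its weight on a statement it does not establish, at precisely the juncture the paper itself treats in one line.
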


\begin{proof}
By unravelling the definition $\flat^\mathrm{rel} = \mathrm{Disc}^{\mathrm{rel}} \circ \mathit{\Gamma}^{\mathrm{rel}}$ of the relative flat modality, we see that a formal derived smooth stack $\flat^\mathrm{rel}X$ can be understood as the coproduct of formal disks $\bbD_{X,x}$ at all points $x:\ast \rightarrow X$.
In fact, notice that for any point $x\in X$, we must have the equivalence $\bfR\mathrm{Hom}(\mathrm{Disc}^\mathrm{rel}X^\wedge_x,Y)\simeq \bfR\mathrm{Hom}_\mathbf{FMP}(X^\wedge_x,Y^\wedge)$, which tells us $\mathrm{Disc}^\mathrm{rel}X^\wedge_x\simeq \bbD_{X,x}$.
Moreover, since we have $\mathit{\Gamma}^\mathrm{rel}(X)\simeq \coprod_{x:\ast\rightarrow X}X_x^\wedge$ and that $\mathrm{Disc}^\mathrm{rel}$ preserves colimits, we have the equivalence 
$\flat^\mathrm{rel}X \simeq \coprod_{x:\ast\rightarrow X}\mathrm{Disc}^\mathrm{rel}(X^\wedge_x)$, which immediately implies
\begin{equation}
    \flat^\mathrm{rel}X \;\simeq\, \coprod_{x:\ast\rightarrow X}\!\bbD_{X,x}. \vspace{-0.1cm}
\end{equation}
However, since we have the equivalence $\flat X \simeq \coprod_{x:\ast\rightarrow X}\ast$ and that the definition of infinitesimal disk  given by $\bbD_{X,x}\simeq\ast\times_{\Im(X)}^h X$, the equivalence above is precisely $\flat^\mathrm{rel}X \;\simeq\; \flat X \times_{\Im(X)}^h X$.
\end{proof}

\begin{corollary}[Relation with flat and infinitesimal shape modalities]
For any formal derived smooth stack $X$ we have the following equivalences:
\begin{equation}
    \flat(\flat^\mathrm{rel} X) \,\simeq\, \flat X, \qquad \Im(\flat^\mathrm{rel} X) \,\simeq\, \flat X.
\end{equation}
\end{corollary}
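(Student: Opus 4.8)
The plan is to treat the two equivalences separately, exploiting respectively the algebraic structure of the relative base \eqref{eq:rel_base} together with the infinitesimal cohesion \eqref{eq:inf_cohesion}, and the explicit description of $\flat^{\mathrm{rel}}X$ as a coproduct of formal disks established in the previous lemma. The first preliminary step is to observe that composing the relative base adjunction $(\mathrm{Disc}^{\mathrm{rel}}\dashv\mathit{\Gamma}^{\mathrm{rel}})$ of \eqref{eq:rel_base} with the infinitesimal cohesion $(\mathrm{Disc}^{\mathrm{inf}}\dashv\mathit{\Gamma}^{\mathrm{inf}})$ of \eqref{eq:inf_cohesion} yields an adjunction $\mathbf{dFSmoothStack}\to\mathbf{\infty Grpd}$; since this must coincide with the terminal geometric morphism $(\mathrm{Disc}\dashv\mathit{\Gamma})$, one obtains the factorisations $\mathit{\Gamma}\simeq\mathit{\Gamma}^{\mathrm{inf}}\circ\mathit{\Gamma}^{\mathrm{rel}}$ and $\mathrm{Disc}\simeq\mathrm{Disc}^{\mathrm{rel}}\circ\mathrm{Disc}^{\mathrm{inf}}$.

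For the first equivalence I would unwind $\flat^{\mathrm{rel}}=\mathrm{Disc}^{\mathrm{rel}}\circ\mathit{\Gamma}^{\mathrm{rel}}$ and use that $\mathrm{Disc}^{\mathrm{rel}}$ is fully faithful, so that $\mathit{\Gamma}^{\mathrm{rel}}\circ\mathrm{Disc}^{\mathrm{rel}}\simeq\mathrm{id}$. Applying the factorisation of $\mathit{\Gamma}$ I would compute
\begin{equation*}
    \mathit{\Gamma}(\flat^{\mathrm{rel}}X) \;\simeq\; \mathit{\Gamma}^{\mathrm{inf}}\mathit{\Gamma}^{\mathrm{rel}}\mathrm{Disc}^{\mathrm{rel}}\mathit{\Gamma}^{\mathrm{rel}}X \;\simeq\; \mathit{\Gamma}^{\mathrm{inf}}\mathit{\Gamma}^{\mathrm{rel}}X \;\simeq\; \mathit{\Gamma}(X),
\end{equation*}
and then applying $\mathrm{Disc}$ gives $\flat(\flat^{\mathrm{rel}}X)=\mathrm{Disc}\,\mathit{\Gamma}(\flat^{\mathrm{rel}}X)\simeq\mathrm{Disc}\,\mathit{\Gamma}(X)=\flat X$, which is the first claimed equivalence. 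This route deliberately avoids asking $\mathit{\Gamma}$ to preserve coproducts, which is not available here since the topos is only equipped with a terminal geometric morphism and not with full cohesion.

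For the second equivalence I would instead start from the description $\flat^{\mathrm{rel}}X\simeq\coprod_{x:\ast\rightarrow X}\bbD_{X,x}$ proved in the previous lemma. Since $\Im=\mathrm{Disc}^{\mathrm{dif}}\circ\mathit{\Pi}^{\mathrm{dif}}$ is a composite of the two left adjoints $\mathit{\Pi}^{\mathrm{dif}}$ and $\mathrm{Disc}^{\mathrm{dif}}$, it preserves all colimits and in particular coproducts, so $\Im(\flat^{\mathrm{rel}}X)\simeq\coprod_{x:\ast\rightarrow X}\Im(\bbD_{X,x})$. It then remains to evaluate $\Im$ on a single formal disk: by the earlier remark each $\bbD_{X,x}$ is an infinitesimal object with $\mathit{\Pi}^{\mathrm{dif}}\bbD_{X,x}\simeq\ast$ (object-wise $(\mathit{\Pi}^{\mathrm{dif}}\bbD_{X,x})(A)\simeq\bbD_{X,x}(A^{\mathrm{red}})\simeq\ast$), and since $\mathrm{Disc}^{\mathrm{dif}}$ is a right adjoint it preserves the terminal object, whence $\Im(\bbD_{X,x})=\mathrm{Disc}^{\mathrm{dif}}\mathit{\Pi}^{\mathrm{dif}}\bbD_{X,x}\simeq\ast$. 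Therefore $\Im(\flat^{\mathrm{rel}}X)\simeq\coprod_{x:\ast\rightarrow X}\ast$, which is exactly $\flat X$ by \eqref{eq:flat_points}.

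The routine parts are the adjunction bookkeeping; the one point requiring genuine care is the factorisation of the terminal geometric morphism through $\mathbf{FMP}$, together with the resulting identity $\mathit{\Gamma}^{\mathrm{rel}}\mathrm{Disc}^{\mathrm{rel}}\simeq\mathrm{id}$ used in the first part, which I would justify by the uniqueness of the terminal geometric morphism and the fully faithfulness of $\mathrm{Disc}^{\mathrm{rel}}$ recorded in the relative base lemma. I expect the main obstacle to be presentational rather than mathematical, namely keeping the three nested layers $\mathbf{dFSmoothStack}$ over $\mathbf{FMP}$ over $\mathbf{\infty Grpd}$ and their units and counits straight, since every ingredient is already supplied by the preceding results.
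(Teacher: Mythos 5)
Your proposal is correct, and for the second equivalence it follows exactly the route the paper intends: the corollary is stated immediately after the lemma identifying $\flat^{\mathrm{rel}}X\simeq\coprod_{x:\ast\to X}\bbD_{X,x}$, and applying the colimit-preserving functor $\Im=\mathrm{Disc}^{\mathrm{dif}}\circ\mathit{\Pi}^{\mathrm{dif}}$ together with $\Im(\bbD_{X,x})\simeq\ast$ (from the remark that formal disks are infinitesimal) is precisely the expected computation. For the first equivalence you take a genuinely different and arguably cleaner path: instead of arguing that $\flat$ preserves the coproduct of formal disks and that each disk has a single global point, you factor $\mathit{\Gamma}$ through $\mathbf{FMP}$ and use $\mathit{\Gamma}^{\mathrm{rel}}\mathrm{Disc}^{\mathrm{rel}}\simeq\mathrm{id}$, which sidesteps the question of whether $\mathit{\Gamma}$ commutes with coproducts; that is a real gain, since that commutation would require a separate argument about connectedness of the point in this site.

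One justification needs patching. You derive the factorisation $\mathit{\Gamma}\simeq\mathit{\Gamma}^{\mathrm{inf}}\circ\mathit{\Gamma}^{\mathrm{rel}}$ from "uniqueness of the terminal geometric morphism", but for that argument to apply you would need the composite $(\mathrm{Disc}^{\mathrm{rel}}\circ\mathrm{Disc}^{\mathrm{inf}}\dashv\mathit{\Gamma}^{\mathrm{inf}}\circ\mathit{\Gamma}^{\mathrm{rel}})$ to be a geometric morphism, i.e.\ you would need $\mathrm{Disc}^{\mathrm{rel}}=(|-|_{\mathrm{DK}}^{\Coo})_!$ to preserve finite limits, which the paper neither claims nor establishes. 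Fortunately the factorisation is immediate by direct inspection: $\mathit{\Gamma}^{\mathrm{inf}}$ is evaluation at the zero object $\bbR\in\mathsf{dgArt}^{\leq 0}_\bbR$ and $\mathit{\Gamma}^{\mathrm{rel}}=(|-|_{\mathrm{DK}}^{\Coo})^{\ast}$, so $\mathit{\Gamma}^{\mathrm{inf}}\mathit{\Gamma}^{\mathrm{rel}}X\simeq X(|\bbR|_{\mathrm{DK}}^{\Coo})\simeq X(\ast)\simeq\mathit{\Gamma}(X)$, exactly mirroring the paper's earlier remark that the global section functor factors through $t_0$ because the point lies in the image of the relevant subsite. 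With that substitution the rest of your bookkeeping, including $\mathit{\Gamma}^{\mathrm{rel}}\mathrm{Disc}^{\mathrm{rel}}\simeq\mathrm{id}$ from the full faithfulness of the left adjoint $\mathrm{Disc}^{\mathrm{rel}}$, goes through as written.
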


\begin{remark}[All the structures in the context of derived differential geometry]
By putting together all the structures we encountered in this section, we can write the following diagram of $(\infty,1)$-categories:
\vspace{-0.2cm}
\begin{equation}
    \begin{tikzcd}[row sep={30ex,between origins}, column sep={14ex}]
     \mathbf{dFSmoothStack}\;\;\;   
        &  \;\;\;\mathbf{FMP}     \arrow[l, "\mathrm{Disc}^{\mathrm{rel}}"', hook', shift right=1.5ex] \arrow[l, "\mathit{\Gamma}^{\mathrm{rel}}"', shift right=-1.5ex, leftarrow]                        
      \arrow[d, "", shift right=3.5ex,  shorten=3.5mm,sloped] \arrow[d, "", shift right=-2.5ex,  shorten=3.5mm,sloped]   \\
    \mathbf{SmoothStack}^{\pmb{+}}\;\;\; \arrow[u, "{\mathit{\Pi}^{\mathrm{dif}}}",sloped , shift right=-1.5ex, leftarrow, shorten=3.5mm] \arrow[u, "{\mathit{\Gamma}^{\mathrm{dif}}}",sloped , shift right=4.5ex, leftarrow, shorten=3.5mm]\arrow[u, "\hat{\imath}", shift right=-4.5ex,  shorten=3.5mm, hook, sloped] \arrow[u, "\mathrm{Disc}^{\mathrm{dif}}", shift right=1.5ex,  shorten=3.5mm, hook, sloped] &  \;\;\;\mathbf{\infty Grpd},      \arrow[l, "\mathrm{Disc}"', hook', shift right=1.5ex] \arrow[l, "\mathit{\Gamma}"', shift right=-1.5ex, leftarrow]                            \arrow[u, "", shift right=5.5ex,  shorten=3.5mm, hook, "\mathrm{coDisc}^\mathrm{inf}",sloped] \arrow[u, "\mathrm{Disc}^\mathrm{inf}",sloped , shift right=-0.5ex,  shorten=3.5mm, hook] \arrow[u, "{\mathit{\Pi}^\mathrm{inf}}",sloped , shift right=-3.5ex, dash, shorten=3.5mm] \arrow[u, "\mathit{\Gamma}^\mathrm{inf}",sloped , shift right=2.5ex, dash, shorten=3.5mm]
    \end{tikzcd} \vspace{0.5cm}
\end{equation}
\vspace{-0.7cm}

where, more in detailed, we have the following structures:
\begin{itemize}
    \item the left vertical quadruple $(\hat{\imath}\dashv\mathit{\Pi}^{\mathrm{dif}}\dashv\mathrm{Disc}^{\mathrm{dif}}\dashv\mathit{\Gamma}^{\mathrm{dif}})$ of $(\infty,1)$-functors presents a differential structure on the $(\infty,1)$-category $\mathbf{dFSmoothStack}$ of formal derived smooth stacks over the $(\infty,1)$-category $\mathbf{SmoothStack}^{\pmb{+}}$, from diagram \eqref{eq:diffcohesiondiag};
    \item the right vertical quadruple $(\mathit{\Pi}^{\mathrm{inf}}\dashv\mathrm{Disc}^{\mathrm{inf}}\dashv\mathit{\Gamma}^{\mathrm{inf}}\dashv\mathrm{coDisc}^{\mathrm{inf}})$ of $(\infty,1)$-functors presents an infinitesimal cohesive structure on the $(\infty,1)$-category $\mathbf{FMP}$ of formal moduli problems over the $(\infty,1)$-category $\mathbf{\infty Grpd}$ of $\infty$-groupoids, from diagram \eqref{eq:inf_cohesion};
    \item the upper horizontal pair $(\mathrm{Disc}^{\mathrm{rel}}\dashv\mathit{\Gamma}^{\mathrm{rel}})$ of $(\infty,1)$-functors presents the coreflective embedding of the $(\infty,1)$-category $\mathbf{FMP}$ of formal moduli problems into the $(\infty,1)$-category $\mathbf{dFSmoothStack}$ of formal derived smooth stacks, from diagram \eqref{eq:rel_base};
    \item the lower horizontal pair $(\mathrm{Disc}\dashv\mathit{\Gamma})$ of $(\infty,1)$-functors presents the terminal geometric morphism of the $(\infty,1)$-category $\mathbf{SmoothStack}^{\pmb{+}}$ over the $(\infty,1)$-category $\mathbf{\infty Grpd}$ of $\infty$-groupoids.
\end{itemize}
\end{remark}

Now, at the end of this subsection, we want to briefly explore the possibility of providing a step-by-step generalisation of \cite[Proposition 6.5.15]{DCCTv2} to derived smooth geometry. The short answer is that, do do so, the category of dg-Artinian algebras is not big enough and, thus, we must first introduce a slight extension of it, which is more natural from the perspective of formal derived smooth manifolds.

\begin{definition}[Pointed finitely generated simplicial $\Coo$-algebras]
We define the $(\infty,1)$-category $\mathbf{sC^{\infty\!} Alg}_{\mathrm{fg}}^\mathrm{pnt}$ of \textit{pointed finitely generated simplicial $\Coo$-algebras} by the pullback square
\begin{equation}\label{eq:square_sites}
    \begin{tikzcd}[row sep={14ex,between origins}, column sep={17ex,between origins}]
     \mathbf{sC^{\infty\!} Alg}_{\mathrm{fg}}^\mathrm{pnt} \arrow[r, hook] &  \mathbf{sC^{\infty\!} Alg}_{\mathrm{fg}} \\
     \ast \arrow[u, leftarrow] \arrow[r, "\bbR"] &  \mathbf{N}(\mathsf{C^\infty Alg}^{\mathrm{red}}_\mathrm{fg})\arrow[u, leftarrow, "(-)^\mathrm{red}"'] ,
    \end{tikzcd}
\end{equation}
where the functor $(-)^\mathrm{red}$ sends simplicial $\Coo$-algebras to their corresponding reduced ordinary $\Coo$-algebras.
\end{definition}

Dually, the opposite $(\infty,1)$-category $(\mathbf{sC^{\infty\!} Alg}_{\mathrm{fg}}^\mathrm{pnt})^\op\hookrightarrow \mathbf{dFMfd}$

\begin{remark}[Dg-Artinian algebras as pointed finitely generated simplicial $\Coo$-algebras]
Dg-Artinian $\mathbf{dgArt}_{\bbR} \hookrightarrow \mathbf{sC^{\infty\!} Alg}_{\mathrm{fg}}^\mathrm{pnt}$ naturally embed into pointed finitely generated simplicial $\Coo$-algebras. To see this, notice that any dg-Artinian algebra $\mathcal{A}$ has, by definition, a unique $\bbR$-point $\mathcal{A}\rightarrow\bbR$ and its $0$-th cohomology $\mathrm{H}^0(\mathcal{A})\cong \pi_0|\mathcal{A}|_{\mathrm{DK}}^{\Coo}$ is an ordinary finitely presented $\Coo$-algebra, which is finitely generated.
\end{remark}

\begin{proposition}[Finitely generated formal moduli problems]
Let $\mathbf{FMP}_{\!\mathrm{fg}}$ the $(\infty,1)$-category of pre-stacks on $(\mathbf{sC^{\infty\!} Alg}_{\mathrm{fg}}^\mathrm{pnt})^\op$, whose elements we will call \textit{finitely generated formal moduli problems}.
Then, there is an $(\infty,1)$-pushout square of $(\infty,1)$-topoi
\begin{equation}
    \begin{tikzcd}[row sep={19ex,between origins}, column sep={21ex,between origins}]
    \mathbf{dFSmoothStack} \arrow[r, "\mathit{\Gamma}^\mathrm{rel}_{\!\mathrm{fg}}"] &  \mathbf{FMP}_{\!\mathrm{fg}} \\
    \mathbf{SmoothStack}^{\pmb{+}}\arrow[u, hook, "\mathrm{Disc}^{\mathrm{dif}}"] \arrow[r, "\mathit{\Gamma}"] &  \mathbf{\infty Grpd}\arrow[u, hook, "\mathrm{Disc}^{\mathrm{inf}}_{\mathrm{fg}}"'] .
    \end{tikzcd}
\end{equation}
\end{proposition}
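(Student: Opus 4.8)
The plan is to obtain the asserted pushout of $(\infty,1)$-topoi as the image, under the (pre-)stack construction, of the defining pullback of $(\infty,1)$-sites \eqref{eq:square_sites}. First I would pass to opposite categories: since $(-)^\op$ is an autoequivalence of $\quasicat$ it preserves homotopy pullbacks, so \eqref{eq:square_sites} dualises to a homotopy pullback of $(\infty,1)$-sites
\begin{equation*}
    (\mathbf{sC^{\infty\!} Alg}_{\mathrm{fg}}^\mathrm{pnt})^\op \;\simeq\; \mathbf{dFMfd}\,\times^h_{\mathsf{C^\infty Var}^\mathrm{red}}\,\ast,
\end{equation*}
with cospan $\mathbf{dFMfd}\xrightarrow{\,r\,}\mathsf{C^\infty Var}^\mathrm{red}\xleftarrow{\,\bbR\,}\ast$, where $r=((-)^\mathrm{red})^\op$ is the reduction functor and $\bbR$ is the inclusion of the reduced point. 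The apex of this cospan, $\mathsf{C^\infty Var}^\mathrm{red}$, presents $\mathbf{SmoothStack}^{\pmb{+}}$, and the pullback corner $(\mathbf{sC^{\infty\!} Alg}_{\mathrm{fg}}^\mathrm{pnt})^\op$ presents $\mathbf{FMP}_{\!\mathrm{fg}}$; thus the target square has $\mathbf{SmoothStack}^{\pmb{+}}$ as the apex of the span and $\mathbf{FMP}_{\!\mathrm{fg}}$ as the pushout corner, matching the statement.

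Next I would produce the four edges of the topos square by Kan extension along the four site maps, reusing verbatim the mechanism of the differential-topos theorem and the derived relative base lemma. Reduction $r$ induces the adjoint quadruple whose discrete part is $\mathrm{Disc}^{\mathrm{dif}}\simeq\iota^{\mathrm{red}}_\ast\simeq(-)^{\mathrm{red}\ast}$; the reduced point $\bbR\colon\ast\to\mathsf{C^\infty Var}^\mathrm{red}$ induces the terminal geometric morphism with global sections $\mathit{\Gamma}$ and discrete inclusion $\mathrm{Disc}$; the fully faithful inclusion of pointed algebras induces the relative base morphism whose restriction functor is precisely $\mathit{\Gamma}^{\mathrm{rel}}_{\!\mathrm{fg}}=(-)^\wedge$; and the projection $(\mathbf{sC^{\infty\!} Alg}_{\mathrm{fg}}^\mathrm{pnt})^\op\to\ast$ induces $\mathrm{Disc}^{\mathrm{inf}}_{\mathrm{fg}}$, as in the infinitesimal-cohesion proposition. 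As these functors are already identified in the excerpt, the labelling of the square is immediate and only the universal property remains to be checked.

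To establish the pushout I would verify its universal property against a test $(\infty,1)$-topos $\mathbf{Z}$, using that geometric morphisms relating a (pre-)stack topos to $\mathbf{Z}$ are classified by flat, topology-continuous functors on the underlying site (the continuity constraint for the \'etale topology being imposed only on the two left corners, since the pointed site and the point carry the trivial topology and hence yield genuine pre-stacks). Granting this, the cospan of classifying functor-spaces is computed on the site pullback, and the key algebraic input is that $(-)^\mathrm{red}$ is a reflective localisation with fully faithful $\iota^\mathrm{red}$ and that $\hat\imath$ preserves finite products: these guarantee that a flat functor on $\mathbf{dFMfd}$ and a flat functor on $\ast$ which agree after reduction glue uniquely to a flat functor on the pullback site, so that the classifying space of $\mathbf{FMP}_{\!\mathrm{fg}}$ is the corresponding fibre product of the classifying spaces of $\mathbf{dFSmoothStack}$, $\mathbf{SmoothStack}^{\pmb{+}}$ and $\mathbf{\infty Grpd}$. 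This is the smooth, derived analogue of \cite[Proposition 6.5.15]{DCCTv2}, whose proof I would follow.

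The main obstacle I expect is the localisation mismatch: the two left-hand corners are genuine stacks, \'etale-localised, whereas $\mathbf{FMP}_{\!\mathrm{fg}}$ and $\mathbf{\infty Grpd}$ are mere pre-stacks, so the pushout must survive sheafifying only half of the diagram. I would handle this by first proving the statement at the level of pre-stacks, where the site pullback controls the classifying spaces transparently, and then showing that the \'etale localisation descends through the square. For the latter, the decisive facts are that $r$ and the point inclusion preserve \'etale covers and effective epimorphisms --- exactly the properties already used to restrict the quadruple of the differential-topos theorem and the relative-base adjunction from pre-stacks to stacks --- so that the comparison geometric morphism from the pre-stack pushout to the localised one is an equivalence. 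Making this descent step precise, via To\"en--Vezzosi's continuity criteria \cite[Corollary 3.3.4 and Proposition 3.5.2]{ToenVezzo05}, is the delicate part of the argument.
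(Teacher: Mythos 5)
Your identification of the four corners and edges, and your instinct that everything must funnel into the site-level pullback \eqref{eq:square_sites}, both match the paper; the ``localisation mismatch'' you flag at the end is also a real subtlety that the paper's own (very terse) proof passes over. The genuine gap is in your third paragraph: the variance of Diaconescu's theorem is backwards for the universal property you need to check. Diaconescu classifies geometric morphisms \emph{into} a (pre)stack $(\infty,1)$-topos $\St(\mathsf{C})$ by flat, topology-continuous functors $\mathsf{C}\to\mathbf{Z}$, i.e. it computes $\mathrm{Map}(\mathbf{Z},\St(\mathsf{C}))$. The universal property of a pushout of $(\infty,1)$-topoi is a statement about $\mathrm{Map}(\mathbf{FMP}_{\!\mathrm{fg}},\mathbf{Z})$, geometric morphisms \emph{out of} the cocone corner, which Diaconescu does not control. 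Gluing flat functors on the corner sites and forming a fibre product of the resulting classifying spaces would, at best, exhibit the square as a \emph{limit} of $(\infty,1)$-topoi rather than the asserted colimit --- and since every arrow of the square points into $\mathbf{FMP}_{\!\mathrm{fg}}$ (the span has apex $\mathbf{SmoothStack}^{\pmb{+}}$, not a cospan with apex $\mathbf{SmoothStack}^{\pmb{+}}$), the fibre product $\mathrm{Map}(\mathbf{Z},\mathbf{dFSmoothStack})\times_{\mathrm{Map}(\mathbf{Z},\mathbf{SmoothStack}^{\pmb{+}})}\mathrm{Map}(\mathbf{Z},\mathbf{\infty Grpd})$ you would need does not even typecheck: postcomposition with $\mathrm{Disc}^{\mathrm{dif}}$ and $\mathit{\Gamma}$ goes the wrong way.

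The paper's proof sidesteps this entirely by invoking \cite[Proposition 6.3.2.3]{topos}: a pushout of $(\infty,1)$-topoi along geometric morphisms is computed as the pullback of the underlying $(\infty,1)$-categories along the inverse-image (left adjoint) functors, here $\mathrm{Disc}^{\mathrm{rel}}_{\mathrm{fg}}$, $\mathit{\Pi}^{\mathrm{dif}}$, $\mathrm{Disc}$ and $\mathit{\Pi}^{\mathrm{inf}}_{\mathrm{fg}}$. Since these preserve colimits and every stack is a colimit of representables, that categorical pullback is reduced to the pullback square of sites, which is \eqref{eq:square_sites} by the very definition of $\mathbf{sC^{\infty\!}Alg}^{\mathrm{pnt}}_{\mathrm{fg}}$. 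If you want to keep a classifying-object flavour, the object you must test against is $\mathrm{Map}_{\mathrm{LTop}}(\mathbf{Z},-)$, the space of left-exact left adjoints out of $\mathbf{Z}$, and that is not given by flat functors on the corner sites; the clean route is the reduction to underlying categories that the paper takes, with your descent-through-localisation concern then absorbed into the claim that the left adjoints are determined by their values on representables.
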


\begin{proof}
The result of \cite[Proposition 6.3.2.3]{topos} tells us that an $(\infty,1)$-pushout of $(\infty,1)$-topoi can be concretely computed as an $(\infty,1)$-pullback of the underlying $(\infty,1)$-categories, where the morphisms are the left adjoint $(\infty,1)$-functors in all pairs presenting the geometric morphisms. Then, we need to show that the square
\begin{equation}
    \begin{tikzcd}[row sep={19ex,between origins}, column sep={21ex,between origins}]
    \mathbf{dFSmoothStack} \arrow[r, leftarrow, "\mathrm{Disc}^{\mathrm{rel}}_{\mathrm{fg}}"] &  \mathbf{FMP}_{\!\mathrm{fg}} \\
    \mathbf{SmoothStack}^{\pmb{+}}\arrow[u, leftarrow, "\mathit{\Pi}^\mathrm{dif}"] \arrow[r, leftarrow, "\mathrm{Disc}"] &  \mathbf{\infty Grpd}\arrow[u, leftarrow, "\mathit{\Pi}^\mathrm{inf}_{\mathrm{fg}}"'] 
    \end{tikzcd}
\end{equation}
is an $(\infty,1)$-pullback of $(\infty,1)$-categories.
Since any stack can be written as the $(\infty,1)$-colimit of representables and the left adjoint preserves colimits, it is enough to check that the diagram of sites is a pullback square. But such a diagram is precisely the pullback square \eqref{eq:square_sites}.
\end{proof}

\subsection{$L_\infty$-algebroids as formal derived smooth stacks}\label{subsec:Loo_algebroids}

In this subsection we will develop a general picture of $L_\infty$-algebroids -- and thus of the geometric objects sometimes known as $NQ$-manifolds in the literature -- in the context of derived differential topos geometry. We will see an interesting interplay between the formal and the higher derived properties of formal derived smooth stacks, which is also related to the research by \cite{ArvFormal22}.

First, we write the appropriate definition of groupoid object internal in an $(\infty,1)$-category, as proposed in \cite{Principal1}, in our case of interest of formal derived smooth stacks.

\begin{definition}[Groupoid object]
A \textit{groupoid object} in the $(\infty,1)$-category $\mathbf{dFSmoothStack}$ of formal derived smooth stacks is a simplicial object $\mathcal{G}_\bullet:\Delta^\op\rightarrow \mathbf{dFSmoothStack}$ such that all the natural maps (also known as Segal maps)
\begin{equation}
    \mathcal{G}_n \;\longrightarrow\; \mathcal{G}_1 \times_{\mathcal{G}_0}^h \cdots \times_{\mathcal{G}_0}^h \mathcal{G}_1
\end{equation}
are equivalences of formal derived smooth stacks.
\end{definition}

As discussed by \cite{Principal1}, a groupoid object $\mathcal{G}_\bullet$ in an $(\infty,1)$-topos gives rise to the colimiting cocone $\mathcal{G}_0 \xtwoheadrightarrow{\;\;\;\;} \bfL\mathrm{colim}\,\mathcal{G}_\bullet$, which is an effective epimorphism. Conversely, any effective epimorphism $X \xtwoheadrightarrow{\;\;p\;\;} \mathcal{G}$  is equivalently a groupoid object $\mathcal{G}_\bullet$ with $\mathcal{G}_0\simeq X$.
Then, in particular, this must hold in the $(\infty,1)$-topos $\mathbf{dFSmoothStack}$ of formal derived smooth stacks.
To sum up, the relevant data of a groupoid object of formal derived smooth stacks can be packed in a diagram of the following form:
\begin{equation}\label{eq:groupoid_object}
    \begin{tikzcd}[row sep={15ex,between origins}, column sep={5ex}]
     \mathcal{G}_1 \arrow[r, "t"', shift left=-1.0ex]\arrow[r, "s", shift left=1.0ex] &  \mathcal{G}_0 \arrow[r, "p", two heads] & \bfL\mathrm{colim}\,\mathcal{G}_\bullet,
    \end{tikzcd}
\end{equation}
where $s$ plays the role of source map and $t$ the role of target map.

\begin{example}[Derived smooth group]
We call \textit{derived smooth group} $\B G\in\mathbf{dFSmoothStack}$ a groupoid object that is pointed, i.e. of the form $\ast \xtwoheadrightarrow{\;\ast\;} \mathbf{B}G$ with $\mathcal{G}_0=\ast$ and effective epimorphism given by the inclusion of the canonical point.
Here, diagram \eqref{eq:groupoid_object} reduces to
\begin{equation}
    \begin{tikzcd}[row sep={15ex,between origins}, column sep={5ex}]
     G \arrow[r, "\ast"', shift left=-1.0ex]\arrow[r, "\ast", shift left=1.0ex] &  \ast \arrow[r, "\ast", two heads] & \B G.
    \end{tikzcd}
\end{equation}
\end{example}

Now, in the formalism of derived differential structures, we are able to generalise an idea from \cite[Section 6.5.2.2]{DCCTv2} to derived geometry and, thus, provide a very general definition of what we may call derived smooth algebroid. 
Morally speaking, a derived smooth algebroid is going to be a groupoid object $\mathcal{G}_\bullet$ in the $(\infty,1)$-category of formal derived smooth stacks which is infinitesimally thickened over its base $\mathcal{G}_0$. 
We will show that such a notion generalises familiar $L_\infty$-algebroids.

\begin{definition}[Derived smooth algebroid]\label{def:der_algebroid}
We call \textit{derived smooth algebroid} a groupoid object $X \xtwoheadrightarrow{\;\;p\;\;} \mathcal{G}$ in $\mathbf{dFSmoothStack}$ such that the morphism $\Im(X) \xrightarrow{\;\Im(p)\;} \Im(\mathcal{G})$ is an equivalence.
We also call the map $p$ the \textit{anchor map} of the derived $L_\infty$-algebroid.
\end{definition}


Now we will see that the usual notions of $L_\infty$-algebra and  $L_\infty$-algebroid fit into this wider definition of derived smooth algebroid.
First, let us see how $L_\infty$-algebras and $L_\infty$-algebroids are embedded into formal derived smooth stacks.

\begin{definition}[Delooping of an $L_\infty$-algebra and of an $L_\infty$-algebroid]$ $
\vspace{-0.2cm}\begin{itemize}
    \item The \textit{delooping} $\B\mathfrak{g}$ of an $L_\infty$-algebra $\mathfrak{g}$ can be defined by the formal derived smooth stack
\begin{equation}
    \begin{aligned}
        \B\mathfrak{g} \,:\;  \mathsf{dFMfd} \,&\longrightarrow\, \sSet \\
        U \,&\longmapsto\, \mathrm{MC}(\mathfrak{g}\otimes \mathfrak{m}_{\mathcal{O}(U)}),
    \end{aligned}
\end{equation}
where $\mathfrak{m}_{\mathcal{O}(U)}$ is the nilradical of the dg-commutative algebra $\mathrm{N}\mathcal{O}(U)^\mathrm{alg}$ and $\mathrm{MC}(-)$ is the simplicial set of solutions to the Maurer-Cartan equation.
    \item More generally, the \textit{delooping} $\B\mathfrak{L}(M)$ of a local $L_\infty$-algebra $\mathfrak{L}(M)$ can be defined by the formal derived smooth stack
\begin{equation}
    \begin{aligned}
        \B\mathfrak{L}(M) \,:\;  \mathsf{dFMfd} \,&\longrightarrow\, \sSet \\
        U \,&\longmapsto\, \mathrm{MC}(\mathfrak{L}(M)\,\widehat{\otimes}\, \mathfrak{m}_{\mathcal{O}(U)}),
    \end{aligned}
\end{equation}
    where we defined the pullback $\mathfrak{L}(M)\,\widehat{\otimes}\, \mathfrak{m}_{\mathcal{O}(U)} \coloneqq \mathfrak{L}(M)\,\widehat{\otimes}\, \mathrm{N}\mathcal{O}(U)_{\!} \times_{\mathfrak{L}(M)\widehat{\otimes}\mathrm{N}\mathcal{O}(U)^\mathrm{red}}\!\{0\}$.
    \item The \textit{delooping} $\B\mathfrak{a}$ of an $L_\infty$-algebroid $\mathfrak{a}\twoheadrightarrow M$ on an ordinary smooth manifold $M$ can be defined by the formal derived smooth stack
\begin{equation}
    \begin{aligned}
        \B\mathfrak{a} \,:\;  \mathsf{dFMfd} \,&\longrightarrow\, \sSet \\
        U \,&\longmapsto\, \coprod_{f:U^\mathrm{red}\rightarrow M}\!\!\!\!  \mathrm{MC} (\Gamma(U^\mathrm{red},f^\ast\mathfrak{a})\,\widehat{\otimes}_{\mathrm{N}\mathcal{O}(U)} \mathfrak{m}_{\mathcal{O}(U)}).
    \end{aligned}\vspace{-0.1cm}
\end{equation}
    where the $\Coo$-tensor product is given as above.
\end{itemize}
\end{definition}

We can now show that usual $L_\infty$-algebras and $L_\infty$-algebroids are examples of derived smooth algebroids as defined above.

\begin{example}[Usual $L_\infty$-algebras]
Let $\B\mathfrak{g}$ be the delooping of an $L_\infty$-algebra. The canonical map
$\ast \xtwoheadrightarrow{\;\,\ast\,\;} \mathbf{B}\mathfrak{g}$ gives rise to a map $\ast\simeq\Im(\ast)  \xtwoheadrightarrow{\;\;\;\;} \Im(\mathbf{B}\mathfrak{g})\simeq \ast$, which makes $\B \mathfrak{g}$ into a formal smooth algebroid on the point.
\end{example}

\begin{example}[Usual $L_\infty$-algebroids]
Let $\mathfrak{a}\twoheadrightarrow M$, where $M$ is an ordinary smooth manifold, be a $L_\infty$-algebroid in the usual sense.
Then the map $M \xtwoheadrightarrow{\;\;\rho\;\;} \B\mathfrak{a}$ presents the $L_\infty$-algebroid as a derived smooth algebroid in the sense above, since $M\simeq\Im(M)\twoheadrightarrow\Im(\B\mathfrak{a})\simeq M$.
\end{example}

\begin{remark}[The base of a derived smooth algebroid]
Notice that, in the definition of a derived $L_\infty$-algebroid $X \xtwoheadrightarrow{\;\;p\;\;} \mathcal{G}$, there is no requirement for $X$ to be an ordinary smooth manifold or even a formal derived smooth manifold. In fact, $X$ can be, in general, a formal derived smooth stack. 
In other words, derived smooth algebroids generalise $L_\infty$-algebroids by dropping the constraint that the base has to be an ordinary smooth manifold. 
Roughly speaking, a derived smooth algebroid is an infinitesimally thickened groupoid object where the base $X$ is generally a formal derived smooth stack.
\end{remark}

Let us now provide an archetypal example of such a generalised notion of derived smooth algebroid where the base is not just an ordinary smooth manifold, but fully fledged a formal derived smooth stack.

\begin{example}[Formal disk bundle as derived $L_\infty$-algebroid]
Let $X\in\mathbf{dFSmoothStack}$ be any formal derived smooth stack. Recall the definition of formal disk bundle $T^\infty X = X\times_{\Im(X)}^hX$ induced by the canonical morphism $\mathfrak{i}_X:X\longrightarrow \Im(X)$ to the de Rham space $\Im(X)$.
Thus, the formal disk bundle gives rise to a groupoid object of the form
\begin{equation}\label{eq:TX_as_algebroid}
    \begin{tikzcd}[row sep={15ex,between origins}, column sep={6ex}]
     T^\infty X \arrow[r, "\mathrm{ev}"', shift left=-1.0ex]\arrow[r, "\pi_X", shift left=1.0ex] &  X \arrow[r, "\mathfrak{i}_X", two heads] & \Im(X).
    \end{tikzcd}
\end{equation}
Moreover, notice that $\Re(\Im(X))\simeq \Re(X)$ since one has the equivalence $\iota^{\mathrm{red}\ast\!}\circ\iota^{\mathrm{red}}_{\ast}\circ \iota^{\mathrm{red}\ast}\simeq\mathrm{id}$. Thus, the diagram \eqref{eq:TX_as_algebroid} presents, in particular, a derived $L_\infty$-algebroid in the generalised sense of definition \ref{def:der_algebroid}.
\end{example}

Thus, the abstract definition of derived smooth algebroid above provides a generalisation of the usual definition of $L_\infty$-algebroid, which is based on the formalism of differential-graded manifolds. In section \ref{sec:global_aspects_of_bv_theory} we will explore some relevant examples motivated by physics.

\begin{remark}[Lie differentiation]
Finally, notice that we can use the infinitesimal flat modality to encompass Lie differentiation. In fact, by using the equivalences $\mathit{\Gamma}^{\mathrm{rel}}(\B G) \simeq \mathbf{MC}(\mathfrak{g})$ and $\mathrm{Disc}^{\mathrm{rel}} \big(\mathbf{MC}(\mathfrak{g})\big)\simeq\B\mathfrak{g}$.
From these two equivalences, we obtain an equivalence of formal derived smooth stacks 
\begin{equation}
 \flat^\mathrm{rel}\B G \;\simeq\; \B\mathfrak{g}.
\end{equation}
\end{remark}

\subsection{Derived jet bundles}

In this subsection we will provide a definition of jet bundles as formal derived smooth stacks, rooted in our differential structure, which we delineated above in this section. This will be an application of the framework developed by \cite{khavkine2017synthetic, DCCTv2}.

\begin{construction}
Let $M\in\mathbf{dFMfd}\hookrightarrow\mathbf{dFSmoothStack}$ be any fixed formal derived smooth manifold. A bundle $E\xrightarrow{\;p\;}M$ can be seen as an object of the slice $(\infty,1)$-category $\mathbf{dFSmoothStack}_{/M}$.
Recall that there is a morphism $\mathfrak{i}_M:M\rightarrow\Im(M)$, which is the infinitesimal shape unit of definition \ref{def:infinitesimalshape}, i.e. the canonical morphism from the derived formal smooth manifold $M$ to its de Rham space.
This induces a triplet of adjoint $(\infty,1)$-functors
\begin{equation}
   (\mathfrak{i}_M)_! \,\dashv\, (\mathfrak{i}_M)^\ast \,\dashv\,  (\mathfrak{i}_M)_\ast,
\end{equation}
which is the base change given by \cite[Proposition 6.3.5.1]{topos}, i.e. a triplet of $(\infty,1)$-functors the form
\begin{equation}
    \begin{tikzcd}[row sep=scriptsize, column sep=12.0ex, row sep=18.0ex]
     \mathbf{dFSmoothStack}_{/M} \arrow[r, "(\mathfrak{i}_M)_\ast", shift left=-3.1ex]\arrow[r, "(\mathfrak{i}_M)_{!\phantom{!}}", shift left=3.1ex] &  \mathbf{dFSmoothStack}_{/\Im(M)}, \arrow[l, "(\mathfrak{i}_M)^\ast"', shift left=0ex]
    \end{tikzcd}
\end{equation}
where $\mathbf{dFSmoothStack}_{/M}$ and $\mathbf{dFSmoothStack}_{/\Im(M)}$ are the slice $(\infty,1)$-categories of derived formal smooth sets respectively over $M$ and over its de Rham space $\Im(M)$.
\end{construction}

\begin{definition}[Derived jet bundle]
For a given fibre bundle $E\rightarrow M$, where $E$ is a formal derived smooth stack and $M$ is a formal derived smooth manifold, the \textit{jet bundle} $\mathrm{Jet}_M{E}\rightarrow M$ is a fibre bundle of formal derived smooth stacks which is defined by the image of the functor
\begin{equation}
\begin{aligned}
    \mathrm{Jet}_M \,:\; \mathbf{dFSmoothStack}_{/M} \;&\longrightarrow \mathbf{dFSmoothStack}_{/M} \\
    E \;&\longmapsto \;  \mathrm{Jet}_M E \,\coloneqq\, (\mathfrak{i}_M)^\ast (\mathfrak{i}_M)_\ast E.
\end{aligned}
\end{equation}
\end{definition}

That this generalises the usual definition of jet bundles becomes clearer after corollary \ref{co:jetfibre}.

\begin{remark}[Jet co-monad]
From the definition, similarly to the previously examined co-monad structures, one obtains that there exists a an equivalence of endofunctors
\begin{equation}\label{eq:jetnattrans}
    \itDelta\,:\;\mathrm{Jet}_M \;\simeq\;  \mathrm{Jet}_M\mathrm{Jet}_M.
\end{equation}
Thus we call the functor $\mathrm{Jet}_M$ \textit{jet co-monad} over $M$.
For any given bundle $(E\rightarrow M)\in\mathbf{dFSmoothStack}_{/M}$, the natural transformation \eqref{eq:jetnattrans} will give rise to a morphism
\begin{equation}\label{eq:jetcomonadcoproduct}
    \itDelta_E\,:\;\mathrm{Jet}_M E \;\longhookrightarrow\;  \mathrm{Jet}_M (\mathrm{Jet}_ME).
\end{equation}
This is the coproduct of the comonad structure associated to jet bundles, which was originally observed in the context of ordinary differential geometry by \cite{Marvan:1986}. 
In the rest of this subsection we will show that some essential results by \cite{khavkine2017synthetic, DCCTv2} follow through to the formal derived smooth case.
\end{remark}

\begin{lemma}[Adjunction with formal disk bundle]
There is a natural equivalence of functors
\begin{equation}
    (\mathfrak{i}_M)^\ast (\mathfrak{i}_M)_! \;\simeq\; T^\infty M\times_{M\!}(-)
\end{equation}
\end{lemma}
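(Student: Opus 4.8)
The plan is to unwind the two functors via the explicit base-change description and then to recognise the outcome as the formal disk bundle construction, using the pasting law for homotopy pullbacks.

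First I would recall that, for the infinitesimal shape unit $\mathfrak{i}_M:M\to\Im(M)$, the leftmost adjoint in the triple $\big((\mathfrak{i}_M)_!\dashv(\mathfrak{i}_M)^\ast\dashv(\mathfrak{i}_M)_\ast\big)$ of \cite[Proposition 6.3.5.1]{topos} is simply post-composition with $\mathfrak{i}_M$, while $(\mathfrak{i}_M)^\ast$ is homotopy pullback along $\mathfrak{i}_M$. Hence for a bundle $(E\xrightarrow{p}M)\in\mathbf{dFSmoothStack}_{/M}$ one has $(\mathfrak{i}_M)_!(E\xrightarrow{p}M)=(E\xrightarrow{\mathfrak{i}_M\circ p}\Im(M))$, and therefore
\begin{equation}
(\mathfrak{i}_M)^\ast(\mathfrak{i}_M)_!(E\to M)\;\simeq\;\big(M\times^h_{\Im(M)}E\;\xrightarrow{\;\mathrm{pr}_M\;}\;M\big),
\end{equation}
where the homotopy pullback is taken along $\mathfrak{i}_M$ and $\mathfrak{i}_M\circ p$, and the slice structure map is the projection onto the first factor.

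Next I would factor this pullback. Since $E\to\Im(M)$ factors as $E\xrightarrow{p}M\xrightarrow{\mathfrak{i}_M}\Im(M)$, the pasting law for homotopy pullbacks computes $M\times^h_{\Im(M)}E$ as a two-step fibre product: the inner square is the homotopy pullback of $\mathfrak{i}_M$ against itself, which by Definition \ref{def:formal_disk_bundle} is exactly the formal disk bundle $T^\infty M=M\times^h_{\Im(M)}M$ with its two projections $\pi$ and $\mathrm{ev}$, while the outer step pulls this back against $p$. This yields
\begin{equation}
M\times^h_{\Im(M)}E\;\simeq\;T^\infty M\times^h_M E ,
\end{equation}
the fibre product over $M$ being formed with one projection of $T^\infty M$ (so that the identity $M\times^h_M E\simeq E$ collapses the redundant factor), while the residual slice structure map to $M$ is carried by the other projection. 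This is precisely the endofunctor $T^\infty M\times_M(-)$ in the jet-bundle conventions of \cite{khavkine2017synthetic,DCCTv2}.

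The bookkeeping of the two projections of $T^\infty M$ is the one genuinely delicate point, and it is where I would be most careful: I would check that the factor collapsed by $M\times^h_M E\simeq E$ and the factor carrying the structure map to $M$ are the two distinct legs $\pi,\mathrm{ev}$, which are interchanged by the swap automorphism of $T^\infty M$ over $\Im(M)$, so that the choice is immaterial up to coherent equivalence. The cleanest way to make this manifest, which I would use to close the argument, is the Beck--Chevalley equivalence $(\mathfrak{i}_M)^\ast(\mathfrak{i}_M)_!\simeq\pi_!\,\mathrm{ev}^\ast$ attached to the cartesian square defining $T^\infty M$; since post-composition, homotopy pullback and the pasting law are all natural in $E$, this assembles into the asserted natural equivalence of $(\infty,1)$-endofunctors of $\mathbf{dFSmoothStack}_{/M}$.
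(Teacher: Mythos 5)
Your proposal is correct and follows essentially the same route as the paper: identify $(\mathfrak{i}_M)_!$ as post-composition and $(\mathfrak{i}_M)^\ast$ as homotopy pullback along $\mathfrak{i}_M$, then use the pasting law for pullback squares to identify $M\times^h_{\Im(M)}E$ with $T^\infty M\times_M E$. The extra care about the two projections of $T^\infty M$ and the Beck--Chevalley remark are reasonable refinements, but the core argument coincides with the paper's proof.
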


\begin{proof}
Consider any bundle $(E\xrightarrow{\,p\,}M)\in\mathbf{dFSmoothStack}_{/M}$. Then, the formal smooth set $T^\infty M\times_{M\!}E$ sits at the top-left corner of the following pullback squares:
\begin{equation}
    \begin{tikzcd}[row sep={12ex,between origins}, column sep={14ex,between origins}]
   T^\infty M\times_{M\!}E\arrow[r]\arrow[d] & E \arrow[d, "p"] \\
   T^\infty M \arrow[r]\arrow[d] & M \arrow[d, "\mathfrak{i}_M"] \\
   M \arrow[r, "\mathfrak{i}_M"] & \Im(M) ,
    \end{tikzcd}
\end{equation}
We can also see that there is an equivalence $T^\infty M\times_{M\!}E\simeq M\times_{\Im(M)\!}E$. Recall that, for a base change morphism, $(\mathfrak{i}_M)_!$ is the post-composition by $\mathfrak{i}_M$ and $(\mathfrak{i}_M)^\ast$ is the pullback along $\mathfrak{i}_M$.
Thus, the bundle $(\mathfrak{i}_M)_!E$ is nothing but the composition $\mathfrak{i}_M\circ p:E\rightarrow\Im(M)$ and the bundle $(\mathfrak{i}_M)^\ast (\mathfrak{i}_M)_!E$ is nothing but the pullback $T^\infty M\times_{M\!}E\rightarrow M$.
\end{proof}

\begin{theorem}[Formal disk bundle and jet bundle adjunction]
There is an adjunction
\begin{equation}
    T^\infty M\times_{M\!}(-) \;\dashv\;  \mathrm{Jet}_M.
\end{equation}
of endofunctors of the slice $(\infty,1)$-category $\mathbf{dFSmoothStack}_{/M}$.
\end{theorem}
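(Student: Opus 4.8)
The plan is to read off the adjunction from the base-change triple of adjoint $(\infty,1)$-functors attached to the infinitesimal shape unit, with essentially no work beyond a formal manipulation of adjoints. Write $f\coloneqq\mathfrak{i}_M:M\to\Im(M)$ for brevity. As recalled in the construction preceding the definition of the derived jet bundle, base change along $f$ (cf.\ \cite[Proposition 6.3.5.1]{topos}) supplies a triple of adjoint functors $f_!\dashv f^\ast\dashv f_\ast$ between $\mathbf{dFSmoothStack}_{/M}$ and $\mathbf{dFSmoothStack}_{/\Im(M)}$. By definition $\mathrm{Jet}_M = f^\ast f_\ast$, and by the preceding lemma there is a natural equivalence $f^\ast f_!\simeq T^\infty M\times_M(-)$. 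Thus it suffices to establish the adjunction $f^\ast f_!\dashv f^\ast f_\ast$.

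The essential point is the elementary fact that any chain of three adjoint functors $F\dashv G\dashv H$ induces an adjunction $GF\dashv GH$ between the composites. Concretely, for objects $E,F\in\mathbf{dFSmoothStack}_{/M}$ I would chain the two hom-space equivalences
\begin{align}
\bfR\Hom_{/M}\big(f^\ast f_! E,\,F\big)
&\;\simeq\; \bfR\Hom_{/\Im(M)}\big(f_! E,\,f_\ast F\big)\\
&\;\simeq\; \bfR\Hom_{/M}\big(E,\,f^\ast f_\ast F\big),
\end{align}
where the first equivalence comes from $f^\ast\dashv f_\ast$ and the second from $f_!\dashv f^\ast$. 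Substituting $f^\ast f_!\simeq T^\infty M\times_M(-)$ and $f^\ast f_\ast = \mathrm{Jet}_M$ yields exactly the desired natural equivalence $\bfR\Hom_{/M}(T^\infty M\times_M E,\,F)\simeq \bfR\Hom_{/M}(E,\,\mathrm{Jet}_M F)$, naturally in both arguments.

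In the $(\infty,1)$-categorical setting these hom-space equivalences must be assembled into a genuine adjunction rather than a mere bijection of homotopy classes, so the one point requiring care is coherence: one should either exhibit the unit and counit explicitly as the whiskered composites of the units and counits of the two constituent adjunctions, or simply invoke the fact that adjunctions compose in $(\infty,1)$-categories (Lurie). I expect this coherence bookkeeping to be the only nontrivial aspect. The equivalence $f^\ast f_!\simeq T^\infty M\times_M(-)$ is already natural in $E$ by the preceding lemma, and the three-adjoint argument is purely formal, so once composition of $(\infty,1)$-adjunctions is granted the statement follows immediately.
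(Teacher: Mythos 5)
Your proposal is correct and follows essentially the same route as the paper: both unravel $\mathrm{Jet}_M = (\mathfrak{i}_M)^\ast(\mathfrak{i}_M)_\ast$, chain the two hom-space equivalences coming from $(\mathfrak{i}_M)_!\dashv(\mathfrak{i}_M)^\ast\dashv(\mathfrak{i}_M)_\ast$, and conclude via the lemma identifying $(\mathfrak{i}_M)^\ast(\mathfrak{i}_M)_!$ with $T^\infty M\times_M(-)$. Your additional remark on assembling the hom-equivalences into a coherent $(\infty,1)$-adjunction is a reasonable refinement that the paper leaves implicit.
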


\begin{proof}It is enough to notice that we have the following equivalences:
\begin{equation}\begin{aligned}
    \bfR\Hom_{{/M}}(E',\mathrm{Jet}_M{E}) \;&\simeq\; \bfR\Hom_{{/M}}(E',\, (\mathfrak{i}_M)^\ast (\mathfrak{i}_M)_\ast E) \\
     \;&\simeq\; \bfR\Hom_{{/M}}( (\mathfrak{i}_M)_! E', \,(\mathfrak{i}_M)_\ast E) \\
     \;&\simeq\; \bfR\Hom_{{/M}}((\mathfrak{i}_M)^\ast (\mathfrak{i}_M)_! E', \,E) \\
     \;&\simeq\; \bfR\Hom_{{/M}}(T^\infty M\times_{M\!}E', \,E)
\end{aligned}\end{equation}
where $\bfR\Hom_{{/M}}(-,-)$ is the hom-$\infty$-groupoid of the slice $(\infty,1)$-category $\mathbf{dFSmoothStack}_{/M}$. Therefore we have the wanted conclusion.
\end{proof}

\begin{corollary}[Mapping stack to jet bundles]\label{cor:adjunction}
We have the equivalence of formal derived smooth sets
\begin{equation}
    [E',\mathrm{Jet}_ME]_{/M} \;\simeq\; [T^\infty M\times_M E',E]_{/M}.
\end{equation}
\end{corollary}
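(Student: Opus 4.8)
The final statement is Corollary \ref{cor:adjunction}, which upgrades the adjunction $T^\infty M\times_M(-)\dashv\mathrm{Jet}_M$ from the level of hom-$\infty$-groupoids (the preceding Theorem) to an equivalence of the internal mapping stacks $[E',\mathrm{Jet}_ME]_{/M}\simeq[T^\infty M\times_M E',E]_{/M}$ in the slice. The plan is to deduce this purely formally from the Theorem by probing both mapping stacks by an arbitrary test object and invoking the Yoneda lemma for the $(\infty,1)$-category $\mathbf{dFSmoothStack}_{/M}$. The key structural fact I would rely on is that $\mathbf{dFSmoothStack}$ is an $(\infty,1)$-topos, hence cartesian closed, and that this cartesian closure descends to every slice $\mathbf{dFSmoothStack}_{/M}$; so the internal hom $[-,-]_{/M}$ exists and is characterised by the defining adjunction $\bfR\Hom_{/M}(Z\times_M A,\,B)\simeq \bfR\Hom_{/M}(Z,\,[A,B]_{/M})$ for all $Z,A,B$ over $M$.

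First I would fix an arbitrary test bundle $Z\in\mathbf{dFSmoothStack}_{/M}$ and compute $\bfR\Hom_{/M}\big(Z,\,[E',\mathrm{Jet}_ME]_{/M}\big)$. By the defining property of the internal hom this is $\bfR\Hom_{/M}\big(Z\times_M E',\,\mathrm{Jet}_ME\big)$. Now I would apply the Formal disk bundle and jet bundle adjunction (the preceding Theorem) with the object $Z\times_M E'$ in the role of its first argument, obtaining
\begin{equation*}
    \bfR\Hom_{/M}\big(Z\times_M E',\,\mathrm{Jet}_ME\big)\;\simeq\;\bfR\Hom_{/M}\big(T^\infty M\times_M (Z\times_M E'),\,E\big).
\end{equation*}
The next step is to rearrange the argument $T^\infty M\times_M(Z\times_M E')$. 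Since $\times_M$ is the product in the slice $(\infty,1)$-category $\mathbf{dFSmoothStack}_{/M}$, it is associative and commutative up to coherent equivalence, so $T^\infty M\times_M(Z\times_M E')\simeq Z\times_M(T^\infty M\times_M E')$. Substituting and applying the internal-hom adjunction once more in the reverse direction gives
\begin{equation*}
    \bfR\Hom_{/M}\big(Z\times_M (T^\infty M\times_M E'),\,E\big)\;\simeq\;\bfR\Hom_{/M}\big(Z,\,[T^\infty M\times_M E',\,E]_{/M}\big).
\end{equation*}

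Chaining these equivalences yields $\bfR\Hom_{/M}\big(Z,\,[E',\mathrm{Jet}_ME]_{/M}\big)\simeq\bfR\Hom_{/M}\big(Z,\,[T^\infty M\times_M E',E]_{/M}\big)$ naturally in $Z$, and the $(\infty,1)$-categorical Yoneda lemma then forces the asserted equivalence $[E',\mathrm{Jet}_ME]_{/M}\simeq[T^\infty M\times_M E',E]_{/M}$. The only genuine content beyond formal manipulation is the justification that the slice $\mathbf{dFSmoothStack}_{/M}$ is cartesian closed so that these internal homs exist and satisfy the tensor-hom adjunction; this follows because an $(\infty,1)$-topos is locally cartesian closed and its slices are again $(\infty,1)$-topoi (a fact I would cite from \cite{topos}). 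I expect the main subtlety to be bookkeeping the naturality of all the equivalences in $Z$ so that Yoneda applies cleanly, together with checking that the associativity/commutativity reshuffle of the fibre products over $M$ is the one compatible with the adjunctions rather than a merely objectwise equivalence; once cartesian closure of the slice is granted, no further geometric input is needed, and in particular the specific definition of $\mathrm{Jet}_M$ as $(\mathfrak{i}_M)^\ast(\mathfrak{i}_M)_\ast$ is only used through the already-established Theorem.
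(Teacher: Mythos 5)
Your argument is correct and is precisely the standard internalization of the adjunction $T^\infty M\times_M(-)\dashv\mathrm{Jet}_M$ that the paper implicitly relies on: the paper gives no separate proof of this corollary, treating it as an immediate formal consequence of the preceding theorem via cartesian closure of the slice and Yoneda, exactly as you do. The only ingredient you should make sure is in place is the naturality in the test object of the theorem's equivalence, which is supplied by the fact that the theorem asserts an adjunction of endofunctors rather than a mere objectwise bijection.
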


\begin{corollary}[Sections of a jet bundle]
The $\infty$-groupoid of sections of a jet bundle $\mathrm{Jet}_M{E}$ is equivalent to the $\infty$-groupoid of bundle morphisms from $T^\infty M$ to $E$, i.e.
\begin{equation}
    \Gamma(M,\mathrm{Jet}_M{E}) \;\simeq\; \bfR\Hom_{{/M}}(T^\infty M,E).
\end{equation}
\end{corollary}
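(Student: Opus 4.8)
The plan is to deduce the statement directly from the adjunction $T^\infty M\times_{M\!}(-) \dashv \mathrm{Jet}_M$ established in the preceding theorem, by evaluating it at the terminal object of the slice $(\infty,1)$-category. First I would recall, from the Remark ``On the slice category'', that the $\infty$-groupoid of sections of the bundle $\mathrm{Jet}_M{E}\to M$ can be rewritten as a slice hom-$\infty$-groupoid,
\begin{equation}
    \Gamma(M,\mathrm{Jet}_M{E}) \;\simeq\; \bfR\Hom_{{/M}}(\mathrm{id}_M,\,\mathrm{Jet}_M{E}),
\end{equation}
where $\mathrm{id}_M$ denotes the identity bundle $M\xrightarrow{\;\mathrm{id}\;}M$, which is the terminal object of $\mathbf{dFSmoothStack}_{/M}$.

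Next I would specialise the adjunction isomorphism
\begin{equation}
    \bfR\Hom_{{/M}}(E',\mathrm{Jet}_M{E}) \;\simeq\; \bfR\Hom_{{/M}}(T^\infty M\times_{M\!}E',\,E),
\end{equation}
which is valid for any $(E'\to M)\in\mathbf{dFSmoothStack}_{/M}$, to the case $E'=\mathrm{id}_M$. This yields an equivalence of $\infty$-groupoids
\begin{equation}
    \bfR\Hom_{{/M}}(\mathrm{id}_M,\,\mathrm{Jet}_M{E}) \;\simeq\; \bfR\Hom_{{/M}}(T^\infty M\times_{M\!}\mathrm{id}_M,\,E).
\end{equation}
The remaining point is to identify the first argument on the right. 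Since $\mathrm{id}_M$ is terminal in the slice, the fibre product $T^\infty M\times_M \mathrm{id}_M$ is the pullback of the projection $\pi:T^\infty M\to M$ along the identity $M\to M$, which returns $T^\infty M$ itself (equipped with $\pi$) as an object of $\mathbf{dFSmoothStack}_{/M}$; in other words the terminal object is a unit for $\times_M$. Composing the three displayed equivalences then gives precisely $\Gamma(M,\mathrm{Jet}_M{E})\simeq\bfR\Hom_{{/M}}(T^\infty M,E)$.

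The argument is therefore essentially formal, and the only substantive verification is the final identification $T^\infty M\times_{M\!}\mathrm{id}_M\simeq T^\infty M$, i.e. that pulling back a bundle along the identity returns the bundle; this is a standard property of slice $(\infty,1)$-categories and presents no genuine obstacle. Indeed, the main conceptual work has already been carried out in the preceding theorem, whose proof invoked the base-change adjunction $(\mathfrak{i}_M)_!\dashv(\mathfrak{i}_M)^\ast\dashv(\mathfrak{i}_M)_\ast$ and the identification $(\mathfrak{i}_M)^\ast(\mathfrak{i}_M)_!\simeq T^\infty M\times_{M\!}(-)$; here I would simply harvest that adjunction at the unit of the slice. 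If one prefers to avoid citing the Remark, the same conclusion can be reached by instead unwinding $\Gamma(M,\mathrm{Jet}_M{E})$ through its definition as the homotopy fibre of $\bfR\Hom(M,\mathrm{Jet}_M{E})\to\bfR\Hom(M,M)$ over $\{\mathrm{id}_M\}$ and matching fibres across the adjunction, but the slice formulation makes the identification of basepoints automatic and is the cleanest route.
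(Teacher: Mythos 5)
Your proposal is correct and follows essentially the same route as the paper: the paper's proof also sets $E'=M\xrightarrow{\mathrm{id}_M}M$ (the tautological bundle) in the adjunction $\bfR\Hom_{{/M}}(E',\mathrm{Jet}_M{E})\simeq\bfR\Hom_{{/M}}(T^\infty M\times_{M}E',E)$, identifies $\Gamma(M,\mathrm{Jet}_M{E})\simeq\bfR\Hom_{{/M}}(M,\mathrm{Jet}_M{E})$, and concludes via $T^\infty M\times_M M\simeq T^\infty M$. The only difference is that you spell out the justification of the last identification slightly more explicitly, which the paper leaves implicit.
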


\begin{proof}
By setting in previous lemma $E'=M \xrightarrow{\;\mathrm{id}_M\;} M$ to be the tautological bundle, we obtain
\begin{equation*}\begin{aligned}
    \Gamma(M,\jet{E}) \;&\simeq\; \bfR\Hom_{{/M}}(M,\jet{E}) 
     \;\simeq\; \bfR\Hom_{{/M}}(T^\infty M\times_{M\!}M, \,E) 
      \;\simeq\;  \bfR\Hom_{{/M}}(T^\infty M,E),
\end{aligned}\end{equation*}
which is the result.
\end{proof}

By considering the special cases $E'=M$ and $E'=\ast \xhookrightarrow{\;x\;}M$ in corollary \ref{cor:adjunction} above, we obtain respectively the following two corollaries.

\begin{corollary}[Space of sections of a jet bundle]
We have the equivalence of formal derived smooth stacks
\begin{equation}
    \mathbf{\Gamma}(M,\mathrm{Jet}_ME) \;\simeq\; [T^\infty M,E]_{/M}
\end{equation}
\end{corollary}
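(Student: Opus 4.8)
The plan is to obtain this statement as the direct specialization of corollary \ref{cor:adjunction} to the tautological bundle $E'=M$, so that essentially all of the analytic content is already contained in the formal disk bundle / jet bundle adjunction; what remains is bookkeeping about the two sides. First I would recall that the object $\mathrm{id}_M\colon M\to M$ is the terminal object of the slice $(\infty,1)$-category $\mathbf{dFSmoothStack}_{/M}$ and, simultaneously, the monoidal unit for the fibre-product operation $\times_M$ used throughout this subsection. With this in hand, I would set $E'=M$ in the equivalence of corollary \ref{cor:adjunction}, namely $[E',\mathrm{Jet}_ME]_{/M}\simeq [T^\infty M\times_M E',E]_{/M}$, to obtain
\begin{equation}
    [M,\mathrm{Jet}_ME]_{/M}\;\simeq\;[T^\infty M\times_M M,E]_{/M}.
\end{equation}

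Next I would simplify each side separately. On the right, since $\mathrm{id}_M$ is the unit for $\times_M$, there is a canonical equivalence $T^\infty M\times_M M\simeq T^\infty M$ of objects over $M$ (the projection collapsing the trivial factor), so the right-hand side becomes $[T^\infty M,E]_{/M}$, which is exactly the claimed target. On the left, I would invoke the identification of the relative mapping stack out of the terminal object with the stack of sections: by the universal property of the internal hom in the slice, for any $U\in\mathbf{dFMfd}$ one has $\bfR\Hom(U,[M,F]_{/M})\simeq \bfR\Hom_{{/M}}(U\times_M M,F)\simeq \bfR\Hom_{{/M}}(U,F)$, which reproduces precisely the $U$-parametrised families of sections of $F\to M$, i.e.\ $\mathbf{\Gamma}(M,F)$ evaluated at $U$. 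Applying this with $F=\mathrm{Jet}_ME$ gives $[M,\mathrm{Jet}_ME]_{/M}\simeq\mathbf{\Gamma}(M,\mathrm{Jet}_ME)$, and combining the two identifications yields the asserted equivalence. (As a consistency check, taking global sections of this equivalence should recover the $\infty$-groupoid level statement $\Gamma(M,\mathrm{Jet}_ME)\simeq\bfR\Hom_{{/M}}(T^\infty M,E)$ already obtained above.)

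The only genuinely substantive point, and hence the main—albeit mild—obstacle, is pinning down the convention so that $[M,F]_{/M}$ is read as the section stack $\mathbf{\Gamma}(M,F)$ and not merely as $F$ itself; this amounts to fixing whether $\mathbf{\Gamma}$ denotes the relative section object over $M$ or the absolute section stack over the point, and then confirming that the trivial-factor collapse $T^\infty M\times_M M\simeq T^\infty M$ is compatible with the base-change functors $(\mathfrak{i}_M)_!,(\mathfrak{i}_M)^\ast,(\mathfrak{i}_M)_\ast$ of definition \ref{def:formal_disk_bundle} used to build $T^\infty M$. Once this bookkeeping is settled, no further computation is required and the corollary follows formally from corollary \ref{cor:adjunction}.
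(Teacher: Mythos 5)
Your proposal is correct and follows exactly the route the paper takes: the paper derives this corollary by specializing Corollary \ref{cor:adjunction} to the tautological bundle $E'=M$, using $T^\infty M\times_M M\simeq T^\infty M$ and the identification of the relative mapping stack out of the terminal object of the slice with the section stack $\mathbf{\Gamma}(M,-)$. Your additional bookkeeping about the unit for $\times_M$ and the reading of $[M,F]_{/M}$ as $\mathbf{\Gamma}(M,F)$ only makes explicit what the paper leaves implicit.
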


\begin{corollary}[Fibre of a jet bundle]\label{co:jetfibre}
We have the equivalence of formal derived smooth stacks
\begin{equation}
    (\mathrm{Jet}_M{E})_x \;\simeq\; \mathbf{\Gamma}(\bbD_{M,x},E),
\end{equation}
where $(\mathrm{Jet}_M{E})_x$ is the fibre of $\jet{E}$ at any point $x\in M$ of the base manifold.
\end{corollary}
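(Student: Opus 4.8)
The plan is to specialise the mapping-stack adjunction of \cref{cor:adjunction} to the bundle $E' = (\ast \xhookrightarrow{\;x\;} M)$, i.e. the point $x\in M$ regarded as an object of the slice $(\infty,1)$-category $\mathbf{dFSmoothStack}_{/M}$, and then to recognise each of the two resulting mapping stacks as one of the geometric objects in the statement. Concretely, \cref{cor:adjunction} yields an equivalence of formal derived smooth stacks
\begin{equation*}
    [\ast_x,\, \mathrm{Jet}_M E]_{/M} \;\simeq\; [T^\infty M \times_M \ast_x,\, E]_{/M},
\end{equation*}
so it remains to identify the left-hand side with $(\mathrm{Jet}_M E)_x$ and the right-hand side with $\mathbf{\Gamma}(\bbD_{M,x},E)$.

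For the right-hand side I would first compute $T^\infty M \times_M \ast_x$. Using the defining homotopy pullback $T^\infty M \simeq M \times_{\Im(M)} M$ of \cref{def:formal_disk_bundle}, with $\mathrm{ev}$ and $\pi$ its two projections, the fibre product formed along $\mathrm{ev}$ rearranges by the pasting law into $M \times_{\Im(M)} \ast$, where $\ast \to \Im(M)$ is the composite $\mathfrak{i}_M \circ x$. By \cref{def:formal_disk} this is exactly the formal disk $\bbD_{M,x}$, and the residual projection $\pi$ supplies it with the structure map $\iota_x : \bbD_{M,x} \hookrightarrow M$; this recovers precisely the earlier remark that the fibre of $T^\infty M \to M$ at $x$ is the formal disk at $x$. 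Since a morphism over $M$ from $(\bbD_{M,x} \xhookrightarrow{\,\iota_x\,} M)$ to $(E \xrightarrow{\,p\,} M)$ is the same datum as a section of $\iota_x^\ast E \to \bbD_{M,x}$, the mapping stack $[\bbD_{M,x}, E]_{/M}$ is by definition the stack of sections $\mathbf{\Gamma}(\bbD_{M,x},E)$, giving the right-hand side.

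For the left-hand side, the content is that mapping out of $\ast_x$ in the slice computes the fibre: a morphism over $M$ from $(\ast \xrightarrow{\,x\,} M)$ to $(\mathrm{Jet}_M E \to M)$ is precisely a lift of $x$ through the jet bundle, i.e. a point of $(\mathrm{Jet}_M E)_x = \ast \times_{x, M} \mathrm{Jet}_M E$, and this correspondence is natural in the probing formal derived smooth manifold $U$, hence promotes to an equivalence of formal derived smooth stacks $[\ast_x, \mathrm{Jet}_M E]_{/M} \simeq (\mathrm{Jet}_M E)_x$. Combining the two identifications with the displayed equivalence then gives $(\mathrm{Jet}_M E)_x \simeq \mathbf{\Gamma}(\bbD_{M,x},E)$. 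The main obstacle I anticipate is bookkeeping rather than conceptual: one must track carefully which of the two projections $\pi,\mathrm{ev}$ of $T^\infty M$ is used to form $T^\infty M \times_M \ast_x$ and which survives as the structure map to $M$, and verify (using the symmetry of the defining pullback of $T^\infty M$) that the surviving map is indeed $\iota_x$, so that the right-hand side is genuinely the restriction of $E$ to the formal disk; one should likewise confirm that $[\ast_x, -]_{/M} \simeq (-)_x$ holds internally, at the level of the mapping stacks themselves and not merely of their global $\infty$-groupoids of points.
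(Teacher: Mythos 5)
Your proposal is correct and takes essentially the same route as the paper, which obtains this corollary simply by specialising Corollary \ref{cor:adjunction} to $E'=\ast\xhookrightarrow{\;x\;}M$; you merely make explicit the identifications $T^\infty M\times_M\ast_x\simeq\bbD_{M,x}$ and $[\ast_x,\mathrm{Jet}_ME]_{/M}\simeq(\mathrm{Jet}_ME)_x$ that the paper leaves implicit.
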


In other words, the jet bundle $\mathrm{Jet}_M{E}$ of a bundle $E$ is such that its fiber at any point $x\in M$ is the space of formal germs of sections of $E$ at $x$, as in the classical definition of jet bundle.

Notice that, for any fixed $M\in\mathbf{dFMfd}$, one has that $\mathrm{Jet}_M(-)$ is a functor on the slice category $\mathbf{dFSmoothStack}_{/M}$. In this light, we can define the jet prolongation of section as follows.

\begin{definition}[Jet prolongation of sections]
Given a section $\itPhi:M\rightarrow E$ of a bundle $E\xrightarrow{\;p\;}M$, its \textit{jet prolongation} can be defined by the composition 
\begin{equation}
    j(\itPhi)\,:\; M\xrightarrow{\;\;\simeq\;\;} \mathrm{Jet}_MM \xrightarrow{\;\,\mathrm{Jet}_M(\itPhi)\,\;} \mathrm{Jet}_ME,
\end{equation}
where $\mathrm{Jet}_MM$ is the jet bundle of the tautological bundle $\mathrm{id}_M:M\rightarrow M$.
\end{definition}

In other words, the jet prolongation provides a canonical map $j:\Gamma(M,E)\longtwoheadrightarrow\Gamma(M,\jet{E})$ which sends any section $\itPhi\in\Gamma(M,E)$ to its germs $j(\itPhi)\in\Gamma(M,\jet{E})$ at every point of the base manifold.
To sum up, we have a diagram of the following form:
\begin{equation}
\begin{tikzcd}[column sep={6.1em,between origins}, row sep={5.6em,between origins}]
    &\mathrm{Jet}_M{E} \arrow[d, two heads, "\pi^\infty_0"] \arrow[dd, bend left=50, "\pi_M", two heads]\\
    &E \arrow[d, "p", two heads] \\
    M\arrow[ur, "\itPhi"] \arrow[uur, bend left=30, "j(\itPhi)"] \arrow[r, "\mathrm{id}"] & M . 
\end{tikzcd}
\end{equation}

A paper in preparation \cite{AlfYouFuture} will be devolved to the exploitation of the features of derived jet bundles in the context of derived differential geometry.

\section{Global aspects of classical BV-theory}\label{sec:global_aspects_of_bv_theory}

In this section we finally get our hands dirty: we will use the new toolbox provided by derived differential geometry to investigate some global-geometric features of classical field theory. 
The point of this section is not to provide a systematic non-perturbative reformulation of BV-theory, but to show that the tools developed in this paper open at least the way to progress.

In subsection \ref{sec:review_BV}, we will provide a brief review of usual BV-theory via $L_\infty$-algebras -- as it is probably more familiar to the physically oriented reader -- and we will explain how this relates with the formal moduli problem picture. Moreover, we will provide the concrete examples of scalar field theory and Yang-Mills theory.
Such examples will be important for later comparison with the global-geometric picture which we are going to construct respectively in the second and the third subsection.
In fact, in subsection \ref{subsec:scalar}, we will study the global derived critical locus of an action functional on the smooth set $\bfGamma(M,E)$ of sections of a bundle of smooth manifolds $E\twoheadrightarrow M$, which should be seen as the global configuration space of a scalar field theory.
In subsection \ref{subsec:YMT}, we will study the global derived critical locus of the Yang-Mills action functional on the smooth stack $\Bun_G^\nabla(M)$ of principal $G$-bundles with connection on a spacetime manifold $M$, which should be seen as the global configuration space of a gauge theory.

\subsection{Review of BV-theory via $L_\infty$-algebras}\label{sec:review_BV}

In this subsection we will briefly review usual classical BV-theory, formulated in terms of $L_\infty$-algebras. For more details, we point at the references \cite{Pau14, Saem18bv, Saem19bv, Doubek_2019, Jurco:2019yfd, Jurco:2020yyu}.
Closely related applications of $L_\infty$-algebras to field theories have been explored by \cite{Hohm17, Hohm17x, Hohm19, Hohm19x}.

\begin{construction}[Usual BV-theory via $L_\infty$-algebras]
Let us consider an $L_\infty$-algebra $\mathfrak{L}$, which we can think as the algebra encoding the kinematics of a classical field theory: this will be the first input of BV-theory.
Such an $L_\infty$-structure can be dually given by its Chevalley-Eilenberg dg-algebra $\CE(\mathfrak{L})$, which is going to be of the form 
\begin{equation}
    \CE(\mathfrak{L}) \;=\; \big(\mathrm{Sym}\,\mathfrak{L}^\vee[-1],\;\di_{\CE(\mathfrak{L})}\big)
\end{equation}
and which is also known as BRST complex in physical contexts. The second ingredient to feed the machinery of BV-theory is the action functional for our field theory, which can be regarded as an element $S \in \CE(\mathfrak{L})$ of our Chevalley-Eilenberg dg-algebra.

Consider the graded vector space $\mathfrak{L}[1]$, which is the graded manifold with the property that $\Coo(\mathfrak{L}[1]) = \mathrm{Sym}\,\mathfrak{L}^\vee[-1]$.
Then, the machinery of BV-theory instructs us to take the $(-1)$-shifted cotangent bundle of such a graded vector space, namely
\begin{equation}
    T^\vee[-1]\mathfrak{L}[1] \;=\; (\mathfrak{L} \oplus \mathfrak{L}^\vee[-3])[1].
\end{equation}
Observe that, by generalising the case of ordinary cotangent bundles, a $(-1)$-shifted cotangent bundle comes equipped with a natural $(-1)$-shifted Poisson bracket $\{-,-\}$.
The objective of the machinery is to equip the new graded vector space $\mathfrak{L} \oplus \mathfrak{L}^\vee[-3]$ with the structure of an $L_\infty$-algebra which extends our starting $L_\infty$-algebra $\mathfrak{L}$ in a certain way. 
To do that, we can define the so-called classical BV-action $S_\BV\in\mathrm{Sym}(\mathfrak{L}\oplus \mathfrak{L}^\vee[-3])^\vee[-1]$ by the sum
\begin{equation}
    S_\BV \;=\; S + S_{\mathrm{BRST}},
\end{equation}
where $S \in \CE(\mathfrak{L})$ is the original action of the theory and $S_{\mathrm{BRST}}\coloneqq\widehat{\di_{\CE(\mathfrak{L})}}$ is the cotangent lift of the original Chevalley-Eilenberg differential $\di_{\CE(\mathfrak{L})}$, i.e. its image along the natural inclusion $\widehat{(-)}: \mathrm{Sym}(\mathfrak{L}^\vee[-1])\otimes \mathfrak{L}[1] \longhookrightarrow \mathrm{Sym}(\mathfrak{L}\oplus \mathfrak{L}^\vee[-3])^\vee[-1]$.
The classical BV-action satisfies the so-called \textit{classical master equation} $\{S_{\BV},S_{\BV}\} \,=\, 0$.
Then we can define the BV-differential by
\begin{equation}
    Q_\BV \;\coloneqq\; \{S_{\BV},-\},
\end{equation}
so that the classical master equation is indeed equivalent to $Q_\BV^2 = 0$. 
Moreover, notice that we have an isomorphism of graded vector spaces $\mathrm{Sym}(\mathfrak{L}\oplus \mathfrak{L}^\vee[-3])^\vee[-1] = \mathrm{Sym}(\mathfrak{L}^\vee[-1]\oplus \mathfrak{L}[2])$.
Thus we have all we need to define the following Chevalley-Eilenberg dg-algebra structure: 
\begin{equation}
    \CE\big(\mathfrak{Crit}(S)\big) \;\coloneqq\; \big(\mathrm{Sym}(\mathfrak{L}^\vee[-1]\oplus \mathfrak{L}[2]),\;\, Q_\BV = \{S_\BV,-\} \big).
\end{equation}
This can be dually interpreted  as an $L_\infty$-algebra $\mathfrak{Crit}(S)$ whose underlying graded vector space $T^\vee[-1]\mathfrak{L}[1]$, as we wanted. The Chevalley-Eilenberg dg-algebra $\CE\big(\mathfrak{Crit}(S)\big)$ is what is known as BV-complex in physical literature.
As noticed by \cite{FactI, FactII}, this discussion can be nicely refined by replacing, in the discussion above, $L_\infty$-algebras with local $L_\infty$-algebras on a fixed spacetime.
\end{construction}

\begin{remark}[Usual BV-theory via polyvectors]
Observe that, provided that we interpret $\mathrm{Sym}(\mathfrak{L}^\vee[-1]\oplus \mathfrak{L}[2])=  \mathrm{Pol}(\mathfrak{L}[1])$ as the dg-algebra of polyvector fields on the graded space $\mathfrak{L}[1]$, we can naturally see the BV-differential as
\begin{equation}
     Q_\BV \;=\; \iota_{(-)}\di_\dR S + \mathcal{L}_{\di_{\CE(\mathfrak{L})}},
\end{equation}
where $\iota_{(-)}\di_\dR S$ is the contraction of polyvectors with the de Rham differential of the starting action functional $S$ and $\mathcal{L}_{\di_{\CE(\mathfrak{L})}}$ is the Lie derivative of polyvectors along the Chevalley-Eilenberg differential of the BRST $L_\infty$-algebra $\mathfrak{L}$.
\end{remark}

\begin{construction}[Usual BV-theory via formal moduli problems]
In \cite{FactII}, a beautiful geometrical insight on BV-theory is provided.
The de Rham differential of the original action $S\in\CE(\mathfrak{L})$ can be seen as an element
$\di_\dR S\in\CE(\mathfrak{L},\mathfrak{L}^\vee[-1])$ of the Chevalley-Eilenberg dg-algebra of $\mathfrak{L}$ valued in the $\mathfrak{L}$-module $\mathfrak{L}^\vee[-1]$.
Remarkably, in \cite{FactII} it is shown that the classical BV-$L_\infty$-algebra $\mathfrak{Crit}(S)$ can be geometrically seen as the $L_\infty$-algebra associated to the pointed formal moduli problem which is the derived critical locus of the action $S$.
In other words, one has a notion of a cotangent pointed formal moduli problem $T^\vee\mathbf{MC}(\mathfrak{L})$, whose complex of sections is exactly $\CE(\mathfrak{L},\mathfrak{L}^\vee[-1])$. Then, the pointed formal moduli problem $\mathbf{MC}_{\!}\big(\mathfrak{Crit}(S)\big)$ can be identified with the homotopy pullback
\begin{equation}
    \begin{tikzcd}[row sep={15.5ex,between origins}, column sep={17ex,between origins}]
    \mathbf{MC}_{\!}\big(\mathfrak{Crit}(S)\big) \arrow[r] \arrow[d] & \mathbf{MC}(\mathfrak{L}) \arrow[d, "\di_\dR S"] \\
    \mathbf{MC}(\mathfrak{L})  \arrow[r, "0"] & T^\vee\mathbf{MC}(\mathfrak{L})
    \end{tikzcd}
\end{equation}
of formal moduli problems.
Thus, in principle, we can obtain the $L_\infty$-algebra $\mathfrak{Crit}(S)$ which encodes classical BV-theory from a purely geometric construction -- namely, a flavour of derived intersection -- which is not very manifest when we approach BV-theory by following the usual recipe based on constructing the classical BV-action.
\end{construction}

Let us now take some time to explore two fundamental classes of examples of BV-theories in terms of $L_\infty$-algebras and formal moduli problems: scalar fields and gauge theories.

\begin{example}[Klein-Gordon theory]
We start from the following graded vector space:
\begin{equation}
\begin{aligned}
    \mathfrak{L}[1] \;=\; \Coo(M),
\end{aligned}
\end{equation}
equipped with the trivial $L_\infty$-algebra structure.
The classical action of a Klein-Gordon field $\phi\in\Coo(M)$ with arbitrary interaction terms is given by
\begin{equation}
    S(\phi) \;=\; \int_M \bigg(\frac{1}{2}\phi\square\phi + \sum_{k>1}\frac{m_{k}}{k!}\phi^k\bigg)\mathrm{vol}_M.
\end{equation}
By following the aforementioned recipe, one can obtain an $L_\infty$-algebra on the complex
\begin{equation}
\begin{aligned}
    \mathfrak{Crit}(S)[1] \;=\; \,&\Big( \begin{tikzcd}[row sep={14.5ex,between origins}, column sep= 8ex]
    \Coo(M) \arrow[r, "\square+m_2"] & \Coo(M)
    \end{tikzcd}  \Big)\\
    {\scriptstyle\text{deg}\,=} &\quad \;\;\begin{tikzcd}[row sep={14.5ex,between origins}, column sep= 5ex]
    {\scriptstyle 0} && \quad \;\;{\scriptstyle 1}
    \end{tikzcd} 
\end{aligned}
\end{equation}
whose $L_\infty$-structure is given by 
\begin{equation}
\begin{aligned}
    \ell_1(\phi) \;&=\; (\square+m_2)\phi, \\
    \ell_k(\phi_1,\dots,\phi_k) \;&=\; m_{k+1}\phi_1\cdots\phi_k \quad\text{for $k>1$}
\end{aligned}
\end{equation}
for any $\phi_i\in\Coo(M)[0]$.
Informally speaking, it is suggestive to rewrite the $L_\infty$-algebra structure above, dually, in terms of its Chevalley-Eilenberg differential:
\begin{equation}
    \begin{aligned}
        Q_\BV &:\, \upphi \!\!\!&&\longmapsto\; 0,\\
        Q_\BV &:\, \upphi^+ \!\!\!&&\longmapsto\; \square\upphi \,+ \sum_{k>0}\frac{m_{k+1}}{k!}\upphi^k,
    \end{aligned}
\end{equation}
where $\upphi:\Coo(M)\rightarrow\bbR$ and $\upphi^+:\Coo(M)\rightarrow\bbR$ should be thought of as coordinate functions of the underlying graded vector space.
We can also explicitly write the Maurer-Cartan formal moduli problem $\mathbf{MC}(\mathfrak{Crit}(S))$ associated to the $L_\infty$-algebra above. Given a dg-Artinian algebra $\mathcal{R}$, the set of $0$-simplices of the simplicial set $\mathrm{MC}(\mathfrak{Crit}(S)\otimes \mathfrak{m}_{\mathcal{R}})$ is given by
\begin{equation*}
\begin{aligned}
        \mathrm{MC}(\mathfrak{Crit}(S)\otimes \mathfrak{m}_{\mathcal{R}})_0 \,=\, \left\{ \left. \begin{array}{ll}
         \phi &\!\!\!\!\in\mathcal{C}^{\infty\!}(M)\otimes\mathfrak{m}_{\mathcal{R},0} \\[1.0ex]
        \phi^+ &\!\!\!\!\in \mathcal{C}^{\infty\!}(M) \otimes\mathfrak{m}_{\mathcal{R},-1}
        \end{array}  \;\right|\;  
        \square\phi + \sum_{k>0}\frac{m_{k+1}}{k!}\phi^k \,=\, \di_{\mathcal{R}} \phi^+
        \right\},
\end{aligned}
\end{equation*}
and the set of $1$-simplices is
\begin{equation*}
\hspace{-0.4cm}\begin{aligned}
        \mathrm{MC}(\mathfrak{Crit}(S)\otimes \mathfrak{m}_{\mathcal{R}})_1 \,=\, \left\{\!\left. \begin{array}{ll}
          \phi_0 &\!\!\!\!\!\in\mathcal{C}^{\infty\!}(M)\otimes\mathfrak{m}_{\mathcal{R},0}\otimes\Omega^0([0,1]) \\[1.0ex]
          \phi^1\di t &\!\!\!\!\!\in\mathcal{C}^{\infty\!}(M)\otimes\mathfrak{m}_{\mathcal{R},-1}\otimes\Omega^1([0,1]) \\[1.0ex]
        \phi^+_0 &\!\!\!\!\!\in\mathcal{C}^{\infty\!}(M) \otimes\mathfrak{m}_{\mathcal{R},-1}\otimes\Omega^0([0,1]) \\[1.0ex]
        \phi^{+}_1\di t &\!\!\!\!\!\in \mathcal{C}^{\infty\!}(M) \otimes\mathfrak{m}_{\mathcal{R},-2}\otimes\Omega^1([0,1])
        \end{array} \! \right|\! \begin{array}{rl} 
        \square\phi_0 + \sum_{k\!}\frac{m_{k+1}}{k!}\phi^k_0  &\!\!\!=\, \di_{\mathcal{R}} \phi^+_0\\[1.0ex]
        \frac{\di}{\di t}\phi_0 &\!\!\!=\, \di_{\mathcal{R}} \phi_1  \\[1.0ex]
        \square\phi_1 + \sum_{k\!}\frac{m_{k+1}}{k!}\phi^{k-1}_0\phi_1 &\!\!\!=\, \di_{\mathcal{R}} \phi^{+}_1
        \end{array} \!\!\right\},
\end{aligned}
\end{equation*}
where $t$ is a coordinate on the unit interval $[0,1]\subset \bbR$.
And so on for the higher simplices.
\end{example}

\begin{example}[Yang-Mills theory]
Consider now the $L_\infty$-algebra $\mathfrak{L}$, whose underlying complex is the differential graded vector space 
\begin{equation}
\begin{aligned}
    \mathfrak{L}[1] \;=\; \,&\Big( \begin{tikzcd}[row sep={14.5ex,between origins}, column sep= 5ex]
    \Omega^0(M,\mathfrak{g}) \arrow[r, "\di"] & \Omega^1(M,\mathfrak{g})
    \end{tikzcd}  \Big)\\
    {\scriptstyle\text{deg}\,=} &\quad \;\;\begin{tikzcd}[row sep={14.5ex,between origins}, column sep= 5ex]
    {\scriptstyle -1} && \quad \;\;{\scriptstyle 0}
    \end{tikzcd} ,
\end{aligned}
\end{equation}
and whose $L_\infty$-bracket structure has only the following non-trivial brackets:
\begin{equation}
    \begin{aligned}
        \ell_1(c) \;&=\; \di c, \\
        \ell_2(c_1,c_2) \;&=\; [c_1,c_2]_\mathfrak{g}, \\
        \ell_2(c,A) \;&=\; [c,A]_\mathfrak{g},
    \end{aligned}
\end{equation}
for any elements $c_k\in\Omega^0(M,\mathfrak{g})$ and $A\in\Omega^1(M,\mathfrak{g})$.
Informally speaking, as it is often presented in the context of BRST-theory, this $L_\infty$-algebra us dually given by the Chevalley-Eilenberg differential
\begin{equation}
    \begin{aligned}
        \di_{\CE(\mathfrak{L})} &:\, \mathsf{c} \!\!\!&&\longmapsto\; -\frac{1}{2}[\mathsf{c},\mathsf{c}]_{\mathfrak{g}}, \\
        \di_{\CE(\mathfrak{L})} &:\, \mathsf{A} \!\!\!&&\longmapsto\; \di \mathsf{c} + [\mathsf{A},\mathsf{c}]_{\mathfrak{g}},
    \end{aligned}
\end{equation}
where $\mathsf{c}:\Omega^0(M,\mathfrak{g})\rightarrow\bbR$ and $\mathsf{A}:\Omega^1(M,\mathfrak{g})\rightarrow\bbR$ should be thought of as coordinate functions on the underlying graded vector space. 
Thus, the $L_\infty$-algebra $\mathfrak{L}$ is precisely the algebraic incarnation of the BRST complex of physics.
We want to consider the standard action functional of a Yang-Mills theory, which is given by
\begin{equation}
    S(A) \;=\; \frac{1}{2}\int_M \langle F_A, \star F_A \rangle_\mathfrak{g},
\end{equation}
where $F_A \coloneqq \nabla_{\!A}A = \di A+ [A\,\overset{\wedge}{,}\,A]_\mathfrak{g}$ is the field strength.
By exploiting the given pairing
\begin{equation}
    \langle -\,\overset{\wedge}{,}\,- \rangle_\mathfrak{g} \;:\; \Omega^{d-p}(M,\mathfrak{g})\times\Omega^{p}(M,\mathfrak{g}) \;\longrightarrow\; \Coo(M)
\end{equation}
we are led to an $L_\infty$-algebra $\mathfrak{Crit}(S)$ whose underlying differential graded vector space is
\begin{equation}
\begin{aligned}
    \mathfrak{Crit}(S)[1] \;=\; \,&\Big( \begin{tikzcd}[row sep={14.5ex,between origins}, column sep= 5ex]
    \Omega^0(M,\mathfrak{g}) \arrow[r, "\di"] & \Omega^1(M,\mathfrak{g}) \arrow[r, "\di\star\di"] & \Omega^{d-1}(M,\mathfrak{g}) \arrow[r, "\di"] & \Omega^{d}(M,\mathfrak{g})
    \end{tikzcd}  \Big)\\
    {\scriptstyle\text{deg}\,=} &\quad \;\;\begin{tikzcd}[row sep={14.5ex,between origins}, column sep= 5ex]
    {\scriptstyle -1} && \quad \;\;{\scriptstyle 0}  && \quad \;\;{\scriptstyle 1}  && \quad \;\;{\scriptstyle 2}
    \end{tikzcd} 
\end{aligned}
\end{equation}
and whose $L_\infty$-algebra structure has only the following non-trivial $L_\infty$-brackets:
\begin{equation}\label{eq:BV-yang-Mills}
    \begin{aligned}
        \ell_1(c) \;&=\; \di c, \\
        \ell_1(A) \;&=\; \di\star\di A, \qquad &\ell_1(A^+) \;&=\; \di A^+, \\
        \ell_2(c_1,c_2) \;&=\; [c_1,c_2]_\mathfrak{g}, \qquad &\ell_2(c,c^+) \;&=\; [c,c^+]_\mathfrak{g}, \\
        \ell_2(c,A) \;&=\; [c,A]_\mathfrak{g}, \qquad &\ell_2(c,A^+) \;&=\; [c,A^+]_\mathfrak{g}, \\
        & &\ell_2(A,A^+) \;&=\; [A \,\overset{\wedge}{,}\, A^+]_\mathfrak{g},
    \end{aligned}
\end{equation}\vspace{-0.4cm}
\begin{equation*}
    \begin{aligned}
        \ell_2(A_1,A_2) \;&=\; \di\star[A_1\,\overset{\wedge}{,}\,A_2]_\mathfrak{g} + [A_1 \,\overset{\wedge}{,}\, \star\di A_2]_\mathfrak{g} + [A_2 \,\overset{\wedge}{,}\, \star\di A_1]_\mathfrak{g}, \\
        \ell_3(A_1,A_2,A_3) \;&=\; \big[A_1\,\overset{\wedge}{,}\,\star [A_2\,\overset{\wedge}{,}\,A_3]_\mathfrak{g}\big]_\mathfrak{g} + \big[A_2\,\overset{\wedge}{,}\,\star [A_3\,\overset{\wedge}{,}\,A_1]_\mathfrak{g}\big]_\mathfrak{g} + \big[A_3\,\overset{\wedge}{,}\,\star [A_1\,\overset{\wedge}{,}\,A_2]_\mathfrak{g}\big]_\mathfrak{g}, \\
    \end{aligned}
\end{equation*}
for any $c_k\in\Omega^0(M,\mathfrak{g})$, $A_k\in\Omega^1(M,\mathfrak{g})$, $A^+_k\in\Omega^{d-1}(M,\mathfrak{g})$ and $c^+_k\in\Omega^d(M,\mathfrak{g})$ elements of the underlying graded vector space.
Informally speaking, we can think of this $L_\infty$-algebra as given, dually, by the following BV-differential:
\begin{equation}
    \begin{aligned}
        Q_\BV &:\, \mathsf{c} \!\!\!&&\longmapsto\; -\frac{1}{2}[\mathsf{c},\mathsf{c}]_{\mathfrak{g}} \\
        Q_\BV &:\, \mathsf{A} \!\!\!&&\longmapsto\; \di \mathsf{c} + [\mathsf{A},\mathsf{c}]_{\mathfrak{g}} \\[0.2ex]
        Q_\BV &:\, \mathsf{A}^+ \!\!\!&&\longmapsto\; -\nabla_{\!\mathsf{A}}\star\!F_{\mathsf{A}} - [\mathsf{c},\mathsf{A}^+]_{\mathfrak{g}} \\[0.3ex]
        Q_\BV &:\, \mathsf{c}^+ \!\!\!&&\longmapsto\; \nabla_{\!\mathsf{A}}\mathsf{A}^+ - [\mathsf{c},\mathsf{c}^+]_{\mathfrak{g}}
    \end{aligned}
\end{equation}
where $\mathsf{c}:\Omega^0(M,\mathfrak{g})\rightarrow\bbR$, $\mathsf{A}:\Omega^1(M,\mathfrak{g})\rightarrow\bbR$, $\mathsf{A}^+:\Omega^{d-1}(M,\mathfrak{g})\rightarrow\bbR$ and $\mathsf{c}^+:\Omega^{d}(M,\mathfrak{g})\rightarrow\bbR$ should be thought of as coordinate functions on the underlying graded vector space. Notice that this is precisely what is known as BV-BRST complex in physics.
Moreover, the classical BV-differential of Yang-Mills theory written above can be presented by a classical BV-action $S_\BV$, so that we have $Q_{\BV}=\{S_\BV,-\}$. Such a BV-action is the following familiar one:
\begin{equation}
    S_\BV(\mathsf{c},\mathsf{A},\mathsf{A}^+,\mathsf{c}^+) \;=\; \int_M \bigg( \underbrace{\frac{1}{2}\langle F_\mathsf{A}, \star F_\mathsf{A} \rangle_\mathfrak{g}}_{S} - \underbrace{\langle \mathsf{A}^+, \nabla_{\!\mathsf{A}}\mathsf{c} \rangle_\mathfrak{g} + \frac{1}{2} \langle \mathsf{c}^+,[\mathsf{c},\mathsf{c}]_\mathfrak{g} \rangle_\mathfrak{g}}_{S_\mathrm{BRST}} \bigg).
\end{equation}
\end{example}

Now, let us give a look to the formal moduli problem description of the example of Yang-Mills theory.
First, let us fix any ordinary Artinian algebra $\mathcal{R}\in\mathsf{Art}_\bbR$.
We will now explicitly construct the simplicial set $\mathrm{MC}(\mathfrak{L}\otimes \mathfrak{m}_{\mathcal{R}})$, where $\mathfrak{m}_{\mathcal{R}}$ is the maximal differential ideal of $\mathcal{R}$.
The set of $0$-simplices is just
\begin{equation*}
\begin{aligned}
        \mathrm{MC}(\mathfrak{L}\otimes \mathfrak{m}_{\mathcal{R}})_0 \,=\, \left\{  \begin{array}{ll}
         A &\!\!\!\!\in\Omega^1(M,\mathfrak{g})\otimes\mathfrak{m}_{\mathcal{R}}
        \end{array}\right\},
\end{aligned}
\end{equation*}
and the set of $1$-simplices is given by
\begin{equation*}
\begin{aligned}
        \mathrm{MC}(\mathfrak{L}\otimes \mathfrak{m}_{\mathcal{R}})_1 \,=\, \left\{\!\left. \begin{array}{ll}
          c_1 \di t &\!\!\!\!\in\Omega^0(M,\mathfrak{g})\otimes\mathfrak{m}_{\mathcal{R}}\otimes\Omega^1([0,1]) \\[1.0ex]
          A_0 &\!\!\!\!\in\Omega^1(M,\mathfrak{g})\otimes\mathfrak{m}_{\mathcal{R}}\otimes\Omega^0([0,1])
        \end{array} \! \right|\, \begin{array}{rl} 
        \frac{\di}{\di t}A_0 + \nabla_{\!A_0}c_1 &\!\!\!=\, 0 
        \end{array} \right\}.
\end{aligned}
\end{equation*}
and so on for higher simplices. 
This provided the formal moduli problem version of the BRST $L_\infty$-algebra $\mathfrak{L}$.
Now, we move on to the to the Maurer-Cartan formal moduli problem of the classical BV-BRST $L_\infty$-algebra $\mathfrak{Crit}(S)$, i.e. the functor
\begin{equation}
    \mathbf{MC}_{{}_{\!}}\big(\mathfrak{Crit}(S)\big) \,:\; \mathcal{R} \;\longmapsto\; \mathrm{MC}(\mathfrak{Crit}(S)\otimes \mathfrak{m}_{\mathcal{R}})
\end{equation}
where $\mathcal{R}$ is now allowed to be a dg-Artinian algebra.
For concreteness, let us write explicitly the sets of $0$- and $1$-simplices of this simplicial set for a fixed general dg-Artinian algebra $\mathcal{R}$.
So, the set of $0$-simplices is given by
\begin{equation*}
\begin{aligned}
        \mathrm{MC}(\mathfrak{Crit}(S)\otimes \mathfrak{m}_{\mathcal{R}})_0 \,=\, \left\{ \left. \begin{array}{ll}
         A &\!\!\!\!\in\Omega^1(M,\mathfrak{g})\otimes\mathfrak{m}_{\mathcal{R},0} \\[1.0ex]
        A^+ &\!\!\!\!\in \Omega^{d-1}(M,\mathfrak{g}) \otimes\mathfrak{m}_{\mathcal{R},-1} \\[1.0ex]
        c^+ &\!\!\!\!\in \Omega^{d}(M,\mathfrak{g}) \otimes\mathfrak{m}_{\mathcal{R},-2}
        \end{array}  \;\right|\; \begin{array}{rl} 
        \nabla_{\!A}\star \!F_{\!A} &\!\!\!=\, \di_{\mathcal{R}} A^+ \\[1.0ex]
        \nabla_{\!A}A^+ &\!\!\!=\, \di_{\mathcal{R}} c^+
        \end{array}\right\},
\end{aligned}
\end{equation*}
and the set of $1$-simplices is
\begin{equation*}
\hspace{-0.4cm}\begin{aligned}
        \mathrm{MC}(\mathfrak{Crit}(S)\otimes \mathfrak{m}_{\mathcal{R}})_1 \,=\, \left\{\!\left. \begin{array}{ll}
          c_1 \di t &\!\!\!\!\!\in\Omega^0(M,\mathfrak{g})\otimes\mathfrak{m}_{\mathcal{R},0}\otimes\Omega^1([0,1]) \\[1.0ex]
          A_0 &\!\!\!\!\!\in\Omega^1(M,\mathfrak{g})\otimes\mathfrak{m}_{\mathcal{R},0}\otimes\Omega^0([0,1]) \\[1.0ex]
          A^1\di t &\!\!\!\!\!\in\Omega^1(M,\mathfrak{g})\otimes\mathfrak{m}_{\mathcal{R},-1}\otimes\Omega^1([0,1]) \\[1.0ex]
        A^+_0 &\!\!\!\!\!\in \Omega^{d-1}(M,\mathfrak{g}) \otimes\mathfrak{m}_{\mathcal{R},-1}\otimes\Omega^0([0,1]) \\[1.0ex]
        A^{+}_1\di t &\!\!\!\!\!\in \Omega^{d-1}(M,\mathfrak{g}) \otimes\mathfrak{m}_{\mathcal{R},-2}\otimes\Omega^1([0,1]) \\[1.0ex]
        c^{+}_0 &\!\!\!\!\!\in \Omega^{d}(M,\mathfrak{g}) \otimes\mathfrak{m}_{\mathcal{R},-2}\otimes\Omega^0([0,1]) \\[1.0ex]
        c^{+}_1\di t &\!\!\!\!\!\in \Omega^{d}(M,\mathfrak{g}) \otimes\mathfrak{m}_{\mathcal{R},-3}\otimes\Omega^1([0,1])
        \end{array} \! \right| \begin{array}{rl} 
        \nabla_{\!A_0\!}\star \!F_{\!A_0} &\!\!\!=\, \di_{\mathcal{R}} A^+_0 \\[1.0ex]
        \nabla_{\!A_0}A^+_0 &\!\!\!=\, \di_{\mathcal{R}} c^{+}_0 \\[1.0ex]
        \frac{\di}{\di t}A_0 + \nabla_{\!A_0}c_1 &\!\!\!=\, \di_{\mathcal{R}} A_1  \\[1.0ex]
        \frac{\di}{\di t}A^+_0 +  \nabla_{\!A_0\!}\star\!F_{\!A_1} &\!\!\!\!\!\!+  \\[1.0ex]
        +\, [c_1,A^+_0] &\!\!\!=\, \di_{\mathcal{R}} A^{+}_1 \\[1.0ex]
        \frac{\di}{\di t}c^{+}_0 + \nabla_{\!A_0}A^{+}_1  &\!\!\!\!\!\!+  \\[1.0ex]
        +\, [c_1,c_0^+] &\!\!\!=\, \di_{\mathcal{R}} c^{+}_1
        \end{array} \!\!\right\},
\end{aligned}
\end{equation*}
where $t$ is a coordinate on the unit interval $[0,1]\subset \bbR$.
The elements of this set are $1$-simplices in the sense that each of them links a $0$-simplex
$(A,\, A^+,\, c^+) = (A_0(0),\, A^{+}_0(0),\, c^+_0(0))$ at $t=0$ to the $0$-simplex $(A^{\prime},\, A^{+\prime},\, c^{+\prime}) = (A_0(1),\, A^{+}_0(1),\, c^+_0(1))$ at $t=1$.
And so on for higher simplices. \vspace{0.5cm}

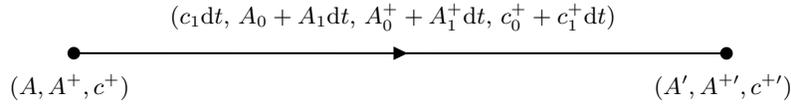
\begin{figure}[h]
    \centering
\tikzset{every picture/.style={line width=0.75pt}} 
\begin{tikzpicture}[x=0.75pt,y=0.75pt,yscale=-1,xscale=1]
\draw    (20.5,49.5) -- (349.67,49.58) ;
\draw [shift={(349.67,49.58)}, rotate = 0.01] [color={rgb, 255:red, 0; green, 0; blue, 0 }  ][fill={rgb, 255:red, 0; green, 0; blue, 0 }  ][line width=0.75]      (0, 0) circle [x radius= 2.68, y radius= 2.68]   ;
\draw [shift={(188.88,49.54)}, rotate = 180.01] [fill={rgb, 255:red, 0; green, 0; blue, 0 }  ][line width=0.08]  [draw opacity=0] (7.14,-3.43) -- (0,0) -- (7.14,3.43) -- cycle    ;
\draw [shift={(20.5,49.5)}, rotate = 0.01] [color={rgb, 255:red, 0; green, 0; blue, 0 }  ][fill={rgb, 255:red, 0; green, 0; blue, 0 }  ][line width=0.75]      (0, 0) circle [x radius= 2.68, y radius= 2.68]   ;
\draw (-13.5,58.4) node [anchor=north west][inner sep=0.75pt]  [font=\footnotesize]  {$( A,A^{+} ,c^{+})$};
\draw (311.5,58.4) node [anchor=north west][inner sep=0.75pt]  [font=\footnotesize]  {$( A',A^{+\prime} ,c^{+\prime})$};
\draw (67.5,22.9) node [anchor=north west][inner sep=0.75pt]  [font=\footnotesize]  {$( c_{1}\mathrm{d} t,\, A_{0} +A_{1}\mathrm{d} t,\, A_{0}^{+} +A_{1}^{+}\mathrm{d} t,\, c_{0}^{+} +c_{1}^{+}\mathrm{d} t)$};
\end{tikzpicture}
    \caption{$0$- and $1$-simplices of $\mathrm{MC}(\mathfrak{Crit}(S)\otimes \mathfrak{m}_{\mathcal{R}})$.}
    \label{fig:bv_fmp}
\end{figure}

The rest of this section will devoted to the construction of a global version of this formalism in the context of derived differential geometry.

\subsection{Global scalar field theory}\label{subsec:scalar}

In this subsection we will first illustrate the smooth set structure of the space of sections of a fibre bundle, which is the configuration space of a scalar field theory. Second, we will see how the derived critical locus of a smooth functional on such a space is defined and what is its formal derived smooth structure.
It is worth stressing that the fibre bundle $E\twoheadrightarrow M$ corresponding to a general classical  scalar field theory does not have to be a vector bundle; in fact, it can be a general fibre bundle of smooth manifolds. 

\begin{definition}[Smooth set of sections]
Let $M$ be a smooth manifold and $E\twoheadrightarrow M$ a fibre bundle of smooth manifolds. The \textit{smooth set of sections} $\mathbf{\Gamma}(M,E)\in\mathsf{SmoothSet}$ of $E$ is defined by the formal smooth sheaf
\begin{equation}
    \mathbf{\Gamma}(M,E)\,:\;U\,\longmapsto\,\Gamma(M\times U,\, \pi^\ast_M E)
\end{equation}
where $\pi_M:M\times U\rightarrow M$ is the natural projection and $U\in\Mfd$ is any smooth manifold.
\end{definition}

\begin{remark}[Diffeological space of sections]
Notice that the smooth set $\mathbf{\Gamma}(M,E)$ is a concrete sheaf and, thus, it is in particular a diffeological space.
\end{remark}

\begin{remark}[As sheaf on spacetime $M$]
Notice that the formal smooth set of sections $\mathbf{\Gamma}(M,E)$ is also a sheaf on the smooth manifold $M$.
This means that, for any good open cover $\{V_i\}_{i\in I}$ of the smooth manifold $M$, we have the limit
\begin{equation}
    \mathbf{\Gamma}(M,E) \;\simeq\; \lim_{} \left(   \begin{tikzcd}[row sep=scriptsize, column sep=4ex] \prod_{i} \mathbf{\Gamma}(V_i,E) & \arrow[l, yshift=0.7ex] \arrow[l, yshift=-0.7ex] \prod_{i,j} \mathbf{\Gamma}(V_i\cap V_j,E) 
    \end{tikzcd}   \right)
\end{equation}
in the category of formal smooth sets. In this sense, $\mathbf{\Gamma}(-,E)$ can be seen as a "sheaf of sheaves".
More precisely, we can see $\mathbf{\Gamma}(-,E)$ as a sheaf on the product site $\mathsf{Mfd}\times\mathsf{Op}(M)$, where $\mathsf{Mfd}$ is the site of ordinary smooth manifolds and $\mathsf{Op}(M)$ is the one of open subsets of the manifold $M$.
\end{remark}

The crucial reason why we promoted the bare set of sections $\Gamma(M,E)\in\mathsf{Set}$ to a smooth set $\mathbf{\Gamma}(M,E)\in\mathsf{SmoothSet}$ is that the latter comes with a smooth structure -- which is, in particular, the structure of a diffeological space. Therefore, as seen in section \ref{sec:smooth_set}, there is a well-defined notion of differential geometry on such a space.

\begin{example}[$\sigma$-models]
An interesting class of examples of such a configuration space is the one of $\sigma$-models, where the bundle is trivial and its total space is a product manifold $E \coloneqq M \times X$ for some smooth manifold $X$. This way, the configuration space $\mathbf{\Gamma}(M,E) \simeq [M,X]$ is given by the mapping space of the two manifolds, namely the smooth set of smooth maps from the manifold $M$ to the manifold $X$, which is usually called \textit{target space} of the theory. 
\end{example}

Next, let us extend our smooth set $\bfGamma(M,E)$ to a formal smooth set, by embedding it along the natural embedding $\mathsf{SmoothSet} \longhookrightarrow \mathsf{FSmoothSet}$ from section \ref{subsec:formalsmoothset}. For simplicity, we will keep denoting by $\bfGamma(M,E)$ the formal smooth set obtained by this embedding.

\begin{example}[Parameterised families of scalar fields]
Let us consider some basic examples of parametrised families of sections of a bundle of smooth manifolds $E\twoheadrightarrow M$.
\vspace{-0.2cm}
\begin{itemize}
    \item Let $U=\ast$ be the point. A $\ast$-parameterised family of sections $\itPhi:\ast\rightarrow\mathbf{\Gamma}(M,E)$ is nothing but an element of the bare set $\itPhi\in\Gamma(M,E)$.
    \item Now, let $U=\bbR^p$ with $p>0$. A $\bbR^p$-parameterised family of sections $\itPhi:\bbR^p\rightarrow\mathbf{\Gamma}(M,E)$ is nothing but a family of sections $\itPhi_u\in\Gamma(M,E)$ which smoothly varies by varying $u\in\bbR^p$.
    \item Now, let $U=\Spec(\bbR[\epsilon]/(\epsilon^2))$ be the formal smooth manifold whose $\Coo$-algebra of functions is given by the dual numbers (i.e. a thickened point). A $\Spec(\bbR[\epsilon]/(\epsilon^2))$-parameterised family of sections $\itPhi:\Spec(\bbR[\epsilon]/(\epsilon^2))\rightarrow\mathbf{\Gamma}(M,E)$ is equivalently a point $\itPhi:\ast\rightarrow T\mathbf{\Gamma}(M,E)$ in the tangent bundle of the original formal smooth set.
\end{itemize}
\end{example}

Now that we have our global-geometric configuration space $\bfGamma(M,E)$ of a scalar field theory, we need to introduce its dynamics. This can be done with an action functional for the scalar field theory. So, first, we need to construct the smooth set of compactly supported sections.

\begin{construction}[Smooth set of compactly supported sections]
We can construct the \textit{smooth set of compactly supported sections} $\bfGamma_{\! c}(M,E) \hookrightarrow \bfGamma(M,E)$ by the sheaf which sends any smooth manifold $U$ to the set of those sections $\itPhi_u\in\Gamma(M\times U,\, \pi^\ast_M E)$ whose support $\mathrm{supp}(\itPhi_u)\hookrightarrow M\times U \xtwoheadrightarrow{\;\pi_U\;} U$ maps properly.
\end{construction}

\begin{construction}[Variational calculus on spaces of sections]
$\bfGamma_{\!c}(M,E) \hookrightarrow \bfGamma(M,E)$.
As previously discussed, a smooth functional on sections of a bundle $E\twoheadrightarrow M$ is exactly a morphism of smooth sets
\begin{equation}
    S\,:\, \bfGamma_{\!c}(M,E) \,\longrightarrow\, \bbR,
\end{equation}
or, equivalently, a smooth function $S\in\mathcal{O}(\bfGamma_{\!c}(M,E))$ on the smooth set of sections.
On every element of the site $U\in\Mfd$, this is concretely given by a morphism of sets
\begin{equation}
    S(U)\,:\, \Gamma_{\!c}(M\times U, \pi^\ast_M E) \,\longrightarrow\, \Coo(U,\bbR)
\end{equation}
where $S(U)$ sends $U$-parametrised sections of the bundle $E\twoheadrightarrow M$ to smooth functions on $U$. Moreover, for any morphism $f:U\rightarrow U'$ in the site, we have compatibility conditions between $S(U)$ and $S(U')$.
The so-called first variation of this functional is nothing but the morphism of smooth sets
\begin{equation}
    \di_\dR S\,:\, \bfGamma_{\!c}(M,E) \,\xrightarrow{\;\;S\;\;}\, \bbR \,\xrightarrow{\;\;\di_\dR\;\;}\,\bfOmega^1,
\end{equation}
where $\bfOmega^1$ is the smooth set of differential $1$-forms and $\di_\dR\in\Hom(\bbR,\Omega^1)$ is the de Rham differential.
Such a morphism of smooth sets is a well-defined $1$-form $\di_\dR S\in\Omega^1(\bfGamma_{\!c}(M,E))$ on the smooth set of compactly supported sections $\bfGamma_{\!c}(M,E)$.
\end{construction}
Since $\di_\dR S$ is a differential form, notice that it maps vectors by
\begin{equation}
    (\di_\dR S)_{\itPhi}\,:\, T\bfGamma_{\!c}(M,E)_{\itPhi} \,\longrightarrow\,  \bbR
\end{equation}
at any point $\itPhi: \ast \rightarrow \bfGamma_{\!c}(M,E)$

\begin{construction}[Restricted cotangent bundle]
Now, let us consider the vertical tangent bundle $T_\mathrm{ver} E \coloneqq \mathrm{ker}(TE\twoheadrightarrow TM)$, which is a vector bundle on the base manifold $E$. Consider also its dual vector bundle $T_\mathrm{ver}^\vee E\twoheadrightarrow E$. 
These two bundles come equipped with the canonical pairing
$\langle-,-\rangle_E : T_\mathrm{ver}E \times_E T_\mathrm{ver}^\vee E \longrightarrow E\times\bbR$.
Since $T_\mathrm{ver} E$ and $T_\mathrm{ver}^\vee E$ are also bundles of smooth manifolds on the base manifold $M$ by post-composition with $E\twoheadrightarrow M$, we obtain a pairing
\begin{equation}
    \langle-,-\rangle_E \;:\; \bfGamma(M,T_\mathrm{ver}E) \times_{\bfGamma(M,E)} \bfGamma(M,T_\mathrm{ver}^\vee E) \,\longrightarrow\, \bfGamma(M,E)_{\!}\times[M,\bbR].
\end{equation}
Recall that there is a canonical equivalence $T\bfGamma(M,E) \simeq \bfGamma(M,T_\mathrm{ver}E)$.
Thus, it makes sense to define the \textit{restricted cotangent bundle} of the smooth set of sections $\mathbf{\Gamma}(M, E)$ by
\begin{equation}
    T_\mathrm{res}^\vee\mathbf{\Gamma}(M, E) \;\coloneqq\; \mathbf{\Gamma}(M,T_\mathrm{ver}^\vee E).
\end{equation}
If, as is usually the case in physics, the action functional $S\in\mathcal{O}(\bfGamma_{\! c}(M,E))$ of the considered field theory is a local functional\footnote{
The argument goes roughly as follows.
A local action functional can be expressed by $S(\phi)=\int_{M\!}j(\phi)^{\ast\!} L\mathrm{vol}_M$, where $j(\phi)$ is the jet prolongation of the section $\phi$ and $L$ is the Lagrangian, which is a function on the jet bundle.
It is possible to show \cite{Khavkine:2012jf, Khavkine:2014kya} that its first variation is given by $\di_\dR S(\phi)=\int_{M\!}j(\phi)^\ast \delta_{\mathrm{EL}}L\wedge\mathrm{vol}_M$,
where $\delta_{\mathrm{EL}}L$ is the so-called Euler-Lagrange form, which is a section 
$\delta_{\mathrm{EL}}L: \mathrm{Jet}_M\!E \rightarrow T^\vee_\mathrm{ver}E$.
Then, by defining $\delta S = j(-)^\ast\delta_{\mathrm{EL}}L$, one gets the functional derivative.
In \cite{AlfYouFuture} we will deal more systematically with these field-theoretic details.
}, then the de Rham differential of the action functional can be written in the form $\di_\dR S = \int_M\!\mathrm{vol}_M\langle \delta S,-\rangle_E$
for some morphism
\begin{equation}
    \delta S\,:\, \bfGamma(M,E) \,\longrightarrow\,  \mathbf{\Gamma}(M,T_\mathrm{ver}^\vee E),
\end{equation}
which we call \textit{variational derivative} of the action functional $S$, and some fixed volume form $\mathrm{vol}_M$.
In fact, this represents the notion of variational derivative familiar to physicists and the equation $\delta S =0$ is precisely the Euler-Lagrange equation.
\end{construction}

We can now introduce the derived critical locus of an action functional $S$ as the derived zero locus of its variational derivative $\delta S$.

\begin{definition}[Derived critical locus of an action functional]
Let $\mathbf{\Gamma}(M,E)\in\mathsf{SmoothSet}$ be the smooth set of sections of a bundle $E\twoheadrightarrow M$ of smooth manifolds and let $S:\mathbf{\Gamma}_{\!c}(M,E)\rightarrow \bbR$ be an action functional. We define the \textit{derived critical locus} $\bfR\crit{S}(M)\in\mathbf{dFSmoothSet}$ of the action functional $S$ by the homotopy pullback
\begin{equation}
    \begin{tikzcd}[row sep=scriptsize, column sep=5.5ex, row sep=14.5ex]
    \bfR\crit{S}(M) \arrow[r] \arrow[d] & \mathbf{\Gamma}(M,E) \arrow[d, "\delta S", hook] \\
    \mathbf{\Gamma}(M,E)  \arrow[r, "0"] & \mathbf{\Gamma}(M,T_\mathrm{ver}^\vee E),
    \end{tikzcd}
\end{equation}
in the $(\infty,1)$-category $\mathbf{dFSmoothSet}$, where $0$ is the zero-section and $\delta S$ is the de Rham differential of the action functional functional $S$.
\end{definition}

\begin{remark}[Derived critical locus is a formal derived diffeological space]
The ordinary critical locus $\crit{S}(M)\in\mathsf{SmoothSet}$ is given by the underived-truncation of the derived critical locus, i.e. by $\mathit{\Pi}^\mathrm{dif}\bfR\crit{S}(M) \simeq \crit{S}(M)$.
Notice that $\crit{S}(M)\hookrightarrow \bfGamma(M,E)$ is a diffeological space.
This implies that the derived critical locus $\bfR\crit{S}(M)\in\mathbf{dFDiffSp}$ is, in particular, a formal derived diffeological space.
\end{remark}


\begin{remark}[Explicit expression for the $0$-simplices of the derived critical locus]\label{rem:zero_simplices}
Given a formal derived smooth manifold $U$, let us denote by $\bfR\Hom\big(U,\,\bfGamma(M,T_\mathrm{ver}^\vee E)\big)_{\!\itPhi}$ the fibre of the bundle of simplicial sets  $\bfR\Hom\big(U,\,\bfGamma(M,T_\mathrm{ver}^\vee E)\big)\longtwoheadrightarrow\bfR\Hom\big(U,\,\bfGamma(M, E)\big)$ at the point of the base  $\itPhi:U\rightarrow\bfGamma(M,E)$. 
The set of $0$-simplices of the $\infty$-groupoid $\bfR\Hom\big(U,\,\bfR\crit{S}(M)\big)$ of sections of the derived critical locus $\bfR\crit{S}(M)$ on a formal derived smooth manifold $U$ is
\begin{equation*}
\hspace{-0.7cm}\begin{aligned}
        \bfR\Hom\big(U,\,\bfR\crit{S}(M)\big)_0 \,=\, \Bigg\{  \begin{array}{l}\itPhi\in\bfR\Hom\big(U,\,\bfGamma(M,E)\big)_0\\[1.0ex] \itPhi^+\!\in \bfR\Hom\big(U,\,\bfGamma(M,T_\mathrm{ver}^\vee E)\big)_{\!\itPhi,1} \end{array}\;\Bigg| \!\begin{array}{ll}\delta S(\itPhi) &\!\!\!\!=\, \partial_0\itPhi^+ \\[1.0ex] \;\;\;\,0&\!\!\!\!=\, \partial_1\itPhi^+\end{array} \!\Bigg\},
\end{aligned}
\end{equation*}
where $\bfR\Hom\big(U,\,\bfGamma(M,E)\big)_0$ is the set of $0$-simplices of the simplicial set $\bfR\Hom\big(U,\,\bfGamma(M,E)\big)$ and $\bfR\Hom\big(U,\,\bfGamma(M,T_\mathrm{ver}^\vee E)\big)_{\!\itPhi,1}$ is the set of $1$-simplices of the simplicial set $\bfR\Hom\big(U,\,\bfGamma(M,T_\mathrm{ver}^\vee E)\big)_{\!\itPhi}$, which comes with face maps $\partial_{0,1}$.
\end{remark}

\begin{remark}[Global antifield]
Notice that a $0$-simplex of the simplicial set of sections $\bfR\Hom(U,\,\bfR\crit{S}(M))$ is a pair of the form
\begin{equation}
    (\itPhi,\itPhi^+)\,\in\,\bfR\Hom(U,\,\bfR\crit{S}(M)),
\end{equation}
where $\itPhi^+$ is a homotopy from the variational derivative $\delta S(\itPhi)$ of the action functional at the field configuration $\itPhi$ to zero, as written above.
Notice that $\itPhi$ is a scalar field and $\itPhi^+$ is the global-geometric version of what is known as its antifield in usual BV-theory.
However, it is clear that the two fields play a very different role in the global geometry of the scalar field theory: in fact, the antifield $\itPhi^+$ is not independent from the field $\itPhi$, but it lives in the fibre $\Gamma(M,T_\mathrm{ver}^\vee E)_\itPhi$.
\end{remark}

\begin{example}[The case of $E$ a vector bundle]
Let $E\twoheadrightarrow M$ be an vector bundle, so that the smooth set $\bfGamma(M,E)$ of its sections has a natural vector space structure. 
In this case, the restricted cotangent bundle reduces to $T^\vee_\mathrm{res}\bfGamma(M,E) \simeq \bfGamma(M,E\times_M E^\vee) \simeq \bfGamma(M,E)\oplus \bfGamma(M,E^\vee)$.
The set $\Gamma(M,E)$ of sections of a vector bundle comes also equipped with a $\Coo$-module structure on $\Coo(M)$, which allows the use of the $\Coo$-tensor product $\widehat{\otimes}$.
So, the set of $0$-simplices from remark \ref{rem:zero_simplices}, in case of $E\twoheadrightarrow M$ being a vector bundle reduces to the more familiar looking
\begin{equation*}
\hspace{-0.3cm}\begin{aligned}
        \bfR\Hom\big(U,\,\bfR\crit{S}(M)\big)_0 \,=\, \Bigg\{  \begin{array}{l}\itPhi\in\Gamma(M,E)\,\widehat{\otimes}\,\mathcal{O}(U)_0 \\[1.0ex]  \itPhi^+\!\in\Gamma(M,E^\vee)\,\widehat{\otimes}\,\mathcal{O}(U)_1\end{array}\;\Bigg| \begin{array}{ll}\delta S(\itPhi) &\!\!\!\!=\, \partial_0\itPhi^+ \\[1.0ex] \;\;\;\,0&\!\!\!\!=\, \partial_1\itPhi^+\end{array} \!\Bigg\},
\end{aligned}
\end{equation*}
where $\mathcal{O}(U)_0$ and $\mathcal{O}(U)_1$ are respectively the $\Coo$-algebras of $0$- and $1$-simplices of the simplicial $\Coo$-algebra $\mathcal{O}(U)$ and $\partial_{0,1}$ are the corresponding face maps.
\end{example}

Now, the pointed formal moduli problems of the form considered in subsection \ref{sec:review_BV} to study BV-theory can, in principle, be obtained by formal completion $\bfR\crit{S}(M)^\wedge_{\itPhi_0}$ at some fixed solution of the equations of motion $\itPhi_0\in \bfR\crit{S}$ as explained in construction \ref{con:FMP_as_FC}. 
Such an operation amounts to the construction of the pointed formal moduli problem $\bfR\crit{S}(M)^\wedge_{\itPhi_0}$ which infinitesimally approximates the formal derived smooth stack $\bfR\crit{S}(M)$ at $\itPhi_0\in \bfR\crit{S}$.
Let us now see this more in detail. 

\begin{remark}[Infinitesimal disk as formal moduli problem of Klein-Gordon theory]
As an example, let us consider Klein-Gordon theory, so let $S:[M,\bbR]_c\rightarrow \bbR$ be a  Klein-Gordon action of the form 
\begin{equation}
    S(\phi)\; =\; \int_M \!\Big( \phi\square\phi - V(\phi) \Big)\mathrm{vol}_M,
\end{equation}
where $V(\phi)$ is a function such that $V(0)=0$.
According to the machinery above, we can construct the derived critical locus $\bfR\crit{S}(M)$, which will be a formal derived smooth set.
The fact that the $0$-section is the trivial solution of the equations of motion, assures that there is a point $0:\ast\rightarrow\bfR\crit{S}(M)$, so we can consider the formal disk of the derived critical locus at such a point, according to definition \ref{def:formal_disk}.
It is possible to see that one has an equivalence
\begin{equation}
    \bbD_{\bfR\crit{S}(M),0} \;\simeq\; \B\mathfrak{L}(M),
\end{equation}
where $\mathfrak{L}(M)$ is the local $L_\infty$-algebra which has the underlying graded vector space given simply by $\mathfrak{L}(M) = \Coo(M)[-1] \oplus \Coo(M)[-2]$ and bracket structure given by
\begin{equation}
\begin{aligned}
    \ell_1(\phi) \;&=\; \square\phi - \frac{\partial V(\phi)}{\partial \phi }\bigg|_0\phi, \\
    \ell_k(\phi_1,\dots,\phi_k) \;&=\; -\frac{\partial^k V(\phi)}{\partial \phi^k }\bigg|_0\phi_1\cdots\phi_k \quad\text{for $k>1$}
\end{aligned}
\end{equation}
for any $\phi_i\in\Coo({M})$.
This is precisely the $L_\infty$-algebra which encodes the usual perturbative BV-theory of a Klein-Gordon scalar field.
We can formally complete our formal derived smooth stack $\bfR\crit{S}(M)$ at the trivial solution to obtain the pointed formal moduli problem
\begin{equation}
    \bfR\crit{S}(M)^\wedge_0 \;\,\simeq\,\; \mathit{\Gamma}^\mathrm{rel}\bbD_{\bfR\crit{S}(M),0} \;\,\simeq\;\, \mathbf{MC}\big(\mathfrak{L}(M)\big),
\end{equation}
where $\mathit{\Gamma}^\mathrm{rel}$ is the functor we introduced in section \ref{subsec:FMP}.
For a suitable choice of potential $V(\phi)$, this is nothing but the pointed formal moduli problem of Klein-Gordon theory appearing in \cite{FactI, FactII}.
Thus, this shows that the formal derived smooth stack $\bfR\crit{S}(M)$ provides a global-geometric version of the BV-theory of a Klein-Gordon scalar field. The usual perturbative formulation is given by the formal disk $\bbD_{\bfR\crit{S}(M),0}\simeq \bfR\crit{S}(M)\times_{\Im(\bfR\crit{S}(M))}\{0\}$ at the trivial solution, whose construction is made possible by the derived differential structure.
\end{remark}

Now, the usual perturbative BV-theory is most commonly dually stated in terms of dg-algebras of observables, also known as BV-complexes in physics.
To make contact with this perspective, we will now investigate what is the global-geometric version of the BV-complex of a scalar field.

\begin{remark}[Global BV-complex]
In what follows we will be deploying the compact notation $\mathbb{O}(X)\coloneqq\bfR\Gamma(X,\mathbb{O}_X)$ for the complex of global sections of the structure sheaf $\mathbb{O}_X\in\QCoh{X}$ of a formal derived smooth stack, as defined in subsection \ref{sec:Qcoh}.
As we have already noticed, the dual vector bundle of the vector bundle $\bfGamma(M,T_\mathrm{ver}^\vee E)\twoheadrightarrow\bfGamma(M,E)$ is precisely the tangent bundle $T\bfGamma(M,E)\simeq \bfGamma(M,T_\mathrm{ver} E)$ of the smooth set of sections.
By applying the machinery of derived zero loci, it is possible to see that the complex of global sections of $\bfR\mathrm{Crit}(S)(M)$ is given by 
\begin{equation*}
\begin{aligned}
    \mathbb{O}\big(\bfR\mathrm{Crit}(S)(M)\big) \;\simeq\; \Big( \cdots\xrightarrow{\;\,Q\;\,} \wedge^2\mathfrak{X}\big(\bfGamma(M,E)\big) \xrightarrow{\;\,Q\;\,} \mathfrak{X}\big(\bfGamma(M,E)\big) \xrightarrow{\;\,Q\;\,} \mathcal{O}\big(\bfGamma(M,E)\big)\Big)
\end{aligned}
\end{equation*}
where $\mathfrak{X}\big(\bfGamma(M,E)\big)$ is the set of vector fields on the ordinary smooth set $\bfGamma(M,E)$ and the differential $Q$ is given by the contraction $\iota_{(-)}\delta S$ of poly-vectors with the variational derivative $\delta S$ constructed above.
This is the picture that most directly generalises the BV-complex appearing in perturbative BV-theory.
Moreover, it generalises the functional approach to quantum mechanics of \cite{Chiaffrino:2021pob}.
To see that the complex $\mathbb{O}\big(\bfR\mathrm{Crit}(S)(M)\big)$ of global sections of the structure sheaf reduces to the usual BV-complex, it is enough to notice that, in the case of the formal disk $\bbD_{\bfR\crit{S}(M),0}\simeq \B \mathfrak{L}(M)$, we obtain the complex\footnote{It is a standard fact (see for example \cite{safronov2020shifted}) that the complex of global sections on a formal group stack of the form $\B\mathfrak{g}$, with $\mathfrak{g}$ an $L_\infty$-algebra, reduces to the Chevalley-Eilenberg algebra $\CE(\mathfrak{g})$ of $\mathfrak{g}$.}
\begin{equation*}
\begin{aligned}
    \mathbb{O}(\bbD_{\bfR\crit{S}(M),0}) \;\cong\; \CE\big(\mathfrak{L}(M)\big),
\end{aligned}
\end{equation*}
where $\mathrm{CE}\big(\mathfrak{L}(M)\big)$ is the Chevalley-Eilenberg algebra of the $L_\infty$-algebra $\mathfrak{L}(M)$ found above.
This tells us that the complex of sections $\mathbb{O}\big(\bfR\mathrm{Crit}(S)(M)\big)$ of the structure sheaf of the derived critical locus is a globally-defined generalisation of the usual BV-complex, which is recovered infinitesimally. Let us stress that the field bundle  $E\twoheadrightarrow M$ is a general fibre bundle of smooth manifolds and it does not have to be a vector bundle. 
\end{remark}

\subsection{Global BRST-BV formalism}\label{subsec:YMT}

In this subsection we will construct a global-geometric version of the BRST-BV formalism for Yang-Mills theory.
First, we will illustrate the smooth stack structure of the space of principal $G$-bundles with connection on a given smooth manifold, which is the configuration space of Yang-Mills theory. 
Second, we will see how the derived critical locus of the Yang-Mills action functional on such a smooth stack can be concretely constructed as formal derived smooth stack.
Finally, we will show that such a construction provides a global generalisation of usual the usual BV-formalism for Yang-Mills theory.

\subsubsection{Global BRST formalism}

Let us now temporarily take a step back and work in the $(\infty,1)$-topos $\mathbf{SmoothStack}$ of smooth stacks, i.e. stacks on the ordinary site of smooth manifolds. 
Our objective in this subsection is the construction of the smooth stack $\Bun^\nabla_G(M)$ of principal $G$-bundles on $M$ with connection. We will see such a stack as the global-geometric configuration space of a gauge theory on spacetime $M$ with gauge group $G$. This is because a field configuration of a gauge field is precisely the datum of a principal $G$-bundle on $M$ with a connection.

\begin{construction}[$\infty$-groupoid of principal $G$-bundles]
For a given ordinary Lie group $G$, the smooth stack $\B G = [\ast/G]$ is the moduli stack of principal $G$-bundles.
For a given manifold $M$, the $0$-simplices of the $\infty$-groupoid $\mathrm{Hom}(M,\mathbf{B}G)$ are all the non-abelian \v{C}ech $G$-cocycles $\{g_{\alpha\beta}\in\Coo(V_\alpha\cap V_\beta,G)\,|\,g_{\alpha\beta}\cdot g_{\beta\gamma}=g_{\alpha\gamma}\}$ on $M$ and the $1$-simplices are all the coboundaries $\{g_{\alpha\beta}\mapsto c_{\alpha} g_{\alpha\beta}c_\beta^{-1}\}$ between cocycles. Schematically, we have:
\begin{equation}\label{modulirem}
    \mathrm{Hom}(M,\mathbf{B}G)\,\simeq\,\left\{\begin{tikzcd}[row sep=scriptsize, column sep=12ex]
    M \arrow[r, bend left=50, ""{name=U, below}, "g_{\alpha\beta}"]
    \arrow[r, bend right=50, "c_{\alpha} g_{\alpha\beta}c_{\beta}^{-1}"', ""{name=D}]
    & \mathbf{B}G
    \arrow[Rightarrow, from=U, to=D, "c_{\alpha}"]
\end{tikzcd} \right\}.
\end{equation}
A principal $G$-bundle $P$ on an ordinary smooth manifold $M\in\Mfd$ is defined by its transition functions $g_{\alpha\beta}$, which are nothing but a \v{C}ech $G$-cocycle on $M$.
Thus, geometrically, the $0$-simplices are all the principal $G$-bundles over $M$, the $1$-simplices are all the isomorphisms (i.e. gauge transformations) between them and the higher simplices are given just by the composition of those. Thus, we can see a principal $G$-bundle as a point in the $\infty$-groupoid $\mathrm{Hom}(M,\mathbf{B}G)$.
Let us call 
\begin{equation}\nonumber
    \mathrm{Bun}_G(M ) \coloneqq \mathrm{Hom}(M,\B G)
\end{equation} the $\infty$-groupoid of principal $G$-bundles on a smooth manifold $M$.
\end{construction}

\begin{remark}[On a \v{C}ech cover]
More concretely, given a good open cover $\coprod_{\alpha\in I\!}V_\alpha\twoheadrightarrow M$ of our manifold, the simplicial set $\mathrm{Bun}_G(M)$ can be expressed as the homotopy limit
\begin{equation}
    \mathrm{Bun}_G(M) \;\simeq\; \bfR\!\lim_{} \Big(   \begin{tikzcd}[row sep=scriptsize, column sep=4ex] \displaystyle\prod_{\alpha} [\,\ast\,/\Coo(V_\alpha, G)] & \arrow[l, yshift=0.7ex, leftarrow] \arrow[l, yshift=-0.7ex, leftarrow] \displaystyle\prod_{\alpha,\beta} [\,\ast\,/\Coo(V_\alpha\cap V_\beta, G)] & \,\cdots\, \arrow[l, yshift=1.4ex, leftarrow] \arrow[l, leftarrow] \arrow[l, yshift=-1.4ex, leftarrow]  
    \end{tikzcd}   \Big),\vspace{-0.3cm}
\end{equation}
which glues explicitly the \v{C}ech local data of the $G$-bundles.
\end{remark}

\begin{remark}[Non-abelian cohomology]
To recover the more familiar topological picture one must look at the set of connected components of the $\infty$-groupoid of principal $G$-bundles, i.e.
\begin{equation}\label{eq:pathconnected}
    \mathrm{H}^1(M,G) \,=\, \pi_0\mathrm{Hom}(M,\mathbf{B}G).
\end{equation}
In other words, a morphism $M\rightarrow\mathbf{B}G$ in the homotopy category $\mathrm{Ho}(\mathbf{SmoothStack})$ of smooth stacks is equivalently a class in the cohomology $\mathrm{H}^1(M,G)$.
For example, for $G=U(1)$, we have by the isomorphism $\mathrm{H}^1(M,U(1)) \cong \mathrm{H}^2(M,\mathbb{Z})$ the first Chern class of circle bundles.    
\end{remark}

According to the general construction of $G$-bundles by \cite{Principal1, Principal2} in the context of a general $(\infty,1)$-topos, to any cocycle $M \rightarrow \mathbf{B}G$ is canonically associated a principal $G$-bundle $P\twoheadrightarrow M$ given by the pullback square
\begin{equation}\label{eq:firstbundle}
    \begin{tikzcd}[row sep=7.5ex, column sep=7.2ex]
    P \arrow[d, "\mathrm{hofib}(g)"', two heads]\arrow[r] & \ast \arrow[d] \\
    M \arrow[r, "g"] & \mathbf{B}G,
\end{tikzcd}
\end{equation}
where the homotopy fibre $\pi_M=\mathrm{hofib}(g)$ is the projection of the total space of the principal bundle to the base manifold.

However, as we have said, $\mathrm{Bun}_G(M)$ is just a bare $\infty$-groupoid (i.e. a Kan-fibrant simplicial set), lacking any smooth structure. What we want is to upgrade this object to a smooth stack.

\begin{definition}[Smooth stack of principal $G$-bundles]
The \textit{smooth stack of principal $G$-bundles} on a given smooth manifold $M$ is the mapping smooth stack
\begin{equation}
   \Bun_G(M) \;\coloneqq\; [M,\,\B G].
\end{equation}
\end{definition}

Notice that the underlying $\infty$-groupoid of this smooth stack, which we can extract by feeding it the point as $\Bun_G(M):\ast\mapsto \mathrm{Bun}_G(M )$, is precisely the one of principal $G$-bundles on $M$.

Now, we want to introduce the moduli stack $\mathbf{B}G_{\mathrm{conn}}$ of principal $G$-bundles with connection, which refines the moduli stack $\mathbf{B}G$ of principal bundles. We will have the following diagram:
\begin{equation}
    \begin{tikzcd}[row sep={11ex,between origins}, column sep={6ex}]
    & \mathbf{B}G_{\mathrm{conn}} \arrow[d, two heads, "\mathsf{F}"] \\
    M \arrow[r, "P"]\arrow[ur, "{(P,\nabla_{\!A})}"] & \mathbf{B}G.
    \end{tikzcd}
\end{equation}
Just as a cocycle $P:M\rightarrow \mathbf{B}G$ encodes the global geometric data of a principal bundle, a cocycle $(P,\nabla_{\!A}):M\rightarrow \mathbf{B}G_{\mathrm{conn}}$ will encode both the global geometric data of a principal bundle and the global differential data of a principal connection. 

\begin{construction}[$\infty$-groupoid of $G$-bundles with connection]
We can avoid the many technical subtleties and explicitly construct the stack $\mathbf{B}G_\mathrm{conn}\in\mathbf{SmoothStack}$ so that a cocycle $(A_{\alpha},g_{\alpha\beta})\in\mathrm{Hom}(M,\mathbf{B}G_{\mathrm{conn}})$ encodes precisely the global differential data of a principal $G$-bundle with connection on $M$ as follows (see, for instance, \cite{Benini_2018}): $A_{\alpha}\in\Omega^1(V_\alpha,\mathfrak{g})$ is a local $1$-form, which is glued on two-fold overlaps $V_\alpha\times_M V_\beta$ by
\begin{equation}
    A_{\beta} \;=\; g_{\beta\alpha}^{-1}(A_{\alpha}+\di)g_{\beta\alpha},
\end{equation}
where $g_{\alpha\beta}:M\rightarrow \mathbf{B}G$ is the \v{C}ech cocycle of a principal $G$-bundle, which is itself glued by
\begin{equation}
    g_{\alpha\beta}\cdot g_{\beta\gamma} \,=\, g_{\alpha\gamma}.
\end{equation}
on three-fold overlaps $V_\alpha\times_MV_\beta\times_M V_\gamma$.
Moreover, a coboundary $(A_{\alpha},g_{\alpha\beta}) \mapsto (A_{\alpha}',g'_{\alpha\beta})$ is given by the datum of a local $G$-valued scalar $c_{\alpha}\in\Coo(V_\alpha,G)$ such that
\begin{equation}
    \begin{aligned}
        g'_{\alpha\beta} \;&=\; c_{\beta}^{-1}g_{\alpha\beta}c_{\alpha},\\
         A_{\alpha}' \;&=\; c_{\alpha}^{-1}(A_{\alpha}+\di)c_{\alpha}.
    \end{aligned}
\end{equation}
Given a smooth manifold $M$ and a Lie group $G$, let us call
\begin{equation}
    \mathrm{Bun}_G^\nabla(M) \;:=\; \mathrm{Hom}(M,\, \mathbf{B}G_{\mathrm{conn}})
\end{equation}
the $\infty$-groupoid of $G$-bundles with connection on $M$.
\end{construction}

\begin{remark}[Underlying principal $G$-bundle]
In general, there is a forgetful morphism
\begin{equation}
    \mathbf{B}G_{\mathrm{conn}}\; \xrightarrow{\quad\mathsf{F}\quad}\; \mathbf{B}G,
\end{equation}
which forgets the connection of the $G$-bundles.
Thus, it is important that a cocycle $M\rightarrow \mathbf{B}G_{\mathrm{conn}}$ contains not only local connection data, but also the underlying bundle structure $M\rightarrow \mathbf{B}G$.
In our case, cocycles are mapped as
\begin{equation}
    \begin{aligned}
        \mathrm{Hom}(M,\mathsf{F})\,:\; \mathrm{Hom}(M,\mathbf{B}G_{\mathrm{conn}}) \;&\longrightarrow\; \mathrm{Hom}(M,\mathbf{B}G),   \\
       (g_{\alpha\beta},A_{\alpha}) \;&\longmapsto\; (g_{\alpha\beta})
    \end{aligned}
\end{equation}
so that the functor forgets the connection data, but retains the global geometric data.
\end{remark}

Now that we have the moduli stack $\mathbf{B}G_{\mathrm{conn}}$ of $G$-bundles with connection, we can formulate the following definition.

\begin{remark}[On a \v{C}ech cover]
More concretely, given a good open cover $\coprod_{\alpha\in I\!}V_\alpha\twoheadrightarrow M$ of our manifold, the simplicial set $\mathrm{Bun}^\nabla_G(M)$ can be expressed as the homotopy limit
\begin{equation*}
    \mathrm{Bun}^\nabla_G(M) \simeq \bfR\!\lim \! \bigg(  \! \!\begin{tikzcd}[row sep=scriptsize, column sep=3.3ex] \displaystyle\prod_{\alpha} [\Omega^1(V_\alpha,\mathfrak{g})/\Coo(V_\alpha, G)] & \arrow[l, yshift=0.7ex, leftarrow] \arrow[l, yshift=-0.7ex, leftarrow] \displaystyle\prod_{\alpha,\beta} [\Omega^1(V_\alpha\cap V_\beta,\mathfrak{g})/\Coo(V_\alpha\cap V_\beta, G)] & \cdots \arrow[l, yshift=1.4ex, leftarrow] \arrow[l, leftarrow] \arrow[l, yshift=-1.4ex, leftarrow]  
    \end{tikzcd} \! \! \bigg) \vspace{-0.3cm}
\end{equation*}
which glues explicitly the \v{C}ech local data of the $G$-bundles.
\end{remark}

\begin{remark}[Non-abelian differential cohomology]
In the homotopy category of smooth stacks $\mathrm{Ho}(\mathbf{SmoothStack})$, a morphism $M\rightarrow\mathbf{B}G_{\mathrm{conn}}$ is an element of
\begin{equation}
    \widehat{\mathrm{H}}^1(M,G) \;\coloneqq\; \pi_0\mathrm{Bun}_G^\nabla(M),
\end{equation}
which can be interpreted as (non-abelian) differential cohomology.
\end{remark}

Let us now fix once and for all a Lie group $G$, which we will think of as our gauge group, and an ordinary smooth manifold $M\in\Mfd$, which is going to play the role of spacetime.
What we want to do now is to update the bare $\infty$-groupoid $\mathrm{Bun}_G^\nabla(M)$ of principal $G$-bundles on $M$ with connection to some smooth stack $\Bun^\nabla_G(M)$, which we can see as the configuration space of a gauge theory on spacetime $M$ with gauge group $G$.

\begin{remark}[Technical subtleties]
For technical reasons \cite{Benini_2018}, the proper choice of definition for the smooth stack $\Bun^\nabla_G(M)$ of principal $G$-bundles on $M$ with connection cannot be, as one may naively think by comparison with the connection-less case, just the mapping smooth stack $[M, \B G_\mathrm{conn}]$. Such a choice would fail to have the desired properties.
As argued by \cite{Benini_2018}, the smooth stack $\Bun^\nabla_G(M)$ must be a certain concretification of the mapping stack $[M, \B G_\mathrm{conn}]$, which is constructed in the reference.
\end{remark}

\begin{construction}[Smooth stack of principal $G$-bundles with connection]
We construct the \textit{smooth stack }$\Bun_G^\nabla(M)$\textit{ of principal $G$-bundles with connection} as follows. First, let us fix a good open cover $\bigsqcup_\alpha \!V_\alpha \twoheadrightarrow M$ for the base manifold $M$. Then, for any smooth manifold $U\in\Mfd$ diffeomorphic to a Cartesian space $U\simeq \bbR^n$ we construct the following simplicial set of sections:
\begin{equation*}
    \Hom\big(U,\,\Bun_G^\nabla(M)\big) \,\simeq\, \mathrm{cosk}_{2\!}\left(\! \begin{tikzcd}[row sep={22.5ex,between origins}, column sep={10.5ex}]
    Z_2 \, \arrow[rr, yshift=3.4ex , "{\big(c_\alpha,\,\begin{smallmatrix}g_{\alpha\beta},A_\alpha\\g_{\alpha\beta}',A_\alpha'\end{smallmatrix}\big)}"] \arrow[rr, "{\big(c_\alpha',\,\begin{smallmatrix}g_{\alpha\beta}',A_\alpha'\\g_{\alpha\beta}'',A_\alpha''\end{smallmatrix}\big)}" description] \arrow[rr, yshift=-3.4ex, "{\big(c_\alpha'\cdot c_\alpha,\,\begin{smallmatrix}g_{\alpha\beta},A_\alpha\\g_{\alpha\beta}'',A_\alpha''\end{smallmatrix}\big)}"'] && \,Z_1\, \arrow[r, yshift=1.7ex , "{(g_{\alpha\beta},A_\alpha)}"] \arrow[r, yshift=-1.7ex, "{(g_{\alpha\beta}',A_\alpha')}"'] & \,Z_0
    \end{tikzcd}\!\right),
\end{equation*}
where the sets of $0$-, $1$- and $2$-simplices are respectively given by    
\begin{equation*}
\begin{aligned}
        \hspace{-1.8cm}Z_0 \,&=\, \left\{ \begin{array}{ll}
        g_{\alpha\beta} &\!\!\!\!\in\Coo(V_\alpha\cap V_\beta \times U ,G) \\[0.5ex]
         A_\alpha &\!\!\!\!\in\Omega^1_{\mathrm{ver}}(V_\alpha \times U,\mathfrak{g}) \end{array}  \;\left|\; \begin{array}{l} 
         g_{\alpha\beta}\cdot g_{\beta\gamma}\cdot g_{\gamma\alpha} = 1 \\[0.5ex]
        A_{\alpha}=g_{\beta\alpha}^{-1}(A_{\beta}+\di)g_{\beta\alpha} \end{array} \right.\right\},
        \end{aligned}
\end{equation*}
\begin{equation*}
\begin{aligned}
       \hspace{-0.8cm}Z_1 \,&=\, \left\{ \begin{array}{ll}
         c_\alpha &\!\!\!\!\in\Coo(V_\alpha \times U,G)  \\[1.5ex]
         g_{\alpha\beta},g'_{\alpha\beta}  &\!\!\!\!\in\Coo(V_\alpha\cap V_\beta \times U ,G) \\[0.5ex]
         A_\alpha, A_\alpha' &\!\!\!\!\in\Omega^1_{\mathrm{ver}}(V_\alpha \times U,\mathfrak{g})\end{array}  \;\left|\; \begin{array}{ll} 
         g_{\alpha\beta}&\!\!\!\!\cdot\, g_{\beta\gamma}\cdot g_{\gamma\alpha} = 1 \\[0.5ex]
        A_{\alpha}&\!\!\!\!=\,g_{\beta\alpha}^{-1}(A_{\beta}+\di)g_{\beta\alpha}  \\[2ex]
        g_{\alpha\beta}'&\!\!\!\!\cdot\, g_{\beta\gamma}'\cdot g_{\gamma\alpha}' = 1 \\[0.5ex]
        A_{\alpha}'&\!\!\!\!=\,g_{\beta\alpha}^{\prime -1}(A_{\beta}'+\di)g_{\beta\alpha}'  \\[2ex]
         g_{\alpha\beta}' &\!\!\!\!=\, c_\beta^{-1}g_{\alpha\beta}c_\alpha \\[0.5ex]
        A_{\alpha}' &\!\!\!\!=\, c_\alpha^{-1} (A_\alpha+\di) c_\alpha \end{array} \right.\right\}, \end{aligned}
\end{equation*}
\begin{equation*}
\begin{aligned}
        Z_2 \,&=\, \left\{ \begin{array}{ll}
         c_\alpha,c_\alpha' &\!\!\!\!\in\Coo(V_\alpha\times U,G)   \\[1.5ex]
         g_{\alpha\beta},g'_{\alpha\beta},g''_{\alpha\beta}  &\!\!\!\!\in\Coo(V_\alpha\cap V_\beta \times U,G) \\[0.5ex]
         A_\alpha, A_\alpha', A_\alpha'' &\!\!\!\!\in\Omega^1_\mathrm{ver}(V_\alpha\times U,\mathfrak{g})\end{array}  \;\left|\; \begin{array}{ll} 
         g_{\alpha\beta}&\!\!\!\!\cdot\, g_{\beta\gamma}\cdot g_{\gamma\alpha} = 1 \\[0.5ex]
        A_{\alpha}&\!\!\!\!=\,g_{\beta\alpha}^{-1}(A_{\beta}+\di)g_{\beta\alpha}  \\[2ex]
        g_{\alpha\beta}'&\!\!\!\!\cdot\, g_{\beta\gamma}'\cdot g_{\gamma\alpha}' = 1 \\[0.5ex]
        A_{\alpha}'&\!\!\!\!=\,g_{\beta\alpha}^{\prime -1}(A_{\beta}'+\di)g_{\beta\alpha}'  \\[2ex]
        g_{\alpha\beta}''&\!\!\!\!\cdot\, g_{\beta\gamma}''\cdot g_{\gamma\alpha}'' = 1 \\[0.5ex]
        A_{\alpha}''&\!\!\!\!=\,g_{\beta\alpha}^{\prime -1}(A_{\beta}''+\di)g_{\beta\alpha}''  \\[2ex]
         g_{\alpha\beta}' &\!\!\!\!=\, c_\beta^{-1}g_{\alpha\beta}c_\alpha \\[0.5ex]
        A_{\alpha}' &\!\!\!\!=\, c_\alpha^{-1} (A_\alpha+\di) c_\alpha  \\[0.5ex]
             g_{\alpha\beta}'' &\!\!\!\!=\, c_\beta^{\prime -1}g_{\alpha\beta}'c_\alpha' \\[0.5ex]
        A_{\alpha}'' &\!\!\!\!=\, c_\alpha^{\prime -1} (A_\alpha'+\di) c_\alpha' \end{array} \right.\right\},
\end{aligned}
\end{equation*}
where $\Omega^p_\mathrm{ver}(V_\alpha\times U,\mathfrak{g}_p)$ is the set of vertical differential $p$forms on the fibration $V_\alpha\times U\twoheadrightarrow U$.
Finally, for a general smooth manifold $U\in\Mfd$ we consider a good open cover $\bigsqcup_{i\in I\!}U_i\rightarrow U$ for it, so that all the overlaps $U_{i_1,\dots,i_n}$ are diffeomorphic to Cartesian spaces.
Thus, we define the simplicial set of sections at $U$ to be the homotopy limit
\begin{equation}
    \Hom\big(U,\,\Bun_G^\nabla(M)\big) \,\,\simeq\,\; \bfR\!\lim_{\!\!\!\!\!\!\![n]\in\Delta}\!\!\!\prod_{\;\;i_1,\dots,i_n\in I}\!\!\!\!\Hom\big(U_{i_1,\dots,i_n},\,\Bun_G^\nabla(M)\big).
\end{equation}
\end{construction}

\begin{remark}[Relation with bare groupoid of principal $G$-bundles with connection]
Notice that the underlying $\infty$-groupoid of the smooth stack defined above is precisely the $\infty$-groupoid of principal $G$-bundles with connection on the manifold $M$, i.e.
\begin{equation}
    \Bun^\nabla_G(M):\;\ast \,\longmapsto\, \mathrm{Bun}^\nabla_G(M).
\end{equation}
In this precise sense, $\Bun^\nabla_G(M)$ can be understood as the smooth stack version of the bare $\infty$-groupoid $\mathrm{Bun}^\nabla_G(M)$.
\end{remark}

Now, having introduced the smooth stack $\Bun_G(M)$ and its refinement $\Bun_G^\nabla(M)$, we will focus on their infinitesimal properties in the context of differential geometry.
To do that, we must embed both these smooth stacks into formal smooth stacks by exploiting the canonical embedding $\mathbf{SmoothStack}\longhookrightarrow \mathbf{FSmoothStack}$ from section \ref{sec:Coo}.
For simplicity, we will keep using the same symbols $\Bun_G(M)$ and $\Bun_G^\nabla(M)$ to indicate the two formal smooth stacks obtained by such an embedding.

\begin{proposition}[Formal disk of $\Bun_G(M)$]
The formal disk $\bbD_{\Bun_G(M),P}$ of the formal smooth stack $\Bun_G(M)$ of $G$-bundles on a fixed smooth manifold $M$, at a given $G$-bundle $P\twoheadrightarrow M$, is the formal smooth stack
\begin{equation}
     \bbD_{\Bun_G(M),P} \;\simeq\; \B\Omega^0(M,\mathfrak{g}_P),
\end{equation}
where $\mathfrak{g}_P \coloneqq P\times_G\mathfrak{g}$ is the adjoint bundle of $P\in\Bun_G(M)$ and $\Omega^0(M,\mathfrak{g}_P)$ is the local Lie algebra of $\mathfrak{g}_P$-valued $0$-forms on $M$.
\end{proposition}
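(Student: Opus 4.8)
The plan is to reduce the statement to a Lie-theoretic computation, using the bridge between formal disks and formal moduli problems established in subsection \ref{subsec:FMP}. Recall from the proof of the lemma presenting the relative flat modality as a collection of formal disks that for any point $x\colon\ast\to X$ one has $\mathbb{D}_{X,x}\simeq \mathrm{Disc}^{\mathrm{rel}}(X^\wedge_x)$, where $X^\wedge_x$ is the formal completion of $X$ at $x$ (Construction \ref{con:FMP_as_FC}), and that by the Lie-differentiation remark $\mathrm{Disc}^{\mathrm{rel}}\big(\mathbf{MC}(\mathfrak g)\big)\simeq \B\mathfrak g$. Hence it suffices to prove an equivalence of pointed formal moduli problems
\[
    \Bun_G(M)^\wedge_P \;\simeq\; \mathbf{MC}\big(\Omega^0(M,\mathfrak g_P)\big),
\]
after which applying $\mathrm{Disc}^{\mathrm{rel}}$ and the delooping identity yields the claim. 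Equivalently, since every pointed formal moduli problem is a Maurer--Cartan one, I would identify the tangent local $L_\infty$-algebra of $\Bun_G(M)$ at $P$ with the local Lie algebra $\Omega^0(M,\mathfrak g_P)$ concentrated in degree $0$, equipped with the fibrewise bracket of $\mathfrak g$.

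First I would unwind $\Bun_G(M)^\wedge_P$ concretely. For a dg-Artinian algebra $\mathcal A$ with corresponding thickened point $D$ (whose structure algebra is $|\mathcal A|^{\Coo}_{\mathrm{DK}}$), an element of $\Bun_G(M)^\wedge_P(\mathcal A)=\Bun_G(M)(D)\times_{\Bun_G(M)(\ast)}\{P\}$ is, using the \v{C}ech cocycle presentation of $\Bun_G(M)=[M,\B G]$ from the construction of the $\infty$-groupoid of principal $G$-bundles, a family of transition functions $g_{\alpha\beta}\colon (V_\alpha\cap V_\beta)\times D\to G$ reducing to the cocycle of $P$ over the reduced point $\Re D\simeq\ast$. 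Writing $g_{\alpha\beta}=g^{(0)}_{\alpha\beta}\exp(\xi_{\alpha\beta})$ with $\xi_{\alpha\beta}\in\Omega^0(V_\alpha\cap V_\beta,\mathfrak g)\,\widehat{\otimes}\,\mathfrak m_{\mathcal O(D)}$ and coboundaries $c_\alpha=\exp(\lambda_\alpha)$ with $\lambda_\alpha\in\Omega^0(V_\alpha,\mathfrak g)\,\widehat{\otimes}\,\mathfrak m_{\mathcal O(D)}$, the cocycle and coboundary relations translate exactly into the Deligne--Getzler groupoid of the \v{C}ech dg-Lie algebra $\check C^{\bullet}(\{V_\alpha\},\mathfrak g_P)\,\widehat{\otimes}\,\mathfrak m_{\mathcal O(D)}$, so that $\Bun_G(M)^\wedge_P\simeq \mathbf{MC}\big(\check C^{\bullet}(M,\mathfrak g_P)\big)$.

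The decisive step is then to collapse the \v{C}ech complex. Since $\mathfrak g_P\to M$ is a smooth vector bundle, its sheaf of smooth sections is fine (it admits partitions of unity), so $\check H^{>0}(M,\mathfrak g_P)=0$ and the augmentation $\check C^{\bullet}(M,\mathfrak g_P)\to \Omega^0(M,\mathfrak g_P)$ is a quasi-isomorphism of local $L_\infty$-algebras. By homotopy-invariance of the Maurer--Cartan functor under quasi-isomorphism, this induces an equivalence $\mathbf{MC}\big(\check C^{\bullet}(M,\mathfrak g_P)\big)\simeq \mathbf{MC}\big(\Omega^0(M,\mathfrak g_P)\big)$ of pointed formal moduli problems, with $\Omega^0(M,\mathfrak g_P)$ carrying only the fibrewise bracket in degree $0$. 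Applying $\mathrm{Disc}^{\mathrm{rel}}$ gives $\mathbb{D}_{\Bun_G(M),P}\simeq \B\Omega^0(M,\mathfrak g_P)$. Finally I would check that this equivalence is natural in the test object, so it upgrades to an equivalence of formal derived smooth stacks; this is essentially automatic, since both sides are stacks and every construction used is functorial.

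The main obstacle I anticipate is making the vanishing of higher smooth cohomology interact correctly with the genuinely derived structure. One must verify that the fibrewise exponential parametrisation identifies $\Bun_G(M)(D)\times\{P\}$ with the \v{C}ech--Deligne groupoid not only for ordinary Artinian $W$ but for derived thickened points, where $\mathfrak m_{\mathcal O(D)}$ has components in negative simplicial degree, and that the soft-sheaf resolution argument is compatible with the left Kan extension defining the derived-extension $i$ --- that is, that probing the mapping stack $i[M,\B G]$ on infinitesimal objects genuinely computes the bundle-deformation functor. I expect this to be controllable by working degreewise in $\mathfrak m_{\mathcal O(D)}$ and invoking that, on Artinian and thickened bases, the left Kan extension reduces to the levelwise formula; but it is the step demanding the most care.
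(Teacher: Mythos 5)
Your argument is essentially correct but follows a genuinely different route from the paper. The paper's proof never passes through formal moduli problems: it observes that for any mapping stack one has the purely formal identity $\bbD_{[M,X],f}\simeq\bfGamma(M,f^\ast T^\infty X)$ (a consequence of the defining pullback squares for the formal disk and the formal disk bundle, since $\Im$ commutes with mapping stacks out of a reduced $M$), then computes $T^\infty\B G\simeq[\,\ast\,/G\ltimes_{\mathrm{ad}}\mathfrak g]$ and $P^\ast T^\infty\B G\simeq\B\mathfrak g_P$, and concludes $\bbD_{\Bun_G(M),P}\simeq\bfGamma(M,\B\mathfrak g_P)\simeq\B\Omega^0(M,\mathfrak g_P)$ in three lines. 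Your route instead identifies the formal completion $\Bun_G(M)^\wedge_P$ with the Maurer--Cartan functor of the \v{C}ech $L_\infty$-algebra of $\mathfrak g_P$, collapses it by fineness of the sheaf of smooth sections, and transports back through $\mathrm{Disc}^{\mathrm{rel}}$. What your approach buys is an explicit deformation-theoretic description (exponential coordinates, Deligne groupoid) that generalises to situations where no clean formal-disk-bundle computation is available; what the paper's approach buys is that it bypasses exactly the two points you flag as delicate --- the compatibility of the exponential parametrisation with genuinely derived test objects, and the compatibility of the soft-resolution argument with the left Kan extension $\mathrm{Disc}^{\mathrm{rel}}$ --- since everything is done by universal properties of pullbacks inside $\mathbf{dFSmoothStack}$ itself.

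Two small repairs to your write-up. First, the quasi-isomorphism goes the other way: it is the inclusion $\Omega^0(M,\mathfrak g_P)\hookrightarrow\check C^\bullet(M,\mathfrak g_P)$ of global sections into \v{C}ech degree zero that is a quasi-isomorphism (there is no natural map of complexes out of the \v{C}ech complex onto global sections). Second, the \v{C}ech complex of a sheaf of Lie algebras does not carry a strict dg-Lie bracket on the nose; you need either the Thom--Whitney/homotopy-limit model or an explicit $L_\infty$-structure before invoking homotopy invariance of $\mathrm{MC}$. Neither point is fatal, but both should be stated if you keep this route.
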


\begin{proof}
Let $M\in\Mfd$ be a smooth manifold and $X$ any formal smooth stack. The formal disk of the mapping stack $[M,X]$ at the point $f:M\rightarrow X$ is defined by the pullback $\bbD_{[M,X],f}=\ast\times_{\Im[M,X]}[M,X]$.
Consider now the pullback $f^\ast T^\infty X \simeq M\times_{\Im(X)}X$ of the formal disk bundle of $X$ along the map $f$. Let $\bfGamma(M,E)$ denote the formal smooth stack of section of a bundle $E$ on $M$.
One can notice that we have an equivalence of formal smooth stacks $\bbD_{[M,X],f}\simeq\bfGamma(M,f^\ast T^\infty X)$.
Let us now consider our case of interest, $\Bun_G(M)=[M,\B G]$. Since the moduli stack of $G$-bundles is of the form $\B G \simeq [\,\ast\,/G]$, we have the formal disk bundle $T^\infty\B G \simeq [\,\ast\,/G\ltimes_{\mathrm{ad}}\mathfrak{g}]$.
Given $P:M\rightarrow\B G$, we have the pullback $P^\ast T^\infty\B G \simeq \B\mathfrak{g}_P$.
Therefore, we have the equivalence of formal smooth stacks
$\bbD_{\mathrm{Bun}_G(M),P} \simeq \bfGamma(M,P^\ast T^\infty \B G) \simeq \B\Omega^0(M,\mathfrak{g}_P)$.
\end{proof}

Recall that the infinitesimal automorphisms -- i.e. gauge transformations -- of a principal $G$-bundle $P\twoheadrightarrow M$ are indeed known to be given by sections $\Omega^0(M,\mathfrak{g}_P)$ of its adjoint bundle (see e.g. \cite{Alf19, Alfonsi:2021ymc} for a higher geometric point of view).

The next step will be to consider infinitesimal deformations of $\Bun_G^\nabla(M)$, which is the configuration space of a gauge theory with gauge group $G$ on spacetime $M$.
As we have seen in the derived case in definition \ref{def:formal_disk_bundle}, we can construct the formal disk bundle of the formal smooth stack $\Bun_G^\nabla(M)$ of $G$-bundles with connection on $M$ by the following pullback square
\begin{equation}
    \begin{tikzcd}[row sep={22.5ex,between origins}, column sep={25.0ex,between origins}]
    T^\infty\Bun_G^\nabla(M) \arrow[r]\arrow[d, ""'] & \Bun_G^\nabla(M) \arrow[d, "{\mathfrak{i}_{\Bun_G^\nabla(M)}}"] \\
    \Bun_G^\nabla(M) \arrow[r, "{\mathfrak{i}_{\Bun_G^{\nabla}(M)}}"] & \Im\big( \Bun_G^\nabla(M) \big) .
    \end{tikzcd}
\end{equation}
Recall that the fibre of the formal disk bundle of a formal smooth stack at a point is the formal disk at such a point.
Then, the fibre of the formal disk bundle $T^\infty\Bun_G^\nabla(M)$ at a fixed principal $G$-bundle with connection $(P,\nabla_{\!A}):\ast\rightarrow\Bun_G^\nabla(M)$ is given by the following formal smooth stack
\begin{equation}
    \bbD_{\Bun_G^\nabla(M), {(P,\nabla_{\!A})}} \;\simeq\; \B \big(\overrightarrow{\mathfrak{Brst}}(M)_{(P,\nabla_{\!A})}\big), 
\end{equation}
where $\overrightarrow{\mathfrak{Brst}}(M)_{(P,\nabla_{\!A})}$ is a local $L_\infty$-algebra whose underlying graded vector space is given by
\begin{equation*}
\begin{aligned}
    \overrightarrow{\mathfrak{Brst}}(M)_{(P,\nabla_{\!A})_{\!}}[1] \;\,=\,\; \,&\Big( \!\!\begin{tikzcd}[row sep={14.5ex,between origins}, column sep= 7ex]
    \Omega^0(M,\mathfrak{g}_P) \arrow[r, "\nabla_{\! A}"] & \Omega^1(M,\mathfrak{g}_P)
    \end{tikzcd} \!\! \Big).\\
    \!\!{\scriptstyle\text{deg}\,=} &\quad \;\;\begin{tikzcd}[row sep={14.5ex,between origins}, column sep= 6ex]
    \;{\scriptstyle -1} && \quad \;\;\;\;\;{\scriptstyle 0} 
    \end{tikzcd}  
\end{aligned} 
\end{equation*}
Notice that it depends on the point $(P,\nabla_{\!A})\in \Bun_G^\nabla(M)$.
Such an $L_\infty$-algebra controls the infinitesimal deformations $\nabla_{\!A}+\vec{A}$ of the fixed connection, together with infinitesimal gauge transformations for the deformed connection.
So, its $L_\infty$-bracket structure is given as follows:
\begin{equation}
    \begin{aligned}
        \ell_1(\vec{c}) \;&=\; \nabla_{\! A} \vec{c}, \\
        \ell_2(\vec{c}_1,\vec{c}_2) \;&=\; [\vec{c}_1,\vec{c}_2]_\mathfrak{g},\\
        \ell_2(\vec{c},\vec{A}) \;&=\; [\vec{c},\vec{A}]_\mathfrak{g},
    \end{aligned}
\end{equation}
for any $\vec{c}_k\in\Omega^0(M,\mathfrak{g}_P)$ and $\vec{A}\in\Omega^1(M,\mathfrak{g}_P)$ elements of the underlying graded vector space.

\begin{remark}[Formal disk bundle as $L_\infty$-algebroid]
Notice that, by construction, the formal disk $\bbD_{\Bun_G^\nabla(M), {(P,\nabla_{\!A})}}$ is indeed an infinitesimal object. In fact, we have that there is a natural equivalence $\Re\big(\bbD_{\Bun_G^\nabla(M), {(P,\nabla_{\!A})}}\big)\simeq \ast$ of its reduction to the point. More generally, we have a natural equivalence $\Re\big(T^\infty\Bun_G^\nabla(M)\big)\simeq \Bun_G^\nabla(M)$ of the reduction of the formal disk bundle of the smooth stack of $G$-bundles on $M$ to itself.
Let us stress the fact that the map $T^\infty\Bun_G^\nabla(M) \longrightarrow \Bun_G^\nabla(M)$ is a bundle of formal smooth stacks, whose base is not an ordinary manifold but the smooth stack $\Bun_G^\nabla(M)$ of principal $G$-bundles on $M$ with connection.
Moreover, as we have seen in subsection \ref{subsec:Loo_algebroids}, the formal smooth stack $T^\infty\Bun_G^\nabla(M)$ comes with a natural structure of smooth algebroid (i.e. of infinitesimal smooth groupoid) provided by the canonical effective epimorphism $\mathfrak{i}_{\Bun_G^\nabla(M)}:\Bun_G^\nabla(M)\longrightarrow \Im(\Bun_G^\nabla(M))$ to its de Rham space.
\end{remark}

\begin{remark}[Morphism forgetting the connection]
Recall from the beginning of this subsection that the formal disk in the formal smooth stack $\Bun_G(M)$ of principal $G$-bundles at a $P\in\Bun_G(M)$ is precisely given by the quotient stack
\begin{equation}
    \bbD_{\Bun_G(M),P} \;\simeq\; \B\Omega^0(M,\mathfrak{g}_P),
\end{equation}
where $\Omega^0(M,\mathfrak{g}_P)$ is the Lie algebra of $\mathfrak{g}_P$-valued $0$-forms.
Thus, we can notice that there exists a forgetful map of formal smooth stacks
\begin{equation}
    \bbD_{\Bun_G^\nabla(M), {(P,\nabla_{\!A})}} \;\xtwoheadrightarrow{\;\;\;\,\mathsf{F}\,\;\;\; } \bbD_{\Bun_G(M),P}
\end{equation}
which forgets the deformation of the connection data.
\end{remark}

\begin{remark}[Formal disk bundle in \v{C}ech data]
We can explicitly express the formal smooth stack $T^\infty\Bun_G^\nabla(M)$ in \v{C}ech data as follows. First, let us fix a good open cover $\bigsqcup_\alpha \!V_\alpha \twoheadrightarrow M$ for the base manifold $M$. Then, for any formal smooth manifold $U\in\mathsf{F}\Mfd$ equivalent to a formal Cartesian space $U\simeq \bbR^n\times \Spec W$ we can write by the $2$-coskeletal simplicial set of sections:
\begin{equation*}
    \Hom\big(U,\,T^\infty\Bun_G^\nabla(M)\big) \,\simeq\, \mathrm{cosk}_{2\!\!}\left(\! \begin{tikzcd}[row sep={22.5ex,between origins}, column sep={14.0ex}]
    Z_2 \, \arrow[rr, yshift=3.6ex , "{\big(c_\alpha,\vec{c}_\alpha,\,\begin{smallmatrix}g_{\alpha\beta},A_\alpha,\vec{A}_\alpha\\g_{\alpha\beta}',A_\alpha',\vec{A}_\alpha'\end{smallmatrix}\big)}"] \arrow[rr, "{\big(c_\alpha',\vec{c}_\alpha',\,\begin{smallmatrix}g_{\alpha\beta}',A_\alpha',\vec{A}_\alpha'\\g_{\alpha\beta}'',A_\alpha'',\vec{A}_\alpha''\end{smallmatrix}\big)}" description] \arrow[rr, yshift=-3.6ex, "{\big(c_\alpha'\cdot c_\alpha,\,\vec{c}_\alpha'+\vec{c}_\alpha,\,\begin{smallmatrix}g_{\alpha\beta},A_\alpha,\vec{A}_\alpha\\g_{\alpha\beta}'',A_\alpha'',\vec{A}''_\alpha\end{smallmatrix}\big)}"'] && \, Z_1 \, \arrow[r, yshift=1.8ex , "{(g_{\alpha\beta},A_\alpha,\vec{A}_\alpha)}"] \arrow[r, yshift=-1.8ex, "{(g_{\alpha\beta}',A_\alpha',\vec{A}'_\alpha)}"'] & \, Z_0
    \end{tikzcd}\!\right)\!,
\end{equation*}
where the sets of $0$-, $1$- and $2$-simplices are given respectively by
\begin{equation*}
\begin{aligned}
        \hspace{-1.75cm} Z_0 \,&=\, \left\{ \begin{array}{ll}
          g_{\alpha\beta} &\!\!\!\!\in\Coo(V_\alpha\cap V_\beta \times U ,G) \\[0.9ex]
         A_\alpha &\!\!\!\!\in\Omega^1_\mathrm{ver}(V_\alpha  \times U ,\mathfrak{g}) \\[1.0ex]
         \vec{A}_\alpha &\!\!\!\!\in\Omega^{1}_\mathrm{ver}(V_\alpha\times\bbR^n,\mathfrak{g})\otimes\mathfrak{m}_W
        \end{array}  \;\left|\; \begin{array}{l} 
        g_{\alpha\beta}\cdot g_{\beta\gamma}\cdot g_{\gamma\alpha} = 1 \\[0.9ex]
        A_{\alpha}=g_{\beta\alpha}^{-1}(A_{\beta}+\di)g_{\beta\alpha} \\[1.0ex]
        \vec{A}_\alpha = g_{\beta\alpha}^{-1}\vec{A}_\beta g_{\beta\alpha} \\[1.0ex] 
        \end{array} \right.\right\},
\end{aligned}
\end{equation*}
\begin{equation*}
\begin{aligned}
         \hspace{-0.75cm} Z_1 \,&=\, \left\{ \begin{array}{ll}
         c_\alpha &\!\!\!\!\in\Coo(V_\alpha  \times U ,G) \\[0.5ex]
         \vec{c}_\alpha &\!\!\!\!\in\Omega^{0}(V_\alpha\times\bbR^n,\mathfrak{g})\otimes\mathfrak{m}_W \\[1.5ex]
         g_{\alpha\beta},g'_{\alpha\beta}  &\!\!\!\!\in\Coo(V_\alpha\cap V_\beta  \times U ,G) \\[0.5ex]
         A_\alpha, A_\alpha' &\!\!\!\!\in\Omega^1_\mathrm{ver}(V_\alpha  \times U ,\mathfrak{g}) \\[0.5ex]
         \vec{A}_\alpha, \vec{A}_\alpha' &\!\!\!\!\in\Omega^{1}_\mathrm{ver}(V_\alpha\times\bbR^n,\mathfrak{g})\otimes\mathfrak{m}_W\end{array}  \;\left|\; \begin{array}{ll} 
         g_{\alpha\beta}&\!\!\!\!\!\!\cdot\, g_{\beta\gamma}\cdot g_{\gamma\alpha} = 1 \\[0.5ex]
        A_{\alpha}&\!\!\!\!\!\!=\,g_{\beta\alpha}^{-1}(A_{\beta}+\di)g_{\beta\alpha}  \\[0.5ex]
        \vec{A}_\alpha &\!\!\!\!\!\!=\, g_{\beta\alpha}^{-1}\vec{A}_\beta g_{\beta\alpha} \\[2.2ex]
        g_{\alpha\beta}'&\!\!\!\!\cdot\, g_{\beta\gamma}'\cdot g_{\gamma\alpha}' = 1 \\[0.5ex]
        A_{\alpha}'&\!\!\!\!=\,g_{\beta\alpha}^{\prime -1}(A_{\beta}'+\di)g_{\beta\alpha}'  \\[0.5ex]
        \vec{A}'_\alpha &\!\!\!\!\!\!=\, g_{\beta\alpha}^{\prime -1}\vec{A}'_\beta g_{\beta\alpha}' \\[2.2ex]
         g_{\alpha\beta}' &\!\!\!\!=\, c_\beta^{-1}g_{\alpha\beta}c_\alpha \\[0.5ex]
        A_{\alpha}' &\!\!\!\!=\, c_\alpha^{-1} (A_\alpha+\di) c_\alpha \\[0.5ex]
        \vec{A}_{\alpha}' &\!\!\!\!=\, \vec{A}_\alpha + \nabla_{\!A_\alpha\!}\vec{c}_\alpha \\[0.5ex]
        \vec{c}_{\alpha} &\!\!\!\!=\, g_{\beta\alpha}^{-1}\vec{c}_\beta g_{\beta\alpha}
        \end{array} \right.\right\},
\end{aligned}
\end{equation*}
\begin{equation*}
\begin{aligned}
        Z_2 \,&=\, \left\{ \begin{array}{ll}
         c_\alpha,c_\alpha' &\!\!\!\!\in\Coo(V_\alpha \times U ,G)  \\[0.5ex]
         \vec{c}_\alpha,\vec{c}_\alpha' &\!\!\!\!\in\Omega^{0}(V_\alpha\times\bbR^n,\mathfrak{g})\otimes\mathfrak{m}_W   \\[1.5ex]
         g_{\alpha\beta},g'_{\alpha\beta},g''_{\alpha\beta}  &\!\!\!\!\in\Coo(V_\alpha\cap V_\beta  \times U ,G) \\[0.5ex]
         A_\alpha, A_\alpha', A_\alpha'' &\!\!\!\!\in\Omega^1_{\mathrm{ver}}(V_\alpha \times U ,\mathfrak{g})\\[0.5ex]
         \vec{A}_\alpha, \vec{A}_\alpha', \vec{A}_\alpha'' &\!\!\!\!\in\Omega^{1}_\mathrm{ver}(V_\alpha\times\bbR^n,\mathfrak{g})\otimes\mathfrak{m}_W \end{array}  \;\left|\; \begin{array}{ll} 
         g_{\alpha\beta}&\!\!\!\!\cdot\, g_{\beta\gamma}\cdot g_{\gamma\alpha} = 1 \\[0.5ex]
        A_{\alpha}&\!\!\!\!=\,g_{\beta\alpha}^{-1}(A_{\beta}+\di)g_{\beta\alpha} \\[0.5ex]
        \vec{A}_\alpha &\!\!\!\!\!\!=\, g_{\beta\alpha}^{-1}\vec{A}_\beta g_{\beta\alpha} \\[2.2ex]
        g_{\alpha\beta}'&\!\!\!\!\cdot\, g_{\beta\gamma}'\cdot g_{\gamma\alpha}' = 1 \\[0.5ex]
        A_{\alpha}'&\!\!\!\!=\,g_{\beta\alpha}^{\prime -1}(A_{\beta}'+\di)g_{\beta\alpha}'  \\[0.5ex]
        \vec{A}'_\alpha &\!\!\!\!\!\!=\, g_{\beta\alpha}^{\prime -1}\vec{A}'_\beta g_{\beta\alpha}'\\[2.2ex]
        g_{\alpha\beta}''&\!\!\!\!\cdot\, g_{\beta\gamma}''\cdot g_{\gamma\alpha}'' = 1 \\[0.5ex]
        A_{\alpha}''&\!\!\!\!=\,g_{\beta\alpha}^{\prime -1}(A_{\beta}''+\di)g_{\beta\alpha}'' \\[0.5ex]
        \vec{A}''_\alpha &\!\!\!\!\!\!=\, g_{\beta\alpha}^{\prime\prime -1}\vec{A}''_\beta g_{\beta\alpha}'' \\[2.2ex]
         g_{\alpha\beta}' &\!\!\!\!=\, c_\beta^{-1}g_{\alpha\beta}c_\alpha \\[0.5ex]
        A_{\alpha}' &\!\!\!\!=\, c_\alpha^{-1} (A_\alpha+\di) c_\alpha  \\[0.5ex]
        \vec{A}_{\alpha}' &\!\!\!\!=\, \vec{A}_\alpha + \nabla_{\!A_\alpha\!}\vec{c}_\alpha      \\[0.5ex]
         \vec{c}_{\alpha} &\!\!\!\!=\, g_{\beta\alpha}^{-1}\vec{c}_\beta g_{\beta\alpha}    \\[0.5ex]
             g_{\alpha\beta}'' &\!\!\!\!=\, c_\beta^{\prime -1}g_{\alpha\beta}'c_\alpha' \\[0.5ex]
        A_{\alpha}'' &\!\!\!\!=\, c_\alpha^{\prime -1} (A_\alpha'+\di) c_\alpha'\\[0.5ex]
        \vec{A}_{\alpha}'' &\!\!\!\!=\, \vec{A}_\alpha' + \nabla_{\!A_\alpha'\!}\vec{c}_\alpha'   \\[0.5ex]
        \vec{c}_{\alpha}' &\!\!\!\!=\, g_{\beta\alpha}^{\prime-1}\vec{c}_\beta' g_{\prime\beta\alpha}
        \end{array} \right.\right\}.
\end{aligned}
\end{equation*}
Finally, for any general formal smooth manifold $U\in\mathsf{F}\Mfd$ we consider a good open cover $\bigsqcup_{i\in I\!}U_i\rightarrow U$ for it, so that all the overlaps $U_{i_1,\dots,i_n}$ are isomorphic to thickened Cartesian spaces.
Thus, we define the simplicial set of sections at $U$ to be the homotopy limit
\begin{equation}
    \Hom\big(U,\,T^\infty\Bun_G^\nabla(M)\big) \,\,\simeq\,\; \bfR\!\lim_{\!\!\!\!\!\!\![n]\in\Delta}\!\!\!\prod_{\;\;i_1,\dots,i_n\in I}\!\!\!\!\Hom\big(U_{i_1,\dots,i_n},\,T^\infty\Bun_G^\nabla(M)\big).
\end{equation}
\end{remark}

\subsubsection{Global Yang-Mills theory}

In the previous subsection, we constructed the smooth stack $\Bun_G^\nabla(M)$ which provides a global-geometric formulation of the configuration space of a gauge field with gauge Lie group $G$ on a spacetime $M$. 
In this subsection, we will proceed with the construction of the derived critical locus $\bfR\crit{S}(M)\rightarrow \Bun_G^\nabla(M)$ of the Yang-Mills action $S$ as a formal derived smooth stack in the context of derived differential geometry. Finally, we will show that such a geometric object provides a global-geometric version of usual BV-BRST theory.

\begin{construction}[Stack of densities]
We take spacetime to be an oriented $d$-dimensional smooth manifold $M$ equipped with a (pseudo-)Riemannian metric.
We construct the quotient stack $\mathbf{Dens}_M \coloneqq[\bfOmega^{d}(M)/\bfOmega^{d-1}(M)]$ of top forms $\mu$ on $M$, with the action $\mu\mapsto \mu + \di_\dR\lambda$ of $(d-1)$-forms $\lambda$. 
Notice that the connected components $\pi_0\mathbf{Dens}_M$ are classes of top forms up to total derivative.
\end{construction}

\begin{construction}[Yang-Mills action functional]
The datum of the Yang-Mills action functional is equivalently a morphism of formal smooth stacks given by
\begin{equation}\label{eq:YMAction}
\begin{aligned}
       \breve{S} \,:\; \Bun_G^\nabla(M) \;&\longrightarrow\; \mathbf{Dens}_M \\[0.2ex]
        (g_{\alpha\beta},\, A_\alpha) \;&\longmapsto\;  \frac{1}{2}\langle F_{A} \,\overset{\wedge}{,}\, \star_{\!\!} F_{A} \rangle_\mathfrak{g}
\end{aligned}
\end{equation}
where we called $F_A = \nabla_{\!A_\alpha}A_\alpha$ the curvature of the principal $G$-bundle with connection $(g_{\alpha\beta},\, A_\alpha)$.
\end{construction}

Now, the question becomes how can we encode the variational derivative of the Yang-Mills action functional or, in other words, the Euler-Lagrange equations of motion.
In analogy with the case of scalar field theory, we will construct a restricted cotangent bundle $T^\vee_\mathrm{res} \Bun_G^\nabla(M)$ such that the variational derivative can be formalised as its section
$\delta S:  \Bun_G^\nabla(M) \longrightarrow T^\vee_\mathrm{res} \Bun_G^\nabla(M)$.\vspace{0.15cm}

\begin{construction}[Fibre of the restricted cotangent bundle]
Notice that the Killing form $\langle -,- \rangle_\mathfrak{g}$ on the Lie algebra $\mathfrak{g}$ induces a natural pairing between $\mathfrak{g}_P$-valued differential forms $\langle -\,\overset{\wedge}{,}\,- \rangle_\mathfrak{g} : \Omega^{d-p}(M,\mathfrak{g}_P)\times\Omega^{p}(M,\mathfrak{g}_P)  \longrightarrow \Omega^{d}(M)$, where $d\coloneqq\mathrm{dim} M$ is the dimension of the base manifold and $\mathfrak{g}_P$ is the adjoint bundle of a principal $G$-bundle $P\twoheadrightarrow M$.
We want to use this fact to induce a well-defined morphism of formal derived smooth stacks of the form
\begin{equation}\label{eq:pairing_stacks}
    \begin{aligned}
    \langle -\,\overset{\wedge}{,}\,- \rangle_\mathfrak{g} \;:\;\, \mathcal{F}_{(P,\nabla_{\!A})} \times \bbD_{\Bun_G^\nabla(M),(P,\nabla_{\!A})} \;&\longrightarrow\;\mathbf{Dens}_M,
    \end{aligned}
\end{equation}
where $\mathcal{F}_{(P,\nabla_{\!A})}$ is a suitable formal derived smooth stack which we must construct. Let us define a formal derived smooth set by the derived kernel
\begin{equation}
    \mathcal{F}_{(P,\nabla_{\!A})} \;\coloneqq\; \bfR\!\ker\!\Big(\nabla_{\!A}:\bfOmega^{d-1}(M,\mathfrak{g}_P)\rightarrow\bfOmega^{d}(M,\mathfrak{g}_P)\Big),
\end{equation}
for any fixed principal $G$-bundle with connection $(P,\nabla_{\!A})\in\Bun_G^\nabla(M)$. A section is given by a $(d-1)$-form $\tilde{A}$ together with a homotopy $\tilde{c}$ from $\nabla_{\!A}\tilde{A}$ to $0$. 
The natural morphism \eqref{eq:pairing_stacks} is the constructed by sending $0$-simplices $(\tilde{A},\vec{A})$ to the density $\langle \tilde{A}\,\overset{\wedge}{,}\,\vec{A} \rangle_\mathfrak{g}$. This assignment is invariant up to total derivative, in fact an infinitesimal gauge transformation $\vec{A}\mapsto \vec{A}  + \nabla_{\!A}\vec{c}$ is sent to the $1$-simplex $\langle \tilde{A}\,\overset{\wedge}{,}\,\vec{A} \rangle_\mathfrak{g} \mapsto \langle \tilde{A}\,\overset{\wedge}{,}\,\vec{A} \rangle_\mathfrak{g} + \di_\dR\langle\tilde{A}\,\overset{\wedge}{,}\,\vec{c}\, \rangle_\mathfrak{g}$ in $\mathbf{Dens}_M$ for any $\vec{A}\in\bbD_{\Bun_G^\nabla(M),(P,\nabla_{\!A})}(U)$ and $\tilde{A}\in\mathcal{F}_{(P,\nabla_{\!A})}(U)$.
This is because the term $\langle \nabla_{\!A}\tilde{A}\,\,\overset{\wedge}{,}\,\vec{c}\, \rangle_\mathfrak{g}$ is homotopic to $0$.
Thus, the natural morphism \eqref{eq:pairing_stacks} is well-defined.
A reasonable definition of restricted cotangent bundle must be such that its fibre at the point $(P,\nabla_A)\in\Bun_G^\nabla(M)$ is the formal derived smooth set $\mathcal{F}_{(P,\nabla_{\!A})}$.
\end{construction}\vspace{-0.05cm}

We have a fibre-wise construction of a formal derived smooth stack $T_\mathrm{res}^\vee \Bun_G^\nabla(M)$, which we will call \textit{restricted cotangent bundle}, in analogy with scalar field theory above.
Therefore, by construction, there will be the natural pairing
\begin{equation*}
    \begin{aligned}
    \langle -\,\overset{\wedge}{,}\,- \rangle_\mathfrak{g} \;:\; T_\mathrm{res}^\vee \Bun_G^\nabla(M) \times_{\Bun_G^\nabla(M)} T^\infty\Bun_G^\nabla(M) \,&\longrightarrow\,\Bun_G^\nabla(M)\times\mathbf{Dens}_M.
    \end{aligned}
\end{equation*}

In the rest of this section, we will deploy the compact notation $f'\xleftarrow{\,f_1\,}f$ to denote a $1$-simplex $f_1$ whose boundaries are $\partial_0f_1= f$ and $\partial_1f_1=f'$, and similarly for higher simplices.
(This is the notation we used in Example \ref{derived stack warm up}, which the reader may find helpful to recall at this point.) \vspace{0.15cm}

\begin{construction}[Restricted cotangent bundle]\label{rem:res_cot_bundle}
Let us provide a concrete construction of the restricted cotangent bundle $T^\vee_\mathrm{res}{\Bun_G^\nabla(M)}$ in terms of \v{C}ech data. 
Such a construction is not the easiest, so our strategy will be the following:
first, we will define a pre-stack $T^\vee_\mathrm{res}{\Bun_G^\nabla(M)}^\mathrm{pre}$ which -- roughly speaking -- approximates the wanted formal derived smooth stack by encoding its local sections; then, we will stackify it. This means gluing local sections of the pre-stack in a way that is compatible with the descent condition on formal derived smooth manifolds.
To keep our notation consistent with the ordinary case, given an ordinary smooth manifold $V$, let us define the following simplicial sets:
\begin{equation*}
    \Coo(V\times U,\, G) \,\coloneqq\, \bfR\Hom(U,[V,G]), \qquad \Omega^p_\mathrm{ver}(V\times U,\mathfrak{g}) \,\coloneqq\, \bfR\Hom(U,\,\bfOmega^p(V,\mathfrak{g})),
\end{equation*}
for any formal derived smooth manifold $U$.\newpage
Now, the simplicial set of sections of $T^\vee_\mathrm{res}{\Bun_G^\nabla(M)}^\mathrm{pre}$ on any formal derived smooth manifold $U\in\mathbf{dFMfd}$ in our $(\infty,1)$-site is of the following form:
\begin{equation*}
\begin{aligned}
    &\bfR\Hom\big(U,\,T^\vee_\mathrm{res}{\Bun_G^\nabla(M)}^\mathrm{pre}\big) \;\simeq\; \\[1.0ex]
    &\simeq\;\left(\! \begin{tikzcd}[row sep={25.5ex,between origins}, column sep={15.5ex}]
    \cdots\, \arrow[r, yshift=1.8ex]\arrow[r, yshift=5.4ex]\arrow[r, yshift=-5.4ex]\arrow[r, yshift=-1.8ex] &\, Z_2 \, \arrow[rr, yshift=3.6ex , "{\big(c_\alpha,\,g_{1\!,\!\!\;\alpha\beta},h_{1\!,\!\!\;\alpha},\,\begin{smallmatrix}g_{\alpha\beta},A_\alpha,\tilde{A}_\alpha,\tilde{c}_\alpha\\g_{\alpha\beta}',A_\alpha',\tilde{A}_\alpha',\tilde{c}_\alpha'\end{smallmatrix}\big)}"] \arrow[rr, "{\big(c_\alpha',\,g_{1\!,\!\!\;\alpha\beta}',h_{1\!,\!\!\;\alpha}',\,\begin{smallmatrix}g_{\alpha\beta}',A_\alpha',\tilde{A}_\alpha',\tilde{c}_\alpha'\\g_{\alpha\beta}'',A_\alpha'',\tilde{A}_\alpha'',\tilde{c}_\alpha''\end{smallmatrix}\big)}" description] \arrow[rr, yshift=-3.6ex, "{\big(c_\alpha'',\,g_{1\!,\!\!\;\alpha\beta}'',h_{1\!,\!\!\;\alpha}'',\,\begin{smallmatrix}g_{\alpha\beta},A_\alpha,\tilde{A}_\alpha,\tilde{c}_\alpha\\g_{\alpha\beta}'',A_\alpha'',\tilde{A}''_\alpha,\tilde{c}_\alpha''\end{smallmatrix}\big)}"'] && \, Z_1 \, \arrow[r, yshift=1.8ex , "{(g_{\alpha\beta},A_\alpha,\tilde{A}_\alpha,\tilde{c}_\alpha)}"] \arrow[r, yshift=-1.8ex, "{(g_{\alpha\beta}',A_\alpha',\tilde{A}'_\alpha,\tilde{c}_\alpha')}"'] & \, Z_0
    \end{tikzcd}\!\right),
\end{aligned}
\end{equation*}
where, for simplicity, we packed together the data $h_{1\!,\!\!\;\alpha} \coloneqq (A_{1\!,\!\!\;\alpha},\,\tilde{A}_{1\!,\!\!\;\alpha},\,\tilde{c}_{1\!,\!\!\;\alpha})$ and where the sets of $0$- and $1$-simplices are respectively given by
\begin{equation*}
\begin{aligned}
        \hspace{-1.60cm}Z_0 \,&=\, \left\{ \begin{array}{ll}
          g_{\alpha\beta} &\!\!\!\!\in\Coo(V_\alpha\cap V_\beta \times U ,G)_0 \\[1.0ex]
         A_\alpha &\!\!\!\!\in\Omega^1_\mathrm{ver}(V_\alpha \times U,\mathfrak{g})_0 \\[1.0ex]
        \tilde{A}_\alpha &\!\!\!\!\in \Omega^{d-1}_\mathrm{ver}(V_\alpha \times U,\mathfrak{g})_0 \\[1.0ex]
        \tilde{c}_\alpha &\!\!\!\!\in \Omega^{d}_\mathrm{ver}(V_\alpha \times U,\mathfrak{g})_1
        \end{array}  \;\left|\; \begin{array}{l} 
        g_{\alpha\beta}\cdot g_{\beta\gamma}\cdot g_{\gamma\alpha} = 1 \\[0.9ex]
        A_{\alpha}=g_{\beta\alpha}^{-1}(A_{\beta}+\di)g_{\beta\alpha} \\[1.0ex]
        \tilde{A}_\alpha = g_{\beta\alpha}^{-1}\tilde{A}_\beta g_{\beta\alpha} \\[1.0ex] 
        \tilde{c}_\alpha = g_{\beta\alpha}^{-1}\tilde{c}_\beta g_{\beta\alpha} \\[1.0ex] 
        0\xleftarrow{\,\tilde{c}_\alpha\,}\nabla_{\!A_\alpha\!}\tilde{A}_\alpha\end{array} \right.\right\},
\end{aligned}
\end{equation*}
\begin{equation*}
\begin{aligned}
       Z_1 &=\,\left\{ \begin{array}{ll}
          c_{\alpha} &\!\!\!\!\!\in\Coo(V_\alpha \times U,G)_0 \\[2.2ex]
        g_{\alpha\beta},  g_{\alpha\beta}' &\!\!\!\!\!\in\Coo(V_\alpha\cap V_\beta \times U,G)_0 \\[1.0ex]
        A_{\alpha}, A_{\alpha}' &\!\!\!\!\!\in\Omega^1_\mathrm{ver}(V_\alpha  \times U,\mathfrak{g})_0 \\[1.0ex]
        \tilde{A}_{\alpha},  \tilde{A}^{\prime}_{\alpha} &\!\!\!\!\!\in \Omega^{d-1}_\mathrm{ver}(V_\alpha \times U,\mathfrak{g})_0 \\[1.0ex]
        \tilde{c}_{\alpha}, \tilde{c}^{\prime}_{\alpha} &\!\!\!\!\!\in \Omega^{d}_\mathrm{ver}(V_\alpha \times U,\mathfrak{g})_1  \\[2.2ex]
        g_{1,\alpha\beta} &\!\!\!\!\!\in\Coo(V_\alpha\cap V_\beta \times U,G)_1 \\[1.0ex]
        A_{1,\alpha} &\!\!\!\!\!\in\Omega^1_\mathrm{ver}(V_\alpha  \times U,\mathfrak{g})_1 \\[1.0ex]
        \tilde{A}_{1,\alpha} &\!\!\!\!\!\in \Omega^{d-1}_\mathrm{ver}(V_\alpha \times U,\mathfrak{g})_1 \\[1.0ex]
        \tilde{c}_{1,\alpha} &\!\!\!\!\!\in \Omega^{d}_\mathrm{ver}(V_\alpha \times U,\mathfrak{g})_2
        \end{array}  \left|\, \begin{array}{ll}
        g_{\alpha\beta}&\!\!\!\!\!\!\cdot\, g_{\beta\gamma}\cdot g_{\gamma\alpha} = 1 \\[0.8ex]
        A_{\alpha} &\!\!\!\!\!\!=g_{\beta\alpha}^{-1}(A_{\beta}+\di)g_{\beta\alpha} \\[1.0ex]
        \tilde{A}_\alpha &\!\!\!\!\!\!= g_{\beta\alpha}^{-1}\tilde{A}_\beta g_{\beta\alpha} \\[1.0ex] 
        \tilde{c}_\alpha &\!\!\!\!\!\!= g_{\beta\alpha}^{-1}\tilde{c}_\beta g_{\beta\alpha}  \\[0.6ex]
        0&\!\!\!\!\!\!\!\!\!\!\xleftarrow{\,\tilde{c}_\alpha\,}\nabla_{\!A_\alpha\!}\tilde{A}_\alpha \\[3.2ex]
        g_{\alpha\beta}'&\!\!\!\!\!\!\cdot\, g_{\beta\gamma}'\cdot g_{\gamma\alpha}' = 1 \\[0.8ex]
        A_{\alpha}' &\!\!\!\!\!\!=g_{\beta\alpha}^{\prime -1}(A_{\beta}'+\di)g_{\beta\alpha}' \\[1.0ex]
        \tilde{A}^{\prime}_\alpha &\!\!\!\!\!\!= g_{\beta\alpha}^{\prime -1}\tilde{A}^{\prime}_\beta g_{\beta\alpha}' \\[1.0ex] 
        \tilde{c}^{\prime}_\alpha &\!\!\!\!\!\!= g_{\beta\alpha}^{\prime -1}\tilde{c}^{\prime}_\beta g_{\beta\alpha}'  \\[0.6ex]
        0&\!\!\!\!\!\!\!\!\!\!\xleftarrow{\,\tilde{c}_\alpha^\prime\,}\nabla_{\!A_\alpha^\prime\!}\tilde{A}_\alpha^\prime \\[3.2ex]
        g_{\alpha\beta}' &\!\!\!\!\!\xleftarrow{\,g_{1\!,\alpha\beta}\,} c_\beta^{-1}g_{\alpha\beta}c_\alpha  \\[0.2ex]
        A_{\alpha}' &\!\!\!\!\!\xleftarrow{\,A_{1\!,\alpha}\,} c_\alpha^{-1} (A_\alpha+\di) c_\alpha  \\[0.4ex]
        \tilde{A}^{\prime}_\alpha &\!\!\!\!\!\xleftarrow{\,\tilde{A}_{1\!,\alpha}\,} c_{\alpha}^{-1\!}\tilde{A}_\alpha c_{\alpha} \\[0.8ex]
        &\!\!\!\!\!\!\!\!\!\!\!\!\!\!\!\!  

\tikzset{every picture/.style={line width=0.75pt}} 
\begin{tikzpicture}[x=0.75pt,y=0.75pt,yscale=-1,xscale=1]
\draw    (100.5,80.08) -- (127.56,38.25) ;
\draw [shift={(127.56,38.25)}, rotate = 302.89] [color={rgb, 255:red, 0; green, 0; blue, 0 }  ][fill={rgb, 255:red, 0; green, 0; blue, 0 }  ][line width=0.75]      (0, 0) circle [x radius= 2.01, y radius= 2.01]   ;
\draw [shift={(115.98,56.14)}, rotate = 122.89] [color={rgb, 255:red, 0; green, 0; blue, 0 }  ][line width=0.75]    (6.56,-2.94) .. controls (4.17,-1.38) and (1.99,-0.4) .. (0,0) .. controls (1.99,0.4) and (4.17,1.38) .. (6.56,2.94)   ;
\draw [shift={(100.5,80.08)}, rotate = 302.89] [color={rgb, 255:red, 0; green, 0; blue, 0 }  ][fill={rgb, 255:red, 0; green, 0; blue, 0 }  ][line width=0.75]      (0, 0) circle [x radius= 2.01, y radius= 2.01]   ;
\draw    (156.28,80.08) -- (127.56,38.25) ;
\draw [shift={(127.56,38.25)}, rotate = 235.53] [color={rgb, 255:red, 0; green, 0; blue, 0 }  ][fill={rgb, 255:red, 0; green, 0; blue, 0 }  ][line width=0.75]      (0, 0) circle [x radius= 2.01, y radius= 2.01]   ;
\draw [shift={(139.88,56.2)}, rotate = 55.53] [color={rgb, 255:red, 0; green, 0; blue, 0 }  ][line width=0.75]    (6.56,-2.94) .. controls (4.17,-1.38) and (1.99,-0.4) .. (0,0) .. controls (1.99,0.4) and (4.17,1.38) .. (6.56,2.94)   ;
\draw [shift={(156.28,80.08)}, rotate = 235.53] [color={rgb, 255:red, 0; green, 0; blue, 0 }  ][fill={rgb, 255:red, 0; green, 0; blue, 0 }  ][line width=0.75]      (0, 0) circle [x radius= 2.01, y radius= 2.01]   ;
\draw    (156.28,80.08) -- (100.5,80.08) ;
\draw [shift={(100.5,80.08)}, rotate = 180] [color={rgb, 255:red, 0; green, 0; blue, 0 }  ][fill={rgb, 255:red, 0; green, 0; blue, 0 }  ][line width=0.75]      (0, 0) circle [x radius= 2.01, y radius= 2.01]   ;
\draw [shift={(132.99,80.08)}, rotate = 180] [color={rgb, 255:red, 0; green, 0; blue, 0 }  ][line width=0.75]    (6.56,-2.94) .. controls (4.17,-1.38) and (1.99,-0.4) .. (0,0) .. controls (1.99,0.4) and (4.17,1.38) .. (6.56,2.94)   ;
\draw [shift={(156.28,80.08)}, rotate = 180] [color={rgb, 255:red, 0; green, 0; blue, 0 }  ][fill={rgb, 255:red, 0; green, 0; blue, 0 }  ][line width=0.75]      (0, 0) circle [x radius= 2.01, y radius= 2.01]   ;
\draw (122.5,23.4) node [anchor=north west][inner sep=0.75pt]  [font=\scriptsize]  {$0$};
\draw (28.5,79.4) node [anchor=north west][inner sep=0.75pt]  [font=\tiny]  {$c_{\alpha }^{-1} (\nabla _{\!A_{\scaleto{\alpha}{1.5pt} }\!}\tilde{A}_{\alpha }) c_{\alpha }$};
\draw (160.5,79.4) node [anchor=north west][inner sep=0.75pt]  [font=\tiny]  {$\nabla _{\!A'_{\scaleto{\alpha}{1.5pt} }\!}\tilde{A} '_{\alpha }$};
\draw (64,44.9) node [anchor=north west][inner sep=0.75pt]  [font=\scriptsize]  {$c_{\alpha }^{-1}\tilde{c}_{\alpha } c_{\alpha }$};
\draw (146,45.4) node [anchor=north west][inner sep=0.75pt]  [font=\scriptsize]  {$\tilde{c} '_{\alpha }$};
\draw (119,58.4) node [anchor=north west][inner sep=0.75pt]  [font=\tiny]  {$\tilde{c}_{1\!,\alpha }$};
\draw (105.5,83.9) node [anchor=north west][inner sep=0.75pt]  [font=\scriptsize]  {$\nabla _{\!A_{\scaleto{1\!,\alpha}{3pt} }\!}\tilde{A}_{1\!,\alpha }$};
\end{tikzpicture}

        \end{array} \right.\right\},
\end{aligned}
\end{equation*}
and where the higher simplices $\{Z_k\}_{k\geq 2}$ are given by compositions of gauge transformations, as before, but up to homotopies.
Then, we must stackify our prestack $T^\vee_\mathrm{res}{\Bun_G^\nabla(M)}^\mathrm{pre}$ to obtain a fully fledged formal derived smooth stack. \newpage
For any formal derived smooth manifold $U\in\mathbf{dFMfd}$ in our $(\infty,1)$-site, the simplicial set of sections of the restricted cotangent complex $T^\vee_\mathrm{res}{\Bun_G^\nabla(M)}$ on $U$ is given by the homotopy colimit
\begin{equation*}
    \bfR\Hom\big(U,\,T^\vee_\mathrm{res}\Bun_G^\nabla(M)\big) \;\,\simeq\,\; \bfL\mathrm{co}\!\lim_{\!\!\!\!\!\!\!\!\!\!\!H(U)} \,\bfR\!\lim_{\!\!\!\!\!\![n]\in\Delta}\, \bfR\Hom\big(H(U)_n,\,T^\vee_\mathrm{res}\Bun_G^\nabla(M)^\mathrm{pre}\big),
\end{equation*}
where the colimit is taken over all hypercovers $H(U)$ -- cf. Definition \ref{etale hypercover} -- which cover $U$. This stackification procedure is explained, for instance, in \cite[Section 6.5.3]{topos}
\end{construction}

Notice that a $0$-simplex in the space of sections above (which we can also call a $U$-point) is given, first, by the \v{C}ech-Deligne cocycle $(g_{\alpha\beta},A_{\alpha})$ of a $U$-parametrised family of principal $G$-bundles $(P,\nabla_{\!A})$ with connection and, second, by a $U$-parametrised family of differential forms $\tilde{A}\in\Omega^{d-1}_\mathrm{ver}(M\times U,\mathfrak{g}_P)_0$ and $\tilde{c}\in\Omega^{d}_\mathrm{ver}(M\times U,\mathfrak{g}_P)_1$ which are valued in the adjoint bundle of the aforementioned family of bundles. \vspace{0.1cm}

\begin{remark}[Derived-extension of the group action]
In the construction above we exploited the following facts.
As we said, given a smooth manifold $V$, there is a morphism of smooth sets
$\rho:[V,G]\times \bfOmega^1(V,\mathfrak{g})\rightarrow\bfOmega^1(V,\mathfrak{g})$ defined by $(c,A)\mapsto c^{-1}(A+\di)c$.
Then, we can embed such a morphism of smooth sets into a morphism $i\rho$ of formal derived smooth stacks by derived-extension from definition \ref{def:der_ext_fun}.
On a given formal derived smooth manifold $U$, we then have the morphism of simplicial sets $i\rho(U): \Coo(V\times U)\times \Omega^1_\mathrm{ver}(V\times U,\mathfrak{g})\rightarrow \Omega^1_\mathrm{ver}(V\times U,\mathfrak{g})$, with the same notation as above.
In complete analogy, we can derived-extend the group multiplication morphism $\cdot:[V,G]\times [V,G]\rightarrow[V,G]$, given by
$(c,c^\prime)\mapsto c\cdot c^\prime$, and the morphism encoding the adjoint action on differential forms
$\mathrm{Ad}:[V,G]\times \bfOmega^p(V,\mathfrak{g})\rightarrow\bfOmega^p(V,\mathfrak{g})$, given by $(c,\tilde{A})\mapsto c^{-1}\tilde{A}c$.
\end{remark}

Now that we have constructed restricted cotangent bundle, we can show that the Yang-Mills action functional $S$ induces a section $\delta S: \Bun_G^\nabla(M) \longrightarrow T^\vee_\mathrm{res}\Bun_G^\nabla(M)$ of it, which is going to encode its equations of motion.
In fact, as shown for example in \cite{JCMBLecGauge}, the first variation of the Yang-Mills action functional can be expressed in the form $\di_\dR S = \int_M\langle \delta S\,\overset{\wedge}{,}\,- \rangle_\mathfrak{g}$ where the variational derivative, which encodes the Yang-Mills equations, must be of the form $\delta S(P,\nabla_{\! A}) = \nabla_{\! A}\!\star\! F_{A}\in\Omega^{d-1}(M,\mathfrak{g}_P)$ at any bundle $(P,\nabla_{\! A})$. Let us now see that this can be indeed interpreted as a section of the restricted cotangent bundle. \vspace{0.15cm}

\begin{construction}[Variational derivative of the action functional]
The de Rham differential $\di_\dR S$ of the action functional gives rise to a morphism of formal derived smooth stacks, which we call variational derivative, given by
\begin{equation}\label{eq:DYMAction}
\begin{aligned}
     \delta S \,:\; \Bun_G^\nabla(M) \;&\longrightarrow\; T^\vee_\mathrm{res}\Bun_G^\nabla(M) \\[0.1ex]
     (g_{\alpha\beta},\, A_\alpha) \;&\longmapsto\; (g_{\alpha\beta},\, A_\alpha,\, \nabla_{\! A_\alpha\!}\!\star\! F_{A_\alpha},\, 0),
     \end{aligned}
\end{equation}
and the higher simplices are naturally embedded.
\end{construction}

Now, since we have a good definition of the variational derivative, we have all the ingredients we need to define the derived critical locus $\bfR\crit{S}(M)$ of the Yang-Mills action functional.
\newpage
\begin{definition}[Derived critical locus of Yang-Mills action functional]
We construct the \textit{derived critical locus of Yang-Mills action functional} by the formal derived smooth stack given by the following homotopy pullback square:
\begin{equation}
    \begin{tikzcd}[row sep={20ex,between origins}, column sep={22ex,between origins}]
    \bfR\crit{S}(M) \arrow[r]\arrow[d, ""'] & \Bun_G^\nabla(M) \arrow[d, "\delta S"] \\
    \Bun_G^\nabla(M) \arrow[r, "0"] & T^\vee_\mathrm{res}{\Bun_G^\nabla(M)} ,
\end{tikzcd}
\end{equation}
\end{definition}
where $\delta S$ is the morphism \eqref{eq:YMAction} constructed above and $0$ is the zero-section. \vspace{0.1cm}

\begin{remark}[Derived critical locus in \v{C}ech data]
Let us unravel the definition of the derived critical locus $\bfR\crit{S}(M)$ of the Yang-Mills action functional in terms of \v{C}ech data.
As in the previous example, our strategy to present the derived critical locus will be the following:
first, we will explicitly write a pre-stack $\bfR\crit{S}(M)^\mathrm{pre}$ which -- roughly speaking -- approximates the derived critical locus by encoding its local sections; then, we will stackify it. 
So, the simplicial set of sections of $\bfR\crit{S}(M)^\mathrm{pre}$ on any formal derived smooth manifold $U\in\mathbf{dFMfd}$ in our $(\infty,1)$-site is of the following form:\vspace{-0.2cm}

\begin{equation*}
\begin{aligned}
    &\bfR\Hom\big(U,\,\bfR\crit{S}(M)^\mathrm{pre}\big) \;\simeq\; \\[1.0ex]
    &\simeq\;\left(\! \begin{tikzcd}[row sep={25.5ex,between origins}, column sep={16.5ex}]
    \cdots\, \arrow[r, yshift=1.8ex]\arrow[r, yshift=5.4ex]\arrow[r, yshift=-5.4ex]\arrow[r, yshift=-1.8ex] &\, Z_2 \, \arrow[rr, yshift=3.6ex , "{\big(c_\alpha,\,g_{1\!,\!\!\;\alpha\beta},h_{1\!,\!\!\;\alpha},\,\begin{smallmatrix}g_{\alpha\beta},A_\alpha,{A}^+_\alpha,{c}^+_\alpha\\g_{\alpha\beta}',A_\alpha',{A}^{+\prime}_\alpha,{c}^{+\prime}_\alpha\end{smallmatrix}\big)}"] \arrow[rr, "{\big(c_\alpha',\,g_{1\!,\!\!\;\alpha\beta}',h_{1\!,\!\!\;\alpha}',\,\begin{smallmatrix}g_{\alpha\beta}',A_\alpha',{A}^{+\prime}_\alpha,c^{+\prime}_\alpha\\g_{\alpha\beta}'',A_\alpha'',{A}^{+\prime\prime}_\alpha,{c}^{+\prime\prime}_\alpha\end{smallmatrix}\big)}" description] \arrow[rr, yshift=-3.6ex, "{\big(c_\alpha'',\,g_{1\!,\!\!\;\alpha\beta}'',h_{1\!,\!\!\;\alpha}'',\,\begin{smallmatrix}g_{\alpha\beta},A_\alpha,{A}^+_\alpha,{c}^+_\alpha\\g_{\alpha\beta}'',A_\alpha'',A^{+\prime\prime}_\alpha,{c}^{+\prime\prime}_\alpha\end{smallmatrix}\big)}"'] && \, Z_1 \, \arrow[r, yshift=1.8ex , "{(g_{\alpha\beta},A_\alpha,{A}^+_\alpha,{c}^+_\alpha)}"] \arrow[r, yshift=-1.8ex, "{(g_{\alpha\beta}',A_\alpha',{A}^{+\prime}_\alpha,{c}^{+\prime}_\alpha)}"'] & \, Z_0
    \end{tikzcd}\!\right),
\end{aligned}
\end{equation*}
where, for simplicity, we packed again together $h_{1\!,\!\!\;\alpha}\coloneqq(A_{1\!,\!\!\;\alpha},\,A^+_{1\!,\!\!\;\alpha},\,c^+_{1\!,\!\!\;\alpha})$ and where the sets of $0$- and $1$-simplices are respectively given by:
\begin{equation*}
\begin{aligned}
       \hspace{-3.25cm}  Z_0 \,&=\, \left\{ \begin{array}{ll}
          g_{\alpha\beta} &\!\!\!\!\in\Coo(V_\alpha\cap V_\beta  \times U ,G)_0 \\[1.0ex]
         A_\alpha &\!\!\!\!\in\Omega^1_\mathrm{ver}(V_\alpha  \times U,\mathfrak{g})_0 \\[1.0ex]
        A^+_\alpha &\!\!\!\!\in \Omega^{d-1}_\mathrm{ver}(V_\alpha  \times U,\mathfrak{g})_1 \\[1.0ex]
        c^+_\alpha &\!\!\!\!\in \Omega^{d}_\mathrm{ver}(V_\alpha \times U,\mathfrak{g})_2
        \end{array}  \;\left|\; \begin{array}{l} 
        g_{\alpha\beta}\cdot g_{\beta\gamma}\cdot g_{\gamma\alpha} = 1 \\[0.9ex]
        A_{\alpha}=g_{\beta\alpha}^{-1}(A_{\beta}+\di)g_{\beta\alpha} \\[1.0ex]
        A^+_\alpha = g_{\beta\alpha}^{-1}A^+_\beta g_{\beta\alpha} \\[1.0ex] 
        c^+_\alpha = g_{\beta\alpha}^{-1}c^+_\beta g_{\beta\alpha}  \\[0.3ex]
        0\xleftarrow{\,A^+_\alpha\,}\nabla_{\!A_\alpha\!}\star \!F_{\!A_\alpha} \\[0.5ex]
        
\tikzset{every picture/.style={line width=0.75pt}} 
\begin{tikzpicture}[x=0.75pt,y=0.75pt,yscale=-1,xscale=1]
\draw    (100.5,80.08) -- (127.56,38.25) ;
\draw [shift={(127.56,38.25)}, rotate = 302.89] [color={rgb, 255:red, 0; green, 0; blue, 0 }  ][fill={rgb, 255:red, 0; green, 0; blue, 0 }  ][line width=0.75]      (0, 0) circle [x radius= 2.01, y radius= 2.01]   ;
\draw [shift={(115.98,56.14)}, rotate = 122.89] [color={rgb, 255:red, 0; green, 0; blue, 0 }  ][line width=0.75]    (6.56,-2.94) .. controls (4.17,-1.38) and (1.99,-0.4) .. (0,0) .. controls (1.99,0.4) and (4.17,1.38) .. (6.56,2.94)   ;
\draw [shift={(100.5,80.08)}, rotate = 302.89] [color={rgb, 255:red, 0; green, 0; blue, 0 }  ][fill={rgb, 255:red, 0; green, 0; blue, 0 }  ][line width=0.75]      (0, 0) circle [x radius= 2.01, y radius= 2.01]   ;
\draw    (156.28,80.08) -- (127.56,38.25) ;
\draw [shift={(127.56,38.25)}, rotate = 235.53] [color={rgb, 255:red, 0; green, 0; blue, 0 }  ][fill={rgb, 255:red, 0; green, 0; blue, 0 }  ][line width=0.75]      (0, 0) circle [x radius= 2.01, y radius= 2.01]   ;
\draw [shift={(139.88,56.2)}, rotate = 55.53] [color={rgb, 255:red, 0; green, 0; blue, 0 }  ][line width=0.75]    (6.56,-2.94) .. controls (4.17,-1.38) and (1.99,-0.4) .. (0,0) .. controls (1.99,0.4) and (4.17,1.38) .. (6.56,2.94)   ;
\draw [shift={(156.28,80.08)}, rotate = 235.53] [color={rgb, 255:red, 0; green, 0; blue, 0 }  ][fill={rgb, 255:red, 0; green, 0; blue, 0 }  ][line width=0.75]      (0, 0) circle [x radius= 2.01, y radius= 2.01]   ;
\draw    (156.28,80.08) -- (100.5,80.08) ;
\draw [shift={(100.5,80.08)}, rotate = 180] [color={rgb, 255:red, 0; green, 0; blue, 0 }  ][fill={rgb, 255:red, 0; green, 0; blue, 0 }  ][line width=0.75]      (0, 0) circle [x radius= 2.01, y radius= 2.01]   ;
\draw [shift={(132.99,80.08)}, rotate = 180] [color={rgb, 255:red, 0; green, 0; blue, 0 }  ][line width=0.75]    (6.56,-2.94) .. controls (4.17,-1.38) and (1.99,-0.4) .. (0,0) .. controls (1.99,0.4) and (4.17,1.38) .. (6.56,2.94)   ;
\draw [shift={(156.28,80.08)}, rotate = 180] [color={rgb, 255:red, 0; green, 0; blue, 0 }  ][fill={rgb, 255:red, 0; green, 0; blue, 0 }  ][line width=0.75]      (0, 0) circle [x radius= 2.01, y radius= 2.01]   ;
\draw (123,26.9) node [anchor=north west][inner sep=0.75pt]  [font=\tiny]  {$0$};
\draw (87.5,79.4) node [anchor=north west][inner sep=0.75pt]  [font=\tiny]  {$0$};
\draw (160.5,79.9) node [anchor=north west][inner sep=0.75pt]  [font=\tiny]  {$0$};
\draw (99.5,49.4) node [anchor=north west][inner sep=0.75pt]  [font=\scriptsize]  {$0$};
\draw (120,58.4) node [anchor=north west][inner sep=0.75pt]  [font=\tiny]  {$c_{\alpha }^{+}$};
\draw (146.5,49.4) node [anchor=north west][inner sep=0.75pt]  [font=\scriptsize]  {$0$};
\draw (109.5,84.9) node [anchor=north west][inner sep=0.75pt]  [font=\scriptsize]  {$\nabla _{\!A_{\scaleto{\alpha}{1.5pt} }\!} A_{\alpha }^{+}$};
\end{tikzpicture}

        \end{array} \right.\right\},
\end{aligned}
\end{equation*}
\begin{equation*}
\begin{aligned}
        Z_1 &=\, \left\{ \begin{array}{ll}
          c_{\alpha} &\!\!\!\!\!\in\Coo(V_\alpha \times U,G)_0 \\[2.2ex]
        g_{\alpha\beta},  g_{\alpha\beta}' &\!\!\!\!\!\in\Coo(V_\alpha\cap V_\beta \times U,G)_0 \\[1.0ex]
        A_{\alpha}, A_{\alpha}' &\!\!\!\!\!\in\Omega^1_\mathrm{ver}(V_\alpha \times U,\mathfrak{g})_0 \\[1.0ex]
        A^{+}_{\alpha},  A^{+\prime}_{\alpha} &\!\!\!\!\!\in \Omega^{d-1}_\mathrm{ver}(V_\alpha \times U,\mathfrak{g})_1 \\[1.0ex]
        c^{+}_{\alpha}, c^{+\prime}_{\alpha} &\!\!\!\!\!\in \Omega^{d}_\mathrm{ver}(V_\alpha \times U,\mathfrak{g})_2  \\[2.2ex]
        g_{1,\alpha\beta} &\!\!\!\!\!\in\Coo(V_\alpha\cap V_\beta \times U,G)_1 \\[1.0ex]
        A_{1,\alpha} &\!\!\!\!\!\in\Omega^1_\mathrm{ver}(V_\alpha \times U,\mathfrak{g})_1 \\[1.0ex]
        A^{+}_{1,\alpha} &\!\!\!\!\!\in \Omega^{d-1}_\mathrm{ver}(V_\alpha \times U,\mathfrak{g})_2 \\[1.0ex]
        c^{+}_{1,\alpha} &\!\!\!\!\!\in \Omega^{d}_\mathrm{ver}(V_\alpha \times U,\mathfrak{g})_3
        \end{array}  \left|\,  \begin{array}{ll}
        \qquad\qquad g_{\alpha\beta}&\!\!\!\!\!\!\cdot\, g_{\beta\gamma}\cdot g_{\gamma\alpha} = 1 \\[0.8ex]
        \qquad\qquad A_{\alpha} &\!\!\!\!\!\!=g_{\beta\alpha}^{-1}(A_{\beta}+\di)g_{\beta\alpha} \\[1.0ex]
        \qquad\qquad A^+_\alpha &\!\!\!\!\!\!= g_{\beta\alpha}^{-1}A^+_\beta g_{\beta\alpha} \\[1.0ex] 
        \qquad\qquad c^+_\alpha &\!\!\!\!\!\!\!\!= g_{\beta\alpha}^{-1}c^+_\beta g_{\beta\alpha}  \\[0.4ex]
        \qquad\qquad 0 &\!\!\!\!\!\!\!\!\!\! \xleftarrow{\,A^+_\alpha\,}\nabla_{\!A_\alpha\!}\star \!F_{\!A_\alpha} \\[0.4ex]
        &\!\!\!\!\!\!\!\!\!\!\!\!\!\!
        \tikzset{every picture/.style={line width=0.75pt}} 
\begin{tikzpicture}[x=0.75pt,y=0.75pt,yscale=-1,xscale=1]
\draw    (100.5,80.08) -- (127.56,38.25) ;
\draw [shift={(127.56,38.25)}, rotate = 302.89] [color={rgb, 255:red, 0; green, 0; blue, 0 }  ][fill={rgb, 255:red, 0; green, 0; blue, 0 }  ][line width=0.75]      (0, 0) circle [x radius= 2.01, y radius= 2.01]   ;
\draw [shift={(115.98,56.14)}, rotate = 122.89] [color={rgb, 255:red, 0; green, 0; blue, 0 }  ][line width=0.75]    (6.56,-2.94) .. controls (4.17,-1.38) and (1.99,-0.4) .. (0,0) .. controls (1.99,0.4) and (4.17,1.38) .. (6.56,2.94)   ;
\draw [shift={(100.5,80.08)}, rotate = 302.89] [color={rgb, 255:red, 0; green, 0; blue, 0 }  ][fill={rgb, 255:red, 0; green, 0; blue, 0 }  ][line width=0.75]      (0, 0) circle [x radius= 2.01, y radius= 2.01]   ;
\draw    (156.28,80.08) -- (127.56,38.25) ;
\draw [shift={(127.56,38.25)}, rotate = 235.53] [color={rgb, 255:red, 0; green, 0; blue, 0 }  ][fill={rgb, 255:red, 0; green, 0; blue, 0 }  ][line width=0.75]      (0, 0) circle [x radius= 2.01, y radius= 2.01]   ;
\draw [shift={(139.88,56.2)}, rotate = 55.53] [color={rgb, 255:red, 0; green, 0; blue, 0 }  ][line width=0.75]    (6.56,-2.94) .. controls (4.17,-1.38) and (1.99,-0.4) .. (0,0) .. controls (1.99,0.4) and (4.17,1.38) .. (6.56,2.94)   ;
\draw [shift={(156.28,80.08)}, rotate = 235.53] [color={rgb, 255:red, 0; green, 0; blue, 0 }  ][fill={rgb, 255:red, 0; green, 0; blue, 0 }  ][line width=0.75]      (0, 0) circle [x radius= 2.01, y radius= 2.01]   ;
\draw    (156.28,80.08) -- (100.5,80.08) ;
\draw [shift={(100.5,80.08)}, rotate = 180] [color={rgb, 255:red, 0; green, 0; blue, 0 }  ][fill={rgb, 255:red, 0; green, 0; blue, 0 }  ][line width=0.75]      (0, 0) circle [x radius= 2.01, y radius= 2.01]   ;
\draw [shift={(132.99,80.08)}, rotate = 180] [color={rgb, 255:red, 0; green, 0; blue, 0 }  ][line width=0.75]    (6.56,-2.94) .. controls (4.17,-1.38) and (1.99,-0.4) .. (0,0) .. controls (1.99,0.4) and (4.17,1.38) .. (6.56,2.94)   ;
\draw [shift={(156.28,80.08)}, rotate = 180] [color={rgb, 255:red, 0; green, 0; blue, 0 }  ][fill={rgb, 255:red, 0; green, 0; blue, 0 }  ][line width=0.75]      (0, 0) circle [x radius= 2.01, y radius= 2.01]   ;
\draw (123,26.9) node [anchor=north west][inner sep=0.75pt]  [font=\tiny]  {$0$};
\draw (87.5,79.4) node [anchor=north west][inner sep=0.75pt]  [font=\tiny]  {$0$};
\draw (160.5,79.9) node [anchor=north west][inner sep=0.75pt]  [font=\tiny]  {$0$};
\draw (99.5,49.4) node [anchor=north west][inner sep=0.75pt]  [font=\scriptsize]  {$0$};
\draw (120,58.4) node [anchor=north west][inner sep=0.75pt]  [font=\tiny]  {$c_{\alpha }^{+}$};
\draw (146.5,49.4) node [anchor=north west][inner sep=0.75pt]  [font=\scriptsize]  {$0$};
\draw (109.5,84.9) node [anchor=north west][inner sep=0.75pt]  [font=\scriptsize]  {$\nabla _{\!A_{\scaleto{\alpha}{1.5pt} }\!} A_{\alpha }^{+}$};
\end{tikzpicture}
           
           \\[4.2ex]
        \qquad\qquad g_{\alpha\beta}'&\!\!\!\!\!\!\cdot\, g_{\beta\gamma}'\cdot g_{\gamma\alpha}' = 1 \\[0.8ex]
        \qquad\qquad A_{\alpha}' &\!\!\!\!\!\!=g_{\beta\alpha}^{\prime -1}(A_{\beta}'+\di)g_{\beta\alpha}' \\[1.0ex]
        \qquad\qquad A^{+\prime}_\alpha &\!\!\!\!\!\!= g_{\beta\alpha}^{\prime -1}A^{+\prime}_\beta g_{\beta\alpha}' \\[1.0ex] 
        \qquad\qquad c^{+\prime}_\alpha &\!\!\!\!\!\!= g_{\beta\alpha}^{\prime -1}c^{+\prime}_\beta g_{\beta\alpha}'  \\[0.4ex]
        \qquad\qquad 0 &\!\!\!\!\!\!\!\!\!\! \xleftarrow{\,A^{+\prime}_\alpha\,}\nabla_{\!A_\alpha^\prime\!}\star \!F_{\!A_\alpha^\prime} \\[0.4ex]
         &\!\!\!\!\!\!\!\!\!\!\!\!\!
        \tikzset{every picture/.style={line width=0.75pt}} 
\begin{tikzpicture}[x=0.75pt,y=0.75pt,yscale=-1,xscale=1]
\draw    (100.5,80.08) -- (127.56,38.25) ;
\draw [shift={(127.56,38.25)}, rotate = 302.89] [color={rgb, 255:red, 0; green, 0; blue, 0 }  ][fill={rgb, 255:red, 0; green, 0; blue, 0 }  ][line width=0.75]      (0, 0) circle [x radius= 2.01, y radius= 2.01]   ;
\draw [shift={(115.98,56.14)}, rotate = 122.89] [color={rgb, 255:red, 0; green, 0; blue, 0 }  ][line width=0.75]    (6.56,-2.94) .. controls (4.17,-1.38) and (1.99,-0.4) .. (0,0) .. controls (1.99,0.4) and (4.17,1.38) .. (6.56,2.94)   ;
\draw [shift={(100.5,80.08)}, rotate = 302.89] [color={rgb, 255:red, 0; green, 0; blue, 0 }  ][fill={rgb, 255:red, 0; green, 0; blue, 0 }  ][line width=0.75]      (0, 0) circle [x radius= 2.01, y radius= 2.01]   ;
\draw    (156.28,80.08) -- (127.56,38.25) ;
\draw [shift={(127.56,38.25)}, rotate = 235.53] [color={rgb, 255:red, 0; green, 0; blue, 0 }  ][fill={rgb, 255:red, 0; green, 0; blue, 0 }  ][line width=0.75]      (0, 0) circle [x radius= 2.01, y radius= 2.01]   ;
\draw [shift={(139.88,56.2)}, rotate = 55.53] [color={rgb, 255:red, 0; green, 0; blue, 0 }  ][line width=0.75]    (6.56,-2.94) .. controls (4.17,-1.38) and (1.99,-0.4) .. (0,0) .. controls (1.99,0.4) and (4.17,1.38) .. (6.56,2.94)   ;
\draw [shift={(156.28,80.08)}, rotate = 235.53] [color={rgb, 255:red, 0; green, 0; blue, 0 }  ][fill={rgb, 255:red, 0; green, 0; blue, 0 }  ][line width=0.75]      (0, 0) circle [x radius= 2.01, y radius= 2.01]   ;
\draw    (156.28,80.08) -- (100.5,80.08) ;
\draw [shift={(100.5,80.08)}, rotate = 180] [color={rgb, 255:red, 0; green, 0; blue, 0 }  ][fill={rgb, 255:red, 0; green, 0; blue, 0 }  ][line width=0.75]      (0, 0) circle [x radius= 2.01, y radius= 2.01]   ;
\draw [shift={(132.99,80.08)}, rotate = 180] [color={rgb, 255:red, 0; green, 0; blue, 0 }  ][line width=0.75]    (6.56,-2.94) .. controls (4.17,-1.38) and (1.99,-0.4) .. (0,0) .. controls (1.99,0.4) and (4.17,1.38) .. (6.56,2.94)   ;
\draw [shift={(156.28,80.08)}, rotate = 180] [color={rgb, 255:red, 0; green, 0; blue, 0 }  ][fill={rgb, 255:red, 0; green, 0; blue, 0 }  ][line width=0.75]      (0, 0) circle [x radius= 2.01, y radius= 2.01]   ;
\draw (123,26.9) node [anchor=north west][inner sep=0.75pt]  [font=\tiny]  {$0$};
\draw (87.5,79.4) node [anchor=north west][inner sep=0.75pt]  [font=\tiny]  {$0$};
\draw (160.5,79.9) node [anchor=north west][inner sep=0.75pt]  [font=\tiny]  {$0$};
\draw (99.5,49.4) node [anchor=north west][inner sep=0.75pt]  [font=\scriptsize]  {$0$};
\draw (120,58.4) node [anchor=north west][inner sep=0.75pt]  [font=\tiny]  {$c_{\alpha }^{+\prime}$};
\draw (146.5,49.4) node [anchor=north west][inner sep=0.75pt]  [font=\scriptsize]  {$0$};
\draw (109.5,84.9) node [anchor=north west][inner sep=0.75pt]  [font=\scriptsize]  {$\nabla _{\!A_{\scaleto{\alpha}{1.5pt} }^\prime\!} A_{\alpha }^{+\prime}$};
\end{tikzpicture}
           
             \\[4.2ex]
        \qquad\qquad g_{\alpha\beta}' &\!\!\!\!\!\xleftarrow{\,g_{1\!,\alpha\beta}\,} c_\beta^{-1}g_{\alpha\beta}c_\alpha  \\[0.3ex]
        \qquad\qquad A_{\alpha}' &\!\!\!\!\!\xleftarrow{\,A_{1\!,\alpha}\,} c_\alpha^{-1} (A_\alpha+\di) c_\alpha   \\[0.6ex]
          &\!\!\!\!\!\!\!\!\!\!\!\!\!\!\!\!\!\!\!\!\!\!\!\!\!\!\!\!\!\!\!\!\!\!\!\!\!\!\!\!\!

\tikzset{every picture/.style={line width=0.75pt}} 
\begin{tikzpicture}[x=0.75pt,y=0.75pt,yscale=-1,xscale=1]
\draw    (100.5,80.08) -- (127.56,38.25) ;
\draw [shift={(127.56,38.25)}, rotate = 302.89] [color={rgb, 255:red, 0; green, 0; blue, 0 }  ][fill={rgb, 255:red, 0; green, 0; blue, 0 }  ][line width=0.75]      (0, 0) circle [x radius= 2.01, y radius= 2.01]   ;
\draw [shift={(115.98,56.14)}, rotate = 122.89] [color={rgb, 255:red, 0; green, 0; blue, 0 }  ][line width=0.75]    (6.56,-2.94) .. controls (4.17,-1.38) and (1.99,-0.4) .. (0,0) .. controls (1.99,0.4) and (4.17,1.38) .. (6.56,2.94)   ;
\draw [shift={(100.5,80.08)}, rotate = 302.89] [color={rgb, 255:red, 0; green, 0; blue, 0 }  ][fill={rgb, 255:red, 0; green, 0; blue, 0 }  ][line width=0.75]      (0, 0) circle [x radius= 2.01, y radius= 2.01]   ;
\draw    (156.28,80.08) -- (127.56,38.25) ;
\draw [shift={(127.56,38.25)}, rotate = 235.53] [color={rgb, 255:red, 0; green, 0; blue, 0 }  ][fill={rgb, 255:red, 0; green, 0; blue, 0 }  ][line width=0.75]      (0, 0) circle [x radius= 2.01, y radius= 2.01]   ;
\draw [shift={(139.88,56.2)}, rotate = 55.53] [color={rgb, 255:red, 0; green, 0; blue, 0 }  ][line width=0.75]    (6.56,-2.94) .. controls (4.17,-1.38) and (1.99,-0.4) .. (0,0) .. controls (1.99,0.4) and (4.17,1.38) .. (6.56,2.94)   ;
\draw [shift={(156.28,80.08)}, rotate = 235.53] [color={rgb, 255:red, 0; green, 0; blue, 0 }  ][fill={rgb, 255:red, 0; green, 0; blue, 0 }  ][line width=0.75]      (0, 0) circle [x radius= 2.01, y radius= 2.01]   ;
\draw    (156.28,80.08) -- (100.5,80.08) ;
\draw [shift={(100.5,80.08)}, rotate = 180] [color={rgb, 255:red, 0; green, 0; blue, 0 }  ][fill={rgb, 255:red, 0; green, 0; blue, 0 }  ][line width=0.75]      (0, 0) circle [x radius= 2.01, y radius= 2.01]   ;
\draw [shift={(132.99,80.08)}, rotate = 180] [color={rgb, 255:red, 0; green, 0; blue, 0 }  ][line width=0.75]    (6.56,-2.94) .. controls (4.17,-1.38) and (1.99,-0.4) .. (0,0) .. controls (1.99,0.4) and (4.17,1.38) .. (6.56,2.94)   ;
\draw [shift={(156.28,80.08)}, rotate = 180] [color={rgb, 255:red, 0; green, 0; blue, 0 }  ][fill={rgb, 255:red, 0; green, 0; blue, 0 }  ][line width=0.75]      (0, 0) circle [x radius= 2.01, y radius= 2.01]   ;
\draw (123.5,25.4) node [anchor=north west][inner sep=0.75pt]  [font=\tiny]  {$0$};
\draw (23.0,74.4) node [anchor=north west][inner sep=0.75pt]  [font=\tiny]  {$c_{\alpha }^{-1\!} (\nabla _{\!A_{\scaleto{\alpha}{1.5pt} }\!\!\!\!} \star_{\!}\! F_{\!A_{\scaleto{\alpha\!}{1.5pt} }}) c_{\alpha }$};
\draw (161.5,74.4) node [anchor=north west][inner sep=0.75pt]  [font=\tiny]  {$\nabla _{\!A'_{\scaleto{\alpha}{1.5pt} }\!\!\!} \star \!F_{\!A'_{\scaleto{\alpha}{1.5pt} }}$};
\draw (57,42.4) node [anchor=north west][inner sep=0.75pt]  [font=\scriptsize]  {$c_{\alpha }^{-1}{A}_{\alpha }^{+} c_{\alpha }$};
\draw (145,42.4) node [anchor=north west][inner sep=0.75pt]  [font=\scriptsize]  {${A}_{\alpha }^{+\prime }$};
\draw (118,57.0) node [anchor=north west][inner sep=0.75pt]  [font=\tiny]  {${A}_{1\!,\alpha }^{+}$};
\draw (92.5,90.0) node [anchor=north west][inner sep=0.75pt]  [font=\scriptsize]  {$\nabla _{\!A_{\scaleto{1\!,\alpha }{3pt}}\!\!\!} \star_{\!}\! \nabla _{\!A_{\scaleto{1\!,\alpha }{3pt}}\!} A_{1\!,\alpha }$};
\end{tikzpicture} \\[0.2ex] 

 &\!\!\!\!\!\!\!\!\!\!\!\!\!\!\!\!\!\!\!\!\!\!

\tikzset{every picture/.style={line width=0.75pt}} 
\begin{tikzpicture}[x=0.75pt,y=0.75pt,yscale=-1,xscale=1]
\draw  [draw opacity=0][fill={rgb, 255:red, 155; green, 155; blue, 155 }  ,fill opacity=0.11 ] (170.34,115.96) -- (100.95,79.09) -- (199.86,72.78) -- cycle ;
\draw [color={rgb, 255:red, 155; green, 155; blue, 155 }  ,draw opacity=1 ] [dash pattern={on 0.84pt off 2.51pt}]  (179.89,100.28) -- (176.67,85.2) ;
\draw [color={rgb, 255:red, 179; green, 179; blue, 179 }  ,draw opacity=1 ] [dash pattern={on 0.84pt off 2.51pt}]  (143.47,52.4) -- (141.87,38) ;
\draw [color={rgb, 255:red, 155; green, 155; blue, 155 }  ,draw opacity=1 ]   (185.07,125.2) -- (179.89,100.28) ;
\draw [color={rgb, 255:red, 179; green, 179; blue, 179 }  ,draw opacity=1 ]   (141.87,38) -- (139.07,16.8) ;
\draw    (100.5,80.08) -- (154.27,26) ;
\draw [shift={(154.27,26)}, rotate = 314.83] [color={rgb, 255:red, 0; green, 0; blue, 0 }  ][fill={rgb, 255:red, 0; green, 0; blue, 0 }  ][line width=0.75]      (0, 0) circle [x radius= 2.01, y radius= 2.01]   ;
\draw [shift={(129.92,50.49)}, rotate = 134.83] [color={rgb, 255:red, 0; green, 0; blue, 0 }  ][line width=0.75]    (6.56,-2.94) .. controls (4.17,-1.38) and (1.99,-0.4) .. (0,0) .. controls (1.99,0.4) and (4.17,1.38) .. (6.56,2.94)   ;
\draw [shift={(100.5,80.08)}, rotate = 314.83] [color={rgb, 255:red, 0; green, 0; blue, 0 }  ][fill={rgb, 255:red, 0; green, 0; blue, 0 }  ][line width=0.75]      (0, 0) circle [x radius= 2.01, y radius= 2.01]   ;
\draw    (169.47,115.2) -- (154.27,26) ;
\draw [shift={(154.27,26)}, rotate = 260.33] [color={rgb, 255:red, 0; green, 0; blue, 0 }  ][fill={rgb, 255:red, 0; green, 0; blue, 0 }  ][line width=0.75]      (0, 0) circle [x radius= 2.01, y radius= 2.01]   ;
\draw [shift={(161.26,67.05)}, rotate = 80.33] [color={rgb, 255:red, 0; green, 0; blue, 0 }  ][line width=0.75]    (6.56,-2.94) .. controls (4.17,-1.38) and (1.99,-0.4) .. (0,0) .. controls (1.99,0.4) and (4.17,1.38) .. (6.56,2.94)   ;
\draw [shift={(169.47,115.2)}, rotate = 260.33] [color={rgb, 255:red, 0; green, 0; blue, 0 }  ][fill={rgb, 255:red, 0; green, 0; blue, 0 }  ][line width=0.75]      (0, 0) circle [x radius= 2.01, y radius= 2.01]   ;
\draw    (169.47,115.2) -- (100.5,80.08) ;
\draw [shift={(100.5,80.08)}, rotate = 206.98] [color={rgb, 255:red, 0; green, 0; blue, 0 }  ][fill={rgb, 255:red, 0; green, 0; blue, 0 }  ][line width=0.75]      (0, 0) circle [x radius= 2.01, y radius= 2.01]   ;
\draw [shift={(139.08,99.73)}, rotate = 206.98] [color={rgb, 255:red, 0; green, 0; blue, 0 }  ][line width=0.75]    (6.56,-2.94) .. controls (4.17,-1.38) and (1.99,-0.4) .. (0,0) .. controls (1.99,0.4) and (4.17,1.38) .. (6.56,2.94)   ;
\draw [shift={(169.47,115.2)}, rotate = 206.98] [color={rgb, 255:red, 0; green, 0; blue, 0 }  ][fill={rgb, 255:red, 0; green, 0; blue, 0 }  ][line width=0.75]      (0, 0) circle [x radius= 2.01, y radius= 2.01]   ;
\draw    (198.27,73.2) -- (169.47,115.2) ;
\draw [shift={(169.47,115.2)}, rotate = 124.44] [color={rgb, 255:red, 0; green, 0; blue, 0 }  ][fill={rgb, 255:red, 0; green, 0; blue, 0 }  ][line width=0.75]      (0, 0) circle [x radius= 2.01, y radius= 2.01]   ;
\draw [shift={(186.47,90.41)}, rotate = 124.44] [color={rgb, 255:red, 0; green, 0; blue, 0 }  ][line width=0.75]    (6.56,-2.94) .. controls (4.17,-1.38) and (1.99,-0.4) .. (0,0) .. controls (1.99,0.4) and (4.17,1.38) .. (6.56,2.94)   ;
\draw [shift={(198.27,73.2)}, rotate = 124.44] [color={rgb, 255:red, 0; green, 0; blue, 0 }  ][fill={rgb, 255:red, 0; green, 0; blue, 0 }  ][line width=0.75]      (0, 0) circle [x radius= 2.01, y radius= 2.01]   ;
\draw    (198.27,73.2) -- (154.27,26) ;
\draw [shift={(154.27,26)}, rotate = 227.01] [color={rgb, 255:red, 0; green, 0; blue, 0 }  ][fill={rgb, 255:red, 0; green, 0; blue, 0 }  ][line width=0.75]      (0, 0) circle [x radius= 2.01, y radius= 2.01]   ;
\draw [shift={(173.81,46.97)}, rotate = 47.01] [color={rgb, 255:red, 0; green, 0; blue, 0 }  ][line width=0.75]    (6.56,-2.94) .. controls (4.17,-1.38) and (1.99,-0.4) .. (0,0) .. controls (1.99,0.4) and (4.17,1.38) .. (6.56,2.94)   ;
\draw [shift={(198.27,73.2)}, rotate = 227.01] [color={rgb, 255:red, 0; green, 0; blue, 0 }  ][fill={rgb, 255:red, 0; green, 0; blue, 0 }  ][line width=0.75]      (0, 0) circle [x radius= 2.01, y radius= 2.01]   ;
\draw  [dash pattern={on 0.84pt off 2.51pt}]  (198.27,73.2) -- (100.5,80.08) ;
\draw [shift={(100.5,80.08)}, rotate = 175.97] [color={rgb, 255:red, 0; green, 0; blue, 0 }  ][fill={rgb, 255:red, 0; green, 0; blue, 0 }  ][line width=0.75]      (0, 0) circle [x radius= 2.01, y radius= 2.01]   ;
\draw [shift={(153.97,76.32)}, rotate = 175.97] [color={rgb, 255:red, 0; green, 0; blue, 0 }  ][line width=0.75]    (6.56,-2.94) .. controls (4.17,-1.38) and (1.99,-0.4) .. (0,0) .. controls (1.99,0.4) and (4.17,1.38) .. (6.56,2.94)   ;
\draw [shift={(198.27,73.2)}, rotate = 175.97] [color={rgb, 255:red, 0; green, 0; blue, 0 }  ][fill={rgb, 255:red, 0; green, 0; blue, 0 }  ][line width=0.75]      (0, 0) circle [x radius= 2.01, y radius= 2.01]   ;
\draw [color={rgb, 255:red, 179; green, 179; blue, 179 }  ,draw opacity=1 ]   (118.67,108.4) -- (140.67,71.6) ;
\draw [color={rgb, 255:red, 179; green, 179; blue, 179 }  ,draw opacity=1 ]   (171.87,64.4) -- (191.89,40.61) ;
\draw (149.6,13.9) node [anchor=north west][inner sep=0.75pt]  [font=\tiny]  {$0$};
\draw (89.3,78) node [anchor=north west][inner sep=0.75pt]  [font=\tiny]  {$0$};
\draw (165.5,120.1) node [anchor=north west][inner sep=0.75pt]  [font=\tiny]  {$0$};
\draw (181.43,126.13) node [anchor=north west][inner sep=0.75pt]  [font=\tiny,color={rgb, 255:red, 0; green, 0; blue, 0 }  ,opacity=1 ]  {$\nabla _{\!A_{\scaleto{1\!,\alpha }{3pt}}\! } A_{1\!,\!\alpha }^{+}$};
\draw (192.1,33.37) node [anchor=north west][inner sep=0.75pt]  [font=\tiny]  {$0$};
\draw (202.6,71.8) node [anchor=north west][inner sep=0.75pt]  [font=\tiny]  {$0$};
\draw (141.0,57.0) node [anchor=north west][inner sep=0.75pt]  [font=\tiny]  {$c_{1\!,\!\alpha }^{+}$};
\draw (110.5,110.0) node [anchor=north west][inner sep=0.75pt]  [font=\tiny]  {$c^{+\prime }_\alpha$};
\draw (103.9,3.0) node [anchor=north west][inner sep=0.75pt]  [font=\tiny]  {$c^{-1}_\alpha c^{+ }_\alpha c_\alpha$};
\end{tikzpicture}

        \end{array} \qquad\right.\right\} ;
\end{aligned}
\end{equation*}
and where the set $Z_2$ of $2$-simplices is given by composition of gauge transformations up to the datum of a homotopy; and so on for higher simplices.
To obtain the simplicial set of sections of the derived critical locus $\bfR\crit{S}(M)$ on a general formal derived smooth manifold $U$, we must stackify the pre-stack above in the same sense as in remark \ref{rem:res_cot_bundle}.
\end{remark}

\begin{remark}[Intuitive meaning of the physical fields]
The intuitive picture of the \v{C}ech data of the derived critical locus $\bfR\crit{S}(M)$ above can be given as follows.
\begin{itemize}
    \item $0$-simplices:
    \begin{itemize}
        \item $g_{\alpha\beta}$ transition functions,
        \item $A_\alpha$ connection,
        \item $A^+_\alpha$ equations of motion,
        \item $c^+_\alpha$ Noether identities,
    \end{itemize}
    \item $1$-simplices:
    \begin{itemize}
        \item $c_\alpha$ gauge transformations,
        \item $g_{1,\alpha\beta}$ homotopies of transition functions,
        \item $A_{1,\alpha}$ homotopies of connections,
        \item $A^+_{1,\alpha}$ homotopies of equations of motions,
        \item $c^+_{1,\alpha}$ homotopies of Noether identities,
    \end{itemize}
    \item $(n\geq 2)$-simplices: compositions of gauge transformations and homotopies of homotopies.
\end{itemize}
\end{remark}

From a physical standpoint, we can interpret $A^+$ and $c^+$ as antifield and antighost, respectively.

\begin{remark}[Global antifields and antighosts]
Notice that a section of our formal derived smooth stack $\bfR\crit{S}(M)$ on a formal derived smooth manifold $U\in\mathbf{dFMfd}$ will be of the form $(P,\nabla_{\!A},A^+,c^+)$, where we have the following:
\begin{enumerate}[label=(\textit{\roman*})]
    \item $(P,\nabla_{\!A})$ is a $U$-parametrised family of $G$-bundles on $M$ with connection,
    \item $A^+\in\Omega^{d-1}_\mathrm{ver}(M \times U,\mathfrak{g}_P)_1$ is a $U$-parametrised family of so-called \textit{antifields},
    \item $c^+\in\Omega^{d}_\mathrm{ver}(M \times U,\mathfrak{g}_P)_2$ is a $U$-parametrised family of so-called \textit{antighosts}.
\end{enumerate}
Moreover, notice that the antifields and the antighosts appearing here have a global-geometric structure and, in fact, they are differential forms valued in the adjoint bundle $\mathfrak{g}_P=P\times_{G}\mathfrak{g}$ of the underlying principal $G$-bundle $P$. 
\end{remark}

\begin{remark}[Infinitesimal disk of derived critical locus]
In the special case where $U\simeq\ast$, a section is a point $(P,\nabla_A)\in\bfR\crit{S}(M)$ in the derived critical locus, i.e. a principal $G$-bundle on $M$ with connection which satisfies the Yang-Mills equations of motion.
Recall from section \ref{sec:derdiffcohe} that, in the context of derived differential geometry, we can consider a formal disk $\bbD_{\bfR\crit{S}(M),(P,\nabla_A)}$ of the formal derived smooth stack $\bfR\crit{S}(M)$ at the point $(P,\nabla_A)\in\bfR\crit{S}(M)$, as in definition \ref{def:formal_disk}. Such an formal disk describes the behaviour of the formal derived smooth stack in an infinitesimal neighborhood of the chosen point, where the latter is a global solution of the Yang-Mills equation. This is defined by the pullback square
\begin{equation}
    \begin{tikzcd}[row sep={21ex,between origins}, column sep={24ex,between origins}]
    \bbD_{\bfR\crit{S}(M),(P,\nabla_A)} \arrow[r]\arrow[d, ""'] & \bfR\crit{S}(M) \arrow[d, "\mathfrak{i}_{\bfR\crit{S}(M)}"] \\
    \ast \arrow[r, "{(P,\nabla_A)}"] & \Im\big(\bfR\crit{S}(M)\big) .
\end{tikzcd}
\end{equation}
Since our infinitesimal disk is in fact an infinitesimal object, as we saw above in section \ref{subsec:Loo_algebroids}, it is of the form
\begin{equation}
    \bbD_{\bfR\crit{S}(M),(P,\nabla_A)} \;\simeq\; \mathbf{B}\big(\overrightarrow{\mathfrak{Crit}}(S)_{(P,\nabla_{\!A})}\big), 
\end{equation}
for some  $L_\infty$-algebra $\overrightarrow{\mathfrak{Crit}}(S)_{(P,\nabla_{\!A})}$ which encodes the infinitesimal deformations of the derived critical locus around the fixed point $(P,\nabla_{\!A})\in \bfR\crit{S}(M)$.
By unravelling this $L_\infty$-algebra, we see that its underlying differential graded vector space is given by the cochain complex
\begin{equation*}
\begin{aligned}
    \overrightarrow{\mathfrak{Crit}}(S)_{(P,\nabla_{\!A})_{\!}}[1] \;=\; \,&\Big( \!\!\begin{tikzcd}[row sep={14.5ex,between origins}, column sep= 7ex]
    \Omega^0(M,\mathfrak{g}_P) \arrow[r, "\nabla_{\! A}"] & \Omega^1(M,\mathfrak{g}_P) \arrow[r, "\nabla_{\! A\!}\star_{\!}\nabla_{\! A}"] & \Omega^{d-1}(M,\mathfrak{g}_P) \arrow[r, "\nabla_{\! A}"] & \Omega^{d}(M,\mathfrak{g}_P)
    \end{tikzcd} \!\! \Big)\\
    \!\!{\scriptstyle\text{deg}\,=} &\quad \;\;\begin{tikzcd}[row sep={14.5ex,between origins}, column sep= 6ex]
    \;{\scriptstyle -1} && \quad \;\;\;\;\;{\scriptstyle 0}  && \quad \;\;\;\,\;\;{\scriptstyle 1}  && \quad \;\;\;\,\;\;{\scriptstyle 2} 
    \end{tikzcd}  
\end{aligned} \vspace{-0.2cm}
\end{equation*}
which depends on the point $(P,\nabla_{\!A})\in \bfR\crit{S}(M)$.
Such an $L_\infty$-algebra controls the infinitesimal deformations $\nabla_{\!A}+\vec{A}$ of the fixed connection, together with infinitesimal gauge transformations and equations of motion for the deformed connection.
Thus, not too surprisingly, the $L_\infty$-bracket structure is given as follows:
\begin{equation}
    \begin{aligned}
        \ell_1(\vec{c}) \;&=\; \nabla_{\! A} \vec{c}, \\
        \ell_1(\vec{A}) \;&=\; \nabla_{\! A\!}\star_{\!}\nabla_{\! A} \vec{A}, \qquad &\ell_1(\vec{A}^+) \;&=\; \nabla_{\! A} \vec{A}^+, \\
        \ell_2(\vec{c}_1,\vec{c}_2) \;&=\; [\vec{c}_1,\vec{c}_2]_\mathfrak{g}, \qquad &\ell_2(\vec{c},\vec{c}^+) \;&=\; [\vec{c},\vec{c}^+]_\mathfrak{g}, \\
        \ell_2(\vec{c},\vec{A}) \;&=\; [\vec{c},\vec{A}]_\mathfrak{g}, \qquad &\ell_2(\vec{c},\vec{A}^+) \;&=\; [\vec{c},\vec{A}^+]_\mathfrak{g}, \\
        & &\ell_2(\vec{A},\vec{A}^+) \;&=\; [\vec{A} \,\overset{\wedge}{,}\, \vec{A}^+]_\mathfrak{g},
    \end{aligned}
\end{equation}\vspace{-0.30cm}
\begin{equation*}
    \begin{aligned}
        \ell_2(\vec{A}_1,\vec{A}_2) \;&=\; \nabla_{\!A}\star[\vec{A}_1\,\overset{\wedge}{,}\,\vec{A}_2]_\mathfrak{g} + [\vec{A}_1 \,\overset{\wedge}{,}\, \star_{\!}\nabla_{\!A} \vec{A}_2]_\mathfrak{g} + [\vec{A}_2 \,\overset{\wedge}{,}\, \star_{\!}\nabla_{\!A} \vec{A}_1]_\mathfrak{g}, \\[0.4ex]
        \ell_3(\vec{A}_1,\vec{A}_2,\vec{A}_3) \;&=\; \big[\vec{A}_1\,\overset{\wedge}{,}\,\star [\vec{A}_2\,\overset{\wedge}{,}\,\vec{A}_3]_\mathfrak{g}\big]_\mathfrak{g} + \big[\vec{A}_2\,\overset{\wedge}{,}\,\star [\vec{A}_3\,\overset{\wedge}{,}\,\vec{A}_1]_\mathfrak{g}\big]_\mathfrak{g} + \big[\vec{A}_3\,\overset{\wedge}{,}\,\star [\vec{A}_1\,\overset{\wedge}{,}\,\vec{A}_2]_\mathfrak{g}\big]_\mathfrak{g}, 
    \end{aligned}
\end{equation*}
for any $\vec{c}_k\in\Omega^0(M,\mathfrak{g}_P)$, $\vec{A}_k\in\Omega^1(M,\mathfrak{g}_P)$, $\vec{A}^+_k\in\Omega^{d-1}(M,\mathfrak{g}_P)$ and $\vec{c}^+_k\in\Omega^d(M,\mathfrak{g}_P)$ elements of the underlying graded vector space.
Notice that, if we pick a $G$-bundle $(P,\nabla_{\!A})\in \bfR\crit{S}(M)$ which is topologically trivial $P\simeq M\times G$ and has flat connection $\nabla_{\!A} = \di$, we recover the $L_\infty$-algebra structure from equation \eqref{eq:BV-yang-Mills}. Thus, usual BV-BRST theory can be understood as the infinitesimal disk $\bbD_{\bfR\crit{S}(M),(M\times G,\di)}$ at the trivial $G$-bundle with flat connection, which is in fact a solution of the Yang-Mills equations.
\end{remark}

To conclude this section, we will examine the smooth stack of solutions of Yang-Mills theory, i.e. the underived critical locus $\crit{S}(M)\in\mathbf{SmoothStack}$, seen as a smooth stack that can be obtained by underived truncation of the derived critical locus $\bfR\crit{S}(M)$.

\begin{remark}[Underived critical locus]
Let the underived critical locus be the smooth stack given by the underived truncation $\crit{S}(M) \coloneqq {t}_0\bfR\crit{S}(M)$.
Such a smooth stack will come equipped with a canonical morphism $\crit{S}(M) \longhookrightarrow \Bun_G^\nabla(M)$ of smooth stacks and, roughly speaking, $\crit{S}(M)$ will include only those principal $G$-bundles on $M$ with connection such that they satisfy the Yang-Mills equations of motion.
Thus, any principal $G$-bundle with connection $(P,\nabla_A)\in \crit{S}(M)$ will satisfy by construction both the Bianchi identities and the Yang-Mills equations of motion:
\begin{equation*}
\begin{aligned}
    \nabla_{\!A}F_A \;&=\; 0 && \text{(Bianchi identity)},\\[0.5ex]
    \nabla_{\!A}\star\! F_A \;&=\; 0 && \text{(Equations of motion)},
\end{aligned}
\end{equation*}
where $F_A\in\Omega^2(M,\mathfrak{g}_P)$ is the curvature of the bundle $(P,\nabla_A)$.
A subtlety is that, in $\crit{S}(M)$, Noether identities are not anymore simplicially unravelled, but they are imposed on the nose.
More concretely, if we pick an ordinary smooth manifold $U\in\Mfd$ diffeomorphic to a Cartesian space, we can concretely write the smooth stack $\crit{S}(M)$ by the $2$-coskeletal simplicial set 
\begin{equation*}
    \Hom\big(U,\,\crit{S}(M)\big) \,\simeq\, \mathrm{cosk}_{2\!}\left(\! \begin{tikzcd}[row sep={22.5ex,between origins}, column sep={10.5ex}]
    Z_2 \, \arrow[rr, yshift=3.4ex , "{\big(c_\alpha,\,\begin{smallmatrix}g_{\alpha\beta},A_\alpha\\g_{\alpha\beta}',A_\alpha'\end{smallmatrix}\big)}"] \arrow[rr, "{\big(c_\alpha',\,\begin{smallmatrix}g_{\alpha\beta}',A_\alpha'\\g_{\alpha\beta}'',A_\alpha''\end{smallmatrix}\big)}" description] \arrow[rr, yshift=-3.4ex, "{\big(c_\alpha'\cdot c_\alpha,\,\begin{smallmatrix}g_{\alpha\beta},A_\alpha\\g_{\alpha\beta}'',A_\alpha''\end{smallmatrix}\big)}"'] && \,Z_1\, \arrow[r, yshift=1.7ex , "{(g_{\alpha\beta},A_\alpha)}"] \arrow[r, yshift=-1.7ex, "{(g_{\alpha\beta}',A_\alpha')}"'] & \,Z_0
    \end{tikzcd}\!\right),
\end{equation*}
where the sets of $0$- and $1$-simplices are, respectively, given by
\begin{equation*}
\begin{aligned}
       \hspace{-1.05cm} Z_0 \,&=\, \left\{ \begin{array}{ll}
        g_{\alpha\beta} &\!\!\!\!\in\Coo(V_\alpha\cap V_\beta \times U,G) \\[0.8ex]
         A_\alpha &\!\!\!\!\in\Omega^1_\mathrm{ver}(V_\alpha \times U,\mathfrak{g}) \end{array}  \;\left|\; \begin{array}{l} 
         g_{\alpha\beta}\cdot g_{\beta\gamma}\cdot g_{\gamma\alpha} = 1 \\[0.8ex]
        A_{\alpha}=g_{\beta\alpha}^{-1}(A_{\beta}+\di)g_{\beta\alpha}\\[0.8ex]
        \nabla_{\!A_\alpha\!}\star \!F_{\!A_\alpha}  = \,0 \end{array} \right.\right\},
\end{aligned}
\end{equation*}
\begin{equation*}
\begin{aligned}
        Z_1 \,&=\, \left\{ \begin{array}{ll}
         c_\alpha &\!\!\!\!\in\Coo(V_\alpha \times U,G)  \\[1.5ex]
         g_{\alpha\beta},g'_{\alpha\beta}  &\!\!\!\!\in\Coo(V_\alpha\cap V_\beta \times U ,G) \\[0.5ex]
         A_\alpha, A_\alpha' &\!\!\!\!\in\Omega^1_\mathrm{ver}(V_\alpha \times U,\mathfrak{g})\end{array}  \;\left|\; \begin{array}{ll} 
         g_{\alpha\beta}&\!\!\!\!\!\!\cdot\, g_{\beta\gamma}\cdot g_{\gamma\alpha} = 1 \\[0.5ex]
        A_{\alpha}&\!\!\!\!\!\!=\,g_{\beta\alpha}^{-1}(A_{\beta}+\di)g_{\beta\alpha}  \\[0.5ex]
        \nabla_{\!A_\alpha\!}&\!\!\!\!\!\!\star\, \!F_{\!A_\alpha}  = \,0 \\[2ex]
        g_{\alpha\beta}'&\!\!\!\!\cdot\, g_{\beta\gamma}'\cdot g_{\gamma\alpha}' = 1 \\[0.5ex]
        A_{\alpha}'&\!\!\!\!=\,g_{\beta\alpha}^{\prime -1}(A_{\beta}'+\di)g_{\beta\alpha}'  \\[0.5ex]
        \nabla_{\!A_\alpha'\!}&\!\!\!\!\!\!\star\, \!F_{\!A_\alpha'}  = \,0 \\[2ex]
         g_{\alpha\beta}' &\!\!\!\!=\, c_\beta^{-1}g_{\alpha\beta}c_\alpha \\[0.5ex]
        A_{\alpha}' &\!\!\!\!=\, c_\alpha^{-1} (A_\alpha+\di) c_\alpha \end{array} \right.\right\},
\end{aligned}
\end{equation*}
and where the set of $2$-simplices $Z_2$ is simply given by composition of gauge transformations, in analogy with the smooth stack $\Bun_G^\nabla(M)$.
As before, to obtain the $\infty$-groupoid of sections on a generic smooth manifold $U$, we only have to take the homotopy limit over the \v{C}ech nerve $\check{C}(U)_\bullet\rightarrow U$ provided by a good open cover $\coprod_{i\in I\!}U_i \twoheadrightarrow U$.
\end{remark}


\section{Outlook}

The authors hope that the derived differential topos geometry exhibited in the present paper may prove a useful language for addressing various open problems in QFT. In this final section we will point to some of them.

\paragraph{Non-perturbative BV-quantisation as higher geometric quantisation.}
In the $L_\infty$-algebra formulation of BV-theory, one quantises a field theory by lifting its classical BV-action $S_{\BV}\in\mathcal{O}_{{}_{\!}}\big(T^\vee[-1]X\big)$ to a quantum BV-action $S_{\BV}^\hbar\in\mathcal{O}_{{}_{\!}}\big(T^\vee[-1]X\big)[[\hbar]]$ satisfying the quantum master equation
\begin{equation}
    i\hbar\triangle S_{\BV}^\hbar + \frac{1}{2}\{S_{\BV}^\hbar,S_{\BV}^\hbar\} \;=\; 0,
\end{equation}
where $\triangle$ is the BV-Laplacian.
In fact, see e.g. \cite{FactII}, the introduction of the quantum BV-differential
\begin{equation}
    Q^\hbar_\BV \;\coloneqq\; i\hbar\triangle + \{ S_{\BV}^\hbar ,-\}
\end{equation}
makes the $\mathbb{P}_0$-algebra of observables into a $\mathbb{BD}_0$-algebra (i.e. a Beilinson-Drinfeld algebra), whose structure provides a quantisation of the algebra of observables.

In \cite{FactI} it was also observed that the dg-algebra of quantum observables has an interesting geometric origin. In fact, one can define the Heisenberg algebra  
\begin{equation}
\begin{tikzcd}[column sep={2.2cm,between origins}]
0 \arrow[r] & i\hbar\mathbb{R}[-1] \arrow[r] & \mathfrak{Heis}(X) \arrow[r] & T^\vee[-1]X \arrow[r] & 0,
\end{tikzcd}
\end{equation} 
where the extended bracket is given by the canonical pairing on the $(-1)$-shifted cotangent bundle $T^\vee[-1]X$, i.e. we have $[\upalpha,\upbeta] \coloneqq i\hbar\{\upalpha,\upbeta \}$ for any $\upalpha,\upbeta\in T^\vee[-1]X$.
This is nothing but a degree-shifted version of the ordinary Heisenberg algebra.
Thus, one has that the dg-algebra of functions is $\mathcal{O}_{{}_{\!}}\big(\mathfrak{Heis}(X)\big) \,\simeq\, \mathcal{O}_{{}_{\!}}\big(T^\vee[-1]X\big)[[\hbar]]$, which means that the observables on the Heisenberg algebra are the quantum observables. 

In a certain sense, an ordinary Heisenberg algebra can be thought of as a Lie algebra version of a prequantum $U(1)$-bundle. This suggests an intriguing relation between geometric quantisation and BV-quantisation. Possibly, it suggests that non-perturbative BV-theory may be thought of as a kind of higher geometric quantisation. In an algebraic-geometric context, aspects of such a relation have been investigated by \cite{safronov2020shifted}.
The formalism proposed in the present paper combines global smooth geometry with derived geometry and thus provides a toolbox to study BV-theory as a derived geometric quantisation in a truly non-perturbative sense.
Schematically, one would aim to define a derived prequantum bundle as a lift of the form
\begin{equation}
\begin{tikzcd}[column sep={3.2cm,between origins},row sep={2.2cm,between origins}]
&  \arrow[d, "\mathrm{curv}"] \B U(1)_\mathrm{conn}(-1) \\
\bfR\crit{S}(M) \arrow[r, "{\omega}"]\arrow[ru] & \pmb{\mathcal{A}}^{2}_\mathrm{cl}(-1),
\end{tikzcd}
\end{equation} 
where $\bfR\crit{S}(M)$ is the derived critical locus of our chosen classical field theory on spacetime, $\pmb{\mathcal{A}}^{2}_\mathrm{cl}(-1)$ is the moduli stack of ($-1$)-shifted closed $2$-forms and the stack $\B U(1)_\mathrm{conn}(-1)$ is a well-defined $(-1)$-shifted version of the moduli stack $\B U(1)_\mathrm{conn}$ of $U(1)$-bundles with connection.

\paragraph{Derived $n$-plectic geometry.} 
Interestingly, as explored by \cite{Rog11, SaSza11x, SaSza11, SaSza13, Rog13, FSS13, Sch16, FSS16, BSS16, BS16, Sza19}, the language of $n$-plectic manifolds is a natural setting for higher geometric (pre)quantisation, just as that of ordinary symplectic manifolds is natural for ordinary geometric quantisation.
In higher geometric quantisation of $n$-plectic manifolds, the prequantum bundle of ordinary geometric prequantisation is typically generalised to a bundle $(n-1)$-gerbe \cite{SaSza11}. This procedure can be naturally applied to an $n$-plectic manifold, by finding the bundle $(n-1)$-gerbe whose curvature coincides with the $n$-plectic form. Recent work in this area includes \cite{BSS16, BS16, Sza19, BMS20, Bunk:2021quu}.

It is interesting to consider whether higher geometric quantisation of $n$-plectic manifolds can be generalised to derived smooth geometry. In a paper in preparation, \cite{AlfYouFuture}, we will give a notion of derived $n$-plectic geometry and propose its application to BV-BFV theory.

\begin{figure}[h!]
    \centering
\tikzset{every picture/.style={line width=0.75pt}} 
\begin{tikzpicture}[x=0.75pt,y=0.75pt,yscale=-1,xscale=1]
\draw    (80.55,169.25) -- (80.55,55.56) ;
\draw [shift={(80.56,53.56)}, rotate = 90.11] [color={rgb, 255:red, 0; green, 0; blue, 0 }  ][line width=0.75]    (10.93,-3.29) .. controls (6.95,-1.4) and (3.31,-0.3) .. (0,0) .. controls (3.31,0.3) and (6.95,1.4) .. (10.93,3.29)   ;
\draw  [dash pattern={on 3.5pt off 2.5pt}]  (370.0,169.75) -- (370.0,54.56) ;
\draw [shift={(370.06,52.56)}, rotate = 90.11] [color={rgb, 255:red, 0; green, 0; blue, 0 }  ][line width=0.75]    (10.93,-3.29) .. controls (6.95,-1.4) and (3.31,-0.3) .. (0,0) .. controls (3.31,0.3) and (6.95,1.4) .. (10.93,3.29)   ;
\draw    (289.5,25.58) -- (164,25.58) ;
\draw [shift={(162,25.58)}, rotate = 360] [color={rgb, 255:red, 0; green, 0; blue, 0 }  ][line width=0.75]    (10.93,-3.29) .. controls (6.95,-1.4) and (3.31,-0.3) .. (0,0) .. controls (3.31,0.3) and (6.95,1.4) .. (10.93,3.29)   ;
\draw  [dash pattern={on 3.5pt off 2.5pt}]  (289,195.08) -- (163.5,195.08) ;
\draw [shift={(161.5,195.08)}, rotate = 360] [color={rgb, 255:red, 0; green, 0; blue, 0 }  ][line width=0.75]    (10.93,-3.29) .. controls (6.95,-1.4) and (3.31,-0.3) .. (0,0) .. controls (3.31,0.3) and (6.95,1.4) .. (10.93,3.29)   ;
\draw    (0.5,0.46) -- (161.5,0.46) -- (161.5,52.46) -- (0.5,52.46) -- cycle  ;
\draw (81,26.46) node   [align=left] {\begin{minipage}[lt]{106.76pt}\setlength\topsep{0pt}
\begin{center}
Lagrangian classical field theory
\end{center}
\end{minipage}};
\draw    (-0.5,168.96) -- (160.5,168.96) -- (160.5,220.96) -- (-0.5,220.96) -- cycle  ;
\draw (80,194.96) node   [align=left] {\begin{minipage}[lt]{106.76pt}\setlength\topsep{0pt}
\begin{center}
classical BV-theory
\end{center}
\end{minipage}};
\draw    (290.5,-0.54) -- (451.5,-0.54) -- (451.5,51.46) -- (290.5,51.46) -- cycle  ;
\draw (371,25.46) node   [align=left] {\begin{minipage}[lt]{106.76pt}\setlength\topsep{0pt}
\begin{center}
$\displaystyle n$-plectic geometry
\end{center}
\end{minipage}};
\draw  [dash pattern={on 4.5pt off 4.5pt}]  (290,168.96) -- (451,168.96) -- (451,220.96) -- (290,220.96) -- cycle  ;
\draw (370.5,194.96) node   [align=left] {\begin{minipage}[lt]{106.76pt}\setlength\topsep{0pt}
\begin{center}
\textbf{derived }\\$\displaystyle n$\textbf{-plectic geometry}
\end{center}
\end{minipage}};
\draw (65.83,146.5) node [anchor=north west][inner sep=0.75pt]  [font=\footnotesize,rotate=-270] [align=left] {truncation};
\draw (355.83,146) node [anchor=north west][inner sep=0.75pt]  [font=\footnotesize,rotate=-270] [align=left] {\phantom{0}truncation};
\draw (191,11.5) node [anchor=north west][inner sep=0.75pt]  [font=\footnotesize] [align=left] {\phantom{0}transgression};
\draw (191,181) node [anchor=north west][inner sep=0.75pt]  [font=\footnotesize] [align=left] {transgression};
\end{tikzpicture}
    \caption{Derived $n$-plectic geometry would complete this diagram of formalisms. Just like by transgressing ordinary $n$-plectic geometry one obtains Lagrangian classical field theory, by transgressing derived $n$-plectic geometry one recovers classical BV-theory. By underived-truncation, one gets ordinary $n$-plectic geometry.}
    \label{fig:outro}
\end{figure}
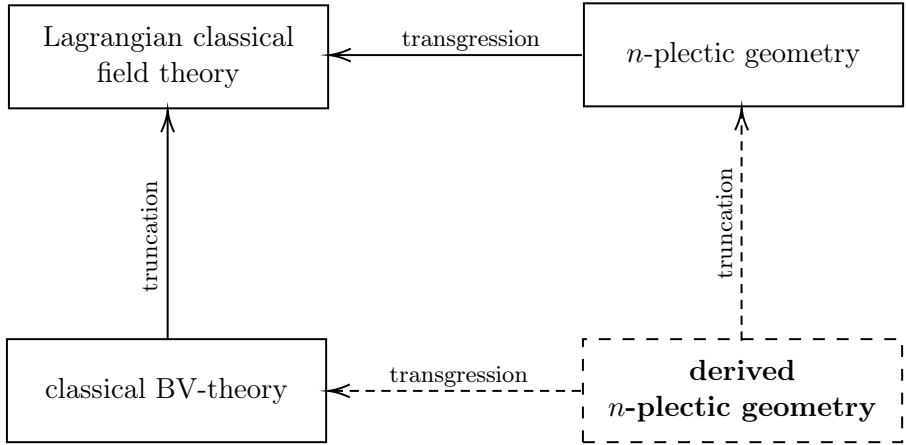

\paragraph{Non-perturbative aspects of string dualities.}
In recent years, in String Theory, there has been an increasing understanding of string dualities in terms of higher principal bundles \cite{DesSae18, DesSae18x, NikWal18, DesSae19, Alf19, Alfonsi:2021ymc, Alfonsi:2021uwh, Alfonsi:2021bot, Kim:2022opr}. This line of research is rooted in the seminal work by \cite{Hull06, Hul06x, BelHulMin07} on duality-covariant string theories.
A geometric T-duality is, roughly speaking, as follows.
First, consider two $T^n$-bundles ${\pi}:M\rightarrow M_0$ and ${\widetilde{\pi}}:\widetilde{M}\rightarrow M_0$ over a common base manifold $M_0$. Then, consider a couple of bundle gerbes $\Pi:\mathscr{G}\rightarrow{}M$ and $\widetilde{\Pi}:\widetilde{\mathscr{G}}\rightarrow{}\widetilde{M}$ respectively on manifolds $M$ and $\widetilde{M}$.
Then, these two bundle gerbes are geometric T-dual if there exists an equivalence
\begin{equation}
    \begin{tikzcd}[row sep={9ex,between origins}, column sep={9.5ex,between origins}]
    & \mathscr{G}\times_{M_0} \widetilde{M}\arrow[rr, "\simeq"]\arrow[dr, "\Pi"']\arrow[dl, "\widetilde{\pi}"] & & M\times_{M_0}\widetilde{\mathscr{G}}\arrow[dr, "\pi"']\arrow[dl, "\widetilde{\Pi}"] \\
    \mathscr{G}\arrow[dr, "\Pi"'] & & M\times_{M_0}\widetilde{M}\arrow[dr, "\pi"']\arrow[dl, "\widetilde{\pi}"] & & [-2.5em]\widetilde{\mathscr{G}}\arrow[dl, "\widetilde{\Pi}"] \\
    & M\arrow[dr, "\pi"'] & & \widetilde{M}\arrow[dl, "\widetilde{\pi}"] & \\
    & & M_0 & &
    \end{tikzcd}
\end{equation}
such that it satisfies a certain condition, known as the Poincar\'e condition.
We immediately see that such a formalisation requires the geometry of higher smooth stacks.
However, the notion of geometric T-duality sketched above is only part of the story, because it does not take into account the dynamics of the string.
A seminal characterisation of full T-duality was provided by \cite{AAL94}, who describe it as a canonical transformation (namely, a symplectomorphism) of the phase space $T^\vee_\mathrm{res}[S^1\!,\,M] \;\coloneqq\; [S^1\!,\,T^\vee M]$ of the classical string preserving its Hamiltonian (see \cite{Alfonsi:2021bot} for a geometric discussion).
This fact combined with the discussion of global aspects of BV-theory in section \ref{sec:global_aspects_of_bv_theory} suggests that a formalisation of full T-duality could be achieved by completing the higher geometric picture of the kinematics of the string with the derived geometric picture of its dynamics.
This way, we also open the door for a non-perturbative BV-quantisation of T-duality: this could provide new valuable insights into the quantum behaviour of global string dualities, which is still generally not well understood. \vspace{-0.3cm}

\paragraph{Non-commutative and non-associative string backgrounds.}
A story that is intimately related to string dualities is the appearance of non-associative geometry in the context of open String Theory. This feature of stringy geometry is understood to be linked not only to the non-geometric fluxes typically produced by T-duality \cite{SN1, SN2, SN4}, but also to higher differential geometry \cite{SN3, SN5, SN6, SN7, Sza18}. 
Adding formal derived smooth stacks to the mix to encode the dynamics of the field theories involved, would provide an intriguing overlap between these exotic backgrounds and derived differential geometry.
Moreover, it could bring some global-geometric insight into the rising field of braided QFT \cite{DimitrijevicCiric:2021jea, CiricDimitrijevic:2022eei, DimitrijevicCiric:2023hua}, which is based on BV-formalism via $L_\infty$-algebras. \vspace{-0.3cm}

\paragraph{Global aspects of double copy.}
Double copy is a theory stating that gravitational scattering amplitudes can be obtained from the ones of gauge theory essentially by replacing the colour factor with an extra kinematical factor.
Over the last few years this phenomenon has been understood in the context of BV-BRST theory via $L_\infty$-algebras by
\cite{Borsten:2020zgj, Borsten:2021hua, Macrelli_2022, Borsten:2021ljb, borsten2023treelevel, Borsten_2023}.
Orthogonal to this, a potential global-geometric story for the double copy of classical solutions of the field equations have been investigated by \cite{Alfonsi:2020lub}, but the features and the limits of such a formulation are still not completely clear. 
Derived differential geometry provides a formalism which may allow a theoretical interpolation of these approaches and, thus, a global-geometric BV-BRST treatment of double copy. \vspace{-0.3cm}

\paragraph{M-theory and Hypothesis H.}
Higher geometry and, more specifically, higher geometric quantisation has been used to investigate the underlying geometry of M-theory by \cite{FSS12, FSS15x, FSS19x, FSS19xxx, BSS19, HSS19, FSS19coho}. In these references, \textit{Hypothesis H} was proposed as candidate mathematical formulation of M-theory, whose core statement is that the charge quantisation of the theory is controlled by a non-abelian cohomology theory known as twisted cohomotopy. This idea was then further explored by \cite{BSS18,SS19,FSS19xx, Fiorenza:2020iax, Sati:2020nob, Sati:2021uhj}.
The proposal collected a large number of theoretical achievements including the derivation of a variety of expected anomaly cancellations and, remarkably, a formal description of a multitude of quantum phenomena expected to emerge on high-energy intersecting D-branes.
Given these intriguing results, there has been some recent discussion about what precise role the dynamics should play in the theory, respect to the kinematics. A possible new way to address this question could be the implementation of the dynamical side of the theory by using formal derived smooth stacks representing the derived critical loci of the action functionals, in a way that would generalise and systematise the preliminary aspects discussed in \ref{sec:global_aspects_of_bv_theory}.

\section*{Acknowledgments}

The authors would like to thank Christian S\"amann, Alexander Schenkel, John Pridham, Severin Bunk and James Waldron for helpful discussions. 

The authors would also like to thank David Carchedi and Pelle Steffens for essential discussion which led to significant improvement of this paper.

LA would like to thank Hisham Sati and Urs Schreiber for helpful discussions. 

The authors gratefully acknowledge the financial support of the Leverhulme Trust, Research Project Grant number RPG-2021-092.

\setlength{\baselineskip}{0pt}
\renewcommand*{\bibfont}{\scriptsize}
{\renewcommand*\MakeUppercase[1]{#1}%
\printbibliography[heading=bibintoc,title={\bibtitle}]}

\end{document}